\documentclass{article}
\pdfoutput=1
\usepackage{wright}
\usepackage{ytableau}
\usepackage{subcaption}
\numberwithin{equation}{section}
\allowdisplaybreaks
\usepackage{tikz}
\usetikzlibrary{quantikz2}

\newcommand{\pbar}{\overline{P}}
\newcommand{\qbar}{\overline{Q}}
\newcommand{\rbar}{\overline{R}}
\DeclareMathOperator{\dist}{dist}
\DeclareMathOperator{\diam}{diam}

\newcommand{\Lbig}{\lind^{\mathrm{big}}}
\newcommand{\Lsmall}{\lind^{\mathrm{small}}}
\newcommand{\lambdabig}{\lambda^{\mathrm{big}}}
\newcommand{\lambdasmall}{\lambda^{\mathrm{small}}}
\newcommand{\Wbig}{W^{\mathrm{big}}}
\newcommand{\Wsmall}{W^{\mathrm{small}}}
\newcommand{\indicator}{\mathds{1}}

\newcommand{\pauli}[1]{\mathcal{P}_{#1}}

\newcommand{\cluster}[1]{\boldsymbol{#1}}
\usepackage[linesnumbered,ruled,vlined]{algorithm2e}
\zcRefTypeSetup{algocf}{
Name-sg = Algorithm ,
name-sg = algorithm ,
Name-pl = Algorithms ,
name-pl = algorithms ,
}

\newcommand{\round}{\operatorname{Round}}
\newcommand{\trunc}{\operatorname{Trunc}}
\renewcommand{\err}{\operatorname{err}}

\AtEveryBibitem{ % delete unnecessary information
    \clearfield{day}
    \clearfield{month}
    \clearfield{series}
    \clearfield{venue}
    \clearname{editor}
    \clearlist{publisher}
    \clearlist{location} % alias to field 'address'
    \clearfield{venue}
    \clearfield{issn}
    \clearfield{isbn}
    \clearfield{urldate}
    \clearfield{eventdate}
    \clearfield{pages}
    \clearfield{number}
    \clearfield{volume}
}
\newcommand{\locality}{k} % the locality of the Hamiltonian
\newcommand{\locals}{\mathcal{P}} % Pauli local terms
\newcommand{\energy}{{g}} % degree of the graph
\newcommand{\degree}{{d}} % degree for dual interaction graph
\DeclarePairedDelimiterXPP{\lonorm}[1]{}{\lVert}{\rVert}{_{B_1}}{#1} % local one norm
\newcommand{\tet}{t_{\textup{total}}} % total evolution time
\newcommand{\tres}{t_{\textup{min}}} % time resolution
\newcommand{\supp}{\mathrm{supp}} % support
\newcommand{\lind}{\mathcal{L}} % Lindbladian
\newcommand{\func}{\mathcal{F}} % function being optimized

\title{Learning the structure of open quantum systems}
\author{
    Laura Lewis\thanks{UC Berkeley. \texttt{\{lllewis,ewin,jswright\}@berkeley.edu}}
    \and Ewin Tang\footnotemark[1]
    \and John Wright\footnotemark[1]
}

\date{}

\begin{document}

\maketitle
% learning local Lindbladians?
% structure learning lindbladians from real-time evolution?

\begin{abstract}
    We design an algorithm for learning the coefficients of an $n$-qubit constant-local Lindbladian to $\varepsilon$ error with $\mathcal{O}(g d^2 \log(n) / \varepsilon^2)$ total evolution time, where $g$ is the single-site energy and $d$ is the (approximate) degree of the interaction graph.
    Though Lindbladians present new challenges not present in the special case of Hamiltonians, our algorithm achieves the suite of desiderata attained by state-of-the-art Hamiltonian learning algorithms: (1) it uses non-adaptive, ancilla-free randomized Pauli measurement circuits with a time resolution of only $\Theta(1/g)$; (2) it works without knowledge of the structure of the unknown Lindbladian; (3) it depends on a smooth form of degree, thereby supporting the learning of quasi-local and power-law Lindbladians.
    
    Our algorithm is a simple iterative method, where the objective function consists of Fourier coefficients of the Lindbladian restricted to few-site regions.
    Its analysis identifies the difficulty unique to open systems, which we call ``confusing'' terms.
    For settings where the ``confusion'' is limited, the performance of the algorithm improves.
    We demonstrate this for the case of structure learning of Hamiltonians from access to real-time evolution, where we obtain a new algorithm that is significantly simpler than previous work.
    In addition, using the same iterative method, we design the first efficient algorithm for structure learning Hamiltonians from high-temperature Gibbs states.
\end{abstract}
%
%\newpage
\hypersetup{linktocpage}
\tableofcontents
%
%\newpage

\section{Introduction}

Much of modern physics has been built by probing quantum mechanical systems.
With the rise of controllable quantum systems as a means to run experiments at large scale~\cite{arute2019quantum,zhong2020quantum,wu2021strong,zhu2023interactive,google2025observation}, we must demand a more precise understanding of how to learn the behavior of quantum systems.
This drives the field of quantum learning theory, wherein one of the most active topics is Hamiltonian learning from real-time evolution: given some form of access to the dynamics $e^{-\ii Ht}$ of an unknown Hamiltonian, estimate $H$.
However, this assumes that the underlying evolution is performed in a closed system.
Far less is known about the analogous question for open systems, which evolve according to a Lindbladian $e^{\mathcal{L} t}$ (assuming that the evolution is time-independent and Markovian).
This problem is a natural extension of Hamiltonian learning because Hamiltonians form a subclass of Lindbladians.
Moreover, this question is particularly timely with the current influx of research interest in Lindbladians, due to recent advances in designing Lindbladians for efficiently preparing Gibbs states~\cite{chen2025efficient,chen2023efficient,ding2024efficient,scandi2026thermalization,rouze2026optimal,bakshi2026dobrushin,bakshi2024high,bergamaschi2024quantum,bergamaschi2026fast}.

In this work, we present an algorithm for structure learning an unknown geometrically local Lindbladian given access to its real-time dynamics\footnote{
    Hamiltonian and Lindbladian learning fall into two broad regimes.
    The first is the (geometrically) local regime, where the terms respect an underlying locality graph, and the total time evolution scales only logarithmically with the system size.
    The second is the sparse regime, where this condition is not imposed, and in exchange the time evolution scales polynomially with the system size.
    This work focuses on the first.
}.
As with Hamiltonian learning, one may hope to optimize many different figures of merit associated to a Lindbladian learning algorithm, including the simplicity of its circuits, total evolution time, time resolution, and generality.
Our algorithm is able to achieve performance matching state-of-the-art Hamiltonian learning results in all of these figures of merit.

Moreover, to showcase the generality of our framework, a simplification of our main result yields new algorithms for structure learning Hamiltonians from both real-time evolution and high-temperature Gibbs states.
To our knowledge, this is the first result for structure learning Hamiltonians from the Gibbs state access model.

We first define the Lindbladian learning problem formally.
Consider a quantum system consisting of $n$ qubits.
The \emph{Lindbladian} $\mathcal{L}$ of this system describes its evolution under Markovian dynamics via the master equation~\cite{lindblad1976generators,gorini1976completely}:
\begin{equation}
    \label{eq:lind-main}
    \mathcal{L}(\rho) = \frac{1}{2}\sum_{P\neq I} \alpha_P [P, \rho] + \sum_{P_1,P_2 \neq I} D_{P_1,P_2}\left(P_1 \rho P_2 - \frac{1}{2} \{P_2P_1, \rho\}\right),
\end{equation}
where $P$, $P_1$, and $P_2$ are $n$-qubit Pauli operators.
Throughout, we consider $k$-local Lindbladians, which means that each term only acts on at most $k$ qubits.
Given access to the real-time evolution $e^{\mathcal{L}t}$, the goal is to learn the coefficients of the Lindbladian to error $\epsilon > 0$ in $\infty$-norm.

We refer to the first sum in \Cref{eq:lind-main} as the \emph{coherent} or \emph{Hamiltonian} part, while the second sum is the \emph{dissipative} part.
The dissipative part models interactions between the system and the environment.
In particular, if $D_{P_1,P_2} = 0$ for all $P_1,P_2$, then $e^{\mathcal{L}t}$ is simply a Hamiltonian evolution.
While evolution under a Hamiltonian $e^{-\ii Ht}$ is always unitary, including dissipative terms in the Lindbladian evolution $e^{\mathcal{L}t}$ results in more complex, non-unitary dynamics.

\subsection{Results}
We focus on learning a physically-motivated class of Lindbladians, where we require two main assumptions.
First, we assume that the total interaction strength of terms that act on any given qubit is bounded:
\begin{equation}
    \lonorm{\mathcal{L}} \triangleq \max_{i \in [n]} \parens[\Bigg]{\sum_{P : \supp(P) \ni i} |\alpha_P| + \sum_{\substack{P_1,P_2\\\supp(P_1) \cup \supp(P_2) \ni i}} |D_{P_1,P_2}|} \leq g.
\end{equation}
Such a bound implies that any qubit can only be involved in few terms with large interaction strengths.
This norm is sometimes referred to as the one-spin energy~\cite{akl16,alhambra2023quantum}, and an analogous condition has been considered in the Hamiltonian learning literature~\cite{bakshi2024structure}.

Second, we require the notion of \emph{approximate degree}.
Recall the typical notion of degree is the maximum number of Lindbladian terms acting on any site.
The approximate degree of $\lind$ can be viewed as a smooth version of the degree: it measures the degree of a modified Lindbladian, obtained by disregarding sufficiently small interactions from $\lind$.
For a parameter $\eta > 0$, it is defined as
\begin{equation}
    \deg_\eta(\lind) \triangleq \min_{\Lbig : \lonorm{\lind - \Lbig} < \eta} \deg(\Lbig).
\end{equation}
To our knowledge, this parameter, although a natural generalization of the degree, has not appeared in any prior Lindbladian or Hamiltonian learning paper.
However, similar parameters, e.g.\ the ``effective sparsity'' parameter in~\cite{bakshi2024structure}, have been considered previously.
We defer to \Cref{sec:lindblad} for further discussion of these definitions.

With this, we can state our main result, which obtains an algorithm for structure learning local Lindbladians.
A more detailed statement can be found in \Cref{cor:main-detailed}, and the full algorithm is presented in \Cref{alg:full}.

\begin{theorem}[Learning local Lindbladians from real-time evolution]
    \label{thm:main}
    Let $\epsilon > 0$, and let $k = \mathcal{O}(1)$.
    Let $\mathcal{L}$ be a $\locality$-local Lindbladian with unknown structure and known bounds on the local one-norm $\lonorm{\lind} \leq \energy$ and approximate degree, $\deg_{\epsilon/(100\cdot 16^{\locality})}(\mathcal{L}) \leq d$.
    Then there exists a quantum algorithm $\calA$ that outputs estimates $(\widehat{\alpha}, \widehat{D})$ with the following guarantees:
    \begin{enumerate}
        \item (Accuracy) With probability 0.99, $\lonorm{\widehat{\lind} - \lind} \leq \epsilon$, where $\widehat{\lind}$ is the Lindbladian with coefficients $\widehat{\alpha},\widehat{D}$. This implies that $\norm{\widehat{\alpha} - \alpha}_\infty \leq \epsilon$ and $\norm{\widehat{D} - D}_\infty \leq \epsilon$.
        \item (Evolution time) $\calA$ applies $e^{\calL t}$ with a total evolution time of $\tet = \mathcal{O}(g d^2\log(n)/\epsilon^2)$.
        \item (Time resolution) $\calA$ only applies $e^{\calL t}$ with $t \geq t_{\min} = \Theta(1/g)$.
        \item (Quantum measurements) $\calA$ performs $\mathcal{O}(g^2 d^2 \log(n)/\epsilon^2)$ quantum experiments of the following form: (i) prepare a Pauli eigenstate, (ii) apply $e^{\calL t}$, (iii) measure in a Pauli eigenbasis.
        \item (Classical overhead) $\calA$ has classical runtime $\mathcal{O}(n^k d \log d + (4d)^{C_k \log(dg/\epsilon)} + g^2d^2 n^k\log(n)/\epsilon^2)$.
    \end{enumerate}
\end{theorem}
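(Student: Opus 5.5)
We plan to prove \Cref{thm:main} with an iterative ``peeling'' scheme. At the start of each round we hold a current estimate $\widehat{\lind}$. We pick a short time $t = \Theta(1/g)$ and use randomized Pauli-eigenstate preparations and Pauli measurements on few-site regions $S$ with $|S| \le 2k$. From these we estimate the Fourier (Pauli-transfer) coefficients of the channel $e^{\lind t}$ restricted to $S$, that is, quantities of the form $\frac{1}{2^n}\operatorname{tr}(Q\, e^{\lind t}(P))$ for Paulis $P,Q$ supported in $S$.

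Expanding $e^{\lind t} = I + t\lind + \sum_{m\ge 2} t^m \lind^m/m!$, the first-order term is a linear function of the coefficients $\alpha_P$ and $D_{P_1,P_2}$. For the Hamiltonian part we use commutator coefficients. For the dissipative part we use $P_1 \rho P_2$ together with the anticommutator correction. The higher-order terms are controlled by the local one-norm $g$, so at $t = \Theta(1/g)$ they are a constant fraction of the signal but are not negligible.

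The key step is a contraction lemma. We will show that one round maps an estimate with local error $\lonorm{\lind - \widehat{\lind}} \le \delta$ to one with error at most $\delta/2 + \epsilon/10$. The argument runs as follows.
\begin{enumerate}
\item Subtract the known contribution of $\widehat{\lind}$ to each local Fourier coefficient. We compute it classically using Lieb--Robinson-type locality of $e^{\widehat{\lind} t}$, truncated to $O(\log(dg/\epsilon))$ orders; this is the source of the $(4d)^{C_k\log(dg/\epsilon)}$ classical term.
\item The residual is then linear in $\lind - \widehat{\lind}$, plus higher-order cross terms. Those cross terms are bounded by $O(t g)\cdot \delta$ in the $B_1$ norm, and choosing the constant in $t$ small makes them at most $\delta/4$.
\item Invert the linear map locally to read off the corrected coefficient for each term.
\end{enumerate}

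The main obstacle is the ``confusing'' terms. In the Hamiltonian case each Pauli coefficient is isolated by a unique Fourier coefficient. For Lindbladians, distinct dissipative pairs $(P_1,P_2)$ and coherent terms can contribute to the same low-order Fourier coefficient on $S$. The local linear system is therefore not diagonal, and it can couple terms outside $S$ through the anticommutator piece $\{P_2P_1,\rho\}$.

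To handle this, we first use the bounded approximate degree. Writing $\lind = \Lbig + \Lsmall$ with $\deg(\Lbig)\le d$ and $\lonorm{\Lsmall} < \epsilon/(100\cdot 16^k)$, only $d$ significant terms touch any site. The small part contributes at most $16^k\cdot\epsilon/(100\cdot16^k)$ error, since there are at most $16^k$ Pauli pairs per region. We then solve the local system over the $O(16^k)$ candidate pairs on $S$ and show it is well-conditioned with constant depending only on $k$. The identification of significant terms is done by thresholding, which avoids needing the structure in advance; this costs the $n^k d\log d$ classical term.

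Finally, we run $O(\log(g/\epsilon))$ rounds, starting from $\widehat{\lind}=0$ with $\delta = g$. Each round needs its Fourier estimates to additive error $\approx \epsilon/d$, because the $d$ terms per site add up to form the $B_1$ error. By Hoeffding's inequality plus a union bound over $n^{O(k)}$ regions, this takes $O(d^2\log n/\epsilon^2)$ experiments per round. To keep the time and experiment counts free of extra logarithmic factors, we will sum the per-round requirements geometrically: early rounds tolerate error $\approx\delta/d$. The total is then $O(g^2d^2\log n/\epsilon^2)$ experiments, each of duration $\Theta(1/g)$, giving $\tet = O(gd^2\log n/\epsilon^2)$. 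The classical postprocessing cost of these experiments is $g^2d^2n^k\log n/\epsilon^2$.
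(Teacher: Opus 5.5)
Your overall plan matches the paper's, but the step that resolves confusion fails. The shared skeleton is: subtract the classically computed $E(x)$ from the measured local Fourier coefficients, linearize, invert, threshold, and halve the $B_1$ error each round, with the approximate degree taken at the final scale.

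\textbf{Why the local inversion cannot work.} Take $R$ supported in a region $S$ and a Pauli $B$ acting off $S$. Then $(P_1\otimes B)R(P_2\otimes B)=P_1RP_2$ and $(P_2\otimes B)(P_1\otimes B)=P_2P_1$, so the term $(P_1\otimes B,P_2\otimes B)$ acts on such $R$ exactly like $(P_1,P_2)$. At first order, data on $S$ therefore determines only aggregates $\sum_{\qbar}\lambda_{\qbar}$ over pairs that agree with $\pbar$ on $S$ and satisfy $Q_1=Q_2$ off $S$. These pairs can sit on any of the $n$ sites. For example, $(X_1X_j,X_1X_j)$ is indistinguishable from $(X_1,X_1)$ on every region not containing $j$. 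You cannot choose $S$ to contain the relevant $j$'s, because the structure is unknown. Hence no system over the $O(16^k)$ pairs supported on $S$, however well-conditioned, returns $\lambda_{\pbar}$.

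The actual inverse is global: it is the M\"obius inversion $A^{-1}=V_k$. For instance, $\lambda_{\pbar}=\sum_{\qbar\succeq\pbar}(-1)^{s_{\qbar}-s_{\pbar}}(A\lambda)_{\qbar}$ when $P_1,P_2\neq I$, and its rows have up to $\Theta(n^{k-1})$ nonzeros. Its $\infty\to\infty$ norm grows polynomially in $n$, so feeding it estimates with entrywise error $\eta$ gives $\mathrm{poly}(n)\,\eta$ error. Only $\|A^{-1}\|_{B_1\to B_1}\le 4^k$ is bounded.

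\textbf{What the fix requires.} Making this work needs three ingredients absent from your plan.
\begin{enumerate}
\item Threshold the residual $\widehat{\mathcal{F}}(x)$ at $\tau_j\approx\epsilon_j/(16^k d)$ \emph{before} applying $A^{-1}$. The noise then enters only on a set $U_\tau$ with $\|\Pi_{U_\tau}\|_{\infty\to B_1}\le 4^{k+1}\epsilon_j/\tau_j=O_k(d)$. This, not ``$d$ terms per site'', is why entrywise accuracy $\approx\epsilon/d$ suffices.
\item Round the iterate at $\epsilon_j/(4d)$ to keep $\deg(x^{(j)})\le 3d$. This is needed for the degree bound on $V$ in the next item and for the classical series computation.
\item Control the mass discarded by thresholding by splitting coordinates into $V=\{\qbar:(Ax)_{\qbar}\neq0\text{ or }(A\lambdabig)_{\qbar}\neq0\}$ and its complement. $V$ has degree at most $4^k\cdot 4d$, so the part discarded there costs $O(4^k d\tau)$ in $B_1$ norm. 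On the complement, $A\Delta=-A\lambdasmall$, which is tiny because $\lambdasmall$ is measured at the final scale $\epsilon/(100\cdot16^k)$.
\end{enumerate}
Relatedly, your per-region accounting of $\Lsmall$ (``$16^k$ pairs per region'') misses that small terms far outside the region are also confused in. It is the $B_1$ bound on $A\lambdasmall$ that handles them.

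\textbf{What is sound.} Your $O(tg)\delta$ cross-term bound is true as a $B_1\to B_1$ bound on the Jacobian; the paper proves it via a cluster expansion, with a $(23k)!$ constant. Your geometric per-round sampling schedule is also a valid alternative to estimating $E(\lambda)$ once at the final precision.
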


We highlight that our learning algorithm only utilizes extremely simple quantum experiments: prepare a Pauli eigenstate, apply time evolution under the unknown Lindbladian, and measure in a Pauli eigenbasis (see \Cref{fig:experiments}).
We also remark that for $g,k,d = \mathcal{O}(1)$, our classical time complexity is $\mathcal{O}(n^k\poly(1/\epsilon))$.
Moreover, the assumptions in our main result are fairly general and encompass a wide range of natural, physically motivated settings (see \Cref{sec:apps}).
Namely, for suitable choices of $g$ and $d$, we can specialize to the four following cases.
We emphasize that all of these results apply to the problem of structure learning: the algorithm does not know a priori which of the unknown Lindbladian coefficients are large.

\begin{corollary}[Geometrically local Lindbladians; Informal version of \Cref{coro:geo-local}]
    Let $\epsilon > 0$.
    Let $\mathcal{L}$ be a $k$-local Lindbladian with bounded coefficients $|\alpha_P|, |D_{P_1,P_2}| \leq 1$ such that each qubit only interacts with $\mathcal{O}(1)$ nonzero terms.
    Then, there exists an algorithm which finds estimates $\widehat{\alpha}, \widehat{D}$ such that $\lonorm{\widehat{\lind} - \lind} \leq \epsilon$ with probability at least $0.99$ using $\tet = \mathcal{O}(\log(n)/\epsilon^2)$ total time evolution and time resolution $\tres = \Theta(1)$, where $\widehat{\lind}$ is the Lindbladian with coefficients $\widehat{\alpha},\widehat{D}$.
\end{corollary}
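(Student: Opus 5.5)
The plan is to specialize \Cref{thm:main}. That requires showing that under the corollary's hypotheses both the local one-norm $\lonorm{\lind}$ and the approximate degree $\deg_\eta(\lind)$ (for the $\eta$ in the theorem) are $\mathcal{O}(1)$. The theorem then gives $\tet = \mathcal{O}(g d^2\log(n)/\epsilon^2) = \mathcal{O}(\log(n)/\epsilon^2)$ and $\tres = \Theta(1/g) = \Theta(1)$, together with accuracy $\lonorm{\widehat{\lind}-\lind}\le\epsilon$ with probability $0.99$.

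\textbf{Bounding $g$.} Let $c = \mathcal{O}(1)$ bound the number of nonzero terms acting on any qubit. Here a term means either a Hamiltonian coefficient $\alpha_P$ with $i \in \supp(P)$, or a dissipative coefficient $D_{P_1,P_2}$ with $i \in \supp(P_1)\cup\supp(P_2)$. Every coefficient is bounded by $1$ in absolute value. Hence for each $i$,
\[
\sum_{P\ni i}|\alpha_P| + \sum_{(P_1,P_2):\, \supp(P_1)\cup\supp(P_2)\ni i}|D_{P_1,P_2}| \le c,
\]
so $\lonorm{\lind}\le g := c$. The algorithm needs this bound as a known input. Since the corollary treats $c$ and $k$ as constants, we take $g = c$, or any fixed constant upper bound on it.

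\textbf{Bounding $d$.} The approximate degree satisfies $\deg_\eta(\lind)\le \deg(\lind)$ for every $\eta>0$, because $\Lbig = \lind$ is feasible in the minimization ($\lonorm{\lind-\lind} = 0 < \eta$). The exact degree is the maximum number of nonzero terms acting on any site, which is at most $c$. So we may take $d = c$, independently of $\epsilon$; in particular this covers $\eta = \epsilon/(100\cdot 16^k)$.

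Substituting $g,d = \mathcal{O}(1)$ into \Cref{thm:main} gives the claimed bounds. There is no real obstacle. The only thing to check is that the paper's notion of degree in \Cref{sec:lindblad} counts terms in a way that matches ``each qubit only interacts with $\mathcal{O}(1)$ nonzero terms''. For example, it might count the pair $(P_1,P_2)$ separately from $(P_2,P_1)$, or count the Pauli operators that appear rather than the coefficients. Any such convention changes the count by at most a constant factor depending on $k$, so $d = \mathcal{O}(1)$ in every case.
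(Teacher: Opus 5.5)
Your proposal is correct and matches the paper's proof of \Cref{coro:geo-local}. Like the paper, you bound the local one-norm by the per-site count of nonzero terms using $|\lambda_{\pbar}|\le 1$, bound the approximate degree by the exact degree via the trivial split $\Lbig = \lind$, $\Lsmall = 0$, and then substitute into \Cref{thm:main}. The only difference is that the paper keeps the degree explicit, getting $\tet = \mathcal{O}(d^3\log(n)/\epsilon^2)$ and $\tres = \Theta(1/d)$, which reduce to your stated bounds when $d = \mathcal{O}(1)$.
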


\begin{corollary}[Local Lindbladians; Informal version of \Cref{coro:local}]
    Let $\epsilon > 0$.
    Let $\lind$ be a $k$-local Lindbladian with $\lonorm{\mathcal{L}} = \mathcal{O}(1)$.
    Then, there exists an algorithm which finds estimates $\widehat{\alpha}, \widehat{D}$ such that $\lonorm{\widehat{\lind} -\lind} \leq \epsilon$ with probability at least $0.99$ using $\tet = \widetilde{\mathcal{O}}(n^{2k-2}\log(n)/\epsilon^2)$ total time evolution and time resolution $\tres = \Theta(1)$, where $\widehat{\lind}$ is the Lindbladian with coefficients $\widehat{\alpha},\widehat{D}$.
\end{corollary}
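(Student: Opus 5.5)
The plan is to obtain this corollary by specializing \Cref{thm:main} (more precisely its detailed form \Cref{cor:main-detailed}), which means exhibiting admissible values of the single-site energy $g$ and the approximate degree $d$. The only assumption we have is $\lonorm{\lind} = \mathcal{O}(1)$, so we take $g = \mathcal{O}(1)$ directly. The remaining work is to bound $\deg_{\eta}(\lind)$ with $\eta = \epsilon/(100\cdot 16^{k})$ using only the local one-norm bound and $k$-locality. The time resolution $\tres = \Theta(1/g) = \Theta(1)$ and the accuracy and probability guarantees then carry over verbatim.

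For the degree bound we use the trivial choice $\Lbig = \lind$, which gives $\lonorm{\lind - \Lbig} = 0 < \eta$. Hence $\deg_\eta(\lind) \le \deg(\lind)$, and it suffices to count the Lindbladian terms that touch a fixed site $i$. A Hamiltonian term is a $k$-local Pauli $P$ with $\supp(P)\ni i$. The number of such $P$ is at most $\sum_{j\le k-1}\binom{n-1}{j}4^{j+1} = \mathcal{O}(n^{k-1})$, since the remaining at most $k-1$ sites are chosen among $n-1$ and each has constantly many Pauli choices. A dissipative term $D_{P_1,P_2}$ has $\supp(P_1)\cup\supp(P_2)$ of size at most $k$ (this is how $k$-locality applies to the pair), so the same count gives $\mathcal{O}(16^k n^{k-1}) = \mathcal{O}(n^{k-1})$ pairs containing $i$. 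Therefore $d = \mathcal{O}(n^{k-1})$. Substituting into $\tet = \mathcal{O}(g d^2 \log(n)/\epsilon^2)$ gives $\mathcal{O}(n^{2k-2}\log(n)/\epsilon^2)$.

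The $\widetilde{\mathcal{O}}$ in the statement suggests that the detailed bound in \Cref{cor:main-detailed} may carry extra logarithmic factors, or that the sharper route is to truncate. In the truncation route, one drops the terms of magnitude below some threshold $\tau$, keeping $\lonorm{\lind-\Lbig}<\eta$. This gives a degree bound of the form $\min(\mathcal{O}(n^{k-1}), g/\tau)$, but with only an $\ell_1$ budget of $\eta$ per site one cannot do better than $n^{k-1}$ in the worst case. So the crude count is what we use, and any polylogarithmic slack from the detailed theorem is absorbed into the $\widetilde{\mathcal{O}}$.

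The main point of care is not an analytic obstacle but bookkeeping. First, we must confirm that the notion of "degree" used in \Cref{sec:lindblad} counts coefficients, that is, Paulis and Pauli pairs, acting on a site, so that the $\mathcal{O}(n^{k-1})$ count matches it. Second, we must confirm that the locality convention for dissipative pairs bounds the joint support by $k$. If it instead allowed each of $P_1,P_2$ separately to be $k$-local, the joint support would be at most $2k$ and the count would worsen to $n^{2k-1}$. In that case we would recheck the definition before concluding.
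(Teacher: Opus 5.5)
Your proposal is correct and matches the paper's proof. The paper likewise takes $\Lbig = \lind$, bounds $d \le \deg(\lind_\lambda) \le 16^k\binom{n-1}{k-1} \le 16^k n^{k-1}$ by counting pairs $\pbar$ with $|S_{\pbar}| \le k$ that contain a fixed site, and substitutes into \Cref{thm:main} with $g = \mathcal{O}(1)$. Your bookkeeping concerns are settled by the paper's definitions: $k$-locality bounds the joint support $|\supp(P_1)\cup\supp(P_2)|$, and the degree counts nonzero coefficients $\lambda_{\pbar}$ touching a site. The $\widetilde{\mathcal{O}}$ is only slack, since the formal \Cref{coro:local} states $\mathcal{O}(g n^{2k-2}\log(n)/\epsilon^2)$.
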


\begin{corollary}[Informal version of \Cref{coro:quasi-local}]
    Let $\epsilon > 0$.
    Let $\mathcal{L}$ be a $k$-local, quasi-local Lindbladian on a $p$-dimensional lattice.
    If $p, k =\mathcal{O}(1)$, then there exists an algorithm which finds estimates $\widehat{\alpha}, \widehat{D}$ such that $\lonorm{\widehat{\lind} -\lind} \leq \epsilon$ with probability at least $0.99$ using $\tet = \mathcal{O}(\log(n)(\log(1/\epsilon))^{2pk}/\epsilon^2)$ total time evolution, where $\widehat{\lind}$ is the Lindbladian with coefficients $\widehat{\alpha},\widehat{D}$.
\end{corollary}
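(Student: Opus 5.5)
This corollary follows from \Cref{thm:main}: I would bound the two parameters $g=\lonorm{\lind}$ and $\deg_\eta(\lind)$ for $\eta=\epsilon/(100\cdot 16^{k})$ and substitute them into the evolution-time bound $\mathcal{O}(g d^2\log(n)/\epsilon^2)$. For a quasi-local Lindbladian on a $p$-dimensional lattice, the natural formal assumption (presumably the one in \Cref{coro:quasi-local}) is an exponential decay condition. Each term has support of at most $k$ sites, and the coefficient magnitudes satisfy $|\alpha_P|,|D_{P_1,P_2}| \le C e^{-\mu\,\diam(\supp)}$, or at least the total weight of terms touching site $i$ with diameter at least $r$ is bounded by $C e^{-\mu r}$. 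I would fix this assumption with constants $C,\mu=\Theta(1)$.

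\textbf{Step 1: bound $g$.} For fixed site $i$, count terms containing $i$ with support diameter $r$. Such a support lies in a ball of radius $r$ around $i$, which contains $\mathcal{O}(r^p)$ sites. There are therefore $\mathcal{O}(r^{p(k-1)})$ choices of support of size at most $k$, and $\mathcal{O}(16^{k})$ Pauli choices per support, counting the pairs $(P_1,P_2)$. Summing $r^{p(k-1)}e^{-\mu r}$ over $r$ gives a constant, so $g=\mathcal{O}(1)$.

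\textbf{Step 2: bound the approximate degree.} Define $\Lbig$ by keeping only the terms whose support has diameter at most $R$. Then $\lonorm{\lind-\Lbig}$ is at most the tail $\sum_{r>R} r^{p(k-1)}e^{-\mu r}$, up to constants. This tail is below $\eta$ once $R = C'\log(1/\eta)=\mathcal{O}(\log(1/\epsilon))$, since $k=\mathcal{O}(1)$ and the polynomial factor is absorbed. The degree of $\Lbig$ is the number of its terms acting on any site. All such terms have support inside a ball of radius $R$, so their number is $\mathcal{O}(R^{p(k-1)}\cdot 16^k)$. A cruder count that allows any support of at most $k$ sites among the $\mathcal{O}(R^p)$ nearby ones gives $\mathcal{O}(R^{pk})$. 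Thus $d=\mathcal{O}((\log(1/\epsilon))^{pk})$.

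\textbf{Step 3: substitute.} Plugging these into \Cref{thm:main} gives $\tet=\mathcal{O}(g d^2\log(n)/\epsilon^2)=\mathcal{O}(\log(n)(\log(1/\epsilon))^{2pk}/\epsilon^2)$, which is the claim. The main point needing care is Step 2. The degree must be counted as in the paper's definition of $\deg$, which might count pairs $(P_1,P_2)$ or dual-graph neighbors rather than terms at a site. In that case the count changes only by constant factors depending on $k$ and $p$, because neighbors of a term of diameter at most $R$ lie within distance $2R$. The tail bound must also be checked in the $B_1$ norm. Both are routine lattice-counting estimates, and neither changes the exponent $2pk$.
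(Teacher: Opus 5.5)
Your proposal is correct and follows the paper's proof. You show $g=\mathcal{O}(1)$, discard all terms whose support has diameter above a cutoff $R=\Theta(\log(1/\epsilon))$ so that the discarded part has $B_1$-norm below $\epsilon/(100\cdot 16^k)$, count the surviving terms at a site by lattice volume to get $d=\mathcal{O}(R^{pk})$, and substitute into \Cref{thm:main}.

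The only difference is how quasi-locality is formalized. The paper assumes $\sum_{\pbar:\{i,j\}\subseteq S_{\pbar}}|\lambda_{\pbar}|\le e^{-\dist(i,j)/\gamma}$. This gives $g\le 1$ immediately (take $j=i$) and bounds the tail by summing over sites $j$ with $\dist(i,j)\ge r/2$. Your per-term decay instead needs the support count in your Step 1, but the two assumptions are essentially equivalent up to constants when $k=\mathcal{O}(1)$.
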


\begin{corollary}[Informal version of \Cref{coro:algebraic}]
    Let $\epsilon > 0$.
    Let $\lind$ be a $k$-local Lindbladian on a $p$-dimensional lattice with $\gamma$-power-law decay for $\gamma - p > 0$.
    Let
    \begin{equation}
        \kappa = \frac{2pk}{\gamma - p}.
    \end{equation}
    Then, there exists an algorithm which finds estimates $\widehat{\alpha}, \widehat{D}$ such that $\lonorm{\widehat{\lind} -\lind} \leq \epsilon$ with probability at least $0.99$ using $\tet = \mathcal{O}(2^{\gamma \kappa} \log(n)/\epsilon^{2+\kappa})$ total time evolution, where $\widehat{\lind}$ is the Lindbladian with coefficients $\widehat{\alpha},\widehat{D}$.
\end{corollary}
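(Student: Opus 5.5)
This corollary follows directly from \Cref{thm:main}. The whole task is to choose $g$ and $d$ for a power-law Lindbladian and then substitute into $\tet = \mathcal{O}(g d^2 \log(n)/\epsilon^2)$.

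\textbf{Setup.} I take $\gamma$-power-law decay to mean the following. For a term whose support has diameter $r$ on the $p$-dimensional lattice, the coefficient has magnitude at most $C/(1+r)^{\gamma}$. This applies to both $\alpha_P$ and $D_{P_1,P_2}$, where for the latter the diameter is that of $\supp(P_1)\cup\supp(P_2)$.

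\textbf{Step 1: bound $g$.} Fix a site $i$ and count the $k$-local terms containing $i$ with diameter $\approx r$. All other sites of such a term lie in a ball of radius $r$ around $i$. There are $\mathcal{O}(r^{p(k-1)})$ ways to place the remaining at most $k-1$ sites. The constraint that the diameter is actually $\approx r$ reduces this to $\mathcal{O}(r^{p-1}\cdot r^{p(k-2)})$, and the Pauli labels contribute a factor at most $16^k$.

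Summing over $r$ gives $\lonorm{\lind} \lesssim \sum_r r^{p(k-1)-1} r^{-\gamma}$. This is where the hypothesis must be used. If the intended normalization makes this sum converge exactly when $\gamma>p$, then $g=\mathcal{O}(1)$. Otherwise the appearance of $k$ in $\kappa$ suggests the decay is measured per pair of sites, or that convergence is part of the hypothesis.

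Either way, my plan is to take the definition from \Cref{sec:apps} so that $g=\mathcal{O}(1)$, treating $p,k$ as constants hidden in the $2^{\gamma\kappa}$ prefactor.

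\textbf{Step 2: bound the approximate degree.} Set $\eta=\epsilon/(100\cdot 16^k)$. Let $\Lbig$ keep exactly the terms of diameter at most $R$. The discarded tail at any site is
\[
\lonorm{\lind-\Lbig} \lesssim \sum_{r>R} r^{pk-1-\gamma}\cdot r^{-?}.
\]
More robustly, I would write the tail as $\lesssim R^{-(\gamma-p)}$, using the same count as in Step 1 with the leading site-sum over one lattice direction set.

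Choosing $R \approx (C'/\eta)^{1/(\gamma-p)}$ makes the tail smaller than $\eta$. The degree of $\Lbig$ is the number of terms of diameter at most $R$ containing a site, which is $d=\mathcal{O}(R^{p(k-1)}) \le \mathcal{O}(R^{pk})$. Hence
\[
d^2 \lesssim R^{2pk} = \bigl(16^k\cdot 100/\epsilon\bigr)^{2pk/(\gamma-p)} = \mathcal{O}\bigl(2^{\gamma\kappa}\epsilon^{-\kappa}\bigr),
\]
absorbing constants depending on $k,p$ into the $2^{\gamma\kappa}$ factor. Plugging $g=\mathcal{O}(1)$ and this $d$ into \Cref{thm:main} yields $\tet=\mathcal{O}(2^{\gamma\kappa}\log(n)/\epsilon^{2+\kappa})$.

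\textbf{Main obstacle.} The delicate step is the lattice counting in the tail estimate. It must produce exactly the exponent $\gamma-p$, so that $\kappa=2pk/(\gamma-p)$, rather than something like $\gamma-pk$. This hinges on the precise form of the power-law assumption in \Cref{coro:algebraic}. I would first try the version where decay is in the distance between the farthest pair of sites, and, failing that, restrict the counting so that only one free radial variable is summed against the decay. I would also track where the $2^{\gamma}$-type constants enter, for instance from $(1+r)^{-\gamma}\le 2^{\gamma} r^{-\gamma}$-style comparisons.
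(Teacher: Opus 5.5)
\textbf{There is a genuine gap, and it is the step you flag as the main obstacle: it is never resolved.} You assert the tail bound $\lonorm{\lind-\Lbig}=O(R^{-(\gamma-p)})$; your displayed tail even has an undetermined exponent. Under the per-term hypothesis you actually adopt, $|\lambda_{\pbar}|\le C(1+\diam S_{\pbar})^{-\gamma}$, this bound is false once $k\ge 3$.

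Your own Step~1 count shows why. About $r^{p(k-1)-1}$ terms of diameter $\approx r$ pass through a site, so the tail beyond $R$ is of order $R^{-(\gamma-p(k-1))}$. Even $\lonorm{\lind}=O(1)$ requires $\gamma>p(k-1)$, and for $p<\gamma\le p(k-1)$ the local one-norm can grow with $n$, so the claimed bound fails.

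No rearrangement of the counting can leave ``only one radial variable'' under that hypothesis. The total weight of terms containing both $i$ and a fixed site $j$ at distance $\ell$ can be of order $\ell^{p(k-2)-\gamma}$, not $\ell^{-\gamma}$. Deferring to ``the definition from Section apps so that $g=O(1)$'' is not an argument.

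\textbf{The missing idea is that the hypothesis in \Cref{coro:algebraic} is an aggregate pairwise bound:} for all sites $i,j$,
\[
\sum_{\pbar:\{i,j\}\subseteq S_{\pbar}}|\lambda_{\pbar}|\le\max(1,\dist(i,j))^{-\gamma}.
\]
Taking $j=i$ gives $\lonorm{\lind}\le 1$ at once.

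For the tail, suppose $i\in S_{\pbar}$ and $\diam(S_{\pbar})>r$. By the triangle inequality, $S_{\pbar}$ contains some $j$ with $\dist(i,j)\ge r/2$. A union bound over that single far site, together with the shell count of \Cref{fact:volume-bounds}, gives, for $r\ge 2$,
\[
\sum_{\substack{\pbar: S_{\pbar}\ni i\\ \diam(S_{\pbar})>r}}|\lambda_{\pbar}|
\le\sum_{j:\,\dist(i,j)\ge r/2}\dist(i,j)^{-\gamma}
\le(4e)^p\sum_{\ell\ge r/2}\ell^{p-1-\gamma}
\le(4e)^p\Bigl(1+\frac{1}{\gamma-p}\Bigr)\Bigl(\frac{r}{2}\Bigr)^{p-\gamma}.
\]
The multiplicity of terms containing both $i$ and $j$ is already absorbed into the hypothesis. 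That is why the exponent is $\gamma-p$ independently of $k$. The locality $k$ enters $\kappa$ only through the count $d\le 16^k(2er)^{pk}$ of kept terms. The $2^{\gamma}$ factor comes from halving the radius, not from comparing $(1+r)^{-\gamma}$ with $r^{-\gamma}$.

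Choose $r$ so the tail is below $\epsilon/(100\cdot 16^k)$. Substituting $g\le 1$ and this $d$ into \Cref{thm:main} then finishes the argument as you planned.
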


The last two applications are especially interesting because such Lindbladians with long-range interactions are precisely those used for recent quantum Gibbs state preparation algorithms~\cite{chen2025efficient,ding2024efficient,scandi2026thermalization}.
Furthermore, for Lindbladians with power-law decay, as the decay strength $\gamma$ increases, $\kappa$ approaches $0$, so we recover the scaling $\tet = \mathcal{O}(\log(n)/\epsilon^2)$ in the fast-decay limit.

We illustrate the versatility of our framework further by applying it to not only learn different classes of Lindbladians, but also Hamiltonians.
In this setting, we consider two different access models: the ability to evolve under the dynamics $e^{-\ii Ht}$ and access to copies of the Gibbs state $\rho_\beta \triangleq e^{-\beta H}/\tr(e^{-\beta H})$, where $H$ is the unknown Hamiltonian.
In both cases, we obtain new, simple algorithms for structure learning Hamiltonians.
Moreover, prior to our work, there were no results in the literature for structure learning Hamiltonians from any Gibbs state access model.

\begin{theorem}[Structure learning Hamiltonians from real-time evolution; Informal version of \Cref{thm:ham-time}]
    \label{thm:ham-time-main}
    Let $\epsilon > 0$.
    Let $H$ be a $k$-local Hamiltonian with bounded coefficients $|\lambda_P| \leq 1$ and $\lonorm{\lambda} \leq g$.
    Then, given access to $e^{-\ii Ht}$, there exists an algorithm which finds estimates $\widehat{\lambda}$ such that $\norm{\widehat{\lambda} - \lambda}_\infty \leq \epsilon$ with probability at least $0.99$ using $\tet = \mathcal{O}(g\log(n)/\epsilon^2)$ and time resolution $\tres = \Theta(1/g)$.
\end{theorem}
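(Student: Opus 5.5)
We specialize the Lindbladian framework behind \Cref{thm:main} to the case $D \equiv 0$, where $\lind(\rho) = -\ii[H,\rho]$. The point is that for Hamiltonians the ``confusing'' terms vanish. That removes the $d^2$ factor and the approximate-degree assumption, leaving only the one-spin energy $g$ and the coefficient bound $|\lambda_P|\le 1$. The algorithm is the same iterative scheme as \Cref{alg:full}.

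\textbf{Observables.} For each candidate Pauli $P$ on at most $k$ sites, prepare a random product Pauli eigenstate on $\supp(P)$ (maximally mixed elsewhere), evolve for time $t = \Theta(1/g)$, and measure in a random Pauli basis. From these data we form, by classical shadows on $\mathcal{O}(k)$-site regions, estimates of restricted Fourier coefficients
\[
f_P(t) = 2^{-n}\tr\bigl(Q\, e^{-\ii Ht} R\, e^{\ii Ht}\bigr)
\]
for suitable pairs $Q,R$ with $QR \propto P$. Expanding $e^{\lind t}$ as a Taylor series, the first-order term is proportional to $t\lambda_P$. The higher-order terms are sums over connected clusters of Hamiltonian terms touching $\supp(P)$. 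Because $t\le c/g$, these tails are bounded by $\sum_{m\ge 2}(c\,g t)^m/m!$-type cluster estimates that depend only on $\lonorm{\lambda}$. Concentration plus a union bound over the $n^{\mathcal{O}(k)}$ candidates costs a $\log n$ factor.

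\textbf{Iteration.} Rather than inverting the series directly, we run the iterative method. We maintain a current estimate $\widehat H$ and estimate the Fourier coefficients of the residual dynamics. Concretely, we compare measured $f_P$ to the values predicted by $\widehat H$, which are computed classically on local light-cone regions. We then update $\widehat\lambda_P \leftarrow \widehat\lambda_P + (\text{measured} - \text{predicted})/t$.

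The key lemma is a contraction: if $\lonorm{\widehat\lambda - \lambda}\le \delta$, the updated error is at most $\delta/2 + \mathcal{O}(\text{stat})$. The nonlinear corrections are products of at least two Hamiltonian terms, at least one of which is an error term. They are therefore bounded by $\mathcal{O}(gt)\cdot\delta$, which is at most $\delta/2$ once $t=\Theta(1/g)$ is small enough. The condition $|\lambda_P|\le 1$ keeps the base-case error at $\delta_0 = \mathcal{O}(g)$. After $\mathcal{O}(\log(g/\epsilon))$ rounds the error is $\epsilon$.

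\textbf{Cost and main obstacle.} If round $r$ targets precision $\delta_r$, it needs $\mathcal{O}(g^2\log n/\delta_r^2)$ shots at time $1/g$, hence total time $\mathcal{O}(g\log n/\delta_r^2)$. This is a geometric sum dominated by the last round, giving $\tet=\mathcal{O}(g\log n/\epsilon^2)$.

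The main obstacle is the contraction lemma. We must show that in the Hamiltonian case the only error propagation into $\widehat\lambda_P$ comes through commutators involving $\supp(P)$, controlled by the local one-norm. It must not come from ``confusing'' terms, which in the Lindbladian case map a different Pauli onto the same Fourier coefficient and force the degree dependence. I would first verify that for unitary dynamics the first-order coefficient of $Q\to R$ is exactly $\lambda_P$, with no contribution from other Paulis, because $[\cdot,\cdot]$ produces only one Pauli per pair. Then I would bound the second-order remainder by $\mathcal{O}(t^2 g\,\delta)$ using nested commutator counting restricted to terms overlapping $\supp(P)$.
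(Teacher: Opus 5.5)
Your proposal has a genuine gap. You track the error in the $B_1$ norm and never round anything, and under that choice the cost accounting does not close. Your round-$r$ budget of $\mathcal{O}(g^2\log n/\delta_r^2)$ shots at $t=\Theta(1/g)$ makes each measured-minus-predicted residual accurate to about $\delta_r$ \emph{per coordinate}, i.e.\ only in $\infty$-norm. But you update every candidate $P$, and $\Theta(n^{k-1})$ candidates touch each site. Since the $B_1$ norm sums absolute values, the statistical term in your contraction ``$\delta\mapsto\delta/2+\mathcal{O}(\mathrm{stat})$'' has $B_1$ norm up to order $n^{k-1}\delta_r$, so the recursion fails by a polynomial factor. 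The same noise also destroys $\|x\|_{B_1}=\mathcal{O}(g)$ for the iterate. Then the series and commutator bounds you need for $\widehat H$ at $t=\Theta(1/g)$ no longer apply. Thresholding $\widehat{\mathcal{F}}$ as in \Cref{alg:full} with $A=I$ does not give a degree-free bound either. The discarded coordinates of $\Delta$ are controlled only through the number of nonzero coordinates of $x$ and $\lambda$ per site (the $\Pi_V$ term in the proof of \Cref{thm:manual-nr}). That is exactly where $d$ enters: one needs $\tau\lesssim\epsilon/d$ and $\eta\lesssim\tau$. So ``confusion vanishes, hence the $d^2$ disappears'' is not true inside a $B_1$ analysis. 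For Lindbladians, confusion is what forces $B_1$-tracking, and $B_1$-tracking is what costs $d$. Relatedly, your use of $|\lambda_P|\le 1$ for a $B_1$ base case is misplaced: that base case is just $\|\lambda\|_{B_1}\le g$. The hypothesis matters for an $\infty$-norm base case.

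The missing ideas are those of \Cref{thm:ham} and \Cref{alg:ham}.

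First, track $\|x^{(j)}-\lambda\|_\infty\le\epsilon_j=2^{-j}$. The base case is $\|\lambda\|_\infty\le 1$, and you need $T=\lceil\log_2(1/\epsilon)\rceil$ rounds. The engine is the $\infty\to\infty$ Jacobian bound $\|J(x)-I\|_{\infty\to\infty}\le c_g t$ for all $x$ with $\|x\|_{B_1}\le g$ (\Cref{lem:convexity}). Your ``one Pauli per pair'' observation is the right ingredient, but it must be applied at \emph{every} order, not just the first:
\begin{itemize}
\item expand $\partial_{x_Q}[H(x),RP]_\ell$ by the Leibniz rule;
\item move the outer commutators onto $R$;
\item use \Cref{claim:exp-sum} so that summing over $Q$ costs only a factor $2$;
\item combine this with the Pauli-$1$-norm bound $\ell!(2kg)^\ell$ on nested commutators.
\end{itemize}

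Second, round the iterate: $x^{+}=\round_{\epsilon_j/4}(y)$ with $y=x-\widehat{\mathcal{F}}(x)$. Since $\|y-\lambda\|_\infty\le\epsilon_j/10$, any surviving coordinate has $|\lambda_P|>3\epsilon_j/20$, which gives the invariant $|x_P|\le 2|\lambda_P|$. This invariant is the actual structure-learning mechanism. It keeps $\supp(x)\subseteq\supp(\lambda)$ and $\|\lambda+s\Delta\|_{B_1}\le 2g$, so the Jacobian bound holds along the whole segment in \Cref{lem:f-bound-ham}. The step then closes as $\|y-\lambda\|_\infty\le c_{2g}t\|\Delta\|_\infty+\eta$. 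Only $\infty$-norm accuracy $\eta=\epsilon/20$ is ever needed. Since the measured quantities do not depend on the iterate, one non-adaptive batch at precision $t\epsilon/40$ suffices, giving $\tet=\mathcal{O}(g\log(n)/\epsilon^2)$ at $t=\Theta(1/g)$.
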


\begin{theorem}[Structure learning Hamiltonians from high-temperature Gibbs states; Informal version of~\Cref{thm:ham-gibbs}]
    \label{thm:ham-time-gibbs}
    Let $H$ be a $k$-local Hamiltonian with bounded coefficients $|\lambda_P| \leq 1$ and each qubit only interacts with $\mathcal{O}(1)$ nonzero terms.
    Let $\epsilon > 0$, and let $\beta < \beta_c$ for some critical inverse temperature $\beta_c$.
    Then, given access to copies of the Gibbs state $\rho_\beta$, there exists an algorithm which finds estimates $\widehat{\lambda}$ such that $\norm{\widehat{\lambda} - \lambda}_\infty \leq \epsilon$ with probability at least $0.99$ using $\mathcal{O}(\log(n)/(\beta\epsilon)^2)$ copies.
    The classical runtime of this algorithm is $\mathcal{O}(n^k\log(n)/(\beta\epsilon)^2)$.
\end{theorem}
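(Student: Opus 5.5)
The plan is to run the same iterative scheme as for Lindbladians, but replace the dynamical "local Fourier coefficient" observables with Gibbs-state correlations, and then use a high-temperature expansion to invert them.

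\textbf{Step 1: a linear relation.} Expanding to first order in $\beta$ gives
\begin{equation}
    \rho_\beta = \frac{I}{2^n}\bigl(I - \beta H + \mathcal{O}(\beta^2)\bigr).
\end{equation}
Hence $\tr(P\rho_\beta) = -\beta\lambda_P + R_P(\beta)$ for every Pauli $P$ with $|\supp(P)|\le k$. The remainder $R_P$ is a sum over connected clusters of terms that overlap $\supp(P)$. By the bounded-degree assumption, the cluster expansion for $\beta<\beta_c$ gives $|R_P|\le C\beta^2$, with the higher orders decaying geometrically.

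This crude bound is not enough on its own. We need $\epsilon$-accuracy while measuring $\tr(P\rho_\beta)/\beta$, so a $\beta$-sized bias is unacceptable unless $\epsilon\gtrsim\beta$.

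\textbf{Step 2: iterative correction.} To remove the bias, I would mirror the Hamiltonian-from-dynamics iteration. Maintain a current estimate $\widehat\lambda^{(j)}$ and refine it using
\begin{equation}
    \lambda_P = -\tfrac{1}{\beta}\bigl(\tr(P\rho_\beta) - R_P(\beta;\lambda)\bigr).
\end{equation}
The remainder $R_P$ is evaluated at the current estimate, which can be done classically. It is a function only of the coefficients in an $\mathcal{O}(1)$-size neighborhood, truncated at order $\mathcal{O}(\log(1/\epsilon))$ in the cluster expansion.

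The key estimate is a Lipschitz bound for this map. Changing the $\lambda$'s in the neighborhood by $\delta$ in $\infty$-norm changes $R_P/\beta$ by at most $c\,\beta d\,\delta$. For $\beta<\beta_c$ with $\beta_c$ chosen small, this gives $c\beta d\le 1/2$. The map is then a contraction, and errors halve each round.

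\textbf{Step 3: structure learning.} Structure learning comes for free from this setup. We estimate all $\mathcal{O}(n^k)$ local Pauli expectations from the same copies via classical shadows with random Pauli measurements, and we treat estimates below threshold as zero. The threshold plays the role of the approximate degree. This keeps the neighborhoods sparse without knowing the support in advance.

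\textbf{Step 4: sample complexity.} Each $\tr(P\rho_\beta)$ must be known to precision $\beta\epsilon$, up to constants. By Hoeffding's inequality and a union bound over $n^k$ Paulis, $\mathcal{O}(\log(n)/(\beta\epsilon)^2)$ copies suffice. Only $\mathcal{O}(\log(1/\epsilon))$ iterations are needed, and they reuse the same data, so the contraction absorbs the statistical error up to a factor of $2$.

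The classical runtime is dominated by computing the empirical means, which costs $\mathcal{O}(n^k\log(n)/(\beta\epsilon)^2)$. Evaluating the truncated cluster expansion per term costs only a constant for bounded degree.

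\textbf{Main obstacle.} The hardest step is Step 2: proving that the cluster-expansion remainder is $\beta$-Lipschitz in the local coefficients, uniformly in $n$, and that truncation is valid under an unknown structure. I would first attempt it using the polymer expansion of $\log\tr e^{-\beta H}$ together with $\tr(P\rho_\beta) = -\partial_{\lambda_P}\log Z$. Differentiating the polymer series term by term bounds the Jacobian of $R$ by the connected-cluster counts, which converge for $\beta<\beta_c$.
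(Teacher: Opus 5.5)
Your iteration is the paper's. Writing $\tr(P\rho_\beta(x))=-\beta x_P+R_P(x)$, your update $-\frac{1}{\beta}\bigl(\widehat{E}_P-R_P(\widehat\lambda^{(j)})\bigr)$ is exactly $x-\widehat{\mathcal{F}}(x)$ from \Cref{alg:ham}. The contraction is the Jacobian bound $\sum_Q|J_{P,Q}(x)-\delta_{P,Q}|\le c\beta$, and the data come from classical shadows. You would re-derive the Jacobian bound from the polymer expansion of $\log Z$. The paper instead imports it as \Cref{lem:hkt-jacobian-bound}, applied to the graph on $S_H\cup\{P\}$ (degree at most $kd$) so that rows $P\notin S_H$ are also covered.

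The gap is Step 3, which you treat as free. Your Lipschitz estimate, with its factor $d$, holds only if the error $\Delta=x^{(j)}-\lambda$ is supported on a set whose degree is bounded independently of $\epsilon$. In $\infty$-norm, nothing controls a perturbation spread over many coordinates near $P$. You never say which threshold is supposed to guarantee this. A fixed threshold near the target accuracy, which your ``same data, contraction absorbs the error up to a factor $2$'' suggests, does not work. From $x^{(0)}=0$ the first update is $y_P=\lambda_P-R_P(\lambda)/\beta+\text{noise}$, and for non-terms the bias $R_P(\lambda)/\beta$ can be of order $\beta$. For example, for an open Ising chain $H=\sum_i J_iZ_iZ_{i+1}$ one has $|\tr(Z_1Z_{m+1}\rho_\beta)|=\prod_{i\le m}\tanh(\beta|J_i|)$. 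So the $2$-local but long-range Pauli $Z_1Z_{m+1}$ gets $|y|\approx\beta^{m-1}$. With a threshold $\approx\epsilon\ll\beta$, every such spurious entry with $m\lesssim\log(1/\epsilon)/\log(1/\beta)$ survives, and the degree of the iterate grows as $\epsilon\to0$. Any degree-based cluster-expansion bound would then force $\beta_c$ to depend on $\epsilon$.

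What closes this in the paper (proof of \Cref{thm:ham}) is a threshold tied to the current error scale. It rounds at $\epsilon_j/4$ with $\epsilon_j=2^{-j}$ and carries two invariants: $\|x^{(j)}-\lambda\|_\infty\le\epsilon_j$ and $|x^{(j)}_P|\le2|\lambda_P|$. Because $\|y-\lambda\|_\infty\le\epsilon_j/10$, any entry surviving the rounding has $|\lambda_P|>3\epsilon_j/20$, so the iterate's support stays inside $S_H$. Three consequences follow:
\begin{itemize}
\item $\Delta$ lives on $S_H$, so the Jacobian row sum runs only over $Q\in S_H$ (\Cref{lem:f-bound-ham}).
\item $\deg(x^{(j)})\le d$, which is what legitimizes evaluating $\tr(P\rho_\beta(x))$ by a truncated cluster expansion.
\item At $j=0$ the threshold $1/4$ sits above the $\mathcal{O}(\beta)$ bias, which is precisely why spurious products like $Z_1Z_{m+1}$ never enter.
\end{itemize}
Separately, the truncation order of that expansion, and with it the neighborhood radius and the per-term classical cost, grows with $\log(1/(\beta\epsilon))$. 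So it is not constant; the paper defers this cost to the analysis of \cite{haah2024learning}.
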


Prior work on learning local Lindbladians either (1) only has guarantees for Lindbladians with single-qubit dissipative terms~\cite{stilck2024efficient,stilck2025learning} or (2) has a complexity dependent on the condition number of a large linear system~\cite{montana2025efficiently,ivashkov2026ansatz}.
In the first case,~\cite{stilck2024efficient,stilck2025learning} also both have a time resolution scaling as $\tres = \mathcal{O}(1/\polylog(1/\epsilon))$.
In the second case, a priori, this condition number may be exponentially large in $n$, and~\cite{montana2025efficiently,ivashkov2026ansatz} do not analyze it.
In contrast, our work achieves structure learning of local Lindbladians, even for arbitrary dissipative terms and without condition number dependence.
We discuss related work in more detail in \Cref{sec:related}.

We also remark that the guarantees of our \Cref{thm:main} are comparable to state-of-the-art Hamiltonian learning results~\cite{bakshi2024structure}.
However, our total evolution time has a slightly worse dependence which is quadratic in the approximate degree versus \cite{bakshi2024structure}'s linear dependence in the analogous sparsity parameter\footnote{The Heisenberg limit is not possible to attain for learning Lindbladians, as they define a quantum channel. Thus, the standard quantum limit of $1/\epsilon^2$ is the optimal dependence on the error parameter.}.
We can also compare \Cref{thm:ham-time-main} to~\cite{bakshi2024structure}.
Combining Remarks 3.2 and 5.3 of~\cite{bakshi2024structure} appears to yield the same total time evolution as our \Cref{thm:ham-time-main}.
However, our algorithm is significantly simpler and has an improved time resolution.

We also note that the classical runtime of our algorithm is quasi-polynomial for arbitrary parameters $g,d$.
This inefficiency stems from using the approach of~\cite{haah2024learning} to compute a truncated series expansion of $e^{\mathcal{L}t}$.
We remark that, if one is willing to pay a smaller time resolution and larger total evolution time, then one can improve the time complexity to polynomial.

\begin{remark}
    \label{rem:time}
    If our algorithm instead uses $\tres = \Theta(1/(g\poly(d)))$ and $\tet = \mathcal{O}(g\poly(d)\log(n)/\epsilon^2)$, then the classical overhead can be reduced to $\widetilde{\mathcal{O}}(n^k\poly(d)/\epsilon)$ via the time complexity analysis of~\cite{haah2024learning}.
\end{remark}

\subsection{Technical overview}

We explain the ideas behind our algorithm for the well-studied special case of geometrically local Lindbladians.
We also restrict to fully dissipative Lindbladians (i.e., $\alpha_P = 0$ for all $P$) for simplicity, as the coherent case can be analyzed similarly.
For the proof of our more general \Cref{thm:main}, we refer the reader to \Cref{sec:algo}.
At a high level, our paper extends the techniques of the prior works~\cite{hkt24,bakshi2024structure}, which were developed for learning Hamiltonians from their time evolutions, and adapts them to the problem of learning Lindbladians.

\paragraph{Review of~\cite{haah2024learning}.}
We begin by reviewing the approach of~\cite{haah2024learning}, which performs parameter learning of a geometrically local Hamiltonian $H = H(\alpha) = \sum_P \alpha_P P$.
The algorithm of~\cite{haah2024learning} follows a two-step procedure.
First, it produces estimates $\widehat{E}_P(\alpha)$ of expectation values of the form $E_P(\alpha) \triangleq \ntr(O_{1,P} e^{-\ii H(\alpha)t} O_{2,P} e^{\ii H(\alpha)t})$ for particular choices of Pauli observables $O_{1,P}, O_{2,P}$ for all $P$ in the known structure of $H$.
Then, it converts these estimates involving the time evolution $e^{-\ii Ht}$ into estimates for the Hamiltonian $H$ by finding zeros of the function
\begin{equation}
    \mathcal{F}_P(x) \triangleq \frac{1}{t}E_P(x) - \frac{1}{t}\widehat{E}_P(\alpha),
\end{equation}
using tools from convex optimization.
In particular, it does so via an iterative method, where, at the $j$-th iteration, the updates to the estimated Hamiltonian parameters are
\begin{equation}
    \label{eq:richardson-main}
    x^{(j+1)} = x^{(j)} - \frac{1}{2}\mathcal{F}(x^{(j)}).
\end{equation}
Here, we display a simplified version of the Newton-Raphson iteration used in~\cite{haah2024learning} which obtains the same guarantees\footnote{This version is not present in the literature and was observed by the authors of the current manuscript. The key insight behind this simplification is that~\cite{haah2024learning} does not leverage the fast quadratic convergence of the Newton-Raphson method. Instead, to obtain their result, they only require linear convergence, which can be attained by many other convex optimization methods, including Richardson iterations, upon which this simplified iteration is based. See \Cref{sec:ham} for an analysis of a variant of this iteration.}.
The point is that, for the specific choice of observables $O_{1,P}, O_{2,P}$,
\begin{equation}
    \label{eq:f-linear-ham}
    \mathcal{F}_P(x) = 2(x_P - \alpha_P) + \mathcal{O}(t).
\end{equation}
This can be seen by taking a linear approximation to the exponential.
Thus, up to a linear order approximation, this is the correct update to apply.
The key technical contributions of~\cite{haah2024learning} are then bounding the contribution of the higher-order terms of $\mathcal{F}$ and showing how to efficiently approximate $\mathcal{F}_P(x)$ via a truncated series expansion.

\paragraph{Parameter learning Lindbladians.}
It is natural to attempt to generalize the algorithm of~\cite{haah2024learning} to parameter learning for Lindbladians.
Note that~\cite{haah2024learning} does not perform structure learning, so we only focus on parameter learning for now, i.e., the algorithm knows a priori which Lindbladian coefficients are nonzero.
Unfortunately, this approach quickly encounters obstacles.
Namely, what observables $O_1, O_2$ should we measure?
In the Hamiltonian case, there exists a choice of $O_{1,P}, O_{2,P}$ such that $E_P(\alpha) = 2t\cdot\alpha_P + \mathcal{O}(t^2)$, which is how we obtain \Cref{eq:f-linear-ham}.
In other words, for this choice of observables, the expectation values directly estimate the unknown Hamiltonian coefficients.
However, even for a single-qubit, fully dissipative Lindbladian, no such choice of observables exists.
Instead, expectation values yield a linear combination of Lindbladian coefficients, making it necessary to solve a linear system of equations to recover the coefficients using this approach.
This is the source of unanalyzed condition numbers and restrictions to single-qubit dissipators in prior work~\cite{stilck2024efficient,stilck2025learning,ivashkov2026ansatz,montana2025efficiently}.

One contribution of our work is to overcome this difficulty using tools from Fourier analysis.
Inspired by Fourier inversion, we consider expectation values of the form
\begin{equation}
    \label{eq:exp-vals-main}
    E_{P_1,P_2}(D) \triangleq \frac{1}{2^n} \E_{R}[\tr(e^{\mathcal{L}_{D}t}(R)P_2 RP_1)],
\end{equation}
where $\mathcal{L}_{D}$ denotes a (fully dissipative) Lindbladian with dissipative coefficients $D$.
When $R$ is uniformly random over $n$-qubit Paulis, then this quantity can be naturally interpreted as a \emph{Fourier coefficient} of the channel $e^{\mathcal{L}_{D}t}$.
Moreover, these expectations satisfy similar properties as in the Hamiltonian case, where $E_{P_1,P_2}(D) = t \cdot D_{P_1,P_2} + \mathcal{O}(t^2)$.
Now, one may hope that the guarantees of~\cite{haah2024learning} apply when instead estimating the expectation values from \Cref{eq:exp-vals-main}.

However, when $R$ is sampled uniformly over $n$-qubit Paulis, it is not possible to parallelize the measurements of these expectation values, resulting in a large total time evolution.
Nevertheless, estimating the \emph{local} Fourier coefficients of $e^{\mathcal{L}_{D}}$, defined as in \Cref{eq:exp-vals-main} except where $R$ is instead a uniformly random Pauli on $\supp(P_1) \cup \supp(P_2)$, turn out to be sufficient.
Moreover, for a $k$-local Lindbladian, $|\supp(P_1) \cup \supp(P_2)| \leq k$, so these local Fourier coefficients can be estimated efficiently.
Luckily, using these local Fourier coefficients, it turns out that the iterative method of~\cite{haah2024learning} can be extended to obtain a parameter learning algorithm for geometrically local Lindbladians.

\paragraph{Structure learning Lindbladians.}
We now discuss pushing these ideas further to the problem of structure learning Lindbladians.
While~\cite{bakshi2024structure} can be viewed as a way to extend~\cite{haah2024learning} to structure learning for Hamiltonians, performing similar modifications to our parameter learning algorithm for Lindbladians is not straightforward.
In particular,~\cite{bakshi2024structure} uses techniques which are specialized to the Hamiltonian setting.
Namely, it begins by running a base learning algorithm to produce a coarse approximation $\widehat{H}$ to the true Hamiltonian $H$.
Then, to improve its estimate, it uses Trotterization to simulate access to $e^{-\ii(H - \widehat{H})}$ and obtain an estimate of $H - \widehat{H}$, which can in turn be added back to $\widehat{H}$ to produce a better estimate of $H$.
However, Trotterization requires access to the inverse time evolution,
and so this cannot be done for Lindbladians, because they are dissipative and their evolutions cannot be reversed.
Thus, we attempt a different modification for structure learning.

A simple approach one may take is to keep track of all $k$-local coefficients instead of only the coefficients in the known structure.
In other words, an algorithm may update coefficient estimates using the errors
\begin{equation}
    \mathcal{F}_{P_1,P_2}(x) \triangleq \frac{1}{t}E_{P_1,P_2}(x) - \frac{1}{t}\widehat{E}_{P_1,P_2}(D)
\end{equation}
for all $k$-local $P_1,P_2$.
The problem is that the recovery of the Fourier coefficients of $e^{\mathcal{L}_D t}$ from the local Fourier coefficients then becomes more difficult, as many Lindbladian terms can interfere with each other.
Moreover, the contribution of Lindbladian terms which are small but nevertheless still part of the structure can be obscured by the contribution of large terms.

We quantify this ``confusion'' as follows.
Because the expectation over $R$ in \Cref{eq:exp-vals-main} is not taken over all $n$-qubit Paulis, the local Fourier coefficient $E_{P_1,P_2}(D)$ does not precisely approximate the dissipative coefficient $D_{P_1,P_2}$.
Instead, the local Fourier coefficients also include contributions from other Paulis $(Q_1, Q_2)$, which are ``confused'' with the correct term $(P_1, P_2)$.
We write $(Q_1, Q_2) \succeq (P_1, P_2)$ to denote such Paulis, which are defined as $(Q_1, Q_2)$ that agree with $(P_1, P_2)$, respectively, on $\supp(P_1) \cup \supp(P_2)$ and that agree with each other outside of this support.
In \Cref{sec:local-fourier}, we prove that
\begin{equation}
    \label{eq:confusion-main}
    E_{P_1,P_2}(D) \approx t\sum_{(Q_1, Q_2) \succeq (P_1, P_2)} D_{Q_1, Q_2} \triangleq t(AD)_{P_1,P_2},
\end{equation}
up to a linear approximation of $e^{\mathcal{L}_Dt}$.
The problem described above, that recovering the Lindbladian coefficients from the local Fourier coefficients becomes difficult, can be made precise in that the matrix $A$ does not have a well-behaved inverse.
In particular, $\norm{A^{-1}}_{\infty\to\infty}$ can scale polynomially in $n$.
Operationally, this means that if one attempts to perform an iteration of the form
\begin{equation}
    x^{(j+1)} = x^{(j)} - A^{-1}\mathcal{F}(x^{(j)}),
\end{equation}
which is a natural extension of \Cref{eq:richardson-main}, the error in each iteration increases by a factor of $\poly(n)$.
To counteract this, one would need to estimate the local Fourier coefficients to $\epsilon/\poly(n)$ error.
However, this results in an abysmal total time evolution of $\mathcal{O}(\poly(n)/\epsilon^2)$ for learning geometrically local Lindbladians, whereas one would typically expect an exponentially smaller total time evolution of $\mathcal{O}(\log(n)/\epsilon^2)$.

This is an obstacle unique to Lindbladian learning.
In contrast, for Hamiltonian learning, there is no ``confusion'': the local Fourier coefficients still approximate the corresponding Hamiltonian coefficient directly.
Thus, the matrix $A$ in \Cref{eq:confusion-main} is simply the identity matrix, which has a bounded $\infty\to\infty$ norm.
We refer the reader to \Cref{sec:ham} to see how this greatly simplifies the analysis.

The critical problem here is that to estimate $D_{P_1, P_2}$, all possible pairs $(Q_1, Q_2)$ which can be confused with $(P_1, P_2)$, of which there can be roughly $\mathcal{O}(n^k)$, contribute some error, whereas we should only actually have $\mathcal{O}(d)$ large terms which contribute large error.
To remedy this, we round small entries of both $\mathcal{F}(x^{(j)})$ and our estimate after an update to zero, i.e., we consider the iteration
\begin{equation}
    x^{(j+1)} = \round_{\tau_j}(x^{(j)} - A^{-1}\round_{\tau_j'}(\mathcal{F}(x^{(j)}))),
\end{equation}
for some carefully chosen thresholds $\tau_j, \tau_j' > 0$.
After rounding, the remaining nonzero coefficients correspond to the structure of $\mathcal{L}$ discovered so far.
Rounding in this way ensures that our estimated Lindbladian in each iteration always has degree $\mathcal{O}(d)$, so we effectively only incur a total time evolution cost comparable to parameter learning a Lindbladian with degree $\mathcal{O}(d)$.

Technically, analyzing this new rounded algorithm requires bounding $\norm{A^{-1}}_{B_1\to B_1}$ (instead of $\norm{A^{-1}}_{\infty\to\infty}$, see \Cref{sec:local-lindblad-fourier}) and maintaining the error of our iterates in $B_1$-norm.
Throughout this discussion, we have also been ignoring errors arising from linear approximation, and a significant portion of our analysis is dedicated to showing that this error is not too large (see \Cref{sec:series}).

\subsection{Related work}
\label{sec:related}

\paragraph{Hamiltonian learning.}
For the simpler task of Hamiltonian learning, there is extensive literature for solving this task in a variety of access models, e.g., from copies of the Gibbs state~\cite{bairey2019learning,anshu2021sample,haah2024learning,bakshi2024learning,chen2025learning,qi2019determining}, access to the Hamiltonian's real-time evolution~\cite{zubida2021optimal,haah2024learning,huang2023learning,bakshi2024structure,zhao2024learning,ma2024learning,hu2025ansatz,abbas2025nearly,dutkiewicz2024advantage,caro2024learning,odake2024higher,gutierrez2024simple,arunachalam2024testing,castaneda2023hamiltonian,chen2025lower,sinha2025improved,bluhm2025certifying,shin2026heisenberg,mirani2024learning,mobus2025heisenberg,li2024heisenberg,ni2024quantum}, and more restrictive settings~\cite{brahmachari2026learning,pradenne2026learning,chen2025probe}.
The works most relevant to ours are those that consider Hamiltonian learning from access to dynamics, where the unknown Hamiltonian is promised to be (geometrically) local.
We only detail the results of some of these works and refer to, e.g.,~\cite{bakshi2024structure}, for a more thorough review.

In this setting, early works, e.g.,~\cite{bairey2019learning}, designed an algorithm using $\tet = \mathcal{O}(\log(n)/\epsilon^3)$ and $\tres = \mathcal{O}(\epsilon)$.
This approach can be modified to achieve structure learning.
For known structure, \cite{haah2024learning} improved this to $\tet = \mathcal{O}(\log(n)/\epsilon^2)$ and $\tres = \Omega(1)$.
Moreover, \cite{huang2023learning} achieved the Heisenberg scaling with $\tet = \mathcal{O}(\log(n)/\epsilon)$ and $\tres = \Omega(\sqrt{\epsilon})$.
Both \cite{haah2024learning,huang2023learning} only work for parameter learning, not structure learning.
\cite{bakshi2024structure} later achieved Heisenberg scaling while maintaining a constant time resolution, i.e., $\tet = \mathcal{O}(\log(n)/\epsilon)$ and $\tres = \Theta(1)$.

Our work is most comparable to~\cite{haah2024learning,bakshi2024structure}, as we achieve the optimal scaling of $\tet = \mathcal{O}(\log(n)/\epsilon^2)$ with a constant time resolution $\tres = \Theta(1)$.
At a high level, our algorithm resembles those of~\cite{haah2024learning,bakshi2024structure}, as ours is also an iterative procedure based on convex optimization algorithms.
While~\cite{bakshi2024structure} can be seen as a way to extend~\cite{haah2024learning} to learn the structure of Hamiltonians, performing a similar modification for Lindbladians is not straightforward.
In particular, in each iteration, \cite{bakshi2024structure} updates the learned parameters based on expectations with respect to $\exp(-\ii(H - \widehat{H}))$, where $\widehat{H}$ is the current estimate of the Hamiltonian, and access to $H - \widehat{H}$ can be simulated via a new constant-time Trotterization formula.
However, a similar Trotterization formula is not expected to be possible for Lindbladians.
Thus, one main conceptual contribution of our work is to overcome this barrier and find a new way to update our estimates of the Lindbladian parameters.

The above discussion takes all parameters to be constant, i.e., $k, g, d = \mathcal{O}(1)$.
When considering the scaling with $g$ and $d$, our algorithm for Lindbladian learning achieves $\tet = \mathcal{O}(gd^2\log(n)/\epsilon^2)$.
Meanwhile,~\cite{bakshi2024structure} has a better scaling of $\tet = \mathcal{O}(d\log(n)/\epsilon)$\footnote{\cite{bakshi2024learning} considers a slightly different parameter than our approximate degree $d$, which they call the effective sparsity. These capture similar physical settings, so we state their complexity in terms of $d$ for comparison.}.
While the Heisenberg scaling $1/\epsilon$ is not possible for learning Lindbladians, the total time evolution of~\cite{bakshi2024structure} is still better by a factor of $d$ and $g$.
For our Hamiltonian learning result in \Cref{thm:ham-time-main}, \cite{bakshi2024structure} implicitly appears to obtain the same total time evolution (seen by combining their Remarks 3.2 and 5.3).
However, our algorithm is significantly simpler and has an improved time resolution (our algorithm has $\tres = \Theta(1/g)$ compared to their $\Theta(1/d)$).

\paragraph{Lindbladian learning.}
Early works studied the task of recovering a description of the Lindbladian from access to its dynamics or steady states but lacked rigorous guarantees~\cite{buvzek1998reconstruction,BGP+20}.
Since then, interest in this problem has gained momentum with the development of many heuristic/numerical algorithms~\cite{liu2025robust,onorati2023fitting,olsacher2025hamiltonian,pastori2022characterization} and even some experimental demonstrations~\cite{kraft2025bounded,birke2026demonstrating,lam2026pairwise,berg2025large}.

The works most relevant to the present manuscript are those with provable guarantees on the total time evolution required to learn the Lindbladian given access to its time evolution operator.
However, no prior work achieves a rigorous guarantee for learning local Lindbladians with optimal performance in total evolution time and time resolution, even for the easier task of parameter learning.
\cite{da2011practical} uses time derivative estimation, resulting in $\tet = \mathcal{O}(\nu\log(n)/\epsilon^3)$ (when combined with randomized measurements as in~\cite{haah2024learning}) and $\tres = \mathcal{O}(\epsilon)$.
Here, $\nu$ is a condition number factor, which is implicit in the complexity, as their algorithm requires inverting a linear system which is, a priori, not well-conditioned.
In addition, recent works studying this problem fall into two main categories:
\begin{enumerate}
    \item The work gives algorithms for structure learning, but with guarantees which only hold in very restricted settings.
    \item The work gives algorithms for parameter learning.
    Moreover, the total time evolution $\tet$ depends on the condition number of a large linear system, which is not analyzed and could be exponentially large in $n$.
\end{enumerate}
In particular, \cite{stilck2024efficient,stilck2025learning} both fall into the first category, where they achieve structure learning of local Lindbladians with $\tet = \widetilde{\mathcal{O}}(\log(n)/\epsilon^2)$ and $\tres = \mathcal{O}(1/\polylog(1/\epsilon))$, but their guarantees only apply to Lindbladians with single-qubit dissipative terms.
Their algorithm's dependence on $d$ is not clear, as they take $d = \mathcal{O}(1)$ throughout.
\cite{stilck2024efficient} also considers Lindbladians with single-qubit dissipators and algebraic decay on a $p$-dimensional lattice.
In this case, their algorithm achieves a similar complexity as ours.
However, for a decay rate of $\gamma > 0$, their guarantee only holds for $\gamma \geq 5p$.

\cite{montana2025efficiently,ivashkov2026ansatz} belong to the second category\footnote{We only quote the result from~\cite{ivashkov2026ansatz} relevant to our paper, which is parameter learning for local Lindbladians. In the sparse setting, they obtain a structure learning algorithm.}.
\cite{montana2025efficiently,ivashkov2026ansatz} achieve $\tet = \widetilde{\mathcal{O}}(d^2\nu\log(n)/\epsilon^2)$ and $\tres = \Theta(1/d)$ albeit only for parameter learning.
Moreover, their complexities hide the cost of solving an a priori ill-conditioned linear system, which is quantified here via the condition number factor $\nu$.
It is also worth noting that for general $k$-local Lindbladians, \cite{ivashkov2026ansatz} presents a structure learning algorithm with $\tet = \widetilde{\mathcal{O}}(\nu n^{4k}/\epsilon^4)$, but this complexity still hides a condition number factor.

In contrast, our work achieves structure learning of local Lindbladians with $\tet = \mathcal{O}(gd^2\log(n)/\epsilon^2)$ and $\tres = \Theta(1/g)$, even for arbitrary dissipative terms, and without condition number dependence.
Moreover, for Lindbladians with algebraic decay, our guarantee holds for decay rates $\gamma > p$, beyond which there is a natural barrier~\cite{defenu2023long}.
Our algorithm also applies to general $k$-local Lindbladians, where we achieve $\tet = \mathcal{O}(gn^{2k-2}\log(n)/\epsilon^2)$.

\paragraph{Concurrent work.}
While preparing this manuscript, we became aware of several independent and concurrent works~\cite{romanov2026learning,arad2026near,sinha2026efficient,mobus2026robust} which study the problem of learning Lindbladians.
Two of these works~\cite{romanov2026learning,sinha2026efficient} operate in the sparse setting and are not comparable to our work.
On the one hand, they obtain structure learning guarantees for a broader class of Lindbladians without sparsity assumptions, but, on the other hand, their algorithms utilize a total evolution time which scales polynomially in the system size and require at least $n$ ancillary qubits.
In contrast, our algorithm's total evolution time scales logarithmically in system size, and we use zero ancillas.

\cite{arad2026near} addresses the same setting as our work, but their algorithm requires worse parameter dependencies.
In particular, for $\locality = \mathcal{O}(1)$,~\cite{arad2026near} uses a total evolution time of $\tet = \tilde{\mathcal{O}}(\Lambda \degree^{2\locality}_{\mathrm{dis}}\log(n)/\epsilon^2)$ and time resolution $\tres = \Theta(1/\Lambda)$ for Lindbladians with ``dissipative site degree'' $\degree_{\mathrm{dis}}$ and ``local dynamical strength'' $\Lambda$.
Their parameters of $\degree_{\mathrm{dis}}$ and $\Lambda$ are comparable to our parameters $d$ and $\energy$, respectively.
Moreover, \cite{arad2026near} does not consider learning Lindbladians with decaying long-range interactions.
In contrast, our algorithm achieves $\tet = \mathcal{O}(g d^2\log(n)/\epsilon^2)$, which is fixed-parameter tractable, and a similar time resolution of $\tres = \Theta(1/g)$.
We also obtain guarantees for learning Lindbladians with exponentially decaying and power-law decaying interactions.

\cite{mobus2026robust} also considers the local setting, and their algorithm has similar scalings as~\cite{arad2026near}.
Namely, for $\locality, g = \mathcal{O}(1)$, their result uses $\tet = \tilde{\mathcal{O}}(\degree^{2\locality}\log(n)/\epsilon^2)$.
It is not clear how their algorithm scales with $\energy$.
Also, for Lindbladians with algebraic decaying interactions, \cite{mobus2026robust} does not perform structure learning: the set of terms with large interaction strengths are provided as input to the algorithm.
In comparison, our result is fixed-parameter tractable, achieving a significantly better $\degree$ dependence.
Even with \Cref{rem:time} for improving our classical runtime, our dependence on $d$ is still $\poly(d)$, rather than $d^k$.
Both of our results also hold for learning general $k$-local Lindbladians and achieve similar complexities.
For long-range interactions, our results are strictly stronger than~\cite{mobus2026robust}, as we perform structure learning and are not given the set of large terms.

The above comparisons are made with respect to the quantum resources required, i.e., the total time evolution and time resolution.
However, for classical time complexity, our algorithm is only quasi-polynomial for arbitrary parameters $g, d$.
Meanwhile,~\cite{arad2026near} has a classical time complexity of $\widetilde{\mathcal{O}}(\Lambda^2 n^k d_{\mathrm{dis}}^{2k})$, which is polynomial for $k = \mathcal{O}(1)$.
Also,~\cite{mobus2026robust} has a classical time complexity of $\mathcal{O}(n^k)$.

\subsection{Discussion}
In this work, we develop a general framework for structure learning an unknown Lindbladian given access to its dynamics.
We show that our result can be instantiated in several physically motivated settings, including geometrically local, general $k$-local, quasi-local, and power-law Lindbladians.
We can also specialize our proof to apply to two problems in Hamiltonian learning.
Namely, we give a new algorithm for structure learning Hamiltonians from real-time evolution, where we obtain a surprising total time evolution which is independent of the approximate degree or effective sparsity.
In addition, we design the first algorithm for structure learning Hamiltonians from high-temperature Gibbs states.

There are still several interesting open questions to explore.
\begin{enumerate}
    \item What is the optimal total time evolution scaling for learning Lindbladians? Even for parameter learning Lindbladians, our algorithm achieves a total time evolution of $\tet = \mathcal{O}(gd^2 \log(n)/\epsilon^2)$.
    Meanwhile, the state-of-the-art Hamiltonian learning results~\cite{bakshi2024structure} are able to attain $\mathcal{O}(r\log(n)/\epsilon^2)$ for an effective sparsity parameter $r$, which is analogous to our approximate degree $d$.
    Is the $d^2$ dependence for learning Lindbladians fundamental?
    \item What is the optimal complexity for structure learning Hamiltonians? By adapting our framework, we show a new guarantee of $\tet = \mathcal{O}(g\log(n)/\epsilon^2)$ and $\tres = \Theta(1/g)$ for Hamiltonian learning. Is the dependence on effective sparsity $r$ in~\cite{bakshi2024structure} required to obtain the Heisenberg scaling?
    Could this be related to the lack of inverse access in this model~\cite{TW25a}?
    \item Our techniques for structure learning Hamiltonians from high-temperature Gibbs states do not extend to the low-temperature regime because the cluster expansion diverges at low temperatures. Can one efficiently learn the structure of Hamiltonians from Gibbs states at any temperature? \cite{bakshi2024learning} achieves the analogous result in the parameter learning case.
\end{enumerate}

\section{Preliminaries}\label{sec:prelims}

Throughout, $[n] = \{1,\dots, n\}$, and $\widetilde{\mathcal{O}}(f) = \mathcal{O}(f\polylog(f))$.
We use the Iverson bracket: $\iver{G} = 1$ if the proposition $G$ is true and $0$ otherwise.
We denote the complement of a set $S$ by $\overline{S} \subseteq [n]$, where $\overline{S} \triangleq [n] \setminus S$.
For a matrix $M$, we use $M^\dagger$ to denote its conjugate transpose.
Given $\tau \geq 0$, we define
\begin{equation}
    \round_\tau(x) = \begin{cases}
        0 & |x| \leq \tau\\ x & \text{otherwise}
        \end{cases},
        \qquad
\text{and}
\qquad
\trunc_\tau(x) = \begin{cases}
        x & |x| \leq \tau\\ 0 & \text{otherwise}
        \end{cases}.
\end{equation}

Throughout, we also let $n$ denote the number of qubits, $N \triangleq 2^n$, and $\ntr \triangleq \tr/N$.
We let $\pauli{n} \triangleq \{I,X,Y,Z\}^{\otimes n}$ denote the set of $4^n$ $n$-qubit Pauli matrices.
For a Pauli $P \in \pauli{n}$, we denote $S_P \triangleq \supp(P)$, where $\supp(P) \triangleq \{i \in [n] : P_i \neq I\}$.
For a pair of Paulis $P_1, P_2 \in \pauli{n}$, we sometimes write $\pbar \triangleq (P_1, P_2)$ for brevity.
Moreover, similarly to the single Pauli case, we write $S_{\pbar} \triangleq \supp(P_1) \cup \supp(P_2)$.
We also write $s_P \triangleq |S_P|$ and $s_{\pbar} \triangleq |S_{\pbar}|$.
Finally, consider a matrix $M = a \cdot P$, where $a \in \{\pm 1, \pm \ii\}$ and $P \in \pauli{n}$;
such matrices arise from products of Pauli matrices.
Then we will write $c(M) = a$ and $\pauli{}(M) = P$, so that
\begin{equation}\label{eq:pauli-factor}
    c(M) \cdot \pauli{}(M) = M.
\end{equation}

For a vector $x$, we use $\norm{x}_\infty \triangleq \max_i |x_i|$ to denote the $\infty$-norm.
We use $\norm{x}_2^2 \triangleq \sum_i |x_i|^2$ to denote the $2$-norm.
For a matrix $M$, we use $\norm{M}_{\mathrm{op}} \triangleq \max_{x\neq 0} \frac{\norm{Mx}_2}{\norm{x}_2}$ to denote the spectral norm.
We also write $\norm{M}_{\tr} \triangleq \tr(\sqrt{M^\dagger M})$ to denote the trace norm.
For a matrix $M \in \mathbb{C}^{N \times N}$, we define the operator norm corresponding to norms $\norm{\cdot}_a$ and $\norm{\cdot}_b$ on $\mathbb{C}^N$ as
\begin{equation}
    \norm{M}_{a\to b} \triangleq \sup_{\norm{x}_a \leq 1} \norm{Mx}_b.
\end{equation}
Note that $\norm{Mx}_b \leq \norm{M}_{a\to b}\norm{x}_a$, and $\norm{LM}_{a\to c} \leq \norm{L}_{b\to c}\norm{M}_{a\to b}$.
Moreover, we define a \emph{superoperator} as a linear map from $\mathbb{C}^{N\times N}$ to $\mathbb{C}^{N\times N}$.

\subsection{Lindbladians}
\label{sec:lindblad}

First, we define a Lindbladian, which defines the Markovian dynamics of an open quantum system via the Lindblad master equation.

\begin{definition}[Lindbladian]
    A \emph{Lindbladian} is a linear map $\mathcal{L}: \mathbb{C}^{N\times N} \to \mathbb{C}^{N \times N}$ that, applied to a quantum state $\rho \in \mathbb{C}^{N \times N}$, can be written as
    \begin{equation}
        \label{eq:lind}
        \mathcal{L}(\rho) = \frac{1}{2}\sum_{P\neq I} \alpha_P [P, \rho] + \sum_{P_1,P_2 \neq I} D_{P_1,P_2}\left(P_1 \rho P_2 - \frac{1}{2} \{P_2P_1, \rho\}\right),
    \end{equation}
    where $D$ is a positive semidefinite matrix and $\alpha$ is purely imaginary.
    This Lindbladian is \emph{$\locality$-local} if every $P$ satisfies $|\supp(P)| \leq \locality$ and every $P_1,P_2$ satisfies $|\supp(P_1) \cup \supp(P_2)| \leq \locality$.
\end{definition}

Throughout, we assume that we are working with nonzero Lindbladians.
In particular, we always assume that $k$ and $g$ are nonzero.
We often find it convenient in our analysis to consider combining the coherent and dissipative coefficients into a single vector, i.e., writing
\begin{equation}
    \label{eq:lindblad-combined-coefs}
    \mathcal{L}(\rho) = \sum_{\substack{P_1,P_2\\P_1 \neq I}} \lambda_{P_1,P_2}\left(P_1 \rho P_2 - \frac{1}{2}\{P_2 P_1, \rho\}\right),
\end{equation}
where now $P_2$ is allowed to be the identity.
This simply allows us to index into the vector $\lambda$ with a pair of Paulis instead of one, making our notation simpler throughout.
These are clearly equivalent definitions,\footnote{The reason we include the factor of $1/2$ in the definition in \Cref{eq:lind} is so that $\lambda_{P,I} = \alpha_P$, instead of $2\alpha_P$.} as one can set $\lambda_{P,I} = \alpha_P$ and $\lambda_{P_1,P_2} = D_{P_1,P_2}$.
To make the dependence on the coefficients explicit, we sometimes write $\mathcal{L}_{\alpha, D}$ or $\mathcal{L}_\lambda$.
When the subscript is a single vector, one should consider the representation in \Cref{eq:lindblad-combined-coefs}.

\begin{definition}[Parameterized Lindbladian]
    \label{def:param-lind}
    For a Lindbladian and a coefficient vector $(\alpha, D) \in \C^m$, we let $\lind_{\alpha, D}$ refer to the corresponding Lindbladian in \Cref{eq:lind}.
    Moreover, we denote by $\lind_\lambda$ the superoperator of the form \Cref{eq:lindblad-combined-coefs}, which we call a \emph{parameterized Lindbladian}.
    For notational simplicity, we will sometimes index into $\lambda$, not by an explicit Lindbladian term $P_1, P_2$ (where $P_2$ is arbitrary and $P_1 \neq I$), but by an index $a \in [m]$.
    The term associated to $a$ will then be denoted $\pbar_a$.
\end{definition}

Though we sometimes refer to $\lind_\lambda$ as a Lindbladian, for general $\lambda$, this may not define a valid physical Lindbladian and may only refer to a superoperator mapping $\C^{N \times N} \to \C^{N \times N}$.
We also often use the adjoint of the Lindbladian.

\begin{definition}[Adjoint of a Lindbladian]
    The \emph{adjoint} $\mathcal{L}^\dagger$ of a Lindbladian $\mathcal{L}$ is defined such that $\tr(X \mathcal{L}(Y)) = \tr(\mathcal{L}^\dagger(X) Y)$ for any operators $X,Y$.
    Explicitly, using the Pauli expansion above, one can write the adjoint as
    \begin{equation}
        \mathcal{L}^\dagger(O) = -\frac{1}{2}\sum_{P\neq I} \alpha_P [P, O] + \sum_{P_1,P_2 \neq I} D_{P_1,P_2} \left(P_2 O P_1 - \frac{1}{2}\{P_2P_1, O\}\right),
    \end{equation}
    where $O \in \mathbb{C}^{N\times N}$ is an observable. 
\end{definition}

The analogous definition for $\mathcal{L}_\lambda$ is clear.
Similarly to \cite{bakshi2024structure}, our results depend on a ``local norm,'' which is defined as follows.

\begin{definition}[Local norm of a Lindbladian]
    Let $\mathcal{L}_{\lambda}$ be a superoperator.
    Then, we define the \emph{$B_1$-norm} of $\lambda$ as
    \begin{equation}
        \lonorm{\lambda} \triangleq \max_{i \in [n]} \left(\sum_{\pbar : S_{\pbar} \ni i} |\lambda_{\pbar}|\right).
    \end{equation}
\end{definition}
We sometimes write $\lonorm{\mathcal{L}_\lambda} = \lonorm{\lambda}$.
Moreover, note that $\lonorm{\lind_\lambda} = \lonorm{\lind_{\alpha, D}}$.

\begin{definition}[Degree and approximate degree of a Lindbladian]
    The \emph{degree} $\deg(\lind_\lambda)$ of a superoperator $\mathcal{L}_\lambda$ is the maximum number of terms supported on a site, i.e.,
    \begin{equation}
        \deg(\lind_\lambda) \triangleq \max_{i \in [n]} |\{\pbar : i \in S_{\pbar},\; \lambda_{\pbar} \neq 0\}|
    \end{equation}

    The \emph{approximate degree} $\deg_\eps(\lind_\lambda)$ of a superoperator $\lind_\lambda$ is the minimum $\deg(\Lbig)$ over ways to split $\lind_\lambda = \Lbig + \Lsmall$ such that $\lonorm{\Lsmall} < \eps$, i.e.,
    \begin{equation}
        \deg_\eps(\lind_\lambda) \triangleq \min_{\Lbig : \lonorm{\lind_\lambda - \Lbig} < \epsilon} \deg(\Lbig).
    \end{equation}
    Here, $\Lsmall = \lind - \Lbig$.
    Sometimes, we also write $\deg(\lambda) = \deg(\lind_\lambda)$ and $\deg_\eps(\lambda) = \deg_\eps(\lind_\lambda)$.
\end{definition}

Consider the optimal splitting $\lind_\lambda = \Lbig + \Lsmall$, and let $\lambdabig$ and $\lambdasmall$ denote the coefficients of $\Lbig$ and $\Lsmall$, respectively.
We observe that, without loss of generality, the supports of $\lambdabig$ and $\lambdasmall$ are disjoint.
This is because, if a coefficient is nonzero in both, then one could remove it from $\lambdasmall$ and keep it in $\lambdabig$ while maintaining the same approximate degree.

\begin{remark}
    \label{rmk:deg-bound}
    We note that $\deg(\round_\tau(\lambda)) \leq \lonorm{\lambda}/\tau$.
\end{remark}

\subsection{Fourier analysis of quantum channels} 

We begin by recalling standard facts about the Fourier analysis of quantum channels.
For a more thorough introduction to this topic, we refer the reader to~\cite{bao2023testing}.

Consider a superoperator $\Phi$ which acts on $n$-qubit states.
We can expand $\Phi$ in terms of Pauli matrices by defining the functions
\begin{equation}
    \chi_{P_1, P_2}(\rho) = P_1 \rho P_2,
\end{equation}
where $P_1, P_2 \in \{I, X, Y, Z\}^{\otimes n}$.
Such $\chi_{P_1, P_2}$ form a basis for the set of superoperators~\cite[Proposition 6]{bao2023testing},
so $\Phi$ can be written as
\begin{equation}
    \Phi = \sum_{P_1, P_2} \widehat{\Phi}(P_1, P_2) \cdot \chi_{P_1, P_2},
\end{equation}
where the $\widehat{\Phi}(P_1, P_2)$'s are referred to as the \emph{Fourier coefficients} of the superoperator $\Phi$.
The next lemma shows that these functions are actually orthonormal to each other.

\begin{lemma}[The Fourier basis is orthonormal]
    \label{lem:fourier-orthonormal}
    Let $P_1, P_2, Q_1, Q_2 \in \pauli{n}$.
    Then
    \begin{equation}
    \E_{R \sim \pauli{n}}\left[\ntr\Big( \chi_{P_1, P_2}(R)^{\dagger}\cdot \chi_{Q_1, Q_2}(R)\Big)\right] = \begin{cases}
        1 & \text{if $P_1 = Q_1$ and $P_2 = Q_2$}\\
        0 & \text{otherwise}.
    \end{cases},
\end{equation}
\end{lemma}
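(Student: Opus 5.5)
We compute the expectation directly. Since Paulis are Hermitian, $\chi_{P_1,P_2}(R)^\dagger = (P_1 R P_2)^\dagger = P_2 R P_1$, so the quantity equals
\begin{equation}
    \E_{R \sim \pauli{n}}\left[\ntr\big(P_2 R P_1 Q_1 R Q_2\big)\right] = \E_{R}\left[\ntr\big(R\, (P_1 Q_1)\, R\, (Q_2 P_2)\big)\right],
\end{equation}
using cyclicity of the trace. Write $P_1 Q_1 = c(P_1Q_1)\,A$ and $Q_2 P_2 = c(Q_2P_2)\,B$ with $A = \pauli{}(P_1Q_1)$, $B = \pauli{}(Q_2P_2)$ as in \Cref{eq:pauli-factor}.

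The key step is the standard twirling identity: for a fixed Pauli $A$, $R A R = \omega(R,A) A$ with $\omega(R,A) = \pm 1$ according to whether $R$ and $A$ commute or anticommute. Averaging over uniform $R \in \pauli{n}$, $\E_R[\omega(R,A)] = \iver{A = I}$. This holds because the sign factorizes over qubits, and on a single qubit with $A_i \neq I$ exactly two of the four Paulis anticommute with $A_i$. Hence
\begin{equation}
    \E_R[R A R] = \iver{A = I}\, I .
\end{equation}

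Plugging in, the expectation becomes $c(P_1Q_1)\,c(Q_2P_2)\,\iver{A=I}\,\ntr(B)$. Since $\ntr(B) = \iver{B = I}$ for a Pauli $B$, the result is nonzero only if $P_1Q_1 \propto I$ and $Q_2P_2 \propto I$. For Paulis, these conditions mean $Q_1 = P_1$ and $Q_2 = P_2$. In that case $P_1Q_1 = P_1^2 = I$ and $Q_2P_2 = I$, so both phases equal $1$ and the value is $\ntr(I) = 1$. In every other case the value is $0$, which gives the claim.

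The only step requiring care is the averaging identity. I expect it to be routine via the tensor-product factorization; the sole subtlety is tracking the phases $c(\cdot)$, and these disappear in the surviving case.
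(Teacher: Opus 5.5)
Your proof is correct and follows essentially the same route as the paper's: both reduce the quantity to $\ntr(P_2 R P_1 Q_1 R Q_2)$ and use that a uniformly random Pauli commutes or anticommutes with a fixed non-identity Pauli with probability $1/2$ each. Your twirl identity $\E_R[RAR] = \iver{A = I}\, I$ (applied after cycling $P_2$ to the end) covers both failure cases, $P_1 \neq Q_1$ or $P_2 \neq Q_2$, in one computation, whereas the paper handles them with a ``without loss of generality'' on $P_1 \neq Q_1$.
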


\begin{proof}
    If $P_1 = Q_1$ and $P_2 = Q_2$, 
    \begin{equation}
        \E_R\left[\ntr\Big(\chi_{P_1, P_2}(R)^{\dagger} \cdot \chi_{Q_1, Q_2}(R)\Big)\right] = \E_R\left[\ntr(P_2 R P_1 \cdot Q_1 R Q_2)\right] = \E_R[\ntr(R R)] = 1.
    \end{equation}
    Otherwise, let us assume without loss of generality that $P_1 \neq Q_1$.
    Then, $P_1 Q_1$ is some nonidentity Pauli matrix, and a random Pauli $R$ will commute with $P_1 Q_1$ with probability $1/2$ and anticommute with probability $1/2$.
    Thus, half of the time, we have
    \begin{equation}
        \ntr\Big(\chi_{P_1, P_2}(R)^{\dagger}\cdot \chi_{Q_1, Q_2}(R) \Big) = \ntr(P_2 R P_1 Q_1 R Q_2) = \ntr(P_2 P_1 Q_1 RR Q_2) = \ntr(P_2 P_1 Q_1 Q_2),
    \end{equation}
    and the other half of the time, we have
    \begin{equation}
        \ntr\Big(\chi_{P_1, P_2}(R)^{\dagger}\cdot \chi_{Q_1, Q_2}(R) \Big) = \ntr(P_2 R P_1 Q_1 R Q_2) = -\ntr(P_2 P_1 Q_1 RR Q_2) = -\ntr(P_2 P_1 Q_1 Q_2).
    \end{equation}
    These two average out to zero, which completes the proof.
\end{proof}

\subsubsection{Local Fourier coefficients}
\label{sec:local-fourier}

By linearity, \Cref{lem:fourier-orthonormal} gives us the following Fourier inversion formula for the Fourier coefficients of $\Phi$:
\begin{equation}\label{eq:fourier-inversion}
    \widehat{\Phi}(P_1, P_2) = \E_R\left[\ntr\Big(\chi_{P_1, P_2}(R)^{\dagger} \cdot \Phi(R)\Big)\right].
\end{equation}
In principle, this suggests a natural experiment that we could carry out to learn the Fourier coefficient $\widehat{\Phi}(P_1, P_2)$ of $\Phi$ in the case that $\Phi$ is a quantum channel.
However, for our application we will be interested in efficiently estimating many expectations of this form,
and in this case it will only be possible to do so if we restrict the random Paulis $R$ in these expectations to have local support.
Motivated by this, let us first show an analogue of \Cref{lem:fourier-orthonormal} for Paulis with local support.

\begin{lemma}[Local orthonormality relations]\label{lem:restricted-orthogonality}
    Let $P_1, P_2, Q_1, Q_2 \in \pauli{n}$. 
    Let $S \subseteq [n]$ contain $S_{\pbar}$.
    Then
    \begin{equation}
        \E_{R \sim \pauli{S}}\left[\ntr\Big(\chi_{P_1, P_2}(R)^{\dagger} \cdot \chi_{Q_1, Q_2}(R)\Big)\right] = \begin{cases}
        1 & \text{if $Q_1^S = P_1$, $Q_2^S = P_2$, and $Q_1^{\overline{S}} = Q_2^{\overline{S}}$},\\
        0 & \text{otherwise}.
    \end{cases}
    \end{equation}
\end{lemma}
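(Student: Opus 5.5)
The approach is to mimic the proof of \Cref{lem:fourier-orthonormal}, but to split every Pauli across $S$ and $\overline{S}$. Write $Q_i = Q_i^S \otimes Q_i^{\overline{S}}$ for $i=1,2$. Since $S \supseteq S_{\pbar}$, we have $P_i = P_i^S \otimes I^{\overline{S}}$. Also, since $R \sim \pauli{S}$ acts as identity on $\overline{S}$, we have $R = R^S \otimes I^{\overline{S}}$. The quantity inside the expectation is
\[
\ntr(P_2 R P_1 Q_1 R Q_2),
\]
which factorizes as a product of a normalized trace over $S$ and one over $\overline{S}$. The $S$ factor is $\ntr_S(P_2^S R P_1^S Q_1^S R Q_2^S)$. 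The $\overline{S}$ factor is $\ntr_{\overline{S}}(Q_1^{\overline{S}} Q_2^{\overline{S}})$, which does not depend on $R$.

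First, handle the $\overline{S}$ factor. By orthonormality of Paulis under the normalized trace, it equals $1$ if $Q_1^{\overline{S}} = Q_2^{\overline{S}}$ and $0$ otherwise. This gives the third condition. When the condition fails, the whole expectation vanishes regardless of $R$.

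Second, handle the $S$ factor. The expectation over $R \sim \pauli{S}$ of
\[
\ntr_S(P_2^S R P_1^S Q_1^S R Q_2^S)
\]
is exactly the quantity in \Cref{lem:fourier-orthonormal}, applied to the $|S|$-qubit Paulis $P_1^S, P_2^S, Q_1^S, Q_2^S$ with $R$ uniform on $\pauli{S}$. That lemma gives $1$ if $Q_1^S = P_1^S$ and $Q_2^S = P_2^S$, and $0$ otherwise. Here we use that $P_i = P_i^S \otimes I$, so the condition $Q_i^S = P_i$ in the statement means exactly $Q_i^S = P_i^S$, read as an operator on $S$ extended by the identity. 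Multiplying the two factors gives the claimed case formula.

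The only mildly delicate point is justifying the factorization of the normalized trace. The operator $P_2 R P_1 Q_1 R Q_2$ is a tensor product of its $S$ part and its $\overline{S}$ part, because every factor is itself such a tensor product. On $\overline{S}$, the factors $P_2, R, P_1, R$ are all identity, leaving $Q_1^{\overline{S}} Q_2^{\overline{S}}$. The normalized trace is multiplicative across tensor factors, so the factorization holds. I expect no real obstacle beyond keeping track of this notation.
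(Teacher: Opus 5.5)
Your proposal is correct and matches the paper's proof. Both factor $\ntr(P_2 R P_1 Q_1 R Q_2)$ into an $S$-part, which \Cref{lem:fourier-orthonormal} handles on the $|S|$ qubits, times the $R$-independent factor $\ntr(Q_1^{\overline{S}} Q_2^{\overline{S}})$, which is the indicator of $Q_1^{\overline{S}} = Q_2^{\overline{S}}$; you additionally spell out the tensor-product factorization that the paper leaves implicit.
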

\begin{proof}
    By definition,
    \begin{align}
        \E_{R \sim \pauli{S}}\left[\ntr\Big(\chi_{P_1, P_2}(R)^{\dagger} \cdot \chi_{Q_1, Q_2}(R)\Big)\right]
        &=\E_{R \sim \pauli{S}}\left[\ntr(P_2 R P_1 Q_1 R Q_2)\right]\\
        &= \E_{R \sim \pauli{S}}\left[\ntr(P_2^S R P_1^S Q_1^S R Q_2^S)\right] \cdot \overline{\mathrm{tr}}(Q_1^{\overline{S}} Q_2^{\overline{S}})\\
        &= \E_{R \sim \pauli{S}}\left[\ntr\Big(\chi_{P_1^S, P_2^S}(R)^{\dagger} \cdot \chi_{Q_1^S, Q_2^S}(R)\Big)\right] \cdot \overline{\mathrm{tr}}(Q_1^{\overline{S}} Q_2^{\overline{S}}).
    \end{align}
    By \Cref{lem:fourier-orthonormal},
    the expectation is $1$ if $Q_1^S = P_1^S$ and $Q_2^S = P_2^S$ and 0 otherwise.
    In addition, the second term is $1$ if $Q_1^{\overline{S}} = Q_2^{\overline{S}}$ and 0 otherwise.
    This completes the proof.
\end{proof}

It will turn out that we only ever care about the case in which $S = S_{\pbar}$.
In this case, \Cref{lem:restricted-orthogonality} says that the functions $\chi_{P_1, P_2}$ are no longer orthonormal over Paulis $R$ restricted to $S_{\pbar}$;
instead, $\chi_{P_1, P_2}$ can be ``confused'' for certain other functions $\chi_{Q_1, Q_2}$ which agree with it on $S_{\pbar}$.
We will use $\qbar \succeq \pbar$ to mean that $\qbar$ can be confused with $\pbar$, i.e.
\begin{equation}
    \qbar \succeq \pbar
    \quad
    \iff
    \quad
    \text{$Q_1^{S_{\pbar}} = P_1$, $Q_2^{S_{\pbar}} = P_2$, and $Q_1^{\overline{S_{\pbar}}} = Q_2^{\overline{S_{\pbar}}}$}
\end{equation}
We define the ``local Fourier coefficients'' of $\Phi$ via
\begin{equation}
    \widehat{\Phi}_{\mathsf{loc}}(P_1, P_2) \triangleq \E_{R\sim \pauli{S_{\pbar}}}\left[\ntr\Big(\chi_{P_1, P_2}(R)^{\dagger} \cdot \Phi(R)\Big)\right].
\end{equation}
Note that, unlike the actual Fourier coefficients of $\Phi$, these do not have an interpretation as the coefficients of $\Phi$ when expanded in a particular basis of superoperators.
Instead, we only have that
\begin{equation}\label{eq:not-fourier-inversion}
    \widehat{\Phi}_{\mathsf{loc}}(P_1, P_2) = \sum_{\qbar \succeq \pbar} \widehat{\Phi}(Q_1, Q_2).
\end{equation}
However, it turns out that the actual Fourier coefficients of $\Phi$ can still be recovered from these ``local Fourier coefficients'' by a simple linear transformation.
We discuss this in \Cref{sec:local-lindblad-fourier} below.

\subsubsection{Estimating the local Fourier coefficients}
\label{sec:estimate-coefs}

We conclude by designing an algorithm to estimate the local Fourier coefficients via random Pauli measurements.
First, let us establish some notation which will help to specify the algorithm.
Given a single qubit Pauli $P \in \{X, Y, Z\}$ and a bit $b \in \{\pm 1\}$, we write $\ket{P, b}$ for the eigenstate of $P$ with eigenvalue $b$.
We also set $\ket{I, + 1} \triangleq \ket{0}$ and $\ket{I,- 1} \triangleq \ket{1}$, and note that both of these are $+1$ eigenstates of $I$.
For an $n$-qubit Pauli $Q$ and a vector $v \in \{\pm 1\}^n$, we define the vector $\ket{Q, v}$ by extending the single-qubit definition via the tensor product.
We note that $\ket{Q, v}$ is an eigenvector of $Q$ with eigenvalue $\chi_{S_Q}(v)$, where
\begin{equation}
    \chi_{S_Q}(v) = \prod_{i \in S_Q} v_i
\end{equation}
is the standard Boolean Fourier character. (The fact that we take the product of the $v_i$'s only within the support of $Q$ accounts for the fact that $Q$ may have coordinates which are equal to $I$.)
As a result, we have the eigendecomposition
\begin{equation}\label{eq:eigen}
    Q = \sum_{v \in \{\pm 1\}^n} \chi_{S_Q}(v) \cdot \ketbra{Q, v}{Q,v}.
\end{equation}
We are now ready to state our algorithm.

\SetKwInput{KwData}{Input}
\SetKwInput{KwResult}{Output}

\begin{algorithm}
    \caption{Local Fourier coefficient estimation}
    \label{alg:local-fourier}
    \KwData{Black-box ability to apply an unknown $n$-qubit quantum channel $\Phi$; locality parameter $k$; error parameters $\epsilon, \delta > 0$.}
    \KwResult{Estimates $\widehat{E}_{P_1, P_2}$ for all $P_1, P_2 \in \pauli{n}$ with $s_{\pbar} \leq k$ such that
    \begin{equation*}
    |\widehat{E}_{P_1, P_2} - \widehat{\Phi}_{\mathsf{loc}}(P_1, P_2)| \leq \epsilon.
    \end{equation*}}
    Set $M = \Theta(C^k \log(n/\delta) /\epsilon^2)$ for some absolute constant $C > 0$.\\
    \For{$i=1,\dots, M$}{
        Initialize a uniformly random Pauli eigenstate $\ket{A^{(i)}, v^{(i)}}$.\\
        Apply $\Phi$ to this state.\\
        Sample a uniformly random $B^{(i)} \in \pauli{n}$ and measure in this basis.\\
        Record the outcome $w^{(i)} \in \{\pm 1\}^n$.
    }
    \For{$P_1, P_2$ with $s_{\pbar} \leq k$}{
        For each $i = 1, \ldots, M$, define the Pauli matrices
    \begin{equation*}
        R^{(i)} = (A^{(i)})^{S_{\pbar}} \otimes I^{\overline{S_{\pbar}}},
        \quad
        \text{and}
        \quad
        Z^{(i)} = (B^{(i)})^{S_{\pbar}} \otimes I^{\overline{S_{\pbar}}}.
    \end{equation*}
    If $Z^{(i)} = \pauli{}(P_2 R^{(i)} P_1)$, set
        \begin{equation*}
            E_{P_1, P_2}^{(i)} =
            4^{s_{\pbar}} \cdot c(P_2 R^{(i)} P_1) \cdot\chi_{S_{R^{(i)}}}(v^{(i)}) \cdot \chi_{S_{Z^{(i)}}}(w^{(i)}).
        \end{equation*}
        Otherwise, set it to 0.\\
    Compute the estimate
    \begin{equation*}
        \widehat{E}_{P_1, P_2} \triangleq \frac{1}{M} \cdot \sum_{i=1}^M E_{P_1, P_2}^{(i)}.
    \end{equation*}
    }
\end{algorithm}

\begin{proposition}[Single sample unbiased estimators]
Let $P_1, P_2 \in \pauli{n}$ with $s_{\pbar} \leq k$, and let $1 \leq i \leq M$. Then the quantity $E_{P_1, P_2}^{(i)}$ from \Cref{alg:local-fourier} is an unbiased estimator for $\widehat{\Phi}_{\mathsf{loc}}(P_1, P_2)$.
\end{proposition}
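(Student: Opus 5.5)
We compute $\E[E^{(i)}_{P_1,P_2}]$ directly by averaging over the random input state, the random measurement basis, and the measurement outcome. Write $S = S_{\pbar}$. We then show the result equals $\E_{R\sim\pauli{S}}[\ntr(P_2 R P_1 \Phi(R))]$, which is $\widehat{\Phi}_{\mathsf{loc}}(P_1,P_2)$ since $\chi_{P_1,P_2}(R)^\dagger = P_2 R P_1$. The key structural observation is that only the restrictions $(A^{(i)})^S$, $(B^{(i)})^S$ and the coordinates of $v^{(i)}, w^{(i)}$ in $S$ enter the estimator. So we condition on these and average out everything outside $S$.

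\textbf{Step 1 (input side).} Fix $A^{(i)}$ and average over $v^{(i)}$ uniform in $\{\pm1\}^n$. By \Cref{eq:eigen} applied to $R = R^{(i)}$, and since $\ket{A,v}$ is an eigenvector of $R$ with eigenvalue $\chi_{S_R}(v)$,
\begin{equation*}
\E_v\big[\chi_{S_R}(v)\,\ketbra{A,v}{A,v}\big] = 2^{-n} R.
\end{equation*}
This holds because $\ket{A,v}$ ranges over an orthonormal eigenbasis of $R$; coordinates where $A$ is $I$ give the computational basis, which is still an eigenbasis of $I$. By linearity, $\E_v[\chi_{S_R}(v)\Phi(\ketbra{A,v}{A,v})] = 2^{-n}\Phi(R)$. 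When $A^{(i)}$ is uniform on $\pauli{n}$, the restriction $R^{(i)}$ is uniform on $\pauli{S}$.

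\textbf{Step 2 (output side).} Fix $B^{(i)}$ and let $Z = Z^{(i)}$. The Born rule gives, for any operator $\sigma$,
\begin{equation*}
\E_w[\chi_{S_Z}(w)] = \sum_w \chi_{S_Z}(w)\bra{B,w}\sigma\ket{B,w} = \tr(Z\sigma),
\end{equation*}
again by \Cref{eq:eigen}. Combining with Step 1 and using linearity in $\sigma$, the conditional expectation given $A,B$ is
\begin{equation*}
4^{s}\, c(P_2RP_1)\,\iver{Z = \pauli{}(P_2RP_1)}\, 2^{-n}\tr\big(Z\,\Phi(R)\big) = 4^{s}\,\iver{Z=\pauli{}(P_2RP_1)}\,\ntr\big(P_2RP_1\,\Phi(R)\big),
\end{equation*}
where the equality uses \Cref{eq:pauli-factor}.

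\textbf{Step 3 (average over $B$).} Here we need that $\pauli{}(P_2RP_1)$ is supported in $S$; this holds because $P_1,P_2,R$ all are. Since $Z$ is uniform on $\pauli{S}$ and independent of $A$, the indicator has probability $4^{-s}$, which cancels the factor $4^{s}$. Finally averaging over $R$ uniform in $\pauli{S}$ yields $\widehat{\Phi}_{\mathsf{loc}}(P_1,P_2)$.

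\textbf{Main obstacle.} The main care point is bookkeeping of phases and normalizations. We must check that $c(P_2RP_1)$ combined with $Z$ reconstructs $P_2RP_1$ exactly, that the $2^{-n}$ from Step 1 becomes the $\ntr$ normalization, and that the $I$-coordinate convention $\ket{I,\pm1}$ does not spoil the eigendecomposition. Each of these is routine once the conditioning order is fixed.
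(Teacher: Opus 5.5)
Your proposal is correct and follows essentially the same argument as the paper. In both, you condition on $A,B$ and use the eigendecompositions of $R$ and $Z$ in the bases $\ket{A,v}$ and $\ket{B,w}$ to collapse the averages over $v$ and $w$ into $4^{s}\,\ntr(P_2RP_1\,\Phi(R))$ on the event $Z=\pauli{}(P_2RP_1)$; you then use that this event has probability exactly $4^{-s}$ given $A$, and that $R$ is uniform on $\pauli{S}$. Your extra remarks only make explicit two points the paper uses implicitly: why $\ket{A,v}$ is an eigenbasis of $R$ even on identity coordinates, and why $\pauli{}(P_2RP_1)$ lies in $\pauli{S}$.
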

\begin{proof}
    We drop the $(i)$ superscript from our Pauli matrices for notational convenience.
    We also write $S$ for $S_{\pbar}$ and $s$ for $s_{\pbar}$.
    
    First, we consider the expectation of $E^{(i)}_{P_1, P_2}$ conditioned on a fixed $A,B$.
    If $Z \neq \pauli{}(P_2 R P_1)$, then $E^{(i)}_{P_1, P_2}$ is set to 0, so the expectation is 0.
    Otherwise, we receive the measurement outcome $\ket{B, w}$ with probability
    \begin{equation}
        \tr\Big(\ketbra{B, w}{B,w} \cdot \Phi(\ketbra{A, v}{A,v})\Big).
    \end{equation}
    Thus, we can write the expectation as
    \begin{align}
        \E[E_{P_1, P_2}^{(i)} \mid A, B]
        &= \E_{v \in \{\pm 1\}^n}\sum_{w \in \{\pm 1\}^n} 4^s \cdot c(P_2 R P_1) \cdot \chi_{S_R}(v) \cdot \chi_{S_Z}(w)\cdot \tr\Big(\ketbra{B, w}{B,w} \cdot \Phi(\ketbra{A, v}{A,v})\Big)\\
        &=2^{-n}4^s \cdot c(P_2 R P_1) \cdot\tr\Big(\Big(\sum_w \chi_{S_Z}(w) \cdot  \ketbra{B, w}{B,w}\Big) \cdot \Phi(\sum_v \chi_{S_R}(v) \cdot\ketbra{A, v}{A,v}\Big)\Big)\\
        &=2^{-n}4^s \cdot c(P_2 R P_1) \cdot\tr(Z \cdot \Phi(R)) \\
        &=4^s\ntr(P_2 R P_1 \cdot \Phi(R)),
    \end{align}
    where we used \Cref{eq:eigen,eq:pauli-factor} in the last two steps.
    Now, note that as $A$ varies uniformly over $\pauli{n}$, $R$ varies uniformly over $\pauli{S}$.
    Furthermore, conditioned on the value of $A$, we have $Z = \pauli{}(P_2 R P_1)$ with probability exactly $1/4^s$.
    Hence,
    \begin{equation}
        \E[E_{P_1, P_2}^{(i)}] = \E_{R\sim \pauli{S_{\pbar}}}\left[\ntr(\chi_{P_1, P_2}(R)^{\dagger} \cdot \Phi(R))\right].
    \end{equation}
    This concludes the proof.
\end{proof}

\begin{lemma}
    \label{lem:estimate-coefs}
    Let $k > 0$ be a locality parameter.
    Let $M = \Theta(C^k\log(n/\delta)/\epsilon^2)$, where $C > 0$ is an absolute constant.
    Then, the outputs of
    \Cref{alg:local-fourier} satisfy
    \begin{equation}
    |\widehat{E}_{P_1, P_2} - \widehat{\Phi}_{\mathsf{loc}}(P_1, P_2)| \leq \epsilon
    \end{equation}
    for all $P_1, P_2 \in \pauli{n}$ with $s_{\pbar} \leq k$ with probability at least $1-\delta$, using $M$ queries to the channel $\Phi$.
    Moreover, \Cref{alg:local-fourier} runs in time $\mathcal{O}(M n^k)$.
\end{lemma}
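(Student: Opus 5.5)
The plan is a standard concentration-plus-union-bound argument built on the preceding proposation (single sample unbiased estimators). Fix a pair $\pbar = (P_1,P_2)$ with $s_{\pbar} \le k$. By that proposition, the $M$ samples $E^{(1)}_{P_1,P_2},\dots,E^{(M)}_{P_1,P_2}$ are i.i.d.\ unbiased estimators of $\widehat{\Phi}_{\mathsf{loc}}(P_1,P_2)$. They are independent because each iteration of the first loop uses fresh randomness $A^{(i)}, v^{(i)}, B^{(i)}$ and a fresh application of $\Phi$. Each sample is bounded: $|c(\cdot)| = 1$ and the characters are $\pm 1$, so $|E^{(i)}_{P_1,P_2}| \le 4^{s_{\pbar}} \le 4^k$. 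Applying Hoeffding directly would therefore give $M = O(16^k \log(1/\delta')/\epsilon^2)$, which is already of the form $C^k \log(\cdot)/\epsilon^2$ with $C = 16$. The statement allows any absolute constant $C$, so this suffices.

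If a sharper constant were wanted, I would use Bernstein instead. The sample is nonzero only when $Z^{(i)} = \pauli{}(P_2R^{(i)}P_1)$, which happens with probability $4^{-s}$. Hence the second moment is at most $4^{-s}\cdot 16^{s} = 4^s$, and Bernstein gives $M = O(4^k \log(1/\delta')/\epsilon^2)$ for $\epsilon \le 1$. This is optional.

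Next comes the union bound. The number of pairs $(P_1,P_2)$ with $s_{\pbar}\le k$ is at most $\binom{n}{\le k}16^k \le (16n)^k$. Setting $\delta' = \delta/(16n)^k$ gives $\log(1/\delta') = k\log(16n) + \log(1/\delta) = O(k \log(n/\delta))$. The factor of $k$ is absorbed into $C^k$, for instance by enlarging $C$ by a factor of $e$ since $k \le e^k$. With probability at least $1-\delta$, all estimates are then simultaneously $\epsilon$-accurate.

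A key point is that the same $M$ experiments are reused for every pair. Each experiment uses exactly one application of $\Phi$, so the query count is exactly $M$. The union bound handles the resulting dependence across different pairs, so no independence between pairs is needed.

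For the runtime, the classical postprocessing loops over the $O((16n)^k) = O(n^k)$ pairs, with constants in $k$ absorbed as $k = O(1)$. For each pair it processes $M$ samples. Each sample requires restricting $A^{(i)}, B^{(i)}$ to $S_{\pbar}$, computing $P_2R^{(i)}P_1$ on at most $k$ qubits together with its phase $c$, and evaluating two characters over at most $k$ coordinates. All of this takes $O(k) = O(1)$ time, giving total time $O(Mn^k)$. Generating the samples costs $O(Mn)$, which is dominated.

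The only step requiring any care is the bookkeeping that makes the $k\log n$ from the union bound and the $16^k$ range factor fit into the claimed $\Theta(C^k\log(n/\delta)/\epsilon^2)$. This is handled by choosing $C$ large enough.
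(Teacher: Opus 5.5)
Your proposal is correct and follows the paper's proof: unbiasedness from the preceding proposition, Hoeffding with the $4^k$ magnitude bound, and a union bound over the at most $(16n)^k$ pairs. Since $c(\cdot)\in\{\pm 1,\pm\ii\}$ makes the samples complex-valued, apply Hoeffding to the real and imaginary parts separately, as the paper does. Your explicit absorption of the extra factor $k$ from $\log((16n)^k/\delta)$ into $C^k$ is more careful than the paper's ``by our choice of $M$,'' and the Bernstein refinement is optional.
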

\begin{proof}
    Consider a fixed $P_1$ and $P_2$.
    Each $E^{(i)}_{P_1, P_2}$ is an unbiased estimator for $\widehat{\Phi}_{\mathsf{loc}}(P_1, P_2)$ which is bounded in magnitude by $4^k$. As a result, Hoeffding's inequality, applied to the real and imaginary parts of the estimator, implies that
    \begin{equation}
        \Pr[|\widehat{E}_{P_1, P_2} - \widehat{\Phi}_{\mathsf{loc}}(P_1, P_2)| \geq \epsilon] \leq 4 e^{-4M \epsilon^2/16^k}.
    \end{equation}
    Now, the number of $P_1, P_2 \in \pauli{n}$ with $s_{\pbar} \leq k$ is at most $\binom{n}{k} \cdot 16^k \leq (16n)^k$.
    Hence, by the union bound, the probability that there exists an $E_{P_1, P_2}$ with error more than $\epsilon$ is at most
    \begin{equation}
        (16n)^k \cdot 2 e^{-2M \epsilon^2/16^k} \leq \delta,
    \end{equation}
    by our choice of $M$. This completes the proof.
\end{proof}

\section{Local Fourier coefficients of the time evolution operator}\label{sec:local-lindblad-fourier}

An important ingredient of this work is the local Fourier coefficients of the time evolution operator corresponding to our Lindbladian $e^{\calL_x t}(\cdot)$. 
We begin by introducing some notation we will use to represent these local Fourier coefficients.

\begin{definition}[Vector of expectation values]
    For a (parameterized) Lindbladian $\lind_x$ and a time $t \in \R$, we define the vector $E: \C^m \to \C^m$ as follows:
    \begin{align}
        (E(x))_{P_1, P_2} = \E_{R \sim \pauli{S_{\pbar}}}[\ntr(\chi_{P_1, P_2}(R)^{\dagger} \cdot e^{\calL_x t}(R))].
    \end{align}
\end{definition}
By a Taylor series expansion, we can write
\begin{equation}
    e^{\calL_x t}(R) = \sum_{\ell=0}^\infty \frac{t^\ell}{\ell!} \calL_x^{\ell}(R),
\end{equation}
where $R \in \pauli{n}$.
Hence, we have
\begin{align}
    (E(x))_{P_1, P_2}
    &= \sum_{\ell=0}^\infty \frac{t^\ell}{\ell!}\E_{R \sim \pauli{S_{\pbar}}}[\ntr(\chi_{P_1, P_2}(R)^{\dagger} \cdot \calL_x^{\ell}(R))]\\
    &= \sum_{\ell=1}^\infty \frac{t^\ell}{\ell!}\E_{R \sim \pauli{S_{\pbar}}}[\ntr(\chi_{P_1, P_2}(R)^{\dagger} \cdot \calL_x^{\ell}(R))].\label{eq:remove-zero}
\end{align}
In the second step, we used the fact that the $\ell = 0$ expectation is
\begin{equation}
    \E_{R \sim \pauli{S_{\pbar}}}[\ntr(\chi_{P_1, P_2}(R)^{\dagger} \cdot \calL_x^{0}(R))]
    = \E_{R \sim \pauli{S_{\pbar}}}[\ntr(\chi_{P_1, P_2}(R)^{\dagger} \cdot R)]
    = 0,
\end{equation}
due to \Cref{lem:restricted-orthogonality}, and the fact that at least one of $P_1, P_2$ is non-identity.
Precisely understanding the infinite sum in \Cref{eq:remove-zero} is challenging;
however, we show in \Cref{sec:series} that it is well-approximated by its linear term (the $\ell=1$ term).
Motivated by this, we dedicate this section to understanding this linear term.

The $\ell = 1$ term in \Cref{eq:remove-zero} is given by
\begin{equation}
    t\cdot \E_{R \sim \pauli{S_{\pbar}}}[\ntr(\chi_{P_1, P_2}(R)^{\dagger} \cdot \calL_x(R))]
\end{equation}
Recalling the definition of a (parameterized) Lindbladian, we have
\begin{align}
    \mathcal{L}_x(R) &=\sum_{Q_1\neq I, Q_2} x_{Q_1,Q_2}\left(Q_1 R Q_2 - \frac{1}{2} \{Q_2Q_1, R\}\right)\\
    &=\sum_{Q_1\neq I, Q_2} x_{Q_1,Q_2}\left(\chi_{Q_1, Q_2}(R) - \frac{1}{2} c(Q_2 Q_1) \cdot \chi_{\pauli{}(Q_2 Q_1), I}(R)- \frac{1}{2} c(Q_2 Q_1) \cdot \chi_{I, \pauli{}(Q_2 Q_1)}(R)\right).\label{eq:whatevs}
\end{align}
Now, we can use \Cref{lem:restricted-orthogonality} to calculate the expectation.
The simplest case is when $P_1, P_2 \neq I$. In this case,
\begin{equation}\label{eq:not-i}
    \E_{R \sim \pauli{S_{\pbar}}}[\ntr(\chi_{P_1, P_2}(R)^{\dagger} \cdot \calL_x(R))]
    = \sum_{\qbar \succeq \pbar} x_{\qbar}.
\end{equation}
On the other hand, let $\pbar = (P, I)$, with $P \neq I$ (we do not use the $\pbar = (I, P)$ case).
Of the three terms in \Cref{eq:whatevs}, the first can be ``confused'' with $(P, I)$ exactly when $\qbar \succeq (P, I)$;
the second can be ``confused'' with $(P, I)$ when $\pauli{}(Q_2Q_1)=P$;
and the third can never be ``confused'' with $(P, I)$.
In the second of these cases, note that $Q_2 = \pauli{}(PQ_1)$
and $c(Q_2 Q_1) = \overline{c(PQ_1)}$;
this is because
\begin{equation}
    P Q_1
    = c(P Q_1) \pauli{}(P Q_1)
    = c(P Q_1) Q_2
    \quad
    \Rightarrow
    \quad
    Q_2 Q_1
    = \overline{c(P Q_1)} P
    = c(Q_2 Q_1) P.
\end{equation}
As a result,
 \begin{align}\label{eq:ok-i}
    \E_{R \sim \pauli{S_{\pbar}}}[\ntr(\chi_{P, I}(R)^{\dagger} \cdot \calL_x(R))]
    &= \sum_{\qbar\succeq(P,I)}x_{\qbar}
      -\frac12\sum_{\substack{Q_1\neq I,\,Q_2,\\\pauli{}(Q_2Q_1)=P}}
      c(Q_2Q_1)\cdot x_{Q_1,Q_2}\\
      &= \sum_{\qbar\succeq(P,I)}x_{\qbar}
      -\frac12\sum_{Q \neq I}
      \overline{c(P Q)}\cdot x_{Q, \pauli{}(PQ)}.
 \end{align}
To help us analyze these expressions, we introduce the following notation.
\begin{notation}
    \label{not:A}
    For $0 \leq k \leq n$, we write $A_k$ for the square matrix whose rows and columns are indexed by pairs $P_1, P_2$ with $P_1 \neq I$ which acts as follows:
    \begin{equation}
        (A_k x)_{\pbar} \triangleq \E_{R \sim \pauli{S_{\pbar}}}[\ntr(\chi_{\pbar}(R)^{\dagger} \cdot \calL_x(R))].
    \end{equation}
    From \Cref{eq:not-i,eq:ok-i}, we have
    \begin{equation}\label{eq:a-def}
        (A_k x)_{\pbar} =
        \left\{\begin{array}{ll}
        \sum_{\qbar \succeq \pbar} x_{\qbar}& \text{if $P_1, P_2 \neq I$},\\
        \sum_{\qbar\succeq(P,I)}x_{\qbar}
      -\frac12\sum_{Q \neq I}
      \overline{c(P Q)}\cdot x_{Q, \pauli{}(PQ)}& \text{if $\pbar = (P, I)$}.
        \end{array}\right.
    \end{equation}
\end{notation}
\noindent
Applying this notation to \Cref{eq:remove-zero}, we have that
\begin{equation}\label{eq:first-order}
    E_{\pbar}(x) = t \cdot (A_k x)_{\pbar} + \sum_{\ell=2}^\infty \frac{t^\ell}{\ell!}\E_{R \sim \pauli{S_{\pbar}}}[\ntr(\chi_{P_1, P_2}(R)^{\dagger} \cdot \calL_x^{\ell}(R))].
\end{equation}

The linear $\ell = 1$ term of our expansion has a nice interpretation as the local Fourier coefficients of the Lindbladian $\calL_x$.
From \Cref{eq:a-def}, we see that these expressions are a linear combination of the true Lindbladian parameters $x$.
We are interested in the inverse of $A$, i.e., how to recover the true Lindbladian parameters if we know either the local Fourier coefficients or approximations of them.
To understand this, we first introduce the following matrix.

\begin{notation}
    For $0 \leq k \leq n$, we write $V_k$ for the $m \times m$ matrix which acts as follows. For $P_1, P_2 \neq I$,
    \begin{align}
        (V_k y)_{\pbar}
        \triangleq \sum_{\qbar \succeq \pbar} (-1)^{s_{\qbar} - s_{\pbar}} y_{\qbar}.
    \end{align}
    Otherwise, if $\pbar = (P, I)$,
    \begin{align}\label{eq:v-def}
        (V_k y)_{P, I} \triangleq 2 y_{P, I} + \sum_{\substack{Q, S_Q \subseteq S_P \\ Q \neq I, P}} \overline{c(PQ)}\cdot y_{Q, \pauli{}(PQ)} +   \sum_{\substack{\qbar \succeq (P, I)\\\qbar \neq (P, I)}} (-1)^{s_{\qbar}- s_{P}} y_{\qbar}-   \sum_{\substack{\qbar \succeq (I, P)\\\qbar \neq (I, P)}} (-1)^{s_{\qbar}- s_{P}} y_{\qbar}.
    \end{align}
\end{notation}

Next, we show that the $A_k$ matrix is invertible and that its inverse is equal to $V_k$.
This implies that it is possible to recover the true Lindbladian parameters if we know the local Fourier coefficients exactly.
To begin, we need the following helper lemma.

\begin{lemma}[Helper lemma] \label{lem:helper}
    Suppose that $\rbar \succeq \pbar$. Then
    \begin{equation}
         \sum_{\substack{\qbar: \qbar\succeq \pbar\\\rbar\succeq \qbar}} (-1)^{s_{\qbar}-s_{\pbar}}
         = \left\{\begin{array}{rl}
         1 & \text{if $\rbar = \pbar$},\\
         0 & \text{otherwise}.
         \end{array}\right.
    \end{equation}
\end{lemma}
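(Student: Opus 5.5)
The plan is to show that the interval $\{\qbar : \rbar \succeq \qbar \succeq \pbar\}$ in the confusion order is in bijection with the subsets of a fixed finite set, with $s_{\qbar} - s_{\pbar}$ equal to the subset's size. Once that is established, the sum is $\sum_{T \subseteq U} (-1)^{|T|}$, which is $1$ when $U = \emptyset$ and $0$ otherwise, by the binomial theorem.

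\textbf{Step 1 (structure of $\succeq$).} Unwinding the definition, $\qbar \succeq \pbar$ means three things:
\begin{itemize}
\item $\qbar$ agrees with $\pbar$ on $S_{\pbar}$;
\item outside $S_{\pbar}$, $Q_1$ and $Q_2$ coincide with some common Pauli $W$;
\item hence $S_{\qbar} = S_{\pbar} \sqcup \supp(W)$.
\end{itemize}
So $\qbar$ is determined by $W$, a Pauli on $\overline{S_{\pbar}}$, and $s_{\qbar} - s_{\pbar} = |\supp(W)|$.

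Fix $\rbar \succeq \pbar$ with associated Pauli $W_R$, and let $U = \supp(W_R) = S_{\rbar} \setminus S_{\pbar}$.

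\textbf{Step 2 (characterize $\rbar \succeq \qbar$).} For $\qbar \succeq \pbar$ with Pauli $W$, I claim that $\rbar \succeq \qbar$ holds iff $W$ is the restriction of $W_R$ to some subset $T \subseteq U$, with identity elsewhere.

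For the forward direction, suppose $\rbar \succeq \qbar$. Then $\rbar$ agrees with $\qbar$ on $S_{\qbar} = S_{\pbar} \cup \supp(W)$. So $W_R$ agrees with $W$ on $\supp(W)$, which forces $\supp(W) \subseteq U$ and $W = W_R|_{\supp(W)}$.

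For the converse, take $W = W_R|_T$. Then $\rbar$ agrees with $\qbar$ on $S_{\pbar} \cup T = S_{\qbar}$. Outside $S_{\qbar}$, $R_1 = R_2$ because both equal $W_R$ there.

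Thus the map $T \mapsto \qbar$ is a bijection from subsets of $U$ to the interval, and $s_{\qbar} - s_{\pbar} = |T|$. Some care is needed in checking that $\qbar$ is well-defined, i.e.\ that $S_{\qbar}$ really equals $S_{\pbar} \sqcup T$: this holds because $W_R$ is non-identity on every site of $U$.

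\textbf{Step 3 (conclude).} The sum equals $\sum_{T \subseteq U} (-1)^{|T|} = (1-1)^{|U|}$. This is $1$ iff $U = \emptyset$, which is equivalent to $\rbar = \pbar$.

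The only delicate point is Step 2. There one must be careful that "agree on $S_{\qbar}$" is read as equality of the restricted Paulis, so that sites where $W$ is identity impose no constraint. Everything else is routine.
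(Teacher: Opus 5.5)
Your proof is correct and follows essentially the same route as the paper's. Both arguments parametrize the interval $\{\qbar : \rbar \succeq \qbar \succeq \pbar\}$ by subsets $T \subseteq S_{\rbar} \setminus S_{\pbar}$, use that $\rbar$ is non-identity on every site of $T$ to get $s_{\qbar} - s_{\pbar} = |T|$, and then evaluate $\sum_{T}(-1)^{|T|}$; yours just checks the bijection in both directions more explicitly than the paper does.
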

\begin{proof}
    The pairs $\qbar$ which satisfy $\rbar \succeq \qbar$ and $\qbar \succeq \pbar$ are exactly those which (i) agree with $\pbar$ on $S_{\pbar}$,
    (ii) agree with $\rbar$ on some subset $T$ of $S_{\rbar} \setminus S_{\pbar}$,
    and (iii) are identity on the remaining qubits in $S_{\rbar}$.
    As $\rbar$ is non-identity on every qubit in $T$,
    we have $s_{\qbar} - s_{\pbar} = |T|$. Thus,
    \begin{equation}
        \sum_{\substack{\qbar: \qbar\succeq \pbar\\\rbar\succeq \qbar}} (-1)^{s_{\qbar}-s_{\pbar}}
        =
        \sum_{T\subseteq S_{\rbar}\setminus S_{\pbar}}(-1)^{|T|}
        =
        \left\{\begin{array}{rl}
         1 & \text{if $\rbar = \pbar$},\\
         0 & \text{otherwise}.
         \end{array}\right.
    \end{equation}
    This completes the proof.
\end{proof}

To prove that $V_k$ is the inverse of $A_k$,
we first show that it successfully recovers any Lindbladian parameter $x_{P_1, P_2}$ with $P_1, P_2 \neq I$.

\begin{lemma}
    For any $P_1, P_2 \neq I$, we have $(V_k A_k x)_{\pbar} = x_{\pbar}$.
\end{lemma}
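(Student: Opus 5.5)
Since $P_1, P_2 \neq I$, the definition of $V_k$ gives $(V_k A_k x)_{\pbar} = \sum_{\qbar \succeq \pbar} (-1)^{s_{\qbar}-s_{\pbar}} (A_k x)_{\qbar}$. Every $\qbar \succeq \pbar$ has $Q_1^{S_{\pbar}} = P_1 \neq I$ and $Q_2^{S_{\pbar}} = P_2 \neq I$. Hence both $Q_1$ and $Q_2$ are non-identity, and only the first case of \Cref{eq:a-def} ever enters. Substituting it gives the double sum
\begin{equation}
    (V_k A_k x)_{\pbar} = \sum_{\qbar \succeq \pbar} (-1)^{s_{\qbar}-s_{\pbar}} \sum_{\rbar \succeq \qbar} x_{\rbar}.
\end{equation}
The plan is to exchange the order of summation and apply \Cref{lem:helper}.

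For the exchange, I need the relation $\succeq$ to be transitive. Suppose $\rbar \succeq \qbar \succeq \pbar$. Then $S_{\pbar} \subseteq S_{\qbar}$, because $\qbar$ agrees with $\pbar$ on $S_{\pbar}$ and agrees with itself off it. Restricting $\rbar$ to $S_{\pbar} \subseteq S_{\qbar}$ therefore gives $\pbar$. Outside $S_{\pbar}$ we still have $R_1 = R_2$: on $S_{\qbar}\setminus S_{\pbar}$ because there $R_i = Q_i$ and $Q_1 = Q_2$, and outside $S_{\qbar}$ by definition. So $\rbar \succeq \pbar$, and the double sum becomes
\begin{equation}
    \sum_{\rbar \succeq \pbar} x_{\rbar} \sum_{\substack{\qbar:\, \qbar \succeq \pbar\\ \rbar \succeq \qbar}} (-1)^{s_{\qbar}-s_{\pbar}}.
\end{equation}

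By \Cref{lem:helper}, the inner sum equals $1$ if $\rbar = \pbar$ and $0$ otherwise. This yields $x_{\pbar}$, as claimed.

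Two details need care. The first is index ranges: the matrices are $m\times m$ and indexed by pairs with $P_1 \ne I$ (restricted by locality $k$). So I must check that every $\qbar$ and $\rbar$ appearing is a valid index, or else treat all sums as running over the same index set consistently; the identity in \Cref{lem:helper} uses all intermediate $\qbar$. I expect this bookkeeping, which amounts to verifying that the indexing is closed under the intermediate pairs, to be the only real obstacle. The second is that all sums are finite, since there are finitely many Pauli pairs, so interchanging them is harmless.
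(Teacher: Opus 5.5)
Your proposal is correct and follows the paper's proof: you substitute the first case of \Cref{eq:a-def}, exchange the order of summation, and apply \Cref{lem:helper}, and your explicit transitivity check for $\succeq$ is a step the paper leaves implicit. The index-range worry you raise is settled by the containment you already proved, applied to $\rbar \succeq \qbar$: it gives $S_{\qbar} \subseteq S_{\rbar}$, so every intermediate $\qbar$ is at least as local as $\rbar$, and it has $Q_1^{S_{\pbar}} = P_1 \neq I$, so it is a valid index under either indexing convention.
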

\begin{proof}
    To see this,
    \begin{align}
        (V_k A_k x)_{\pbar}
        &= \sum_{\qbar \succeq \pbar} (-1)^{s_{\qbar}-s_{\pbar}}(A_kx)_{\qbar}\\
        &= \sum_{\qbar \succeq \pbar} (-1)^{s_{\qbar}-s_{\pbar}}\sum_{\rbar \succeq \qbar}x_{\rbar}\\
        &= \sum_{\rbar\succeq \pbar}x_{\rbar}
            \sum_{\substack{\qbar: \qbar\succeq \pbar\\\rbar\succeq \qbar}} (-1)^{s_{\qbar}-s_{\pbar}}
        = x_{\pbar},
    \end{align}
    where the last step used \Cref{lem:helper}. This completes the proof.
\end{proof}

Next, we show that $V_k$ also recovers any Lindbladian parameter $x_{\pbar}$ with $\pbar = (P, I)$.

\begin{lemma}
    For any $P \neq I$, we have $(V_k A_k x)_{P, I} = x_{P, I}$.
\end{lemma}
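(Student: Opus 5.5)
We expand $(V_k A_k x)_{P,I}$ using the definition \eqref{eq:v-def} with $y = A_k x$. Each of the four groups of terms is then rewritten as a sum over coefficients $x_{\rbar}$. We sort the pairs $\rbar$ according to how they look on $S \triangleq S_P$ and check that everything except $x_{P,I}$ cancels. Throughout we use \eqref{eq:a-def}.

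\emph{First term.} In $2y_{P,I}$ we use the $(P,I)$ case of \eqref{eq:a-def}. This gives $2\sum_{\rbar\succeq(P,I)}x_{\rbar}-\sum_{Q\neq I}\overline{c(PQ)}\,x_{Q,\pauli{}(PQ)}$. We split off the $Q=P$ term, which equals $x_{P,I}$. We then partition the remaining $Q$ by the restriction $Q^S$ into three classes: (a) $Q^S\notin\{I,P\}$; (b) $Q^S=P$ with $Q\neq P$; (c) $Q^S=I$.

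\emph{Second term.} In the second term of \eqref{eq:v-def}, $Q$ is supported in $S$ with $Q\neq I,P$. So both $Q$ and $\pauli{}(PQ)$ are non-identity, and we may use the first case of \eqref{eq:a-def}. We check that $S_Q\cup S_{\pauli{}(PQ)}=S$, since $P\propto Q\cdot \pauli{}(PQ)$. The pairs $\rbar\succeq(Q,\pauli{}(PQ))$ are then exactly the pairs $(Q',\pauli{}(PQ'))$ with $Q'^S=Q$, where $Q'$ agrees with $\pauli{}(PQ')$ off $S$. The key phase check is $c(PQ')=c(PQ'^S)$, which holds because $P$ is the identity off $S$. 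So this term equals $+\sum_{Q \in \text{(a)}}\overline{c(PQ)}\,x_{Q,\pauli{}(PQ)}$, and it cancels class (a) from $2y_{P,I}$ exactly.

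\emph{Third and fourth terms.} For $\qbar\succeq(P,I)$ with $\qbar\neq(P,I)$, the second component is non-identity (it agrees with the first off $S$, and the two cannot both be identity there). So $y_{\qbar}=\sum_{\rbar\succeq\qbar}x_{\rbar}$. By \Cref{lem:helper}, summed over all $\qbar\succeq(P,I)$ including $\qbar=(P,I)$, we get $\sum_{\qbar\succeq(P,I)}(-1)^{s_{\qbar}-s_P}\sum_{\rbar\succeq\qbar}x_{\rbar}=x_{P,I}$. Hence the third term equals $x_{P,I}-\sum_{\rbar\succeq(P,I)}x_{\rbar}$.

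The same argument applies to $(I,P)$. Since $(I,P)$ is not a valid index, we treat the "$x_{I,P}$" appearing in the helper-lemma identity as $0$. This makes the fourth term equal to $+\sum_{\rbar\succeq(I,P),\,R_1\neq I}x_{\rbar}$.

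\emph{Collecting.} Class (b) consists of $x_{\rbar}$ with $\rbar\succeq(P,I)$ and $\rbar\neq(P,I)$. Class (c) consists of $x_{\rbar}$ with $\rbar\succeq(I,P)$ and $R_1\neq I$. In both classes $c(PQ)=1$. Adding everything up:
\[
\bigl(2\Sigma_{(P,I)} - x_{P,I} - (\Sigma_{(P,I)}-x_{P,I}) - \Sigma_{(I,P)}\bigr) + \bigl(x_{P,I}-\Sigma_{(P,I)}\bigr) + \Sigma_{(I,P)} = x_{P,I},
\]
where $\Sigma_{(P,I)}$ denotes $\sum_{\rbar\succeq(P,I)}x_{\rbar}$ and $\Sigma_{(I,P)}$ denotes the analogous sum over valid $\rbar\succeq(I,P)$.

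The main obstacle is the bookkeeping in the second step. We must verify the bijection between $\{\rbar\succeq(Q,\pauli{}(PQ))\}$ and the class-(a) pairs, and the phase identity $c(PQ')=c(PQ'^S)$. We must also make sure that the degenerate restrictions $Q^S\in\{I,P\}$ are excluded from the second term but appear in the third and fourth terms with the right signs.
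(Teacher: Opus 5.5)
Your proof is correct and is essentially the paper's argument. You expand the four groups of $V_k$ using the two cases of $A_k$, cancel the second group against the $Q^{S_P}\notin\{I,P\}$ part of $2(A_kx)_{P,I}$ via the correspondence $\rbar=(R,\pauli{}(PR))$ with $c(PR)=c(PQ)$, and evaluate the third and fourth groups with the helper lemma. The only differences are cosmetic: you keep the classes $Q^{S_P}=P$ and $Q^{S_P}=I$ separate where the paper lumps them together, and your zero-extension $x_{I,P}:=0$ spells out the step the paper covers with ``similarly'' for the fourth group.
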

\begin{proof}
    To see this, note that $(V_k A_k x)_{P, I}$ is equal to
    \begin{equation}\label{eq:real-big-sum}
        2 (A_k x)_{P, I} + \sum_{\substack{Q, S_Q \subseteq S_P \\ Q \neq I, P}} \overline{c(PQ)}\cdot (A_k x)_{Q, \pauli{}(PQ)} +   \sum_{\substack{\qbar \succeq (P, I)\\\qbar \neq (P, I)}} (-1)^{s_{\qbar}- s_{P}} (A_k x)_{\qbar}-   \sum_{\substack{\qbar \succeq (I, P)\\\qbar \neq (I, P)}} (-1)^{s_{\qbar}- s_{P}} (A_k x)_{\qbar}.
    \end{equation}
    Note that only the first term involves indexing $(A_k x)$ by a pair of Paulis, one of which is $I$;
    the other terms always index by two non-identity Paulis.
    Hence, the first two terms are equal to
    \begin{equation}\label{eq:first-two-combined}
        2\sum_{\qbar\succeq(P,I)}x_{\qbar}
      -\sum_{Q \neq I}
      \overline{c(P Q)}\cdot x_{Q, \pauli{}(PQ)}
      + 
        \sum_{\substack{Q, S_Q \subseteq S_P \\ Q \neq I, P}} \overline{c(PQ)}\cdot \sum_{\rbar \succeq (Q, \pauli{}(PQ))} x_{\rbar}.
    \end{equation}
    Note that in the second summation, the pair $(Q, \pauli{}(PQ))$ has support equal to $S_P$ and is identity outside of it.
    This means that $\rbar$ is equal to $(Q, \pauli{}(PQ))$ within $S_P$, and $R_1$ and $R_2$ agree outside of $S_P$.
    This means that (i) $\rbar = (R_1, \pauli{}(P R_1))$,
    (ii) $c(P R_1) = c(P Q)$,
    and (iii) $\rbar$ ranges over all possible pairs of this form, subject to $R_1|_{S_P}$ not being $I$ or $P$.
    Hence,
    \begin{align}
        \eqref{eq:first-two-combined}
        & = 2\sum_{\qbar\succeq(P,I)}x_{\qbar}
      -\sum_{Q \neq I}
      \overline{c(P Q)}\cdot x_{Q, \pauli{}(PQ)}
      + 
        \sum_{R : R|_{S_P} \neq I, P} \overline{c(P R)} \cdot x_{R, \pauli{}(PR)}\\
        &=2\sum_{\qbar\succeq(P,I)}x_{\qbar}
        -\sum_{\substack{Q \neq I,\\Q|_{S_P} = I \text{ or }P}} \overline{c(P Q)}\cdot x_{Q, \pauli{}(PQ)}\\
        &=2\sum_{\qbar\succeq(P,I)}x_{\qbar}
        -\sum_{\substack{Q \neq I,\\Q|_{S_P} = I \text{ or }P}}  x_{Q, PQ}\\&=2\sum_{\qbar\succeq(P,I)}x_{\qbar}
        - x_{P, I} - \sum_{\substack{Q \neq I,P\\Q|_{S_P} = I \text{ or }P}}  x_{Q, PQ},
    \end{align}
    where in the second-to-last line we use the fact that if $Q_{S_P} = I$ or $P$, then $PQ$ is a Pauli, and so $\pauli{}(PQ) = PQ$ and $c(PQ) = 1$.
    Now, the third term in \Cref{eq:real-big-sum} is equal to
    \begin{align}
        \sum_{\substack{\qbar \succeq (P, I)\\\qbar \neq (P, I)}} (-1)^{s_{\qbar}- s_{P}} (A_k x)_{\qbar}
        &= \sum_{\substack{\qbar \succeq (P, I)\\\qbar \neq (P, I)}} (-1)^{s_{\qbar}- s_{P}} \sum_{\rbar \succeq \qbar} x_{\rbar}\\
        &= \sum_{\qbar \succeq (P, I)} (-1)^{s_{\qbar}- s_{P}} \sum_{\rbar \succeq \qbar} x_{\rbar}- \sum_{\rbar \succeq (P, I)} x_{\rbar}.
    \end{align}
    The first of these terms is equal to
    \begin{equation}
        \sum_{\rbar \succeq (P, I)} x_{\rbar} 
        \sum_{\substack{\qbar: \qbar \succeq (P, I) \\ \rbar \succeq \qbar}}(-1)^{s_{\qbar}- s_{P}} 
        = x_{P, I},
    \end{equation}
    by \Cref{lem:helper}.
    Hence, the third term in \Cref{eq:real-big-sum} is equal to
    \begin{equation}
         x_{P, I} - \sum_{\rbar \succeq (P, I)} x_{\rbar}
        = -\sum_{\substack{R \neq I, P,\\R|_{S_P} = P}} x_{R, PR}.
    \end{equation}
    Similarly, the fourth term in \Cref{eq:real-big-sum} is equal to
    \begin{equation}
          \sum_{\substack{R \neq I, P,\\R|_{S_P} = I}} x_{R, PR}.
    \end{equation}
    Plugging everything back into \Cref{eq:real-big-sum}, we get that
\begin{align}
        (V_k A_k x)_{P, I}
        &= 2\sum_{\qbar\succeq(P,I)}x_{\qbar}-x_{P, I}
        -\sum_{\substack{Q \neq I, P,\\Q|_{S_P} = I \text{ or }P}}  x_{Q, PQ} -\sum_{\substack{R \neq I, P,\\R|_{S_P} = P}} x_{R, PR} +\sum_{\substack{R \neq I, P,\\R|_{S_P} = I}} x_{R, PR}\\ 
        &= 2\sum_{\qbar\succeq(P,I)}x_{\qbar}-x_{P, I}
        -2\sum_{\substack{Q \neq I, P,\\Q|_{S_P} = P}}  x_{Q, PQ}\\
        &=  2 x_{P, I} - x_{P, I}\\
        &= x_{P, I},
    \end{align}
    where in the third step we used that if $\qbar = (Q, PQ)$ and $Q|_{S_P} = P$, then $\qbar \succeq (P, I)$.
    This completes the proof.
\end{proof}

Combining these two lemmas, we have the following corollary.
\begin{corollary}[Inverse of $A_k$]
    $A_k$ is invertible, and its inverse is $A_k^{-1} = V_k$.
\end{corollary}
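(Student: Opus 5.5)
The two preceding lemmas together already establish $(V_k A_k x)_{\pbar} = x_{\pbar}$ for every index $\pbar$. The first lemma covers indices with $P_1, P_2 \neq I$. The second covers indices of the form $(P, I)$ with $P \neq I$. Since rows and columns of $A_k$ are indexed by pairs with $P_1 \neq I$, these two cases are exhaustive. So the plan is to note that $V_k A_k x = x$ holds for every vector $x \in \C^m$, and hence $V_k A_k = I$ as matrices.

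The remaining step is to pass from a left inverse to a genuine inverse. Two routes are available.

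\emph{Finite index set.} If $m$ is finite (the index set consists of all pairs of $n$-qubit Paulis with $P_1 \neq I$, of which there are finitely many), then $A_k$ and $V_k$ are square $m \times m$ matrices. A left inverse of a square matrix is automatically a two-sided inverse. Indeed, $V_k A_k = I$ forces $A_k$ to be injective, hence bijective by dimension counting, and then $V_k = A_k^{-1}$.

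\emph{Restricted index set.} If instead the index set is restricted to $k$-local pairs, closure must be checked. Every $\qbar$ with $\qbar \succeq \pbar$ has support containing $S_{\pbar}$. Also, the terms $(Q, \pauli{}(PQ))$ with $S_Q \subseteq S_P$ have support $S_P$. So both $A_k$ and $V_k$ map the coordinate space on the index set to itself, and the same square-matrix argument applies.

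The only point requiring care is this bookkeeping of which index set the matrices live on, so that "square" is justified. Beyond that the corollary follows immediately from the two lemmas, with no computation.
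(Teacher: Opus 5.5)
Your proposal is correct and matches the paper, which proves the corollary simply by combining the two preceding lemmas to get $V_k A_k = I$. Your added remarks, that a left inverse of a square matrix is automatically two-sided and that the $k$-local index set is closed under the relevant sums, spell out a step the paper leaves implicit.
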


Not only do we want $A$ to be invertible,
we also want both it and its inverse to be well-behaved.
We show that they are well-behaved in a precise technical sense in the following lemma.

\begin{lemma}[$A_k$ is well-behaved]
    \label{lem:A-norm-bounds}
    Let $A = A_k$ for $0 \leq k \leq n$.
    Then
    \begin{align}
        &\|A\|_{B_1 \to \infty}
        \leq \|A\|_{B_1 \to B_1}\leq 4^\locality,\label{eq:condition-one}\\
        &\|A^{-1}\|_{B_1 \to \infty}
        \leq \|A^{-1}\|_{B_1 \to B_1} \leq 4^\locality.\label{eq:condition-two}
    \end{align}
\end{lemma}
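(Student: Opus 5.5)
The first inequality in each line is immediate. Every $\pbar$ has nonempty support $S_{\pbar}$, because $P_1 \neq I$. So for any vector $y$ we have $|y_{\pbar}| \leq \sum_{\qbar : S_{\qbar} \ni i} |y_{\qbar}| \leq \lonorm{y}$ for any $i \in S_{\pbar}$, which gives $\norm{y}_\infty \leq \lonorm{y}$. Composing, $\|M\|_{B_1\to\infty} \leq \|M\|_{B_1\to B_1}$ for $M = A$ and $M = A^{-1} = V_k$.

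The main work is bounding $\|M\|_{B_1\to B_1}$. I would reduce it to a column-wise statement by linearity. Write $M y = \sum_{\rbar} y_{\rbar} M e_{\rbar}$. Then for a fixed site $i$,
\begin{equation*}
\sum_{\pbar : S_{\pbar}\ni i} |(My)_{\pbar}| \leq \sum_{\rbar} |y_{\rbar}| \sum_{\pbar : S_{\pbar} \ni i} |M_{\pbar,\rbar}|.
\end{equation*}
The structural fact I would establish is that every nonzero entry $M_{\pbar,\rbar}$ satisfies $S_{\pbar} \subseteq S_{\rbar}$. Given this, the inner sum vanishes unless $i \in S_{\rbar}$, so the whole right-hand side is at most $\lonorm{y}\cdot \max_{\rbar} \sum_{\pbar} |M_{\pbar,\rbar}|$. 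It then suffices to show that each column of $M$ has $\ell_1$-mass at most $4^k$, where the relevant $\rbar$ range over those indexing terms whose support has size at most $k$ (the $k$-local coordinates).

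I would check the support containment and the column count directly from \Cref{eq:a-def} and \Cref{eq:v-def}.

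For terms of the form $\qbar \succeq \pbar$ (with $\qbar = \rbar$ for $A$, and $\qbar$ replaced by the column index for $V_k$), the definition forces $S_{\pbar} \subseteq S_{\rbar}$. Given $\rbar$, the rows $\pbar$ it feeds into are determined by choosing a subset $S \subseteq S_{\rbar}$ and restricting $\rbar$ to $S$. This gives at most $2^{s_{\rbar}}$ entries, each of modulus $1$.

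For the extra coherent terms, I would count separately. In the $A$ case, $x_{Q,\pauli{}(PQ)}$ contributes to row $(P,I)$ with coefficient of modulus $1/2$. Here $P$ is determined by $Q$ and $Q_2$, namely $P = \pauli{}(Q_2 Q_1)$, so $S_P \subseteq S_{\rbar}$ and each column receives at most one such contribution. In the $V_k$ case, I would group terms by column $\rbar$.
\begin{itemize}
\item The $2y_{P,I}$ term gives $2$.
\item The term $\sum_{S_Q\subseteq S_P}\overline{c(PQ)} y_{Q,\pauli{}(PQ)}$ gives, for a column $(Q,Q_2)$, a unique row $P = \pauli{}(Q_2Q_1)$, with the support condition making $S_P \subseteq S_{\rbar}$ automatically. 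This contributes modulus $1$.
\item The two $\succeq$ sums each contribute at most $2^{s_{\rbar}}$.
\end{itemize}
Adding these up gives roughly $3\cdot 2^{k}+3$, which is at most $4^k$ when $k \geq 2$. For $A$ the total is $2^k + 1/2 \leq 4^k$.

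The step I expect to be the main obstacle is making this counting tight enough to land under $4^k$ for small $k$, especially $k=1$, where $3\cdot 2+3 = 9 > 4$. I would first try sharper accounting. When $\rbar = (R,I)$, the $(I,P)$-type sum cannot include $\rbar$. When $\rbar$ has both entries non-identity, it cannot equal $(P,I)$. The exclusions $\qbar \neq (P,I)$ remove the diagonal double-count. If this is still insufficient, I would verify that the bound only needs $k$ to be the locality of the relevant vectors, and I would handle $k=1$ by direct enumeration of the $4\times 4$ Pauli pairs on a single site.
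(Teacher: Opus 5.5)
Your argument is correct and is essentially the paper's argument organized by columns. For the rows with $P_1,P_2\neq I$, the paper writes $\sum_{\qbar\succeq\pbar}|x_{\qbar}|=\sum_{S\subseteq[k]}x^S_{\pbar}$ with $\lonorm{x^S}\le\lonorm{x}$. That is your observation that each column is pushed to at most $2^k$ rows of smaller support. The paper bounds the rows $(P,I)$ by $2$ separately and adds the two pieces, using $2^k+2\le 4^k$. Your single criterion, support containment $S_{\pbar}\subseteq S_{\rbar}$ plus maximal column $\ell_1$-mass, packages both halves cleanly. Your bound $2^k+1/2$ for $A$ is valid as written, and so is your bound $3\cdot 2^k+3$ for $V_k$ when $k\ge 2$.

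The $k=1$ difficulty comes only from over-counting the $(P,I)$ rows of $V_k$, where you charge $2^{s_{\rbar}}$ to each of the two $\succeq$ sums. Suppose $\rbar=(R_1,R_2)\succeq(P,I)$. Then $R_1=P$ and $R_2=I$ on $S_P$, while $R_1=R_2$ off $S_P$. So $S_P$ must be exactly the set of sites where $R_1\neq R_2$, and $P=\pauli{}(R_2R_1)$. The same forcing holds for $\rbar\succeq(I,P)$ and for the $\overline{c(PQ)}$ term.

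These three conditions are also mutually exclusive. They require, respectively, $R_1|_{S_P}=P$; $R_1|_{S_P}=I$; and $S_{R_1}\subseteq S_P$ with $R_1\neq I,P$. Any two of them together force $R_1\in\{I,P\}$ or a contradiction. Hence a column with $R_1,R_2\neq I$ meets the $(P,I)$ rows in at most one entry, of modulus $1$. A column $(R,I)$ receives only the diagonal $2$.

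So every column of $V_k$ has $\ell_1$-mass at most $\max(2,2^k+1)\le 4^k$ for all $k\ge 1$. No enumeration is needed. This is the column-side form of the paper's remark that each $(Q,PQ)$ appears at most once in the $(P,I)$ rows. Your enumeration fallback would also work, since support containment confines a $1$-local column to single-site rows, but the sharper count makes it unnecessary.
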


As the rows of $A$ and $A^{-1}$ in which $P_1, P_2 \neq I$ behave much differently than the rows in which $P_2 = I$,
we will handle these two cases separately.
First, we consider the case of $P_1, P_2 \neq I$.

\begin{lemma}
    Let $A = A_k$.
    Let $\Pi$ be the projector onto the rows $(P_1, P_2)$ with $P_1, P_2 \neq I$.
    Then
    \begin{align}
        &\|\Pi A\|_{B_1 \to \infty}
        \leq \|\Pi A\|_{B_1 \to B_1}\leq 2^\locality,\\
        &\|\Pi A^{-1}\|_{B_1 \to \infty}
        \leq \|\Pi A^{-1}\|_{B_1 \to B_1} \leq 2^\locality.
    \end{align}
\end{lemma}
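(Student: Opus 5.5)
The plan has two parts: first the comparison $\|\cdot\|_{B_1\to\infty}\le\|\cdot\|_{B_1\to B_1}$, then the main bound, obtained by swapping the order of summation so that each coefficient $y_{\qbar}$ is charged only a bounded number of times.

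For the comparison, note that $\|z\|_\infty\le\lonorm{z}$ for any vector $z$. Every index $\pbar$ with $P_1\neq I$ has nonempty support $S_{\pbar}$. Picking any $i\in S_{\pbar}$, the term $|z_{\pbar}|$ appears in the $B_1$-sum at site $i$, so $|z_{\pbar}|\le\lonorm{z}$.

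For the main bound, fix $y$ and a site $i$. In both cases I bound
\begin{equation*}
\sum_{\pbar:\, P_1,P_2\neq I,\; i\in S_{\pbar}} |(My)_{\pbar}|,
\end{equation*}
where $M$ is $A$ or $A^{-1}=V_k$. On these rows, both $A$ and $V_k$ have entries of magnitude at most one, and they are supported on pairs $\qbar\succeq\pbar$. By the triangle inequality, the sum is at most
\begin{equation*}
\sum_{\qbar} |y_{\qbar}|\cdot\#\{\pbar:\ P_1,P_2\neq I,\ i\in S_{\pbar},\ \qbar\succeq\pbar\}.
\end{equation*}
The key observation is that if $\qbar\succeq\pbar$, then $S_{\pbar}\subseteq S_{\qbar}$. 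Moreover, $\pbar$ is completely determined by $\qbar$ together with the set $S_{\pbar}$: inside $S_{\pbar}$ we have $P_j=Q_j^{S_{\pbar}}$, and outside it both are identity. So the number of eligible $\pbar$ is at most the number of subsets $S\subseteq S_{\qbar}$, which is $2^{s_{\qbar}}\le 2^k$ when $\qbar$ ranges over $k$-local pairs. In addition, any counted $\pbar$ satisfies $i\in S_{\pbar}\subseteq S_{\qbar}$, so only $\qbar$ with $i\in S_{\qbar}$ contribute. The sum is therefore at most
\begin{equation*}
2^k\sum_{\qbar\ni i}|y_{\qbar}|\le 2^k\lonorm{y}.
\end{equation*}
Taking the maximum over $i$ gives both claimed bounds at once, because the same counting argument applies verbatim to $\Pi A$ and $\Pi A^{-1}$.

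The main obstacle is making sure the index sets are used consistently. The columns $\qbar$ must be restricted to $k$-local pairs (the index set of dimension $m$) so that $2^{s_{\qbar}}\le 2^k$. The extra condition in $\qbar\succeq\pbar$ that $Q_1,Q_2$ agree outside $S_{\pbar}$ only shrinks the count, so it cannot hurt the bound. If $A_k$ is instead meant to range over all pairs, I would note that the rows in question are indexed by $k$-local $\pbar$ and that only $k$-local $\qbar$ appear in the vectors $y$ under consideration.
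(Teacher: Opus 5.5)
Your proof is correct and uses the same idea as the paper. Both bound $|(\Pi A y)_{\pbar}|$ and $|(\Pi A^{-1}y)_{\pbar}|$ by $\sum_{\qbar\succeq\pbar}|y_{\qbar}|$, then use that each $\qbar$ is confused with at most $2^k$ pairs $\pbar$, all with $S_{\pbar}\subseteq S_{\qbar}$. The difference is only bookkeeping: you swap the order of summation and index the confused $\pbar$ by the subset $S_{\pbar}\subseteq S_{\qbar}$, while the paper puts the same multiplicity count into $2^k$ auxiliary vectors $x^S$ (indexed by which identical pairs are set to identity), each with $\lonorm{x^S}\le\lonorm{x}$.
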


\begin{proof}
    For $P_1, P_2 \neq I$,
    we consider bounding the entry
    \begin{align}
        |(Ax)_{P_1, P_2}| &\leq \sum_{(Q_1, Q_2) \succeq (P_1, P_2)} |x_{Q_1, Q_2}|.
    \end{align}
    Note that $|(A^{-1}x)_{P_1, P_2}|$ is also bounded by the same quantity, so the entirety of the following argument will work for it as well.
    We will now show an equivalent way to write this expression which allows us to derive our desired bound.
    For each set $S \subseteq [\locality]$,
    we will construct a vector $x^S$ as follows:
    \begin{enumerate}
        \item Initialize $x^S_{P_1, P_2} = 0$ for all pairs of Paulis with $s_{\pbar} \leq k$.
        \item For each $\qbar = (Q_1, Q_2)$ with $s_{\qbar} \leq k$, consider the $\ell \leq k$ qubits in which the Paulis are identical, and number them from $1$ to $\ell$.
        \item For each $i \in S$, set the $i$-th identical pair in $Q_1$ and $Q_2$ to the identity $I$. Call the resulting Paulis $R_1$ and $R_2$.
        If there is some $i \in S$ for which there is no corresponding identical pair (meaning that $i > \ell$), do not update $x^S$.
        \item Otherwise, update $x^S_{R_1, R_2}\gets x^S_{R_1, R_2} + |x_{Q_1, Q_2}|$.
    \end{enumerate}
    Note that for each $(Q_1, Q_2) \succeq (P_1, P_2)$,
    there is exactly one choice of $S$ so that $(R_1, R_2) = (P_1, P_2)$.
    Thus,
    \begin{align}
        |(Ax)_{P_1, P_2}| \leq \sum_{(Q_1, Q_2) \succeq (P_1, P_2)} |x_{Q_1, Q_2}|
        = \sum_{S \subseteq [\locality]} (x^S)_{P_1, P_2}.
    \end{align}
    Moreover, $x^S$ has the same locality properties as $x$.
    In particular, $\lonorm{x^S} \leq \lonorm{x}$.
    This gives the desired bounds.
\end{proof}

\begin{lemma}
    Let $A = A_k$.
    Let $\Pi$ be the projector onto the rows $(P_1, P_2)$ with $P_1, P_2 \neq I$.
    Then
    \begin{align}
        &\|\overline{\Pi} A\|_{B_1 \to \infty}
        \leq \|\overline{\Pi} A\|_{B_1 \to B_1}\leq 2,\\
        &\|\overline{\Pi} A^{-1}\|_{B_1 \to \infty}
        \leq \|\overline{\Pi} A^{-1}\|_{B_1 \to B_1} \leq 2.
    \end{align}
\end{lemma}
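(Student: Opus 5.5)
The plan is to prove both bounds with one counting argument that exploits the explicit row formulas \Cref{eq:a-def} and \Cref{eq:v-def}. The claim to establish is that, in the rows $(P,I)$, every coordinate $x_{\qbar}$ (resp.\ $y_{\qbar}$) appears in \emph{at most one} row, with a coefficient of modulus at most $2$. Moreover, the Pauli $P$ labelling that row satisfies $S_P \subseteq S_{\qbar}$.

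Granting this claim, fix a site $i$. Then
\begin{equation*}
\sum_{P : S_P \ni i} |(\overline{\Pi} A x)_{P,I}| \leq \sum_{P: S_P\ni i} \sum_{\qbar \to (P,I)} 2|x_{\qbar}| \leq 2\sum_{\qbar : S_{\qbar}\ni i} |x_{\qbar}| \leq 2\lonorm{x},
\end{equation*}
where $\qbar \to (P,I)$ means that $x_{\qbar}$ occurs in row $(P,I)$. The middle inequality holds because each $\qbar$ is counted at most once, and $i\in S_P\subseteq S_{\qbar}$. The same computation, with $V_k$ in place of $A_k$, gives the bound for $A^{-1}$. The $B_1\to\infty$ bounds are immediate, since any single entry $|(\overline{\Pi}Ax)_{P,I}|$ is at most the site sum for any $i\in S_P$.

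To verify the claim for $A$, use \Cref{eq:a-def}. If $\qbar \succeq (P,I)$, then $Q_1 = P$ and $Q_2 = I$ on $S_P$, and $Q_1 = Q_2$ off $S_P$. Hence $S_P$ is exactly the set of sites where $Q_1\neq Q_2$, and $P$ is $Q_1$ restricted there. So $P$ is determined by $\qbar$, and $S_P\subseteq S_{\qbar}$. In the second sum, $\qbar = (Q,\pauli{}(PQ))$, so $\pauli{}(Q_2Q_1) = P$, and again $P$ is determined by $\qbar$ with $S_P \subseteq S_{\qbar}$. In the first case as well, $Q_2Q_1$ is proportional to $P$, so both sums assign $\qbar$ to the same row, namely $P = \pauli{}(Q_2Q_1)$. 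The total coefficient there has modulus at most $1 + \tfrac12 \le 2$, since $|c(\cdot)|=1$.

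For $V_k$ the argument is the same but requires a little more care, and this is the main point to check. In \Cref{eq:v-def}, the four groups of indices are: (1) $\qbar=(P,I)$ itself; (2) $\qbar = (Q,\pauli{}(PQ))$ with $S_Q\subseteq S_P$ and $Q\neq I,P$; (3) $\qbar\succeq(P,I)$ with $\qbar\neq(P,I)$; (4) $\qbar\succeq(I,P)$ with $\qbar\neq(I,P)$. In each group one has $\pauli{}(Q_2Q_1) = P$, because outside $S_P$ the two Paulis agree and cancel. So again the row is determined by $\qbar$, and $S_P\subseteq S_{\qbar}$. Within a fixed row the four groups are disjoint. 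Group (1) is the single pair $(P,I)$. Group (2) has $Q_1|_{S_P}\notin\{I,P\}$. Group (3) has $Q_1|_{S_P}=P$ and $\qbar\neq(P,I)$. Group (4) has $Q_1|_{S_P}=I$. Therefore each $y_{\qbar}$ appears once, with coefficient $2$, $\overline{c(PQ)}$, or $\pm1$, all of modulus at most $2$. This completes the claim and hence the lemma.
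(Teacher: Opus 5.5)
Your proof is correct and takes the same route as the paper: apply the triangle inequality to the explicit row formulas for $A_k$ and $V_k$, observe that each coordinate occurs at most once with a coefficient of modulus at most $2$, and sum over the rows whose support contains a fixed site $i$. You go slightly beyond the paper's write-up by noting that $\qbar$ determines its row via $P = \pauli{}(Q_2Q_1)$ with $S_P \subseteq S_{\qbar}$, which justifies the exchange of summations that the paper leaves implicit; your combined coefficient $1+\frac{1}{2}$ for $A$ also correctly accounts for the $Q=P$ term that the paper's displayed restriction to $P_1,P_2\neq I$ overlooks.
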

\begin{proof}
    Let $x$ be a vector such that $\lonorm{x} \leq 1$.
    Let $i \in [n]$.
    Then, we can bound the $B_1$ norm of $(\overline{\Pi} A^{-1})x$ associated with site $i \in [n]$ using \Cref{eq:v-def} as
    \begin{equation}
        \sum_{P:S_P \ni i} |(A^{-1} x)_{P, I}|
        \leq \sum_{P:S_P \ni i} \Big|2 x_{P, I} + \sum_{\substack{Q, S_Q \subseteq S_P \\ Q \neq I, P}} \overline{c(PQ)}\cdot x_{Q, \pauli{}(PQ)} +   \sum_{\substack{\qbar \succeq (P, I)\\\qbar \neq (P, I)}} (-1)^{s_{\qbar}- s_{P}} x_{\qbar}-   \sum_{\substack{\qbar \succeq (I, P)\\\qbar \neq (I, P)}} (-1)^{s_{\qbar}- s_{P}} x_{\qbar}\Big|
    \end{equation}
    To understand this expression, note that the second term ranges over all $(Q, PQ)$ with $S_Q \subseteq S_P$ and $Q \neq I, P$;
    the third term ranges over all $(Q, PQ)$ where $Q$ agrees with $P$ on $S_P$ (and arbitrary outside) and $Q \neq P$;
    and the fourth term ranges over all $(Q, PQ)$ where $Q$ agrees with $I$ on $S_P$ (and arbitrary outside) and $Q \neq I$.
    Together, every term of the form $(Q, PQ)$ appears at most once, and $Q = I$ and $P$ both appear zero times. Hence,
    \begin{align}
        \sum_{P:S_P \ni i} |(A^{-1} x)_{P, I}| 
        &\leq \sum_{P:S_P \ni i} \Big(2|x_{P, I}| + \sum_{Q \neq I, P} |x_{Q, \pauli{}(PQ)}|\Big)\label{eq:gonna-cite}\\
        &\leq \sum_{P:S_P \ni i} 2 |x_{P, I}| + \sum_{\substack{\pbar:i \in S_{\pbar},\\  P_1, P_2 \neq I}} |x_{\pbar}|\\
        &\leq 2\lonorm{x}\\
        &\leq 2.
    \end{align}
    As for $(\overline{\Pi} A) x$, the $B_1$ norm associated with site $i \in[n]$ is
    \begin{align}
        \sum_{P:S_P \ni i} |(A x)_{P, I}|
        & \leq \sum_{P:S_P \ni i} \Big(\sum_{\qbar \succeq (P, I)} |x_{\qbar}| + \frac{1}{2} \sum_{Q \neq I} |x_{Q, \pauli{}(PQ)}|\Big)\\
        &\leq \sum_{\pbar: i \in S_{\pbar}} |x_{P_1, P_2}|
        + \frac{1}{2} \sum_{\substack{\pbar:i \in S_{\pbar},\\  P_1, P_2 \neq I}}|x_{\pbar}|\\
        &\leq 2 \lonorm{x}\\
        &\leq 2.
    \end{align}
    Both of these held for all $i \in [n]$, so this completes the proof.
\end{proof}

Combining the previous two lemmas with the triangle inequality and the fact that $2^{\locality} + 2 \leq 4^{\locality}$ for $k \geq 1$ yields \Cref{lem:A-norm-bounds} as a consequence.

\section{Series expansions}
\label{sec:series}

We will continue our study of the local Fourier coefficients of the time evolution operator, defined in \Cref{sec:local-lindblad-fourier} as
\begin{equation*}
    (E(x))_{P_1, P_2} = \E_{R \sim \pauli{S_{\pbar}}}[\ntr(\chi_{P_1, P_2}(R)^{\dagger} \cdot e^{\calL_x t}(R))].
\end{equation*}
We showed in \Cref{eq:first-order} that these coefficients, when Taylor expanded as a function of $t$, can be expressed as
\begin{equation}\label{eq:first-order-repeat}
    E_{\pbar}(x) = t \cdot (A_k x)_{\pbar} + \sum_{\ell=2}^\infty \frac{t^\ell}{\ell!}\E_{R \sim \pauli{S_{\pbar}}}[\ntr(\chi_{P_1, P_2}(R)^{\dagger} \cdot \calL_x^{\ell}(R))].
\end{equation}
In this section, we will show that this Taylor series converges and concentrates around its first-order term $t \cdot (A_k x)_{\pbar}$ for sufficiently small $t$.
Specifically, we will show that for Lindbladians with $\lonorm{\mathcal{L}} \leq \energy$, $t$ only needs to be smaller than roughly $1/\energy$ for this series to converge.
Our main goal is to show the following lemma.

\ignore{In particular, the objects which are most relevant to analyze for our learning algorithm are as follows.

\begin{definition}[Vector of expectation values]
    For a (parameterized) Lindbladian $\lind_x$ and a time $t \in \R$, we consider the following vector $E: \C^m \to \C^m$ as follows:
    \begin{align}
        (E(x))_{\pbar} \triangleq \E_{R \sim S_{\pbar}}[\ntr(e^{\calL_x^\dagger t}(P_2 R P_1) R)],
    \end{align}
    where $\pbar = (P_1, P_2)$.
\end{definition}

By expanding $e^{\lind^\dagger t}$ as above, we also obtain a series expansion for these expectation values.
The goal of this section is to prove the following lemma, which bounds the contribution of the higher order terms of the series.
}

\begin{lemma}[Operator norm bound on higher-order terms]
    \label{lem:convexity-lindblad}
    Suppose that $\lonorm{x} \leq g$.
    Suppose $t > 0$ satisfies $t < 1/(4e\locality \energy)$.
    Let $A$ be the matrix defined in \Cref{not:A}.
    Then, the Jacobian $J(x)$ of $E(x)/t$ satisfies
    \begin{equation}
        \norm{J(x) - A}_{B_1 \to B_1} \leq (23k)! gt.
    \end{equation}
    Note that $E(x) = t A x + B(x)$, where the Jacobian of $B(x)/t$ is $J(x) - A$.
\end{lemma}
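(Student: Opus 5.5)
Starting from \Cref{eq:first-order-repeat}, the Jacobian of $E(x)/t$ is $A$ plus the Jacobian of
\begin{equation*}
B(x)/t = \sum_{\ell \ge 2} \frac{t^{\ell-1}}{\ell!}\, b_\ell(x), \qquad (b_\ell(x))_{\pbar} \triangleq \E_{R \sim \pauli{S_{\pbar}}}[\ntr(\chi_{\pbar}(R)^\dagger \calL_x^\ell(R))].
\end{equation*}
Each $b_\ell$ is a homogeneous polynomial of degree $\ell$ in $x$, since $\calL_x$ is linear in $x$. Its derivative in a direction $y$ is therefore
\begin{equation*}
\sum_{j=0}^{\ell-1} \E_R\bigl[\ntr\bigl(\chi_{\pbar}(R)^\dagger\, \calL_x^{j}\,\calL_y\,\calL_x^{\ell-1-j}(R)\bigr)\bigr].
\end{equation*}
Hence it suffices to show that for every $\ell\ge 2$ and every $y$,
\begin{equation*}
\Bigl\lonorm{\textstyle\sum_j(\cdots)}\Bigr\lonorm \le \ell\,(C_k\ell)^{\ell}\, g^{\ell-1}\lonorm{y}
\end{equation*}
with $C_k \le 4ek$-type constants. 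Summing the geometric series in $t\,C_k\ell g/\ell!^{1/\ell}$, using $\ell^\ell/\ell!\le e^\ell$ and $t<1/(4ekg)$, then gives a bound of the form $c_k\, g t\lonorm{y}$ with leading factor $t g$ coming from $\ell=2$. The crude constant $(23k)!$ leaves plenty of room for all the combinatorial losses.

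The core step is a cluster-type expansion of the nested product. Expand each $\calL_x$ and $\calL_y$ into its Pauli terms: a term indexed by $\qbar$ acts on $R$ by $\chi_{\qbar}$ and the two anticommutator pieces. Then $\calL^{(\cdot)}$ applied $\ell$ times to $R$ becomes a sum over sequences $(\qbar^{(1)},\dots,\qbar^{(\ell)})$ of terms, each contributing at most $2|x_{\qbar^{(i)}}|$ (or $|y|$) in absolute value. Each resulting operator is a Pauli times a phase, up to factors $3^\ell$ from the three pieces.

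The key locality observation is this. Expanding $\calL_x(R)$ in the form of \Cref{eq:lind}, a commutator $[Q,\cdot]$ or the combination $Q_1\rho Q_2-\frac12\{Q_2Q_1,\rho\}$ annihilates any operator that commutes with, and lies in a disjoint region from, $S_{\qbar}$. Concretely, if $S_{\qbar}$ is disjoint from the current support, the dissipator term vanishes on it because $Q_1\rho Q_2 = \rho Q_1Q_2$ and the anticommutator cancels. So only sequences in which each $S_{\qbar^{(i)}}$ intersects $S_{\pbar}\cup S_{\qbar^{(1)}}\cup\cdots\cup S_{\qbar^{(i-1)}}$ survive. The entry indexed by $\pbar$ is also nonzero only if the final support is compatible with $\pbar$ on $S_{\pbar}$, via \Cref{lem:restricted-orthogonality}. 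I would order the action as $\calL^\ell(R)$ with $R$ supported on $S_{\pbar}$ and the first-applied term touching $S_{\pbar}$, so the sequence forms a connected cluster rooted at $S_{\pbar}$.

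Counting rooted connected sequences in the usual way, the $i$-th term must touch one of at most $k i$ sites, each with $B_1$-weight at most $g$. This gives the weight bound $\prod_i (k i g)\le k^\ell \ell!\, g^\ell$, with the $y$-term contributing $\lonorm{y}$ in place of one factor $g$. To get a $B_1\to B_1$ bound rather than $B_1\to\infty$, fix a site $i$ and sum over $\pbar\ni i$. Here I charge the output entry to the cluster: $i$ lies in the cluster's union, of size at most $k\ell$ sites, and $\pbar$ is determined by the final operator on at most $16^{k}$ choices. So I reroot the cluster at $i$ and pay a factor $\mathrm{poly}(k\ell)16^k$. The derivative position $j$ gives a further factor $\ell$.

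The main obstacle is this rerooting and bookkeeping. I must ensure the $y$-factor is bounded in $B_1$ uniformly over where it sits in the sequence, and that the sum over $\pbar\ni i$ does not produce an extra $n$-dependence from the confusion sum over $\qbar\succeq\pbar$. Here I would use that the final operator is a single Pauli pair with support inside the cluster, so at most $2^{k\ell}$ compatible $\pbar$'s arise. The resulting $c^{k\ell}$ growth is absorbed because the $\ell!$ in the denominator is only partially cancelled, and the series in $t g k$ still converges for $t<1/(4ekg)$, up to constants bounded by $(23k)!$.
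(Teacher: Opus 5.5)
The step that cannot be made to work is the rerooting at $i$, and you were right to worry about $n$-dependence there. Your expansion shows that the terms of a surviving sequence, \emph{together with} $S_{\pbar}$, form a connected set covering $S_{\pbar}$. It does not show that the terms themselves form a cluster containing $i$. When $s_{\pbar}\ge 2$, the terms can split into components attached to far-apart sites of $S_{\pbar}$. The sum over $\pbar\ni i$ then ranges freely over the other sites of $S_{\pbar}$, and bounding the number of compatible $\pbar$ per sequence does not help.

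This is not a bookkeeping artifact. Take $x_{(Z_s,I)}=-2\ii h$ for every qubit $s$ and all other coordinates $0$, so $\mathcal{L}_x=-\ii[h\sum_s Z_s,\cdot\,]$ and $\|x\|_{B_1}=2h=:g$. Take $v=e_{(Z_i,I)}$. Since $e^{\mathcal{L}_x t}$ factorizes over qubits, $E_{(Z_iZ_j,I)}(x)=f(x_{(Z_i,I)})f(x_{(Z_j,I)})$ with $f(\beta)=-\tfrac{\ii}{2}\sin(\ii\beta t)$, while $A_{(Z_iZ_j,I),(Z_i,I)}=0$. Hence $|((J(x)-A)v)_{(Z_iZ_j,I)}|=\tfrac18|\sin 4ht|$ for each of the $n-1$ indices $j\neq i$, and
\[
\|J(x)-A\|_{B_1\to B_1}\ \ge\ \tfrac{n-1}{8}\,|\sin 4ht|\ \ge\ \tfrac{(n-1)\,gt}{2\pi}.
\]
This exceeds $(23k)!\,gt$ once $n>2\pi(23k)!+1$, with $t<1/(4ekg)$ satisfied. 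So for $k\ge 2$ the $n$-independent bound as stated is false, and no argument can close this step.

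The paper's proof breaks at the same place. \Cref{lem:cluster} with $S=S_{\pbar}$ only yields the indicator that $\cluster{a}\cup S_{\pbar}$ is a cluster. \Cref{lem:cluster-fourier} records it as ``$\cluster{a}$ is a cluster'', and \Cref{lem:jacobian-b1-b1} then uses ``cluster from $i$''. The non-cluster $\{(Z_i,I),(Z_j,I)\}$ has nonzero coefficient in $E^{(2)}_{(Z_iZ_j,I)}$. A correct version must restrict the output coordinates or treat such product-type contributions separately.

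Independently, your locality claim is wrong in the Schr\"odinger picture you use. For $\rho$ supported off $S_{\qbar}$, $Q_1\rho Q_2-\tfrac12\{Q_2Q_1,\rho\}=\rho\,(Q_1Q_2-Q_2Q_1)$. This is nonzero when $Q_1,Q_2$ anticommute, e.g.\ the $D_{X,Y},D_{Y,X}$ entries of amplitude damping; the generator is not unital. So the first-applied term in $\mathcal{L}_x^\ell(R)$ need not touch $S_{\pbar}$, and you would be discarding surviving sequences. The paper avoids this by writing $E^{(\ell)}_{\pbar}(x)=\E_R[\ntr(\mathcal{L}_x^{\dagger\ell}(P_2RP_1)R)]$. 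Every term of $\mathcal{L}_x^{\dagger}$ annihilates operators supported off its support, so rooting at $S_{\pbar}$ holds in the Heisenberg order.

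Your constants also need care. With $C_k\approx 4ek$ and $\ell^\ell\le e^\ell\ell!$, the ratio of your series can reach $e$ under $t<1/(4ekg)$. You must therefore keep the exact $\ell!$ from the ordered count. A $2^{k\ell}$ factor for compatible $\pbar$'s would break convergence for large $k$. You need a count polynomial in $\ell$, as in the paper's $\binom{\ell(k-1)}{k-1}16^k$, which comes from $S_{\pbar}\subseteq$ (support of the sequence), $i\in S_{\pbar}$ and $s_{\pbar}\le k$.
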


We would like a bound on the Jacobian, because the analysis of our algorithm will ultimately compare the expectations of an estimate $E(x)$ to the true expectations, $E(\lambda)$.
This Jacobian bound tells us that, if $x$ is close to $\lambda$, then the difference in the corresponding expectations can be explained by a linear term, along with a (smaller) higher-order term.

To prove this statement, we will bound \Cref{eq:first-order-repeat} in a term-by-term manner. 
In particular, let us define the expression
\begin{align}
    E^{(\ell)}_{\pbar}(x)
    &\triangleq\E_{R \sim \pauli{S_{\pbar}}}[\ntr(\chi_{P_1, P_2}(R)^{\dagger} \cdot \calL_x^{\ell}(R))]\\
    &=\E_{R \sim \pauli{S_{\pbar}}}[\ntr(P_2 R P_1 \cdot \calL_x^{\ell}(R))]\\
    &= \E_{R \sim \pauli{S_{\pbar}}} [\ntr(\lind_x^{\dagger\ell}(P_2 R P_1)\cdot R)].\label{eq:term-by-term}
\end{align}
We will prove a bound on the Jacobian of this expression, which will then extend to a bound on the Jacobian of $E(x)$ itself.
To do so, we will carefully control the complexity of the superchannel $\lind_x^{\dagger\ell}(\cdot)$ as a function of the growing parameter $\ell$.
Intuitively, if $\lind_x$ (and therefore $\lind_x^{\dagger}$) is local, then $\lind_x^{\dagger\ell}(\cdot)$ should remain reasonably local provided that $\ell$ is reasonably small; this will require us to show an expression for $\lind_x^{\dagger\ell}(\cdot)$ known as a \emph{cluster expansion}, which is an expansion of $\lind_x^{\dagger\ell}(\cdot)$ in terms of local components known as \emph{clusters}.
For this, it will be crucial that we study the adjoint $\lind_x^{\dagger}$ rather than the Lindbladian $\lind_x$ itself.

For brevity, in this section, we often index terms by $a$ instead of $\pbar$, as described in \Cref{def:param-lind}.

\subsection{Cluster expansions}

\begin{definition}[Multisets and clusters]
    We refer to unordered multisets with the notation $\cluster{a} = \{a_1,\dots,a_\ell\}$, where the elements need not be distinct.
    We denote its cardinality as $\abs{\cluster{a}} = \ell$, and we denote its support as $S_{\cluster{a}}$.
    We also denote $x^{\cluster{a}} \triangleq \prod_{a \in \cluster{a}} x_a$.

    We call $\cluster{a}$ a \emph{cluster} if it is connected in the dual interaction graph (there is an edge between $a$ and $b$ if $S_{\pbar_a} \cap S_{\pbar_b} \neq \emptyset$).
    We call $\cluster{a}$ a cluster \emph{from $S$} if $\cluster{a} \cup S$ is a cluster in the modified dual interaction graph where there is an additional term for $S$.
\end{definition}

\newcommand{\localg}{\mathcal{G}}

\begin{lemma}[Cluster expansion of Lindbladians] 
\label{lem:cluster}
    Let $O$ be an operator whose support is contained in (though not necessarily equal to) a set $S \subseteq [n]$.
    Then $\lind_x^{\dagger \ell}(O)$ is a degree-$\ell$ matrix-valued polynomial which we can write in the following way:
    \begin{align*}
        \lind_x^{\dagger \ell}(O) = 2^\ell \ell! \sum_{\cluster{a} : \abs{\cluster{a}} = \ell} x^{\cluster{a}} \localg_{S, \cluster{a}}(O) \iver{\cluster{a} \cup S \text{ is a cluster}},
    \end{align*}
    where $\localg_{S,\cluster{a}}$ is a superoperator with bounded Fourier weight, $\sum_{Q_1, Q_2} \abs{\widehat{\localg_{S,\cluster{a}}}(Q_1, Q_2)} \leq 1$, and $\widehat{\localg_{S,\cluster{a}}}(P_1, P_2)$ is only nonzero provided $S_{\pbar} \subseteq S_{\cluster{a}}$.
    Note that $\localg_{S, \cluster{a}}$ depends on the subset $S$ but not $O$.
\end{lemma}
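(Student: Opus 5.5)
We prove the statement by induction on $\ell$, expanding $\lind_x^{\dagger}$ term by term and tracking which products of terms can act nontrivially. Write the adjoint as $\lind_x^\dagger = \sum_a x_a \mathcal{T}_a$, where for $a = \pbar_a = (P_1,P_2)$ we set
\begin{equation*}
    \mathcal{T}_a(O) = P_2 O P_1 - \tfrac12 P_2P_1 O - \tfrac12 O P_2 P_1 .
\end{equation*}
Each piece of $\mathcal{T}_a$ is a Pauli conjugation/multiplication $\chi_{Q_1,Q_2}$ (up to a phase $c(\cdot)$), and the absolute coefficients sum to $1 + \tfrac12 + \tfrac12 = 2$. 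Hence $\mathcal{T}_a = 2\,\mathcal{U}_a$, where $\mathcal{U}_a$ has Fourier weight at most $1$ and all its Fourier coefficients are supported on pairs with support inside $S_{\pbar_a}$.

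\textbf{Key locality fact.} If $S_{\pbar_a}\cap S = \emptyset$ and $\supp(O)\subseteq S$, then $P_1,P_2$ commute with $O$. Thus
\begin{equation*}
    \mathcal{T}_a(O) = \Big(P_2P_1 - \tfrac12 P_2P_1 - \tfrac12 P_2P_1\Big)O = 0 .
\end{equation*}
This is where the adjoint matters: for $\lind_x$ itself one would instead get $P_1 O P_2 - \tfrac12\{P_2P_1,O\}$, which need not vanish, whereas the adjoint kills terms disjoint from the current support.

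\textbf{Induction.} Expanding the product gives
\begin{equation*}
    \lind_x^{\dagger\ell}(O) = \sum_{(a_1,\dots,a_\ell)} x_{a_1}\cdots x_{a_\ell}\, \mathcal{T}_{a_\ell}\cdots \mathcal{T}_{a_1}(O).
\end{equation*}
We claim the ordered product $\mathcal{T}_{a_\ell}\cdots\mathcal{T}_{a_1}$ restricted to inputs supported in $S$ is identically zero unless every $a_j$ intersects $S\cup S_{\pbar_{a_1}}\cup\dots\cup S_{\pbar_{a_{j-1}}}$. The proof is by induction on $j$: after applying $a_1,\dots,a_{j-1}$, the output is supported in $S\cup\bigcup_{i<j} S_{\pbar_{a_i}}$, since each Pauli factor only enlarges the support by $S_{\pbar_{a_i}}$, and the key fact then applies to this enlarged set. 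This condition implies that $\cluster{a}\cup S$ is connected, i.e.\ that $\cluster{a}$ is a cluster from $S$. Conversely, the indicator in the statement is harmless: whenever it is $1$ but some ordering violates the condition, that ordering simply contributes zero inside $\localg$.

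\textbf{Grouping into multisets.} Group the ordered tuples by their underlying multiset $\cluster{a}$. There are at most $\ell!$ orderings of a given multiset. Define
\begin{equation*}
    \localg_{S,\cluster{a}} \triangleq \frac{1}{2^\ell\,\ell!}\sum_{\text{orderings } \sigma \text{ of } \cluster{a}} \mathcal{T}_{a_{\sigma(\ell)}}\cdots\mathcal{T}_{a_{\sigma(1)}}\ \text{restricted to inputs supported in } S .
\end{equation*}
To make this a superoperator on all of $\C^{N\times N}$, and to make it depend only on $S$ and not on $O$, compose with the projection onto operators supported in $S$. In the Pauli basis this projection is the map $R\mapsto R\cdot\iver{\supp R\subseteq S}$, which is the average of the conjugations $Q(\cdot)Q$ over $Q\in\pauli{\overline{S}}$. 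Its Fourier weight is therefore at most $1$, with Fourier support on pairs $(Q,Q)$, $Q\in\pauli{\overline S}$. Note that $|\cluster{a}|=\ell$, so every summand carries exactly $x^{\cluster{a}}$, and the claimed formula follows.

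\textbf{Main obstacle: Fourier weight and Fourier support of $\localg$.} The Fourier weight is submultiplicative under composition: $\chi_{P_1,P_2}\circ\chi_{Q_1,Q_2} = \chi_{P_1Q_1,\,Q_2P_2}$ up to phases. So each ordering has weight at most $2^\ell$, and averaging over at most $\ell!$ orderings with the $1/(2^\ell\ell!)$ normalization gives total weight at most $1$.

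The support statement is the delicate point. The products of the $\mathcal{T}$'s have Fourier support inside $S_{\cluster{a}}$, but the projection onto $S$ introduces factors $Q\in\pauli{\overline S}$ with support outside $S_{\cluster{a}}$. To handle this, I would replace the projection by the projection onto operators supported in $S\cap S_{\cluster{a}}$ when the input acts on $S\setminus S_{\cluster{a}}$. Since $O$ on those sites is simply carried through untouched, this should let one restate $\localg$ with Fourier support inside $S_{\cluster{a}}$, possibly by defining $\localg$ to act as the identity (that is, $\chi_{I,I}$) on sites outside $S_{\cluster{a}}$ rather than projecting there. Indeed, the product of $\mathcal{T}$'s alone, with no projection, already has support in $S_{\cluster{a}}$ and weight at most $2^\ell$. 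The projection is only a bookkeeping device for the vanishing argument, and the vanishing holds for inputs supported in $S$ anyway. So the cleanest definition is the unprojected average, and I expect this to be the intended resolution.
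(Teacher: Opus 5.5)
Your proof is correct and is essentially the paper's argument unrolled. The paper inducts on $\ell$, appending one adjoint term $[\chi_{P_2,P_1}-\frac12(\chi_{P_2P_1,I}+\chi_{I,P_2P_1})]$ at a time and using the same vanishing-on-disjoint-support fact for $\lind_x^\dagger$; its count of at most $\ell+1$ ways to peel off the last element is equivalent to your count of at most $\ell!$ distinct arrangements, each of Fourier weight at most $2^\ell$. Drop the projection onto operators supported in $S$, as you ultimately do: the unprojected average over distinct arrangements of $\cluster{a}$ is a valid choice of $\localg_{S,\cluster{a}}$ (like the paper's, an unprojected combination of Pauli words), and your vanishing argument is exactly what justifies inserting the cluster indicator.
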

\begin{proof}
    We prove the lemma by induction on $\ell$.
    For the base case of $\ell = 0$, we have $\mathcal{L}^{\dagger \ell}(O) = O$.
    Then, consider $\localg_{S,\cluster{a}}(O) = O$, i.e., $\localg_{\cluster{a}}$ is the identity superoperator,
    for $\cluster{a} = \emptyset$.
    This satisfies the required conditions on its Fourier coefficients, because the only nonzero Fourier coefficient is $\widehat{\localg_{S,\cluster{a}}}(I, I) = 1$.
    In addition, $S_{I, I} = \emptyset = S_{\cluster{a}}$.
    Moreover, $\cluster{a} \cup S$ is vacuously a cluster.
    
    For the inductive step, suppose the result holds for $\ell$.
    Then,
    \begin{align}
        \mathcal{L}_x^{\dagger\,\ell+1}(O) &= \mathcal{L}_x^{\dagger}(\mathcal{L}_x^{\dagger\ell}(O))\\
        &= \sum_{\substack{P_1,P_2\\P_1 \neq I}} x_{P_1,P_2} \left(P_2 \mathcal{L}^{\dagger\ell}(O) P_1 - \frac{1}{2}\{P_2P_1, \mathcal{L}^{\dagger\ell}(O)\}\right)\\
        &= 2^\ell \ell! \sum_{\substack{P_1,P_2\\P_1 \neq I}} \sum_{\cluster{a} : \abs{\cluster{a}} = \ell} x_{P_1,P_2} \cdot x^{\cluster{a}} \left(P_2 \localg_{S,\cluster{a}}(O) P_1 - \frac{1}{2}\{P_2P_1, \localg_{S,\cluster{a}}(O)\}\right) \iver{\cluster{a} \cup S \text{ is a cluster}}.\label{eq:is-a-cluster}
    \end{align}
    In the second line, we use the definition of $\mathcal{L}^\dagger$.
    In the last line, we use the inductive hypothesis.

    Now, suppose $\cluster{a}$ satisfies that $\cluster{a} \cup S$ is a cluster,
    and let us consider the corresponding term 
    \begin{align}
    &2^\ell \ell!\cdot x_{P_1,P_2} \cdot x^{\cluster{a}} \left(P_2 \localg_{S,\cluster{a}}(O) P_1 - \frac{1}{2}\{P_2P_1, \localg_{S,\cluster{a}}(O)\}\right) \iver{\cluster{a} \cup S \text{ is a cluster}}\\
    ={}&2^\ell \ell!\cdot x_{P_1,P_2} \cdot x^{\cluster{a}} \left([\chi_{P_2, P_1} - \frac12(\chi_{P_2 P_1, I} + \chi_{I, P_2 P_1})](\localg_{S,\cluster{a}}(O))\right) \iver{\cluster{a} \cup S \text{ is a cluster}}.\label{eq:second-line}
    \end{align}
    This term corresponds to the new cluster $\cluster{b} = \cluster{a} \cup \{S_{\pbar}\}$ inside $\mathcal{L}_x^{\dagger\,\ell+1}(O)$.
    Indeed, note that $x_{P_1, P_2} \cdot x^{\cluster{a}} = x^{\cluster{b}}$.
    The induction hypothesis tells us that $\localg_{S,\cluster{a}}$ is a linear combination of terms of the form $Q_1 O Q_2$ with $\qbar \subseteq S_{\cluster{a}}$.
    Hence, \Cref{eq:second-line} is a linear combination of terms of the form
    \begin{equation}\label{eq:one-term}
        2^\ell \ell! \cdot x_{P_1,P_2} \cdot x^{\cluster{a}} \left([\chi_{P_2, P_1} - \frac12(\chi_{P_2 P_1, I} + \chi_{I, P_2 P_1})](Q_1 O Q_2)\right) \iver{\cluster{a} \cup S \text{ is a cluster}}.
    \end{equation}
    Note that this is in turn a linear combination of terms of the form
    $Q_1' O Q_2'$ with $S_{\qbar'} \subseteq S_{\qbar} \cup S_{\pbar} \subseteq S_{\cluster{a}} \cup S_{\pbar} = S_{\cluster{b}}$.
    Furthermore, note that since the total support of $Q_1 O Q_2$ is contained in $S_{\cluster{a}} \cup S$, \Cref{eq:one-term} is nonzero only if $S_{\pbar}$ overlaps with $S_{\cluster{a}} \cup S$. 
    Since we know that $\cluster{a} \cup S$ is a cluster, this is equivalent to $S_{\cluster{a}} \cup \{S_{\pbar}\} \cup S = S_{\cluster{b}} \cup S$ being a cluster.

    By the triangle inequality, the expression in 
    \Cref{eq:second-line} has Fourier weight at most $2 \cdot 2^{\ell} \ell! = 2^{\ell+1} \ell!$.
    Now, we collect all the terms in the sum associated to the monomial $x^{\cluster{b}}$.
    There are at most $\ell + 1$ of them, corresponding to the clusters formed by removing one element from $\cluster{a}$ along with the element removed.
    Hence, their total Fourier weight is at most $(\ell+1) \cdot 2^{\ell+1} \ell! = 2^{\ell + 1} (\ell+1)!$.
    This gives the desired bound by collecting the corresponding (matrix) coefficient and labeling it $\localg_{\cluster{b}}(O)$.
\end{proof}

This sum is bounded because of the following statement bounding the number of clusters in a bounded-degree (weighted) graph.

\begin{lemma}[Cluster count] \label{lem:tree-count}
    Let $\ell \geq 0$.
    Let $x$ satisfy $\lonorm{x} \leq g$, and let $i \in [n]$.
    Define
    \begin{equation}
        Z_i(x) \triangleq \sum_{\cluster{a} : \abs{\cluster{a}} = \ell} x^{\cluster{a}} \iver{\cluster{a} \text{ is a cluster from } i}.
    \end{equation}
    Then
    \begin{align} \label{eq:cluster-count}
        |Z_i(x)| \leq (egk)^{\ell}.
    \end{align}
\end{lemma}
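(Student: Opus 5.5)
The plan is to bound $|Z_i(x)|$ by the triangle inequality, $|Z_i(x)| \leq Z_i(|x|)$, and then control the positive sum over multisets by a sum over \emph{ordered} sequences that are grown one term at a time. Concretely, I will show that every multiset $\cluster{a}$ of size $\ell$ that is a cluster from $i$ admits an ordering $(a_1,\dots,a_\ell)$ such that each $a_j$ shares a site with $\{i\}\cup S_{\pbar_{a_1}}\cup\dots\cup S_{\pbar_{a_{j-1}}}$. Such an ordering exists because the dual interaction graph restricted to $\cluster{a}\cup\{i\}$ is connected: take a breadth-first-search order from the vertex $\{i\}$.

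Next I will count ordered sequences with the connectivity property. Each multiset corresponds to at least one such sequence, and this correspondence is where the $1/\ell!$ comes from. Rather than compare a multiset to its number of orderings, which is delicate when elements repeat, I intend to use the exponential generating-function style bound
\begin{equation*}
Z_i(|x|) \leq \frac{1}{\ell!}\sum_{(a_1,\dots,a_\ell)\ \text{connected}} \prod_j |x_{a_j}| \cdot \#\{\text{valid orderings of }\cluster{a}\}^{-1}\ell!,
\end{equation*}
or, more simply, bound the ordered count directly. Choosing $a_j$ requires choosing a site in the current support, of which there are at most $1 + k(j-1) \leq kj$ (for $j\ge1$, $k\ge1$), and then choosing a term touching that site, whose total weight is at most $\lonorm{x}\leq g$. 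The total weight of connected ordered sequences is therefore at most $\prod_{j=1}^{\ell} (kj g) = (kg)^\ell \ell!$.

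To reach the claimed bound, the $\ell!$ must be offset. I will argue that each multiset appears among the connected ordered sequences with multiplicity weight at least $\ell!/\prod_b m_b!$, where $m_b$ is the multiplicity of $b$ in $\cluster{a}$. The weight we actually need is $x^{\cluster{a}}$ itself, but the sum over multisets of $\frac{\ell!}{\prod m_b!}x^{\cluster a}$ dominates $\sum x^{\cluster a}$ only up to that factor. The clean fix is to switch to counting via a spanning tree (Cayley-type) argument instead of sequences: each cluster has a spanning tree rooted at $i$, and the weight of rooted trees with $\ell$ labeled-by-term vertices can be bounded recursively by $\ell^{\ell-1}$-type factors times $(gk)^\ell/\ell!$. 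By Stirling, $\ell^{\ell}/\ell! \leq e^\ell$, which yields $(egk)^\ell$.

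So the execution order is: (1) reduce to nonnegative weights; (2) map each cluster from $i$ injectively to a rooted tree structure (BFS tree, children ordered canonically); (3) bound the total weight of such trees, where each child attaches at one of at most $k$ sites of its parent and each site carries weight at most $g$, giving the count of unordered rooted trees of size $\ell$ times $(kg)^\ell$; (4) bound the number of tree shapes, via the Catalan/tree-counting estimate $\le e^\ell$ in the multiset setting, or the $\ell^{\ell-1}/\ell!\le e^\ell$ estimate. The main obstacle is step (2)/(4): making the injection from multisets with repetitions into trees precise so that no $\ell!$ is lost. I would first try the standard Kotecký–Preiss/“Penrose tree” style argument, whose generating function $T(z)=z\,e^{kgT(z)}$ has coefficients bounded by $(e k g)^\ell$.
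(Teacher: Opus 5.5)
Your outline never closes the step that carries all the difficulty, and several concrete claims made along the way are false.

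First, the proposed repair of the ordered-sequence count fails. A multiset can have far fewer than $\ell!/\prod_b m_b!$ connected orderings. Suppose $a_1,\dots,a_\ell$ are distinct terms whose supports form a path: $i$ lies only in $S_{a_1}$, and $S_{a_j}\cap S_{a_{j'}}\neq\emptyset$ only when $|j-j'|\le 1$. Then $a_1,\dots,a_\ell$ is the \emph{only} ordering in which each term meets $\{i\}$ or an earlier support. So the $\ell!(kg)^\ell$ bound on sequences cannot be divided by $\ell!$.

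Second, the tree route in steps (2)--(4) is a valid injection, even with repeated terms. But bounding by (number of shapes)$\cdot(kg)^\ell$ cannot produce $e^\ell$. Plane trees are counted by Catalan numbers, which grow like $4^\ell$; already $C_{12}=208012>e^{12}$. Even unordered rooted trees grow like about $2.96^\ell$ (Otter's constant). So this route gives roughly $(4kg)^\ell$, not the lemma. The estimate $\ell^{\ell-1}/\ell!\le e^\ell$ needs a genuine $1/\ell!$ from an exponential generating function, and a multiset sum does not supply one.

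Third, the natural derivation of $T=ze^{kgT}$ bounds the children of a term $a$ by $\prod_{s\in S_a}\prod_{b\ni s}(1+F_b)\preceq\exp(kgT)$. That uses each (site, term) slot at most once, so it handles sets of distinct terms. But $Z_i$ sums over multisets, where a slot can be filled repeatedly. Then $1+F_b$ becomes something like $(1-F_b)^{-1}$, which is not dominated by $e^{F_b}$. This is precisely the ``main obstacle'' you flag, and it is left open.

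The missing idea, which is how the paper proceeds, is to decouple multiplicities from connectivity. Whether $\cluster{a}$ is a cluster from $i$ depends only on its set $S$ of $r$ distinct terms. Each such $S$ extends to a multiset of size $\ell$ in $\binom{\ell-1}{r-1}$ ways (compositions of $\ell$ into $r$ positive parts). Moreover, $|x^{\cluster{a}}|\le g^{\ell-r}\prod_{a\in S}|x_a|$, because every $|x_a|\le\lonorm{x}\le g$.

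This reduces everything to a weighted count of connected sets of $r$ distinct hyperedges containing $i$. The paper takes this from a hypergraph lemma of Mann--Helmuth in the form $(eg(k-1))^r$, passing from integer to real weights by scaling rational weights and a continuity argument. It then concludes with $\sum_{r}\binom{\ell-1}{r-1}(e(k-1))^r g^\ell\le (e(k-1)+1)^\ell g^\ell\le (egk)^\ell$.

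Your generating function is a legitimate way to prove a set bound, but watch the constant. Branching over all $k$ sites of every term gives $(ekg)^r$ for sets. The composition sum then only yields $ek(ek+1)^{\ell-1}g^\ell$, which exceeds $(egk)^\ell$ for $\ell\ge 2$. To reach the stated bound you need the factor $k-1$ (each newly added term exposes at most $k-1$ new sites), so that $e(k-1)+1\le ek$.
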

\begin{proof}
We first show \Cref{eq:cluster-count}.
Let $r$ be an integer satisfying $1 \leq r \leq \locality$.
We begin with the following standard fact (Lemma 4 of~\cite{mann2024algorithmic}):
Let $G = (V, E)$ be a multihypergraph with maximum degree at most $g$ and rank at most $k$; then the number of connected subgraphs (sets of edges) of size $r$ containing a vertex $v \in V$ in its support is at most $(eg(k-1))^r$.
The analogous statement also holds for weighted graphs: let $w_e$ be the nonnegative weight associated to hyperedge $e$, and let $g$ be now the weighted degree, $\max_{i \in V} \sum_{e \ni i} \abs{w_e}$. Then
\begin{align}\label{eq:multisuperhypergraphs}
    \sum_{S \subseteq E} \iver{S \text{ is a connected subgraph of size } r \text{ containing } v} \prod_{e \in S} w_e \leq (eg(k-1))^{r}.
\end{align}
To prove this, let us note that it suffices to prove this when the $w_e$'s are rational, by a continuity argument.
But for rational $w_e$'s, we can multiply each $w_e$ by a scalar such that the weights become integral, and then apply the unweighted statement to the analogous hypergraph.

To apply this to our setting, let $G_x$ be the multihypergraph with the vertex set $V = [n]$ and, for each Lindbladian term $a \in [m]$, a hyperedge $S_a$ with weight $|x_a|$. Then \Cref{eq:multisuperhypergraphs} implies that for each vertex $i \in [n]$,
\begin{align}\label{eq:applied-to-Gx}
    \sum_{S \subseteq [m]} \iver{S \text{ is a connected subgraph in $G_x$ of size } r \text{ containing } i} \prod_{a \in S} |x_a| \leq (eg(k-1))^{r}.
\end{align}
From there, we now consider clusters.
For every subgraph $S = \{a_1, \ldots, a_r\} \subseteq [m]$ of size $r$, there are $\binom{\ell-1}{r-1}$ ways to assign positive integer weights to the $r$ elements of $S$ which sum up to $\ell$. Each of these corresponds to a unique cluster $\cluster{a}$ of cardinality $|\cluster{a}| = \ell$ consisting of $r$ distinct elements; we write $\cluster{a} \sim S$ for a cluster formed in this manner.
Moreover, since $\lonorm{x} \leq g$, the weight $x^{\cluster{a}}$ of the cluster $\cluster{a}$ is at most the weight of the subgraph times $g^{\ell - r}$.
Therefore, we can bound the number of clusters using the number of subgraphs, giving
\begin{align}
    \abs{Z_i(x)}
    &=\Big|\sum_{r=1}^\ell\sum_{S \subseteq [m]} \iver{S \text{ is a connected subgraph in $G_x$ of size } r \text{ containing } i} \cdot \sum_{\cluster{a} \sim S} x^{\cluster{a}}\Big|\\
    &\leq\sum_{r=1}^\ell\sum_{S \subseteq [m]} \iver{S \text{ is a connected subgraph in $G_x$ of size } r \text{ containing } i} \cdot \sum_{\cluster{a} \sim S} |x^{\cluster{a}}|\\
    &\leq\sum_{r=1}^\ell\sum_{S \subseteq [m]} \iver{S \text{ is a connected subgraph in $G_x$ of size } r \text{ containing } i} \cdot \sum_{\cluster{a} \sim S} \prod_{a \in S} |x_a| \cdot g^{\ell-r}\\
    &=\sum_{r=1}^\ell\sum_{S \subseteq [m]} \iver{S \text{ is a connected subgraph in $G_x$ of size } r \text{ containing } i} \cdot \prod_{a \in S} |x_a| \cdot g^{\ell - r} \binom{\ell-1}{r-1}\\
    &=\sum_{r=1}^{\ell} (eg(\locality-1))^r \cdot g^{\ell - r}\binom{\ell-1}{r-1}\label{eq:plugged-in-awesome-bound},
\end{align}
where we used \Cref{eq:applied-to-Gx} in the last step.
Now, using the fact that $\sum_{r=1}^{\ell} x^r \cdot \binom{\ell-1}{r-1} = x (x + 1)^{\ell-1} \leq (x + 1)^{\ell}$ for a nonnegative number $x$,
we have
\begin{equation}
    \eqref{eq:plugged-in-awesome-bound}
    = g^{\ell} \cdot \sum_{r=1}^{\ell} (e(\locality-1))^r \cdot \binom{\ell-1}{r-1}
    \leq 
     g^{\ell} (e (\locality - 1) + 1)^{\ell}
     \leq g^{\ell} (ek)^{\ell}.
\end{equation}
This completes the proof.
\end{proof}

We will also need the following consequence of \Cref{lem:tree-count}.

\begin{lemma}\label{lem:derivative-bound}
    Let $\ell \geq 2$.
    Let $x$ satisfy $\lonorm{x} \leq g$, and let $v$ satisfy $\lonorm{v} \leq 1$.
    Let $i \in [n]$.
    Then
    \begin{align} \label{eq:cluster-deriv-count}
        \abs[\Big]{\sum_a v_a \partial_a Z_i(x)} \leq e^2 k \ell (egk)^{\ell - 1}
    \end{align}
\end{lemma}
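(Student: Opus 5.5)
We differentiate $Z_i(x)$ monomial by monomial and then reduce to the cluster count of \Cref{lem:tree-count}. For a multiset $\cluster{a}$ with $|\cluster{a}|=\ell$, write $m_a(\cluster{a})$ for the multiplicity of $a$ in $\cluster{a}$. Then $\partial_a x^{\cluster{a}} = m_a(\cluster{a})\, x^{\cluster{a}\setminus\{a\}}$, where one copy of $a$ is removed. This gives
\begin{equation*}
    \sum_a v_a \partial_a Z_i(x) = \sum_{\cluster{a}:|\cluster{a}|=\ell} \iver{\cluster{a}\text{ is a cluster from } i} \sum_{a\in\cluster{a}} m_a(\cluster{a})\, v_a\, x^{\cluster{a}\setminus\{a\}}.
\end{equation*}
The natural reindexing sets $\cluster{b} = \cluster{a}\setminus\{a\}$, a multiset of size $\ell-1$. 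Each pair $(\cluster{a},a)$ then corresponds to a pair $(\cluster{b},a)$, and the multiplicity factor satisfies $m_a(\cluster{a})\le \ell$. The difficulty is that $\cluster{b}$ need not be a cluster from $i$, because removing $a$ may disconnect $\cluster{a}\cup\{i\}$. So we cannot simply apply \Cref{lem:tree-count} to $\cluster{b}$.

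To handle this, we decompose $\cluster{b}\cup\{a\}$ around $a$. Since $\cluster{a}\cup\{i\}$ is connected, removing $a$ leaves connected components. One component, $\cluster{b}_0$, contains $i$ (it may be empty if $a\ni i$) and touches $a$. Each remaining component $\cluster{b}_1,\dots,\cluster{b}_p$ is a cluster from $S_{\pbar_a}$, that is, it is attached to $a$.

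We bound the sum in stages, taking absolute values throughout.
\begin{enumerate}
    \item We sum over $\cluster{b}_0$, a cluster from $i$ of some size $\ell_0$. By \Cref{lem:tree-count} this costs at most $(egk)^{\ell_0}$.
    \item We sum over $a$ with $S_{\pbar_a}$ meeting $S_{\cluster{b}_0}\cup\{i\}$, weighted by $|v_a|$. The support $S_{\cluster{b}_0}\cup\{i\}$ has at most $k\ell_0+1$ sites, and $\lonorm{v}\le1$, so this costs at most $k\ell_0+1$.
    \item We sum over the clusters attached to $a$. A cluster from a set $S$ with $|S|\le k$ is bounded by summing over the starting site, costing at most $k(egk)^{\ell_j}$ per piece.
\end{enumerate}
The count of pieces is better handled generatingly. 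The exponent $\ell-1$ is split among the pieces, and we apply the identity $\sum_{\ell_0+\dots}\prod$ together with the inequality $(1+1/\ell)^\ell\le e$. The alternative to step 3 is to treat $\cluster{b}\setminus\cluster{b}_0$ as a single cluster from $S_{\pbar_a}\cup S_{\cluster{b}_0}$, then reuse the union-of-sites bound.

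Combining the stages, the weight factors give $(egk)^{\ell-1}$, the multiplicity gives $\ell$, and the two factors of $k$ and the $e$ corrections give $e^2k$. This yields the claimed bound $e^2 k\ell(egk)^{\ell-1}$. The main obstacle is the bookkeeping of the disconnected remainder, namely avoiding an extra factor of $\ell$ from choosing how $\ell-1$ splits among components. If the clean decomposition becomes messy, we instead bound via a Cauchy-integral or scaling argument. Concretely, for $v$ we consider $Z_i(x+sv)$ with an auxiliary graph whose weights are $|x_a|+s|v_a|$, so that $\lonorm{\cdot}\le g+s$. \Cref{lem:tree-count} then gives $|Z_i(x+sv)|\le (e(g+s)k)^\ell$ coefficientwise in absolute values. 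Differentiating this majorant at $s=0$ gives $\ell ek(egk)^{\ell-1}$, which is already within the stated bound. This majorant argument is justified because the derivative of a polynomial with nonnegative-coefficient majorant is dominated by the derivative of the majorant. We expect this route to be the cleanest.
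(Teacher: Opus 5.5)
\textbf{Gap in the majorant route.} The route you single out as cleanest fails at its last step. Set $\phi(s)\triangleq Z_i(|x|+s|v|)$, with absolute values taken coordinatewise. You correctly have $\lvert\sum_a v_a\partial_a Z_i(x)\rvert\le\phi'(0)$, and \Cref{lem:tree-count} gives $\phi(s)\le\psi(s)\triangleq(ek(g+s))^\ell$ for $s\ge 0$.

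A pointwise bound $\phi\le\psi$, however, says nothing about $\phi'(0)$ versus $\psi'(0)$ unless the two functions agree at $s=0$. Here $\phi(0)=Z_i(|x|)$ may be far below $(egk)^\ell$. Nonnegative coefficients do not rescue this: $2s\le 1+s^2$ for all $s\ge0$, yet the derivatives at $0$ are $2$ and $0$. Nor is the bound ``coefficientwise''. \Cref{lem:tree-count} controls values, and domination of the $s^1$ coefficient is exactly the statement being proved.

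In fact, your intermediate constant $\ell ek(egk)^{\ell-1}$ cannot be obtained from a bound of the form $Z(y)\le(ek\lonorm{y})^\ell$ alone. Take $f(y_1,y_2)=\ell(\tfrac{\ell}{\ell-1})^{\ell-1}y_1^{\ell-1}y_2$. By weighted AM--GM, $f(y)\le(y_1+y_2)^\ell$ on the nonnegative orthant. But $\partial_{y_2}f(1,0)$ exceeds the majorant's derivative $\ell$ by the factor $(\tfrac{\ell}{\ell-1})^{\ell-1}\to e$. That factor is precisely the second $e$ in the lemma.

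\textbf{The repair.} The fix is one line, and it is exactly the paper's proof. Because $\phi$ has nonnegative coefficients, $s\,\phi'(0)\le\phi(s)\le(ek)^\ell(g+s)^\ell$ for every $s>0$. Choosing $s=g/(\ell-1)$, which is where $\ell\ge 2$ enters, gives $\phi'(0)\le \ell(\tfrac{\ell}{\ell-1})^{\ell-1}(ek)^\ell g^{\ell-1}\le e^2k\ell(egk)^{\ell-1}$. The paper phrases this as evaluating $Z_i$ at $u=\tfrac{\ell-1}{\ell}\tfrac{x}{g}+\tfrac{1}{\ell}v=(x+sv)/(g+s)$, which has $\lonorm{u}\le 1$, and keeping only the terms of the expansion that are linear in $v$.

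\textbf{The decomposition route.} Your first, decomposition-based route is not carried to completion, and it would not close as sketched:
\begin{itemize}
\item Each component hanging off $a$ costs its own factor $k$, so you get $k^p$ rather than a single $k$.
\item The multiplicity factor (up to $\ell$) compounds with the $k\ell_0+1$ site count.
\item A cluster from a set can be disconnected, so bounding it by summing over a single starting site is not justified.
\end{itemize}
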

\begin{proof}
To begin, let us compute
\begin{equation}
    \partial_a Z_i(x)
    = \sum_{\substack{\cluster{a} : a \in \cluster{a},\\\abs{\cluster{a}} = \ell}} x^{\cluster{a}\setminus\{a\}} \iver{\cluster{a} \text{ is a cluster from } i},
\end{equation}
where here the notation ``$\cluster{a}\setminus\{a\}$'' refers to removing a single instance of $a$ from $\cluster{a}$.
Thus, if $v$ is a vector satisfying $\lonorm{v} \leq 1$, we have
\begin{equation}
    \sum_a v_a \partial_a Z_i(x)
    = \sum_a v_a \cdot \sum_{\substack{\cluster{a} : a \in \cluster{a},\\\abs{\cluster{a}} = \ell}} x^{\cluster{a}\setminus\{a\}} \iver{\cluster{a} \text{ is a cluster from } i}.
\end{equation}
Note that the absolute value of this quantity is largest when $x$ and $v$ are nonnegative, and so we will henceforth make this assumption without loss of generality.
Now, take $u = \frac{\ell-1}{\ell}\frac{x}{g} + \frac{1}{\ell} v$, and notice that $\lonorm{u} \leq 1$ by construction.
Then
\begin{align}
    Z_i(u)
    &= \sum_{\cluster{a} : \abs{\cluster{a}} = \ell} u^{\cluster{a}} \iver{\cluster{a} \text{ is a cluster from } i}\\
    &= \sum_{\cluster{a} : \abs{\cluster{a}} = \ell} \Big(\frac{\ell-1}{\ell}\cdot \frac{x}{g} + \frac{1}{\ell}\cdot  v\Big)^{\cluster{a}} \iver{\cluster{a} \text{ is a cluster from } i}.\label{eq:about-to-binomial}
\end{align}
Note that if $\cluster{a} = \{a_1, \ldots, a_{\ell}\}$, then we can expand
\begin{align}
    \Big(\frac{\ell-1}{\ell}\cdot \frac{x}{g} + \frac{1}{\ell}\cdot  v\Big)^{\cluster{a}}
    &= \Big(\frac{\ell-1}{\ell \cdot g}\Big)^{\ell} x^{\cluster{a}} + \Big(\frac{\ell-1}{\ell \cdot g}\Big)^{\ell-1} \cdot \frac{1}{\ell} \sum_{a \in \cluster{a}}x^{\cluster{a} \setminus\{a\}} v_a
    + \cdots \\
    &\geq\Big(\frac{\ell-1}{\ell \cdot g}\Big)^{\ell-1} \cdot \frac{1}{\ell} \sum_{a \in \cluster{a}} x^{\cluster{a} \setminus\{a\}} v_{a}.
\end{align}
Here, in the first step, we use the binomial formula to expand $(\frac{\ell-1}{\ell}\cdot \frac{x}{g} + \frac{1}{\ell}\cdot  v)^{\cluster{a}}$.
In the second step, we use the fact that all the terms in this expansion are nonnegative (which follows from the fact that $x$ is nonnegative), to lower bound the expression by only those terms which use a single coordinate of $v$.
Plugging this in to \Cref{eq:about-to-binomial}, we have that
\begin{align}
    Z_i(u)
    &\geq\sum_{\cluster{a} : \abs{\cluster{a}} = \ell} \Big(\Big(\frac{\ell-1}{\ell \cdot g}\Big)^{\ell-1} \cdot \frac{1}{\ell} \sum_{a \in \cluster{a}} x^{\cluster{a} \setminus\{a\}} v_{a}\Big)\cdot \iver{\cluster{a} \text{ is a cluster from } i}\\
    &\geq\Big(\frac{\ell-1}{\ell \cdot g}\Big)^{\ell-1} \cdot \frac{1}{\ell} \cdot 
    \sum_a v_a \sum_{\substack{\cluster{a} : a \in \cluster{a},\\\abs{\cluster{a}}=\ell}}  x^{\cluster{a} \setminus\{a\}} \iver{\cluster{a} \text{ is a cluster from } i}\\
    &= \Big(\frac{\ell-1}{\ell \cdot g}\Big)^{\ell-1} \cdot \frac{1}{\ell} \cdot 
    \sum_a v_a \partial_a Z_i(x).
\end{align}
Rearranging, we have
\begin{equation}
    \sum_a v_a \partial_a Z_i(x)
    \leq \ell \Big(\frac{\ell}{\ell-1}\Big)^{\ell-1} g^{\ell - 1} \cdot Z_i(u)
    \leq \ell \Big(\frac{\ell}{\ell-1}\Big)^{\ell-1} g^{\ell - 1} \cdot (e\locality)^{\ell}
    \leq \ell (ek)^\ell e g^{\ell - 1}.
\end{equation}
In the second step, we used \Cref{lem:tree-count} and the fact that $\lonorm{u}\leq 1$. This concludes the proof.
\end{proof}

The following corollary follows directly from combining \Cref{lem:cluster,lem:tree-count}.

\begin{corollary}[Operator norm bound]
    \label{lem:inductive-bound}
    Let $P \in \pauli{\locality}$, and let $\ell \geq 2$.
    Suppose $\mathcal{L}_\lambda$ is a $\locality$-local superoperator with $\lonorm{\mathcal{L}_\lambda} \leq \energy$.
    Then,
    \begin{equation}
        \norm{\mathcal{L}_{\lambda}^{\dagger\ell}(P)}_{\mathrm{op}} \leq \ell! (2e\locality\energy)^\ell.
    \end{equation}
\end{corollary}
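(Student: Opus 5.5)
Apply \Cref{lem:cluster} with $O = P$ and $S = S_P$, which has size at most $\locality$. This gives
\begin{equation*}
    \mathcal{L}_\lambda^{\dagger\ell}(P) = 2^\ell \ell! \sum_{\cluster{a}:\abs{\cluster{a}}=\ell} \lambda^{\cluster{a}}\, \localg_{S_P,\cluster{a}}(P)\,\iver{\cluster{a}\cup S_P \text{ is a cluster}}.
\end{equation*}
By the triangle inequality it suffices to bound two things: the operator norm of each $\localg_{S_P,\cluster{a}}(P)$, and the weighted count of clusters $\cluster{a}$ from $S_P$.

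For the first, write $\localg_{S_P,\cluster{a}} = \sum_{Q_1,Q_2}\widehat{\localg}(Q_1,Q_2)\chi_{Q_1,Q_2}$. Each $Q_1 P Q_2$ is a signed Pauli and so has operator norm $1$. The Fourier weight bound in \Cref{lem:cluster} then gives $\norm{\localg_{S_P,\cluster{a}}(P)}_{\mathrm{op}}\le 1$. Hence
\begin{equation*}
    \norm{\mathcal{L}_\lambda^{\dagger\ell}(P)}_{\mathrm{op}} \le 2^\ell\ell!\sum_{\cluster{a}}\abs{\lambda^{\cluster{a}}}\,\iver{\cluster{a}\cup S_P\text{ is a cluster}}.
\end{equation*}

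For the second, the main point is converting ``cluster from $S_P$'' into ``cluster from a site $i$''. Since $\ell\ge 1$, $\cluster{a}\cup S_P$ is connected, so some element of $\cluster{a}$ intersects $S_P$. The multiset $\cluster{a}$ itself need not be connected, however: it may split into several components, each touching $S_P$. Two ways to handle this:
\begin{enumerate}
    \item Decompose $\cluster{a}$ into its connected components. Each is a cluster from some $i\in S_P$. Summing over compositions of $\ell$ into components, each handled by \Cref{lem:tree-count} applied to $\abs{\lambda}$, gives a bound like $\sum_{\text{compositions}}\prod_j \locality (eg\locality)^{\ell_j}$. This can cost extra factors of $\locality$ per component.
    \item \textbf{(Preferred.)} Treat $S_P$ as an extra hyperedge of weight $1$, absorbed into the graph. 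The weighted degree becomes at most $g+1$, which can be rescaled, or one can redo the Mann–Helmuth counting argument rooted at the hyperedge $S_P$. Rooting the spanning-tree/Galton–Watson style count at a hyperedge of size at most $\locality$ instead of at a vertex gives a bound of roughly $(eg\locality)^\ell$ up to a constant factor. This yields $\sum \abs{\lambda^{\cluster{a}}}\le (eg\locality)^\ell$, and so the claimed $\ell!(2e\locality g)^\ell$.
\end{enumerate}

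Following the paper's remark that the corollary ``follows directly'' from \Cref{lem:tree-count}, I would argue as follows. Any cluster from $S_P$ becomes, after adjoining the term $P$ itself (support $S_P$), a connected cluster. Equivalently, the $\ell$-multiset is counted with the same $(e(k-1)+1)$-per-edge branching as in the proof of \Cref{lem:tree-count}, rooted at the set $S_P$ of at most $\locality$ vertices. The only extra cost is the choice of the first contact site, and that can be absorbed since the per-step factor $e(k-1)+1\le e\locality$ already has slack.

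\textbf{Main obstacle.} The cluster-counting step is the crux: it must bound the count rooted at a $\locality$-set rather than at a single vertex, and handle multiple components, without losing more than the allowed constant. If a clean absorption fails, I would accept the first approach and bound the composition sum by $(2e\locality g)^\ell$ directly, using $\binom{\ell-1}{r-1}$-type identities as in \Cref{lem:tree-count} to absorb the per-component factor $\locality$ into the spare factor $2^\ell$.
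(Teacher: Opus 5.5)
Your skeleton is exactly the paper's one-line justification, ``combine \Cref{lem:cluster,lem:tree-count}'': apply \Cref{lem:cluster} with $S=S_P$, bound each $\mathcal{G}_{S_P,\cluster{a}}(P)$ in operator norm by its Fourier weight $\le 1$, then count clusters. Those first two steps are fine. The gap is the step you flag yourself. Write $\Sigma_\ell=\sum_{|\cluster{a}|=\ell}|\lambda^{\cluster{a}}|\,\iver{\cluster{a}\cup S_P\text{ is a cluster}}$. To reach $\ell!(2ekg)^\ell$ you need $\Sigma_\ell\le(egk)^\ell$, but \Cref{lem:tree-count} only counts clusters rooted at a single site. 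Neither of your routes closes this.

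\begin{itemize}
\item Fixing a ``first contact site'' captures only one component. A cluster from $S_P$ can have up to $|S_P|$ vertex-disjoint components, e.g.\ pairwise disjoint terms each meeting a different site of $S_P$.
\item Even for connected clusters, the slack $\bigl(ek/(e(k-1)+1)\bigr)^\ell$ beats the union-bound factor $k$ only once $\ell$ is of order $k\log k$.
\item Adding $S_P$ as an extra term of weight $w$ gives $\Sigma_\ell\le w^{-1}\bigl(ek(g+w)\bigr)^{\ell+1}$. Even at the optimal $w\approx g/\ell$ this exceeds $(egk)^\ell$ by a factor of order $k\ell$, so it is not a constant you may drop.
\item Your fallback has a bookkeeping error. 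The $2^\ell$ in the target is not spare, because \Cref{lem:cluster} already contributes the prefactor $2^\ell\ell!$. Bounding $\Sigma_\ell$ by $(2ekg)^\ell$ therefore gives $\ell!(4ekg)^\ell$.
\item Done carefully, the component decomposition with \Cref{lem:tree-count} only yields $\Sigma_\ell\le\binom{k+\ell-1}{\ell}(egk)^\ell$.
\end{itemize}

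The paper treats this combination as immediate, but you are right that the count rooted at a $k$-set is not literally \Cref{lem:tree-count}.

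The simplest repair bypasses cluster counting, as in \Cref{lem:inductive-local-norm}.
\begin{enumerate}
\item Expand $\mathcal{L}_\lambda^{\dagger\ell}(P)=\sum_{a_1,\dots,a_\ell}\mathcal{L}^\dagger_{a_\ell}\cdots\mathcal{L}^\dagger_{a_1}(P)$, where $\mathcal{L}^\dagger_a(O)=\lambda_a\bigl(P_2OP_1-\tfrac12\{P_2P_1,O\}\bigr)$ for $a=(P_1,P_2)$.
\item Each term satisfies $\|\mathcal{L}^\dagger_a(O)\|_{\mathrm{op}}\le 2|\lambda_a|\,\|O\|_{\mathrm{op}}$.
\item $\mathcal{L}^\dagger_a$ annihilates any $O$ supported off $S_a$, since then $O$ commutes with $P_1,P_2$ and $P_2OP_1=P_2P_1O=\tfrac12\{P_2P_1,O\}$.
\item Hence a nonzero word needs each $a_j$ to meet $S_P\cup S_{a_1}\cup\dots\cup S_{a_{j-1}}$. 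This set has at most $jk$ sites, and the terms meeting it have total weight at most $jkg$.
\item Summing over $a_\ell$ first, then $a_{\ell-1}$, and so on, gives $\|\mathcal{L}_\lambda^{\dagger\ell}(P)\|_{\mathrm{op}}\le\prod_{j=1}^{\ell}2jkg=\ell!(2kg)^\ell\le\ell!(2ekg)^\ell$.
\end{enumerate}
No rooted cluster count is needed.
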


\subsection{Bounds on derivatives}

We can derive the following as a corollary of \Cref{lem:cluster}.

\begin{lemma}[Cluster expansion of Fourier expectations]
    \label{lem:cluster-fourier}
    We can write the function
    \begin{align}
        \E_{R \sim \pauli{S_{\pbar}}}[\ntr(\lind_x^{\dagger \ell}(P_2 R P_1) R)]
        = 2^\ell \ell! \sum_{\cluster{a} : \abs{\cluster{a}} = \ell} \gamma_{\cluster{a}} x^{\cluster{a}} \iver{\cluster{a} \text{ is a cluster}}\iver{S_{\cluster{a}} \supseteq S_{\pbar}}.
    \end{align}
    where $\gamma_{\cluster{a}}$ are some coefficients satisfying $\abs{\gamma_{\cluster{a}}} \leq 1$.
\end{lemma}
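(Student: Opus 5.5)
We apply \Cref{lem:cluster} with $O = P_2 R P_1$ and $S = S_{\pbar}$, noting that this $O$ is supported inside $S_{\pbar}$ for every $R \sim \pauli{S_{\pbar}}$. The lemma gives
\begin{equation*}
    \lind_x^{\dagger \ell}(P_2 R P_1) = 2^\ell \ell! \sum_{\cluster{a} : \abs{\cluster{a}} = \ell} x^{\cluster{a}} \localg_{S_{\pbar}, \cluster{a}}(P_2 R P_1) \iver{\cluster{a} \cup S_{\pbar} \text{ is a cluster}}.
\end{equation*}
The superoperators $\localg_{S_{\pbar},\cluster{a}}$ depend only on $S_{\pbar}$ and not on $R$. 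We can therefore multiply by $R$, take $\ntr$ and the expectation over $R$, and exchange these with the finite sum. This gives the claimed form with
\begin{equation*}
    \gamma_{\cluster{a}} \triangleq \E_{R \sim \pauli{S_{\pbar}}}\big[\ntr(\localg_{S_{\pbar},\cluster{a}}(P_2 R P_1) R)\big],
\end{equation*}
multiplied by the indicator $\iver{\cluster{a} \cup S_{\pbar} \text{ is a cluster}}$. It remains to bound $\gamma_{\cluster{a}}$ and to convert this indicator into the two stated ones.

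\textbf{Bounding $\gamma_{\cluster{a}}$.} Expand $\localg_{S_{\pbar},\cluster{a}} = \sum_{Q_1,Q_2} \widehat{\localg}(Q_1,Q_2)\,\chi_{Q_1,Q_2}$. Then
\begin{equation*}
    \gamma_{\cluster{a}} = \sum_{Q_1,Q_2} \widehat{\localg}(Q_1,Q_2)\, \E_R[\ntr(Q_1 P_2 R P_1 Q_2 R)].
\end{equation*}
Each trace is a normalized trace of a product of Paulis, so it has modulus at most $1$. The Fourier-weight bound $\sum \abs{\widehat{\localg}} \leq 1$ from \Cref{lem:cluster} then gives $\abs{\gamma_{\cluster{a}}} \leq 1$.

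\textbf{Cluster indicator.} For $\ell \geq 1$ we have $\cluster{a} \neq \emptyset$. The indicator says $\cluster{a} \cup \{S_{\pbar}\}$ is connected in the augmented dual graph. Once we know $S_{\cluster{a}} \supseteq S_{\pbar}$, any two elements of $\cluster{a}$ joined through the $S_{\pbar}$ vertex are in fact joined directly or via elements of $\cluster{a}$. So $\cluster{a}$ itself is a cluster, and conversely $\cluster{a}$ being a cluster with $S_{\cluster{a}} \supseteq S_{\pbar} \neq \emptyset$ makes $\cluster{a} \cup S_{\pbar}$ a cluster. I expect this step to need a careful argument rather than this sketch: connectivity through $S_{\pbar}$ must be replaced by connectivity inside $\cluster{a}$, and it is not obvious this always holds. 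If it does not, the fallback is to keep the weaker indicator $\iver{\cluster{a}\cup S_{\pbar}\text{ is a cluster}}$, which is presumably all that is needed downstream.

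\textbf{Support indicator (main obstacle).} We must show $\gamma_{\cluster{a}} = 0$ unless $S_{\cluster{a}} \supseteq S_{\pbar}$. Suppose some site $j \in S_{\pbar} \setminus S_{\cluster{a}}$. Every $(Q_1,Q_2)$ with nonzero coefficient has support in $S_{\cluster{a}}$, so both $Q_1$ and $Q_2$ act as identity at $j$. The trace then factorizes, and the single-site factor at $j$ is
\begin{equation*}
    \E_{R_j}\big[\ntr(P_{2,j} R_j P_{1,j} R_j)\big].
\end{equation*}
Since $j \in S_{\pbar}$, at least one of $P_{1,j}, P_{2,j}$ is non-identity. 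By the single-qubit case of \Cref{lem:fourier-orthonormal}, this factor is nonzero only if $P_{1,j} = P_{2,j} = I$, a contradiction. Hence the factor vanishes and $\gamma_{\cluster{a}} = 0$. The only care needed here is that the factorization uses the locality of the $Q$'s guaranteed by \Cref{lem:cluster}.
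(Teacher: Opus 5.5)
Your route is the paper's. You apply \Cref{lem:cluster} with $O=P_2RP_1$ and $S=S_{\pbar}$, define $\gamma_{\cluster{a}}$ as the $R$-averaged trace against $\localg_{S_{\pbar},\cluster{a}}$, bound it by the Fourier weight, and use locality of $\localg_{S_{\pbar},\cluster{a}}$ for the support condition. Your bound $|\gamma_{\cluster{a}}|\le 1$ is correct. So is your support argument: the factor $\E_{R_j}[\ntr(P_{2,j}R_jP_{1,j}R_j)]$ vanishes for $j\in S_{\pbar}\setminus S_{\cluster{a}}$, which is a fuller version of the paper's one-line remark.

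The gap is the step you flagged, and it cannot be closed. A path through the $S_{\pbar}$ vertex need not have a detour inside $\cluster{a}$, and the conclusion itself fails. Take $k\ge 2$, $i\neq j$, $\pbar=(Z_iZ_j,Z_iZ_j)$, $\ell=2$, and $\cluster{a}=\{(Z_i,Z_i),(Z_j,Z_j)\}$. Then $\cluster{a}\cup S_{\pbar}$ is a cluster and $S_{\cluster{a}}=S_{\pbar}$, but $\cluster{a}$ is not a cluster. The two dephasing generators $O\mapsto Z_qOZ_q-O$ commute and everything factorizes over sites, so the coefficient of $x_{Z_i,Z_i}x_{Z_j,Z_j}$ on the left-hand side is
\begin{equation*}
    2\prod_{q\in\{i,j\}}\E_{R_q}\big[1-\ntr(ZR_qZR_q)\big]=2\neq 0.
\end{equation*}
The right-hand side has no such monomial, and distinct monomials are linearly independent, so no choice of $\gamma$ fixes this. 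The lemma as stated, with $\iver{\cluster{a}\text{ is a cluster}}$, is therefore false. The paper's proof has the same hole: its displayed computation carries $\iver{\cluster{a}\cup S_{\pbar}\text{ is a cluster}}$, and the indicator is switched without justification. What you and the paper actually prove is your fallback, with $\iver{\cluster{a}\cup S_{\pbar}\text{ is a cluster}}\,\iver{S_{\cluster{a}}\supseteq S_{\pbar}}$; in the example, $\gamma_{\cluster{a}}=1/4$.

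Do not assume the fallback is enough downstream. \Cref{lem:jacobian-b1-b1} uses the stronger indicator to pass to ``$\cluster{a}$ is a cluster from $i$'' before invoking \Cref{lem:derivative-bound}. With the weaker indicator this fails in substance, not just in the write-up. Consider pure dephasing with $x_{Z_q,Z_q}=g$ on every qubit, so $\lonorm{x}=g$. One computes $E_{(Z_iZ_j,Z_iZ_j)}(x)=\tfrac14(1-e^{-2x_it})(1-e^{-2x_jt})$. Take $v$ to be the indicator vector of $(Z_i,Z_i)$, for which $(Av)_{(Z_iZ_j,Z_iZ_j)}=0$. Then the site-$i$ sum in $\lonorm{(J(x)-A)v}$ is at least $\tfrac{n-1}{2}e^{-2gt}(1-e^{-2gt})\approx(n-1)gt$. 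So an $n$-independent bound of the form in \Cref{lem:convexity-lindblad} cannot follow from the corrected lemma. Products of disconnected pieces attached to $S_{\pbar}$ would need separate treatment.
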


\begin{proof}
We use \cref{lem:cluster} to expand out $\lind_x^{\dagger \ell}(P_2 R P_1)$ into a polynomial for every $R \in \locals_{S_{\pbar}}$:
\begin{align}
    \E_{R \sim \pauli{S_{\pbar}}}[\ntr(\lind_x^{\dagger \ell}(P_2 R P_1)R)]
    &= 2^\ell \ell! \E_{R \sim \pauli{S_{\pbar}}}\bracks[\Big]{\ntr\parens[\Big]{R \sum_{\cluster{a} : \abs{\cluster{a}} = \ell} x^{\cluster{a}} \localg_{S_{\pbar}, \cluster{a}}(P_2RP_1) \iver{\cluster{a} \cup S_{\pbar} \text{ is a cluster}}}} \\
    &= 2^\ell \ell! \sum_{\cluster{a} : \abs{\cluster{a}} = \ell} x^{\cluster{a}} \iver{\cluster{a} \cup S_{\pbar} \text{ is a cluster}} \underbrace{\E_{R \sim \pauli{S_{\pbar}}}\bracks[\Big]{\ntr\parens[\Big]{R \localg_{S_{\pbar}, \cluster{a}}(P_2 R P_1)}}}_{\triangleq \gamma_{\cluster{a}}},
\end{align}
where in the first equality,
we used the fact that the support of $P_2 R P_1$ is contained in $S_{\pbar}$  to apply \Cref{lem:cluster}.
The coefficients of this expansion can be bounded:
\begin{align}
    \abs{\gamma_{\cluster{a}}}
    \leq \sum_{Q_1, Q_2} \left|\widehat{\localg_{S_{\pbar}, \cluster{a}}}(Q_1, Q_2)\right| \E_{R \sim \pauli{S_{\pbar}}}\bracks[\Big]{\abs[\Big]{\ntr\parens[\Big]{R \chi_{Q_1, Q_2}(P_2 R P_1)}}}
    \leq \sum_{Q_1, Q_2} \left|\widehat{\localg_{S_{\pbar}, \cluster{a}}}(Q_1, Q_2)\right|
    \leq 1,
\end{align}
where in the final inequality we use the bound on the Fourier weight of $\localg_{S_{\pbar}, \cluster{a}}$.
Moreover, because $\localg_{S_{\pbar}, \cluster{a}}$ only acts on sites contained in $S_{\cluster{a}}$, $\gamma_{\cluster{a}}$ is only nonzero provided that $S_{\cluster{a}}$ does not merely overlap $S_{\pbar}$, but \emph{contains} it.
This concludes the proof.
\end{proof}

\begin{lemma}\label{lem:jacobian-b1-b1}
    Let $\ell \geq 2$.
    Let $J^{(\ell)}(x)$ be the Jacobian of the vector-valued function $E^{(\ell)}(x)/t$ defined in \Cref{eq:term-by-term}.
    Then $\|J^{(\ell)}(x)\|_{B_1 \to B_1} \leq \frac{1}{t} 2^\ell k e^{k+2} \ell^k 16^k (\ell+1)! (egk)^{\ell - 1}$.
\end{lemma}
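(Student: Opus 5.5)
We start from \Cref{lem:cluster-fourier}, which writes
\begin{equation*}
E^{(\ell)}_{\pbar}(x) = 2^\ell \ell! \sum_{\cluster{a} : \abs{\cluster{a}} = \ell} \gamma_{\cluster{a}} x^{\cluster{a}} \iver{\cluster{a} \text{ is a cluster}}\iver{S_{\cluster{a}} \supseteq S_{\pbar}},
\end{equation*}
with $|\gamma_{\cluster{a}}|\le 1$. Bounding $\|J^{(\ell)}(x)\|_{B_1\to B_1}$ means fixing $v$ with $\lonorm{v}\le 1$ and a site $i$, and bounding
\begin{equation*}
\sum_{\pbar : S_{\pbar}\ni i}\Big|\sum_a v_a \partial_a E^{(\ell)}_{\pbar}(x)\Big|/t .
\end{equation*}
The overall factor $1/t$ in the claimed bound is simply the $1/t$ in $E^{(\ell)}/t$. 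We differentiate term by term and move absolute values inside. This gives at most
\begin{equation*}
\frac{2^\ell \ell!}{t}\sum_{\pbar\ni i}\ \sum_a |v_a| \sum_{\cluster{a}\ni a} |x|^{\cluster{a}\setminus\{a\}}\,\mathrm{mult}_a(\cluster{a})\,\iver{\cluster{a}\text{ cluster}}\iver{S_{\cluster{a}}\supseteq S_{\pbar}} .
\end{equation*}
Here the multiplicity factor is at most $\ell$, or is absorbed because $\partial_a Z_i$ already counts it.

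The key step is to swap the sum over $\pbar$ with the sum over clusters. Since $S_{\pbar}\ni i$ and $S_{\cluster{a}}\supseteq S_{\pbar}$, the cluster $\cluster{a}$ contains site $i$, so it is a cluster from $i$. For a fixed cluster, the number of $\pbar$ with $S_{\pbar}\subseteq S_{\cluster{a}}$, $i\in S_{\pbar}$ and $s_{\pbar}\le k$ is at most $\binom{|S_{\cluster{a}}|}{\le k}16^k$. Because $|S_{\cluster{a}}|\le k\ell$, this is at most $(k\ell)^k16^k$, or more carefully $e^k\ell^k 16^k$ via $\sum_{j\le k}\binom{k\ell}{j}\le (ek\ell/k)^k$. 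This is where the $e^k\ell^k16^k$ factor comes from. After the swap, what remains is exactly the quantity $\sum_a v_a\partial_a Z_i(|x|)$ with nonnegative weights, and \Cref{lem:derivative-bound} bounds it by $e^2k\ell(egk)^{\ell-1}$. Multiplying gives
\begin{equation*}
\frac{1}{t}\,2^\ell \ell!\cdot e^k\ell^k16^k\cdot e^2k\ell(egk)^{\ell-1} = \frac1t\,2^\ell k e^{k+2}\ell^k16^k\,\ell\cdot\ell!\,(egk)^{\ell-1},
\end{equation*}
and $\ell\cdot\ell!\le(\ell+1)!$, which matches the claim.

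The main obstacle is making the swap rigorous. First, replacing $x$ by $|x|$ and $v$ by $|v|$ must be justified; the triangle inequality does this, since $\lonorm{|x|}=\lonorm{x}$. Second, the derivative of $x^{\cluster{a}}$ must reproduce the same combinatorial object as $\partial_a Z_i$, which is the derivative of the multiset monomial including multiplicities, so no extra factor appears. Third, the condition ``$\cluster{a}$ is a cluster and contains $S_{\pbar}\ni i$'' must imply ``$\cluster{a}$ is a cluster from $i$''. This holds because some element of $\cluster{a}$ has support containing $i$. The support-size count $|S_{\cluster{a}}|\le k\ell$ uses $k$-locality of each term.
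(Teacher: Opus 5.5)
Your proposal is correct and matches the paper's proof step for step. You expand via \Cref{lem:cluster-fourier}, use $i \in S_{\pbar} \subseteq S_{\cluster{a}}$ to pass to clusters from $i$, apply the triangle inequality, and swap sums so the choice of $\pbar$ costs at most $(e\ell)^k 16^k$. You then finish with \Cref{lem:derivative-bound} and $\ell\cdot \ell! \le (\ell+1)!$. Your subset count $\sum_{j\le k}\binom{k\ell}{j}$, in place of the paper's $\binom{\ell(k-1)}{k-1}$, is a harmless variant giving the same $(e\ell)^k$, and you are right that the binomial-expansion proof of \Cref{lem:derivative-bound} already covers multiplicities, so no extra factor appears.
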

\begin{proof}
    Consider some $v$ such that $\lonorm{v} \leq 1$.
    Further consider some site $i \in [n]$.
    Then, by \Cref{lem:cluster-fourier},
    \begin{align}
        \sum_{a : S_{a} \ni i} \abs{(J^{(\ell)}(x) v)_a} &= \frac{1}{t}\sum_{a : S_a \ni i} \left|\sum_b v_b \partial_b E^{(\ell)}_a(x)\right|\\
        &= 2^{\ell} \ell! \frac{1}{t}\sum_{a : S_{a} \ni i} \abs[\Big]{\sum_{b} v_b\partial_b \sum_{\cluster{a} : |\cluster{a}| = \ell} \gamma_{a, \cluster{a}} x^{\cluster{a}} \iver{\cluster{a} \text{ is a cluster}} \iver{S_{\cluster{a}} \supseteq S_{a}}} \\
        &= 2^{\ell} \ell! \frac{1}{t}\sum_{a : S_{a} \ni i} \abs[\Big]{\sum_{b} v_b\partial_b \sum_{\cluster{a} : |\cluster{a}| = \ell} \gamma_{a, \cluster{a}} x^{\cluster{a}} \iver{\cluster{a} \text{ is a cluster from $i$}} \iver{S_{\cluster{a}} \supseteq S_{a}}} \\
        &\leq 2^\ell \ell! \frac{1}{t}\sum_{a : S_{a} \ni i} \sum_{\cluster{a} : |\cluster{a}| = \ell} \sum_b \abs{v_b \partial_b x^{\cluster{a}}} \iver{\cluster{a} \text{ is a cluster from $i$}} \iver{S_{\cluster{a}} \supseteq S_{a}} \\
        &= 2^\ell \ell! \frac{1}{t}\sum_{\cluster{a} : |\cluster{a}| = \ell} \sum_b \abs{v_b \partial_b x^{\cluster{a}}} \iver{\cluster{a} \text{ is a cluster from $i$}} \sum_{a : S_{a} \ni i} \iver{S_{\cluster{a}} \supseteq S_{a}} \\
        &\leq 2^\ell \frac{1}{t} \binom{\ell(k-1)}{k-1} 16^k\ell! \sum_{\cluster{a} : |\cluster{a}| = \ell} \sum_b \abs{v_b \partial_b x^{\cluster{a}}} \iver{\cluster{a} \text{ is a cluster from $i$}} \\
        &\leq 2^\ell \frac{1}{t} \binom{\ell(k-1)}{k-1}16^k e^2 k \ell! \ell (egk)^{\ell - 1}\\
        &\leq 2^\ell \frac{1}{t}(16 e \ell)^k e^2 k \ell \cdot \ell! (egk)^{\ell-1}
    \end{align}
    In the second line, we use \Cref{lem:cluster-fourier}.
    In the third line, we use the fact that $i \in S_a \subseteq S_{\cluster{a}}$, and so $\cluster{a} \cup \{\{i\}\}$ is a cluster.
    In the fourth line, we use triangle inequality.
    In the fifth line, we move the order of the sums.
    The sixth line uses that a cluster with $\ell$ elements has support size at most $\ell(k-1)+1$, so the number of possible subsets $S_a$ with $k$ elements (but still containing $i$) is at most $\binom{\ell(k-1)}{k-1}$,
    and the number of possible Paulis $\pbar$ supported on $S_a$ is at most $16^k$.
    The seventh line uses \Cref{lem:derivative-bound}.
    The last line uses that $\binom{b}{c} \leq (be/c)^c$.
    Since this holds for all $i$, we have the desired bound.
\end{proof}

\begin{proof}[Proof of {\Cref{lem:convexity-lindblad}}]
    Let $\pbar$ and $\qbar$ be $\locality$-local.
    Recall from \Cref{eq:first-order-repeat} that $E(x)$ can be written as
    \begin{align}
        E_{\pbar}(x)
        &= t (A x)_{\pbar} + \sum_{\ell=2}^\infty \frac{t^\ell}{\ell!} \E_{R\sim\pauli{S_{\pbar}}}[\ntr(\mathcal{L}^\ell_x(R)P_2 R P_1)],
    \end{align}
    where $A = A_k$ is the matrix defined in \Cref{not:A}.
    Thus, $J(x) - A$ is given by
    \begin{equation}
        J_{\pbar,\qbar}(x) - A_{\pbar, \qbar}
        = \frac{1}{t}\sum_{\ell=2}^\infty \frac{t^\ell}{\ell!} \E_{R\sim \pauli{S_{\pbar}}}[\partial_{x_{\qbar}}\ntr(\mathcal{L}_x^\ell(R)P_2 R P_1)]
        = \sum_{\ell=2}^\infty \frac{t^\ell}{\ell!} J^{(\ell)}_{\pbar, \qbar}(x).
    \end{equation}
    Thus, using \Cref{lem:jacobian-b1-b1}, we have
    \begin{align}
        \norm{J(x) - A}_{B_1 \to B_1}
        &\leq  \sum_{\ell=2}^\infty \frac{t^\ell}{\ell!} \Vert J^{(\ell)}(x)\Vert_{B_1 \to B_1}\\
        &\leq \frac{1}{t}\sum_{\ell=2}^\infty \frac{t^\ell}{\ell!} 2^\ell k e^{k+2} \ell^k 16^k (\ell+1)! (egk)^{\ell - 1}\\
        &= \frac{1}{t}\cdot 4e^{k+3}16^k k^2 gt^2 \sum_{\ell=2}^\infty (\ell +1) \ell^k (2egkt)^{\ell-2}\\
        &\leq \frac{1}{t} \cdot 4e^{k+3}16^k k^2 gt^2 \sum_{\ell=1}^\infty (\ell +1) \ell^k 2^{-\ell+2}\\
        &\leq \frac{1}{t} \cdot 32e^{k+3}16^k k^2 gt^2 \sum_{\ell=1}^\infty \frac{\ell^{k+1}}{2^\ell}\\
        &\leq \frac{1}{t} \cdot 32e^{k+3}16^k k^2 gt^2 \cdot 2^{k+1}(k+1)!\\
        &\leq \frac{1}{t} \cdot 2^{7k+12} gt^2 \cdot (k+3)!\\
        &\leq gt \cdot (23k)!.
    \end{align}
    In the second line, we use \Cref{lem:jacobian-b1-b1}.
    In the fourth line, we use $t < 1/(4egk)$.
    In the fifth line, we use that $\ell + 1 \leq 2\ell$ and enlarge the sum to include $\ell = 1$.
\end{proof}

\section{Algorithm}
\label{sec:algo}

The goal of this section is to prove \cref{thm:main}.
The detailed version of this theorem is given in \cref{cor:main-detailed}.

Throughout the following section, we will assume that (1) $k > 1$, (2) $g > 0$, and (3) $\eps < g$.
If (1) fails, then the Lindbladian is easy to learn, since it decomposes into a product of Lindbladians on each qubit, which can be learned separately and in parallel.
If either (2) or (3) fails, then outputting the zero Lindbladian suffices.

\subsection{Overview of the algorithm}
We begin by giving an overview of our algorithm for learning local Lindbladians.
Let $\alpha, D$ denote the true parameters that we want to learn.
Let $\calL \triangleq \calL_{\alpha,D}$ be a Lindbladian with bounded local one-norm $\lonorm{\calL} \leq \energy$ and approximate degree $d \triangleq \deg_{\epsilon/(100\cdot 16^k)}(\mathcal{L})$.
We assume access to the time evolution operator $e^{\calL t}$ for a time $t$ satisfying
\begin{equation}
    t < t_{\max} \triangleq \frac{1}{200 \cdot 4^k \cdot (23k)! \energy}.
\end{equation}
Given a vector of coefficients $x$, we define
\begin{align}
    E_{P_1, P_2}(x)
    &\triangleq \E_{R \sim \mathcal{P}_{S_{\pbar}}} \left[\ntr(\chi_{P_1, P_2}(R)^{\dagger} e^{\mathcal{L}_xt}(R))\right] = \E_{R \sim \mathcal{P}_{S_{\pbar}}} \left[\ntr(P_2 R P_1 e^{\mathcal{L}_xt}(R))\right]
\end{align}
to be the local Fourier coefficients (as in \Cref{sec:local-fourier,sec:local-lindblad-fourier}) of the corresponding time evolution.
Note that $E_{P_1, P_2}(\lambda)$ are the local Fourier coefficients corresponding to the true Lindbladian's time evolution $e^{\calL t}$.

\paragraph{Estimating the local Fourier coefficients.}
Our algorithm begins by running \Cref{alg:local-fourier}
to produce estimates $\widehat{E}_{P_1, P_2}$ for all $P_1, P_2 \in \pauli{n}$ with $s_{\pbar} \leq k$ such that
    \begin{equation}\label{eq:estimates}
    |\widehat{E}_{P_1, P_2} - E_{P_1, P_2}(\lambda)| \leq t \eta.
    \end{equation}
Here, $\eta$ is an error parameter which we set to
\begin{equation}
    \eta \triangleq \frac{\epsilon}{24000 \cdot 256^k d}.
\end{equation}
To accomplish this, we set the ``$\epsilon$'' parameter of \Cref{alg:local-fourier} to $t \eta$ and the ``$\delta$'' parameter to $0.01$, so that \Cref{alg:local-fourier} performs
\begin{equation}
    \Theta\Big(\frac{C^{\locality}}{(t \eta)^2} \log(n)\Big)
    = \Theta\left(\frac{C_k g^2 d^2\log(n)}{\epsilon^2}\right)
\end{equation}
applications of the time evolution $e^{\calL t}$, where $C_k$ is a constant that depends only on $k$.
Since each application costs time $t$, this leads to a total time evolution of
\begin{equation}
    \tet = \Theta\left(\frac{C_k gd^2\log(n)}{\epsilon^2}\right).
\end{equation}

\paragraph{Estimating the Lindbladian coefficients.}
The main challenge the algorithm faces is to convert these estimates of the local Fourier coefficients into estimates of the actual Lindbladian parameters. 
To do so, it maintains a vector $x$ of its estimates for the Lindbladian coefficients and evaluates the quality of its estimates by comparing the local Fourier coefficients of its guessed (parameterized) Lindbladian $e^{\calL_x t}$ with those of the true Lindbladian $e^{\calL t}$.
Formally, it considers the errors
\begin{align}
    \mathcal{F}_{P_1, P_2}(x) &\triangleq \frac{1}{t}E_{P_1, P_2}(x) - \frac{1}{t}E_{P_1, P_2}(\lambda).
\end{align}
We denote the vector of these values as
\begin{equation}\label{eq:gonna-find-some-roots}
    \mathcal{F}(x) \triangleq (\mathcal{F}_{P_1,P_2}(x))_{P_1,P_2}.
\end{equation}
If all of these errors are small, then $x$ should be close to the true Lindbladian parameters, but if one of these errors is large, then the algorithm updates $x$ in the direction needed to reduce the error.
In this way, the algorithm starts with a poor estimate of the true Lindbladian parameters and iteratively improves it until the result is a good estimate.
Our algorithm is inspired by the Newton-Raphson root-finding algorithm, as a perfect solution $x = \lambda$ will cause \Cref{eq:gonna-find-some-roots} to be equal to 0 and is therefore a root of $\calF(x)$.

Our algorithm can also discover the structure of $\lind$.
Different Lindbladian terms can interact in ways which are complicated and hard to understand, e.g., the ``confusion'' Paulis in the sense of \Cref{sec:estimate-coefs}.
Moreover, the presence of large Lindbladian terms can overshadow the contribution of Lindbladian terms which are small but nevertheless still part of the structure.
However, we have no trouble extracting information about such large Lindbladian terms, unobscured by the noise of other Lindbladian terms.
Inspired by this, our algorithm proceeds in rounds:
In the $j$-th round, the algorithm maintains $\mathcal{O}(\epsilon_j)$-accurate estimates for every Lindbladian term with magnitude $\epsilon_j = 2^{-j} g$ or larger.
If an estimate is smaller in magnitude than $\epsilon_j/ (4d)$, the algorithm rounds it down to $0$.
The remaining nonzero coordinates of the current estimate then reflect the structure of $\lind$ discovered by this iteration.
By iteratively decreasing the error threshold, in a given round, we already have good enough estimates of the larger Lindbladian coefficients so that we can effectively filter out their contribution and only detect the smaller terms.

One (minor) technical wrinkle is that the algorithm is not able to access the errors in \Cref{eq:gonna-find-some-roots} exactly.
This is for two reasons.
First, given access to $e^{\lind t}$, we can only approximate the local Fourier coefficients $E_{P_1,P_2}(\lambda)$, not compute them exactly.
Second, although the algorithm has access to its own estimates $x$, it still cannot compute $E_{P_1, P_2}(x)$ exactly, as this involves taking a matrix exponential of the Lindbladian $\calL_x$.
Instead, the algorithm Taylor expands $E_{P_1, P_2}(x)$ and truncates at a sufficiently high degree.
As a result, the algorithm works with an approximation $\widehat{\calF}(x)$ to the error rather than the true error $\mathcal{F}(x)$.
We describe how the algorithm obtains such an approximation in more detail in \Cref{sec:sample-time}.
For the purposes of this section, it suffices to know that we can obtain an approximation $\widehat{\mathcal{F}}(x)$ such that the error is bounded as 
\begin{equation}
    \label{eq:f-hat}
    \eta(x) \triangleq \widehat{\mathcal{F}}(x) - \mathcal{F}(x), \qquad \norm{\eta(x)}_\infty \leq \eta \;\;\text{always}.
\end{equation}

\subsection{The algorithm and guarantee}

We now state our algorithm for learning local Lindbladians.
The full algorithm is detailed in \Cref{alg:full}.

\begin{algorithm}
    \caption{Structure learning Lindbladians}
    \label{alg:full}
    \KwData{Accuracy $\epsilon > 0$; time $t$ satisfying $t < \frac{1}{200 \cdot 4^k \cdot (23k)! \energy}$; expectation accuracy $\eta = \frac{\epsilon}{24000 \cdot 256^k d}$.}
    \KwResult{Estimates $\widehat{\lambda} = (\widehat{\alpha}, \widehat{D})$ such that $\lonorm{\lambda - \widehat{\lambda}} \leq \epsilon$.}
    Initialize $x^{(0)} = 0 \in \mathbb{C}^M$ and $T = \left\lceil \log_2(g/\epsilon) \right\rceil$.\\
    Compute the coefficients of $\widehat{E}_{P_1,P_2}(x)$ defined in \Cref{eq:e-hat-series} for degree $\Gamma = \left\lceil \frac{\log(4/(t\eta))}{\log(1/(2\locality\energy t))} - 1\right\rceil$ via~\cite{haah2024learning}.\\
    Use \Cref{alg:local-fourier} to obtain estimates $\widehat{E}(\lambda)$ of $E(\lambda)$ such that $\norm{\widehat{E}(\lambda) - E(\lambda)}_\infty \leq \eta t/2$.\\
    \For{$j=0,\dots, T - 1$}{
        Set $\epsilon_j\triangleq 2^{-j}g$.\\
        Set $\tau_j\triangleq \epsilon_j / (800 \cdot 16^k d)$.\\
        Update
        \begin{equation}
            x^{(j+1)} = \round_{\eps_j / (4d)}\left(x^{(j)} - A^{-1} \round_{\tau_j}\left(\widehat{\mathcal{F}}(x^{(j)})\right)\right),
        \end{equation}
        where $A$ is defined in \Cref{not:A}.
    }
    Set $\widehat{\lambda} \triangleq x^{(T)}$.\\
    \Return{$\widehat{\alpha}_{P,I} \triangleq \widehat{\lambda}_{P,I}$ and $\widehat{D}_{P_1,P_2} \triangleq \widehat{\lambda}_{P_1,P_2}$.}
\end{algorithm}

Notably, our algorithm only uses simple experiments of the form: prepare a Pauli eigenstate, apply the unknown evolution $e^{\mathcal{L}t}$, and measure in a Pauli eigenstate.
A schematic diagram of these simple circuits is presented in \Cref{fig:experiments}.
Our algorithm has the following guarantee.
We do not attempt to optimize the performance of our algorithm with respect to the locality $k$.

\newcommand{\cvdots}{\ensuremath{\raisebox{1ex}{\(\vdots\)}}}
\begin{figure}
    \begin{center}
    \begin{quantikz}
    \ket{0} & \qw & \gate[wires=3]{U} & \gate[wires=3]{e^{\mathcal{L}t}} & \gate[wires=3]{V} & \meter{}\\
    \phantom{Ci}\cvdots\phantom{Ci} &  &  &  &  &\phantom{Ci}\cvdots\phantom{Ci} \\
    \ket{0} & \qw & \qw & \qw & \qw & \meter{}\\
    \end{quantikz}
    \caption{\textbf{Quantum experiments in our learning algorithm.} All quantum circuits used in our Lindbladian learning algorithm take this form. Here, $\mathcal{L}$ is the unknown Lindbladian, and $U, V$ are layers of single-qubit Clifford gates.}
    \label{fig:experiments}
    \end{center}
\end{figure}

\begin{theorem} \label{cor:main-detailed}
    Let $\epsilon, \delta > 0$.
    Let $\alpha, D$ be the true parameters, and let $\mathcal{L} = \mathcal{L}_{\alpha, D}$ be a $\locality$-local Lindbladian with bounded local one-norm $\lonorm{\lind} \leq \energy$.
    Let $d = \deg_{\epsilon/(100\cdot 16^{\locality})}(\mathcal{L})$ be the approximate degree of $\mathcal{L}$.
    Let $t, \eta > 0$ be such that
    \begin{equation}
        t < \frac{1}{200 \cdot 4^k \cdot (23k)! \energy}, \qquad \eta < \frac{\epsilon}{24000 \cdot 256^k d}.
    \end{equation}
    Then, \Cref{alg:full} outputs estimates $\widehat{\lambda} = (\widehat{\alpha}, \widehat{D})$ of the Lindbladian coefficients $\lambda = (\alpha, D)$ such that $\lonorm{\widehat{\lambda} - \lambda} \leq \epsilon$ with probability at least $1-\delta$ using a total time evolution of $\tet = \mathcal{O}(C_k g d^2\log(n/\delta)/\epsilon^2)$ and classical runtime
    $\mathcal{O}(n^k d \log d + (4d)^{C_k \log(dg/\epsilon)} + g^2d^2 n^k\log(n)/\epsilon^2)$, where $C_k$ is a constant that depends only on $k$.
    We take $k = \mathcal{O}(1)$ in the classical runtime.
    This also implies that $\norm{\widehat{\alpha} - \alpha}_\infty \leq \epsilon$ and $\norm{\widehat{D} - D}_\infty \leq \epsilon$.
\end{theorem}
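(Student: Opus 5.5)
We will prove \Cref{cor:main-detailed} by an induction over the rounds $j = 0, \dots, T$ of \Cref{alg:full}, maintaining two invariants for the iterate $x^{(j)}$: (i) $\lonorm{x^{(j)} - \lambda} \leq c\,\epsilon_j$ for a small absolute constant $c$ (say $c = 1/2$, after absorbing the $16^k$ factors), and (ii) $\deg(x^{(j)}) \leq C d$ for an absolute constant $C$ (e.g.\ $C = 8$). Together with $\lonorm{x^{(j)}} \leq 2g$, invariant (ii) keeps all iterates in the regime where \Cref{lem:convexity-lindblad} applies (with $g$ replaced by $2g$, which the choice of $t_{\max}$ accommodates). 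The base case $x^{(0)} = 0$ gives $\lonorm{x^{(0)} - \lambda} \leq g = \epsilon_0$. Since $T = \lceil \log_2(g/\epsilon)\rceil$, the invariant at $j = T$ gives $\lonorm{\widehat{\lambda} - \lambda} \leq \epsilon$, and the $\infty$-norm bounds follow because $\lonorm{\cdot}$ dominates $\norm{\cdot}_\infty$. Before the induction we fix the good event, of probability $1-\delta$ by \Cref{lem:estimate-coefs}, on which $\norm{\widehat{E}(\lambda) - E(\lambda)}_\infty \leq \eta t/2$; adding the truncation error of the series gives \Cref{eq:f-hat}, $\norm{\widehat{\mathcal{F}} - \mathcal{F}}_\infty \leq \eta$.

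For the inductive step we write $e = x^{(j)} - \lambda$ and apply the mean value theorem along the segment from $\lambda$ to $x^{(j)}$, which gives $\mathcal{F}(x^{(j)}) = \bar J e$, where $\bar J$ is an averaged Jacobian with $\norm{\bar J - A}_{B_1\to B_1} \leq (23k)!\,2g t \leq 1/(100\cdot 4^k)$ by \Cref{lem:convexity-lindblad} and the bound on $t$. Hence $\mathcal{F}(x^{(j)}) = Ae + \xi$ with $\lonorm{\xi} \leq \lonorm{e}/(100\cdot 4^k)$. Next we split $\lambda = \lambdabig + \lambdasmall$ using the optimal approximate-degree decomposition at level $\epsilon/(100\cdot 16^k)$. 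The vector $e$ is then supported on $\supp(x^{(j)}) \cup \supp(\lambdabig)$ plus a part of $B_1$-norm at most $\epsilon/(100\cdot16^k)$, so on each site only $\mathcal{O}(d)$ coordinates of $e$ are large.

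The heart of the argument is controlling the two rounding operations, and this is the step we expect to be the main obstacle. For the inner rounding $\round_{\tau_j}$ applied to $\widehat{\mathcal{F}} = Ae + \xi + \eta(x^{(j)})$, we must show that its effect, namely the entries zeroed out plus the noise $\eta(x^{(j)})$ on the entries that survive, has small $B_1$-norm. An entrywise bound of $\eta$ or $\tau_j$ becomes a $B_1$ bound only after multiplying by the number of surviving or affected coordinates per site. The difficulty is that $Ae$ may have many nonzero coordinates per site, because confusion spreads the entries of $e$ across all $\qbar$ in $A$'s pattern. We plan to bound the number of coordinates per site exceeding $\tau_j/2$ by $\lonorm{Ae}/(\tau_j/2) \leq 2\cdot 4^k c\,\epsilon_j/\tau_j = \mathcal{O}(16^k\cdot 4^k d)$, using \Cref{rmk:deg-bound} together with \Cref{lem:A-norm-bounds}. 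This bounds the noise on surviving entries by $\mathcal{O}(256^k d)\eta \ll \epsilon_j$ in $B_1$, using $\eta \leq \epsilon/(24000\cdot 256^k d)$ and $\epsilon \leq \epsilon_j$. The zeroed entries are handled by the truncation identity $\round_\tau(y) = y - \trunc_\tau(y)$: because $\lambdabig$ has degree $d$ and $x^{(j)}$ has degree $Cd$, the nonzero coordinates of $A(e - e_{\mathrm{small}})$ lie in a pattern with $\mathcal{O}(4^k d)$ entries per site. So $\lonorm{\trunc_{\tau_j}(\cdot)} \lesssim 4^k d \tau_j + \lonorm{A e_{\mathrm{small}}} + \lonorm{\xi}$, which is $\mathcal{O}(\epsilon_j/(200\cdot 4^k))$ by the choice $\tau_j = \epsilon_j/(800\cdot 16^k d)$. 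Applying $A^{-1}$, which has norm at most $4^k$ by \Cref{lem:A-norm-bounds}, we obtain $\lonorm{x^{(j)} - A^{-1}\round_{\tau_j}(\widehat{\mathcal{F}}) - \lambda} \leq \epsilon_j/40$.

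The outer rounding $\round_{\epsilon_j/(4d)}$ is handled the same way. Zeroing entries of magnitude at most $\epsilon_j/(4d)$ only affects coordinates where $|\lambda|$ is at most $\epsilon_j/(4d) + \epsilon_j/40$. Coordinates in $\supp(\lambdabig)$ contribute at most $d\cdot(\epsilon_j/(4d) + \cdots)$ per site, and the remaining coordinates contribute at most $\lonorm{\lambdasmall}$ plus the existing error. This gives $\lonorm{x^{(j+1)} - \lambda} \leq \epsilon_j/4 + \epsilon_j/20 \leq c\,\epsilon_{j+1}$. Invariant (ii) follows from \Cref{rmk:deg-bound}: every surviving entry exceeds $\epsilon_j/(4d)$, and the pre-rounding vector has $B_1$-norm at most $g + \epsilon_j$. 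This naively gives a degree bound of $\mathcal{O}(dg/\epsilon_j)$, which is not yet enough. We will refine it by noting that a surviving entry off $\supp(\lambdabig)$ must carry error at least $\epsilon_j/(4d) - $ (small), so there are at most $\mathcal{O}(d)$ such entries per site, and entries on $\supp(\lambdabig)$ number at most $d$ per site. Finally, the resource counts follow directly. The total evolution time is the $\Theta(C^k\log(n/\delta)/(t\eta)^2)$ queries of \Cref{lem:estimate-coefs} times $t = \Theta(1/g)$, which is $\mathcal{O}(C_k g d^2\log(n/\delta)/\epsilon^2)$. The classical runtime comes from the series coefficients computed via \cite{haah2024learning} for degree $\Gamma = \mathcal{O}(\log(dg/\epsilon))$, applied to iterates of degree $\mathcal{O}(d)$, plus $T$ sparse applications of $A^{-1}$ and the $\mathcal{O}(Mn^k)$ cost of \Cref{alg:local-fourier}.
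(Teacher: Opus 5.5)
Your proposal follows essentially the same route as the paper's proof. It uses the same two invariants ($B_1$-error at most $\epsilon_j$ and degree $\mathcal{O}(d)$) and the same integral-form linearization $\mathcal{F}(x)=A\Delta+\xi$ via the Jacobian bound. It splits the inner-rounding error the same way: noise on the surviving coordinates (counted through the large entries of $\mathcal{F}$), zeroed coordinates on the support pattern of $A(x-\lambdabig)$, and the residual $A\lambdasmall$. It also uses the same $\lambdabig/\lambdasmall$ split for the outer rounding, the degree bound, and the resource accounting.

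Two small fixes are needed. First, take $c=1$, as the paper does: with $c=1/2$, neither the base case $\lonorm{\lambda}\le g$ nor your final inequality $\epsilon_j/4+\epsilon_j/20\le c\,\epsilon_{j+1}$ holds. Second, count the surviving entries off $\supp(\lambdabig)$ in aggregate, via $(\lonorm{y-\lambda}+\lonorm{\lambdasmall})\cdot 4d/\epsilon_j$ with $y$ the pre-rounding iterate, rather than entrywise, since $\epsilon_j/(4d)$ need not dominate a single entry of $\lambdasmall$ when $d$ is large.
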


\subsection{Proof of correctness}
\label{sec:correctness}

First, to prove the correctness of our algorithm, we assume that we are given estimates $\widehat{\mathcal{F}}(x)$ of $\mathcal{F}(x)$ such that
\begin{equation}
    \eta(x) = \widehat{\mathcal{F}}(x) - \mathcal{F}(x), \qquad \norm{\eta}_\infty \leq \eta.
\end{equation}
We describe how one may obtain such estimates in \Cref{sec:sample-time}.
With these estimates, our algorithm simplifies to the form in \Cref{alg:manual-nr}.
We analyze the algorithm in \Cref{thm:manual-nr}.

\begin{algorithm}
    \caption{Structure learning algorithm for simplified case}
    \label{alg:manual-nr}
    \KwData{Accuracy $\epsilon > 0$; time $t$ satisfying $t < \frac{1}{200 \cdot 4^k \cdot (23k)! \energy}$; estimates $\widehat{\mathcal{F}}(x)$ satisfying \Cref{eq:f-hat} for given inputs $x$; $\eta < \frac{\epsilon}{24000\cdot 256^k d}$.}
    \KwResult{Estimates $\widehat{\lambda} = (\widehat{\alpha}, \widehat{D})$ such that $\lonorm{\lambda - \widehat{\lambda}} \leq \epsilon$.}
    Initialize $x^{(0)} = 0 \in \mathbb{C}^M$ and $T = \left\lceil \log_2(g/\epsilon) \right\rceil$.\\
    \For{$j=0,\dots, T - 1$}{
        Set $\epsilon_j\triangleq 2^{-j}g$.\\
        Set $\tau_j\triangleq \epsilon_j / (800 \cdot 16^k d)$.\\
        Update
        \begin{equation}
            \label{eq:manual-nr-update-ell}
            x^{(j+1)} = \round_{\eps_j / (4d)}\left(x^{(j)} - A^{-1} \round_{\tau_j}\left(\widehat{\mathcal{F}}(x^{(j)})\right)\right),
        \end{equation}
        where $A$ is defined in \Cref{not:A}.
    }
    Set $\widehat{\lambda} \triangleq x^{(T)}$.\\
    \Return{$\widehat{\alpha}_{P} \triangleq \widehat{\lambda}_{P,I}$ and $\widehat{D}_{P_1,P_2} \triangleq \widehat{\lambda}_{P_1,P_2}$.}
\end{algorithm}

\begin{theorem}
    \label{thm:manual-nr}
    Let $\epsilon > 0$.
    Consider Lindbladian parameters $\alpha, D$, and let $\mathcal{L} = \calL_{\alpha, D}$ be a $\locality$-local Lindbladian with bounded local one-norm $\lonorm{\mathcal{L}} \leq \energy$.
    Let $d = \deg_{\eps /(100 \cdot 16^k)}(\lind)$ be the approximate degree of $\lind$.
    Let $t, \eta > 0$ be such that
    \begin{equation}
        t < \frac{1}{200 \cdot 4^k \cdot (23k)! \energy}, \qquad \eta < \frac{\epsilon}{24000\cdot 256^k d}.
    \end{equation}
    Suppose we can compute estimates $\widehat{\mathcal{F}}(x)$ for given inputs $x$ such that
    \begin{equation}
        \norm{\mathcal{F}(x) - \widehat{\mathcal{F}}(x)}_\infty \leq \eta.
    \end{equation}
    Then, \Cref{alg:manual-nr} finds estimates $\widehat{\lambda} = (\widehat{\alpha}, \widehat{D})$ such that $\lonorm{\widehat{\lambda} - \lambda} \leq \epsilon$.
    This also implies that $\norm{\widehat{\alpha} - \alpha}_\infty \leq \epsilon$ and $\norm{\widehat{D} - D}_\infty \leq \epsilon$.
\end{theorem}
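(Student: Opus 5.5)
The proof is an induction on the round $j$, using the invariant
\begin{equation*}
    \lonorm{x^{(j)} - \lambda} \leq c\,\epsilon_j, \qquad \deg(x^{(j)}) \leq d' = \mathcal{O}(d),
\end{equation*}
for a small absolute constant $c$ (say $c = 1/2$ after rescaling, or $c=1$ with $\epsilon_0 = g$). The base case is immediate: $x^{(0)} = 0$ and $\lonorm{\lambda} \leq g = \epsilon_0$. After $T = \lceil\log_2(g/\epsilon)\rceil$ rounds we have $\epsilon_T \leq \epsilon$, which gives the conclusion. The $\infty$-norm bounds follow because $\lonorm{\cdot}$ dominates every coordinate. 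Throughout, fix the optimal split $\lambda = \lambdabig + \lambdasmall$ with $\deg(\lambdabig)\le d$ and $\lonorm{\lambdasmall} < \epsilon/(100\cdot 16^k)$.

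For the inductive step, write $\Delta = x^{(j)} - \lambda$. By the mean value theorem applied along the segment from $\lambda$ to $x^{(j)}$, together with \Cref{lem:convexity-lindblad}, we get $\mathcal{F}(x^{(j)}) = A\Delta + B$. Here $\lonorm{B} \leq (23k)!\,g t\,\lonorm{\Delta} \leq \lonorm{\Delta}/(200\cdot 4^k)$, using the hypothesis on $t$. The Jacobian bound requires $\lonorm{\cdot}\le g$ along the segment. Both endpoints have norm $\mathcal{O}(g)$, so I would apply \Cref{lem:convexity-lindblad} with $g$ replaced by $2g$, which the slack in $t_{\max}$ absorbs.

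Next I control the two roundings. The inner rounding $\round_{\tau_j}$ changes each coordinate of $\widehat{\mathcal F} = A\Delta + B + \eta(x)$ by at most $\tau_j$, but the $B_1$-norm of this change could be large if many coordinates are affected. I therefore split according to support. Let $S$ be the union of the supports of $x^{(j)}$, $\lambdabig$, and the coordinates where $|(A\Delta+B)_{\pbar}| > \tau_j/2$. On $S$, the error per coordinate is at most $\tau_j + \eta$. Off $S$, the rounded vector is zero whenever $\eta \le \tau_j/2$, and the entry removed is at most $\tau_j/2$ in magnitude. The degree of $S$ is $\mathcal{O}(d)$ plus $\deg\round_{\tau_j/2}(A\Delta+B)$, which is controlled by \Cref{rmk:deg-bound}: it is at most $\lonorm{A\Delta+B}/(\tau_j/2) = \mathcal{O}(4^k c\,\epsilon_j/\tau_j) = \mathcal{O}(16^k\cdot 4^k d)$. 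The error is therefore $\mathcal{O}(\deg(S)\,\tau_j)$ in $B_1$-norm on $S$. Off $S$, the discarded mass is bounded by $\lonorm{\trunc_{\tau_j/2}(A\Delta+B)}$, and this should be charged to $\lambdasmall$. Indeed, $A\Delta$ restricted off $S$ only sees contributions from $\lambdasmall$ and from confusion terms, and I would bound it using $\lonorm{A}_{B_1\to B_1}\le 4^k$ from \Cref{lem:A-norm-bounds}.

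Applying $A^{-1}$ (with $\lonorm{A^{-1}}_{B_1\to B_1}\le 4^k$), the pre-rounding iterate is
\begin{equation*}
    y = x^{(j)} - A^{-1}\round_{\tau_j}(\widehat{\mathcal F}) = \lambda - A^{-1}(B + \text{rounding error} + \eta\text{-error}).
\end{equation*}
Hence $\lonorm{y-\lambda} \le 4^k\bigl(\lonorm{\Delta}/(200\cdot4^k) + \mathcal{O}(16^k 4^k d)\,\tau_j + \mathcal{O}(d)\,\eta + \lonorm{\lambdasmall}\bigr)$. With $\tau_j = \epsilon_j/(800\cdot16^k d)$ and the choice of $\eta$, each term is a small fraction of $\epsilon_j$. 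The outer rounding $\round_{\epsilon_j/(4d)}$ zeroes coordinates of size at most $\epsilon_j/(4d)$. The ones that matter are those in the support of $\lambdabig$, at most $d$ per site, costing at most $\epsilon_j/4$ in $B_1$-norm, together with those where the rounding moves $y$ toward $0$. For coordinates outside the support of $\lambda$ this can only help. The rounding also restores the degree bound, since $\deg \le \lonorm{y}/(\epsilon_j/(4d)) = \mathcal{O}(d\, g/\epsilon_j)$.

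That degree bound is too weak, and this is where I expect the main obstacle. The fix is to note that coordinates surviving the outer rounding either lie near the support of $\lambdabig$ (at most $d$ per site) or carry error at least $\epsilon_j/(8d)$. The latter number at most $\lonorm{y-\lambda}/(\epsilon_j/(8d)) = \mathcal{O}(d)$ per site, which gives $\deg(x^{(j+1)}) = \mathcal{O}(d)$. The delicate bookkeeping is getting the constants ($800$, $24000$, $256^k$) to close the loop $\lonorm{x^{(j+1)}-\lambda}\le \epsilon_{j+1}$: this requires that the compounded $4^k\cdot 4^k$ factors times $d'\tau_j$ stay below $\epsilon_j/8$. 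It also requires that the truncated-off-$S$ contribution is genuinely bounded by $\lambdasmall$ plus rounding slack rather than by all $\mathcal{O}(n^k)$ confused terms. I would handle the latter by splitting coordinatewise and using that off $S$ every term of $A\Delta$ is small.
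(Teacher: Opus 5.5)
There is a genuine gap, and it sits at the step you flagged as delicate: controlling what the inner rounding $\round_{\tau_j}$ discards. Your set $S$ contains every coordinate with $|(A\Delta+B)_{\pbar}|>\tau_j/2$. The only bound you give on how many of these touch a site is $\lonorm{A\Delta+B}/(\tau_j/2)\approx 2\cdot 4^k\epsilon_j/\tau_j = 1600\cdot 64^k d$ (taking $c=1$). Charging $\tau_j+\eta$ to each such coordinate gives a $B_1$-error of roughly $2\cdot 4^k\epsilon_j$, and roughly $2\cdot 16^k\epsilon_j$ after applying $A^{-1}$. The $16^k d$ in the count cancels exactly against the $800\cdot 16^k d$ in $\tau_j$, so this is of order $\epsilon_j$. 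That is no better than the trivial bound $\lonorm{A\Delta}$ on what rounding can remove, so your claim that the $\mathcal{O}(16^k4^kd)\,\tau_j$ term is a small fraction of $\epsilon_j$ is false. Shrinking $c$ does not help: $\tau_j$ and the outer threshold are fixed by the algorithm, and the outer rounding alone can cost $\epsilon_j/4$, which forces $c\ge 1/2$. The off-$S$ part is also unresolved. Coordinatewise smallness cannot bound $B_1$-mass when a site may carry $\Theta(n^{k-1})$ coordinates. Moreover, $S$ contains the supports of $x^{(j)}$ and $\lambdabig$ rather than those of $Ax^{(j)}$ and $A\lambdabig$, so the confused images of the large terms can lie outside $S$. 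Hence $A\Delta$ off $S$ is not just $-A\lambdasmall$.

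The missing idea is to treat kept and dropped coordinates differently, and to count the dropped ones through
\[
V=\{\pbar:(Ax)_{\pbar}\neq 0\text{ or }(A\lambdabig)_{\pbar}\neq 0\}
\]
rather than through $\epsilon_j/\tau_j$. Writing $U_\tau$ for the kept set, one has exactly
\[
y-\lambda = -A^{-1}\Pi_{U_\tau}\eta(x) + A^{-1}\Pi_{U_\tau}\bigl(A\Delta-\mathcal{F}(x)\bigr) + A^{-1}(I-\Pi_{U_\tau})A\Delta .
\]
The count $\|\Pi_{U_\tau}\|_{\infty\to B_1}\le 4^{k+1}\epsilon_j/\tau_j$ appears only in the first term, where it multiplies $\eta$ rather than $\tau_j$; this is what the $256^k d$ in $\eta$ pays for. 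In the last term, the invariant $\deg(x^{(j)})\le 3d$ and the locality of $A$ give $\|\Pi_V\|_{\infty\to B_1}\le 4^k\cdot 4d$, independent of $\epsilon_j/\tau_j$. Off $V$ one has exactly $A\Delta=-A\lambdasmall$, which costs at most $4^k\lonorm{\lambdasmall}$. On $V$ each dropped entry has $|\mathcal{F}_{\pbar}(x)|\le \tau_j+\eta\le 2\tau_j$, so these cost at most $8\cdot 4^k d\,\tau_j + (23k)!\,2g\,t\,\epsilon_j \le \epsilon_j/(50\cdot 4^k)$. This is also why the degree invariant is essential rather than bookkeeping: it is what keeps $\Pi_V$ small in the next round. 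The rest of your outline agrees with the paper's argument: the integral form of the Jacobian bound with $2g$, splitting the outer rounding error along $\lambdabig$ and $\lambdasmall$, and recovering $\deg(x^{(j+1)})=\mathcal{O}(d)$ by counting surviving coordinates.
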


For the sake of analysis, we consider writing the true unknown Lindbladian $\mathcal{L}_{\alpha, D}$ as the parameterized Lindbladian $\mathcal{L}_\lambda$, where $\lambda_{P,I} = \alpha_P$ and $\lambda_{\pbar} = D_{\pbar}$.
Clearly, these representations are equivalent, but $\mathcal{L}_\lambda$ allows us to index into the parameter vector more simply.

Before we prove our main theorem, we need to prove some properties of $\mathcal{F}(x)$.
In particular, one can show that the first order term of $A^{-1}\mathcal{F}(x)$ corresponds precisely to the ideal update we want to perform in the algorithm.
To see this, consider expanding each of the entries of $\mathcal{F}(x)$ in a Taylor series.
By \Cref{lem:convexity-lindblad}, we can write the vector of expectation values as $E(x) = tAx + B(x)$ for some higher order terms $B(x)$.
Thus, we have
\begin{equation}
    \mathcal{F}(x) = \frac{1}{t}E(x) - \frac{1}{t}E(\lambda) = A(x - \lambda) + \frac{1}{t}(B(x) - B(\lambda)).
\end{equation}
Here, we see that, if we could ignore the higher order terms denoted by $B$, we would be done.
Namely, the update $x \leftarrow x - A^{-1}\mathcal{F}(x)$ would directly reveal the unknown parameters $\lambda$.
Of course, we cannot simply throw away the higher order terms, so one key technical step is to bound the contribution of the higher order terms in the expansion of $\mathcal{F}(x)$.
We can do so by using the bounds on the higher order terms of the Jacobian $J(x) \triangleq \partial_{x_{\qbar}} \mathcal{F}_{\pbar}(x)$ of $\mathcal{F}(x)$, which we developed in \Cref{sec:series}.
In particular, we can use \Cref{lem:convexity-lindblad} to bound the higher order terms of $\mathcal{F}$ via the Fundamental Theorem of Calculus.

\begin{corollary}
    \label{coro:f-bound}
    Let $\lonorm{x} \leq \energy$, and 
    let $\Delta \triangleq x - \lambda$.
    Let $t < 1/(8e\locality\energy)$.
    Let $A = A_{\locality}$ be the matrix defined in \Cref{not:A}.
    Then
    \begin{equation}
        \lonorm{\mathcal{F}(x) - (A\Delta)} \leq ct \lonorm{\Delta} \text{ for } c \triangleq (23k)! 2g.
    \end{equation}
\end{corollary}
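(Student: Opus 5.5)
We plan to apply the Fundamental Theorem of Calculus along the segment from $\lambda$ to $x$. Since $\mathcal{F}(x) = \frac{1}{t}(E(x) - E(\lambda))$ and $E(y) = tAy + B(y)$, we have
\begin{equation*}
    \mathcal{F}(x) - A\Delta = \frac{1}{t}\big(B(x) - B(\lambda)\big) = \int_0^1 \big(J(\lambda + s\Delta) - A\big)\Delta \, ds,
\end{equation*}
where $J$ is the Jacobian of $E/t$, so that $J - A$ is the Jacobian of $B/t$. This identity requires $E$ to be differentiable along the path. By \Cref{lem:cluster-fourier}, each coordinate $E_{\pbar}$ is a power series in the coefficients that converges absolutely in the regime of \Cref{lem:convexity-lindblad}, so termwise differentiation is justified.

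Next we bound the integrand in $B_1$-norm. We use the triangle inequality for integrals, which holds since $\lonorm{\cdot}$ is a norm (a max over sites of sums of absolute values). This gives
\begin{equation*}
    \lonorm{\mathcal{F}(x) - A\Delta} \leq \sup_{s\in[0,1]} \norm{J(\lambda+s\Delta) - A}_{B_1\to B_1}\lonorm{\Delta}.
\end{equation*}
We then invoke \Cref{lem:convexity-lindblad} at each point $y_s = \lambda + s\Delta = (1-s)\lambda + s x$.

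The main obstacle is checking the hypotheses of \Cref{lem:convexity-lindblad} along the path. By convexity of the norm, $\lonorm{y_s} \leq (1-s)\lonorm{\lambda} + s\lonorm{x} \leq g$, since both endpoints have $B_1$-norm at most $g$. So the hypothesis $\lonorm{y_s} \leq g$ holds directly. The time condition $t < 1/(8ek g) < 1/(4ekg)$ is also satisfied.

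One caveat is the constant. Applying the lemma with parameter $g$ gives the bound $(23k)!\, g t \lonorm{\Delta}$, which is already within the stated $c = (23k)!\,2g$. The factor of $2$ gives slack, which is presumably why the statement uses $8e$ rather than $4e$. If one wanted to avoid relying on convexity, one could instead apply the lemma with the crude bound $\lonorm{y_s} \leq \lonorm{\lambda} + \lonorm{x} \leq 2g$, i.e., with energy parameter $2g$. This is legitimate exactly because $t < 1/(4e k (2g))$, and it yields $(23k)!\,(2g)\,t$, matching $c$ exactly. We will present this second route, since it reproduces the constant without further comment.
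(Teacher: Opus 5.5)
Your proposal is correct and follows the paper's proof. Both use the Fundamental Theorem of Calculus along $\lambda + s\Delta$, bound the integrand by $\norm{J(\lambda+s\Delta) - A}_{B_1\to B_1}\lonorm{\Delta}$, and apply \Cref{lem:convexity-lindblad} with the crude bound $\lonorm{\lambda + s\Delta}\le 2g$, which is exactly why the hypothesis $t<1/(8ekg)$ appears. Your convexity remark is also valid, since $\lonorm{\lambda}\le g$: it would sharpen the constant to $(23k)!\,g$ and need only $t<1/(4ekg)$.
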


\begin{proof}
    Consider $f_{\pbar}:[0,1] \to \mathbb{C}$ defined by $f_{\pbar}(s) \triangleq E_{\pbar}(\lambda+s\Delta)/t$.
    Then, by the Fundamental Theorem of Calculus,
    \begin{equation}
        f_{\pbar}(1) - f_{\pbar}(0) = \int_0^1 \partial_s f_{\pbar}(s)\,ds.
    \end{equation}
    Expanding both sides and using $\partial_s = \sum_{\qbar} \Delta_{\qbar} \partial_{\qbar}$, we see that
    \begin{equation}
        \mathcal{F}_{\pbar}(x) = \int_0^1 \sum_{\qbar} \Delta_{\qbar} J_{\pbar,\qbar}(\lambda+s\Delta)\,ds = \int_0^1 (J(\lambda+s\Delta) \Delta)_{\pbar}\,ds.
    \end{equation}
    Subtracting $(A\Delta)$ from both sides, we have
    \begin{equation}
        \label{eq:ftc-f-lindblad}
        \mathcal{F}_{\pbar}(x) - (A\Delta)_{\pbar} = \int_0^1 ( (J(\lambda+s\Delta) - A)\Delta)_{\pbar}\,ds.
    \end{equation}
    Bounding the absolute value of this, we have
    \begin{equation}
        \lonorm{\mathcal{F}(x) - (A\Delta)} \leq \int_0^1 \norm{J(\lambda + s\Delta) -A}_{B_1\to B_1}\lonorm{\Delta}\,ds \leq (23k)! 2g t \lonorm{\Delta},
    \end{equation}
    where in the last inequality, we used \Cref{lem:convexity-lindblad} applied to $\lambda + s\Delta$, which has $\lonorm{\lambda + s\Delta} \leq 2g$.
\end{proof}

Now, we are ready to prove \Cref{thm:manual-nr}.

\begin{proof}[Proof of {\Cref{thm:manual-nr}}]
Let $j \in \{0,\dots, T-1\}$.
We prove this via induction on $j$, where at each iteration, we maintain the invariants
\begin{equation}
\begin{gathered}
    \label{eq:inductive-hypo}
    \lonorm{x^{(j)} - \lambda} \leq \eps_j,\\
    \deg(x^{(j)}) \leq 3 d.
\end{gathered}
\end{equation}
For the base case of $j = 0$, recall that $x^{(0)} = 0$ and $\epsilon_0 = g$.
Thus, we have $\lonorm{x^{(0)} - \lambda} = \lonorm{\lambda} \leq g$.
Moreover, it is vacuously true that $\deg(x^{(0)}) \leq 3 d$.

For the inductive step, suppose that the inductive hypotheses hold at iteration $j$.
We will prove that they still hold at iteration $j +1$.
To simplify notation, we drop the iteration index.
Let $x\triangleq x^{(j)}$ denote the current iterate, $x^+ \triangleq x^{(j+1)}$ the next iterate, $\Delta \triangleq x - \lambda$ the error vector of the current iterate, $\Delta^+ \triangleq x^+ -\lambda$ the error vector of the next iterate, $\eps\triangleq \epsilon_j$ the current error, $\epsilon^+ \triangleq \epsilon_{j+1} = \epsilon/2$ the desired error of the next iterate, and $\tau \triangleq \tau_j$ the current threshold.
(Note that setting $\eps\triangleq \epsilon_j$ creates a notational conflict with the ``$\eps$'' used as input to this algorithm. However, in this proof we will only ever use the form $\eps$ and never the latter input ``$\eps$''.)
We use $y$ to denote the next iterate before rounding:
\begin{equation}
    x^+ = \round_{\epsilon/(4d)}(y)
    \quad \text{ where }
    y \triangleq x - A^{-1}\round_{\tau}\left(\widehat{\mathcal{F}}(x)\right).
\end{equation}
By the inductive hypothesis, $x$ satisfies \Cref{eq:inductive-hypo}.
We will show that $x^+$ satisfies the inductive hypotheses with error parameter $\epsilon^+ = \epsilon / 2$.
It will suffice to analyze the unrounded vector $y$ and show that
\begin{equation}
    \label{eq:y-dist}
    \lonorm{y - \lambda} \leq \frac{\epsilon}{10}.
\end{equation}
To see why, we will first show that this implies all of the inductive hypotheses for $x^+$ for error parameter $\epsilon^+$.

Recall that the definition of approximate degree splits the superoperator $\mathcal{L}_\lambda$ into two parts $\lind_\lambda = \Lbig_\lambda + \Lsmall_\lambda$, where $\lonorm{\Lsmall_\lambda} < \epsilon/(100 \cdot 16^k)$, and minimizes $\deg(\Lbig_\lambda)$.
From here on, let $\Lbig_\lambda, \Lsmall_\lambda$ be the parts attained in this minimization, i.e.,
\begin{equation}
    \Lbig_\lambda = \mathop{\mathrm{argmin}}_{\Lbig : \lonorm{\lind_\lambda - \Lbig} < \epsilon/(100 \cdot 16^k)} \deg(\Lbig),\qquad \Lsmall_\lambda = \lind_\lambda - \Lbig_\lambda.
\end{equation}
As discussed in \Cref{sec:lindblad},
without loss of generality, splitting the Lindbladian in this way simply selects a subset of the coefficients to include in either $\Lbig_\lambda$ or $\Lsmall_\lambda$.
In other words, if $\lambdabig$ are the coefficients of $\Lbig_\lambda$ and $\lambdasmall$ are the coefficients of $\Lsmall_\lambda$, then we may assume that the supports of $\lambdabig$ and $\lambdasmall$ are disjoint. Let $\Wbig$ denote the set of pairs of Paulis which are nonzero in $\lambdabig$, and let $\Wsmall$ denote the set of remaining pairs of Paulis.
\ignore{
i.e., let $\Wbig$ denote a set of pairs of Paulis such that
\begin{equation}
    \lambdabig_{\pbar} = \lambda_{\pbar} \iver{\pbar \in \Wbig},
\end{equation}
where $\lambdabig$ are the coefficients of $\Lbig_\lambda$.
We can similarly define $\Wsmall$ and $\Lsmall_\lambda$.}
Let $\Pi_{\lambdabig}$ denote the coordinate projection onto $\Wbig$, i.e., the indices of coefficients included in $\Lbig_\lambda$, and let $\Pi_{\lambdasmall}$ denote the projection onto $\Wsmall$.
Note that $\Pi_{\lambdasmall} = I - \Pi_{\lambdabig}$.
For the first hypothesis in \Cref{eq:inductive-hypo}, we can bound the contributions after projecting onto $\Pi_{\lambdabig}$ and $\Pi_{\lambdasmall}$ separately:
\begin{align}
    \lonorm{\Pi_{\lambdabig}(x^+ - \lambda)}
    &\leq \lonorm{\Pi_{\lambdabig}(x^+ - y)} + \lonorm{\Pi_{\lambdabig}(y - \lambda)} \\
    &\leq \norm{\Pi_{\lambdabig}}_{\infty \to B_1} \infnorm{x^+ - y} + \lonorm{y - \lambda} \\
    &\leq d \frac{\eps}{4 d} + \frac{\eps}{10} \\
    &= \frac{7\epsilon}{20}.
\end{align}
In the second line, we use that, since $\Pi_{\lambdabig}$ is a coordinate projection, then $\norm{\Pi_{\lambdabig}}_{B_1 \to B_1} \leq 1$.
In the third line, we use that $d = \deg(\Lbig_\lambda)$ so that $\norm{\Pi_{\lambdabig}}_{\infty\to B_1} \leq d$.
We also use that $x^+ = \round_{\epsilon/(4d)}(y)$ and \Cref{eq:y-dist}.
For the $\Pi_{\lambdasmall}$ part, let $U_y \triangleq \{\pbar : |y_{\pbar}| \leq \epsilon/(4d) \}$, and define $\Pi_{U_y}$ to be the coordinate projection onto this set.
In this way, then $x^+ = (I-\Pi_{U_y})y$ so that
\begin{align}
    \Pi_{\lambdasmall}(x^+ - \lambda) = \Pi_{\lambdasmall}(I-\Pi_{U_y})y - \Pi_{\lambdasmall}\lambda
    &= \Pi_{\lambdasmall}(I-\Pi_{U_y})(y-\lambda) - \Pi_{\lambdasmall}\Pi_{U_y}\lambda\\
    &= \Pi_{\lambdasmall}(I-\Pi_{U_y})(y-\lambda) - \Pi_{U_y}\Pi_{\lambdasmall}\lambda,
\end{align}
where in the last step we used the fact that $\Pi_{U_y}$ and $\Pi_{\lambdasmall}$ are coordinate projections and hence commute.
Bounding the $B_1$-norm of this, we have
\begin{align}
    \lonorm{\Pi_{\lambdasmall}(x^+ - \lambda)}
    &\leq \lonorm{\Pi_{\lambdasmall}(I - \Pi_{U_y})(y-\lambda)} + \lonorm{\Pi_{U_y}\Pi_{\lambdasmall}\lambda}\\
    &\leq \norm{\Pi_{\lambdasmall}}_{B_1\to B_1}\norm{I-\Pi_{U_y}}_{B_1\to B_1}\lonorm{y-\lambda} + \norm{\Pi_{U_y}}_{B_1\to B_1}\lonorm{\Pi_{\lambdasmall} \lambda}\\
    &\leq \frac{\eps}{10} + \frac{\eps}{100 \cdot 16^k}\label{eq:small-bound} \\
    &\leq \frac{41\eps}{400}.
\end{align}
Here, we use that $\norm{\Pi_{\lambdasmall}}_{B_1\to B_1}, \norm{\Pi_{U_y}}_{B_1\to B_1}, \norm{I - \Pi_{U_y}}_{B_1\to B_1} \leq 1$ (since they're coordinate projections), \Cref{eq:y-dist}, and $\lonorm{\Lsmall_\lambda} \leq \epsilon/(100 \cdot 16^k)$.
Combining these, we have
\begin{align}
    \lonorm{x^+ - \lambda}
    &\leq \lonorm{\Pi_{\lambdabig}(x^+ - \lambda)} + \lonorm{\Pi_{\lambdasmall}(x^+ - \lambda)} \leq \eps/2.
\end{align}
For the second hypothesis in \Cref{eq:inductive-hypo}, note that
\begin{equation}
    \label{eq:deg-bound}
    \deg(x^+) \leq \deg(\Pi_{\lambdabig} x^+) + \deg(\Pi_{\lambdasmall} x^+) \leq d + \deg(\Pi_{\lambdasmall} x^+),
\end{equation}
where the first inequality uses that $I = \Pi_{\lambdabig} + \Pi_{\lambdasmall}$.
The second inequality uses that $d = \deg(\Lbig)$.
To bound the second term, consider a fixed qubit $i \in [n]$.
Notice that
\begin{equation}
    \lonorm{\Pi_{\lambdasmall} x^+} \leq \lonorm{\Pi_{\lambdasmall} (x^+ - \lambda)} + \lonorm{\Pi_{\lambdasmall} \lambda} \leq \frac{\eps}{10} + \frac{2\eps}{100\cdot 16^k} \leq \frac{\epsilon}{2},
\end{equation}
where the last inequality uses \Cref{eq:small-bound} and $\lonorm{\Lsmall_\lambda} \leq \epsilon/(100\cdot 16^k)$.
Then, because the overall $B_1$-norm of $\Pi_{\lambdasmall} x^+$ is less than $\epsilon/2$, the number of entries of $\Pi_{\lambdasmall} x^+$ of magnitude at least $\eps/(4d)$ (which, due to the rounding in $x^+$, are the \emph{only} nonzero entries in $\Pi_{\lambdasmall} x^+$) whose support contains a fixed qubit $i$ is at most $2d$.
The degree of $\Pi_{\lambdasmall} x^+$ is bounded by the number of elements of $x^+$.
Thus, together with \Cref{eq:deg-bound}, $\deg(x^+) \leq 3d$, so the second hypothesis in \Cref{eq:inductive-hypo} is satisfied.

Now, it remains to prove \Cref{eq:y-dist}.
We define two sets of coordinates.
Define
\begin{align}
    U_{\tau} &\triangleq \left\{(Q_1,Q_2): \left|\widehat{\mathcal{F}}_{Q_1,Q_2}(x)\right| \geq \tau\right\}, \\
    V &\triangleq \left\{(Q_1,Q_2): (Ax)_{Q_1, Q_2} \neq 0 \text{ or } (A\lambdabig)_{Q_1, Q_2} \neq 0\right\},
\end{align}
and let $\Pi_{U_\tau}$ and $\Pi_V$ be the coordinate projections onto $U_\tau$ and $V$, respectively.
In particular, $\Pi_{U_\tau} (\widehat{\func}(x)) = \round_\tau(\widehat{\func}(x))$.
\begin{claim}
    \label{claim:round-set}
    Let $\eta > 0$ be such that $\eta \leq \tau/(120 \cdot 16^k)$.
    Then,
    \begin{align}
        \norm{\Pi_{{U_\tau}}}_{\infty \to B_1} &\leq \frac{4^{k+1}\eps}{\tau} \leq \frac{\eps}{(30 \cdot 4^k \eta)},\\
        \norm{\Pi_V}_{\infty \to B_1} &\leq 4^k (4d).
    \end{align}
\end{claim}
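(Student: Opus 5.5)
The first bound comes from a counting argument on $\widehat{\func}(x)$, and the second from counting which rows of $A$ a single nonzero coefficient can reach. Both rest on one observation. For a coordinate projection $\Pi_W$, $\norm{\Pi_W}_{\infty\to B_1}$ is exactly $\max_{i\in[n]}|\{\pbar\in W : S_{\pbar}\ni i\}|$. So in each case it suffices to fix a site $i$ and count the coordinates of the set whose support contains $i$.

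\emph{Bound on $\Pi_{U_\tau}$.} Take $\pbar\in U_\tau$. Since $\norm{\widehat{\func}(x)-\func(x)}_\infty\le\eta\le\tau/(120\cdot 16^k)\le\tau/2$, we get $|\func_{\pbar}(x)|\ge\tau/2$. Hence the number of $\pbar\in U_\tau$ with $S_{\pbar}\ni i$ is at most $2\lonorm{\func(x)}/\tau$, and it remains to bound $\lonorm{\func(x)}$.

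\begin{itemize}
\item I would apply \Cref{coro:f-bound} with $\Delta=x-\lambda$, which gives $\lonorm{\func(x)}\le\lonorm{A\Delta}+ct\lonorm{\Delta}$.
\item \Cref{lem:A-norm-bounds} gives $\lonorm{A\Delta}\le 4^k\lonorm{\Delta}$.
\item The inductive hypothesis gives $\lonorm{\Delta}\le\eps$.
\item The choice $t<1/(200\cdot4^k(23k)!g)$ makes $ct\le 4^k$.
\end{itemize}

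Together these give $\lonorm{\func(x)}\le 2\cdot4^k\eps$, and therefore $\norm{\Pi_{U_\tau}}_{\infty\to B_1}\le 4^{k+1}\eps/\tau$. Substituting $\tau\ge120\cdot16^k\eta$ yields the second form $\eps/(30\cdot4^k\eta)$.

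The main technical wrinkle is the hypothesis of \Cref{coro:f-bound}. We only know $\lonorm{x}\le\lonorm{\lambda}+\eps\le 2g$, not $\lonorm{x}\le g$. I would handle this by rerunning the corollary's fundamental-theorem-of-calculus argument with $g$ replaced by $2g$ (or $3g$) throughout. The path $\lambda+s\Delta$ stays within $B_1$-norm $2g$, and $t$ is far below the required $1/(4ek\cdot 2g)$, so \Cref{lem:convexity-lindblad} still applies. The resulting constant is absorbed by the large slack in $t_{\max}$, which keeps $ct\le4^k$ intact.

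\emph{Bound on $\Pi_V$.} By \Cref{eq:a-def}, $(Ax)_{\pbar}\ne0$ requires a nonzero $x_{\qbar}$ with either $\qbar\succeq\pbar$, or (when $\pbar=(P,I)$) $\qbar=(Q,\pauli{}(PQ))$. In both cases $S_{\pbar}\subseteq S_{\qbar}$, so if $S_{\pbar}\ni i$ then $S_{\qbar}\ni i$ as well.

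Fix a nonzero $\qbar$. A $\pbar$ with $\qbar\succeq\pbar$ is determined by choosing which positions where $Q_1,Q_2$ agree are reset to identity, giving at most $2^k$ choices. The extra case contributes at most one $\pbar$, namely $(\pauli{}(Q_2Q_1),I)$. So each nonzero $\qbar$ produces at most $2^k+1\le4^k$ rows of $V$.

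Finally, the number of nonzero coordinates of $x$ and $\lambdabig$ containing site $i$ is at most $\deg(x)+\deg(\lambdabig)\le 3d+d=4d$, using the inductive hypothesis and the definition of $d$. This gives $\norm{\Pi_V}_{\infty\to B_1}\le 4^k(4d)$.
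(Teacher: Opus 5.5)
Your proposal is correct and follows the paper's proof. For $U_\tau$ you use the same thresholding at $\tau/2$ together with $\lonorm{\func(x)}\le 2\cdot 4^k\eps$ from \Cref{coro:f-bound} and \Cref{lem:A-norm-bounds}; for $V$ you use the same sum of the degrees of $Ax$ and $A\lambdabig$.

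Your two refinements make explicit steps the paper glosses over, and both are sound:
\begin{itemize}
\item You note that the iterates only satisfy $\lonorm{x}\le 2g$. This is harmless: by convexity the path $\lambda+s\Delta$ stays within $B_1$-norm $2g$, so even the constant $c=(23k)!\,2g$ is unchanged.
\item You derive the fan-out bound $2^k+1\le 4^k$ directly from the sparsity pattern of $A$. The paper instead cites the $B_1\to B_1$ bound of \Cref{lem:A-norm-bounds}, which is a norm bound rather than a degree bound.
\end{itemize}
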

\begin{proof}
Since $\norm{\Pi_{U_\tau}}_{\infty\to B_1} = \max_{i \in [n]} \abs{\{\overline{P} \in U_\tau \mid i \in S_{\pbar}\}}$, we aim to bound the number of elements of $U_{\tau}$ whose supports contain a fixed qubit $i \in [n]$.
Call this set $U_{i, \tau}$, i.e.,
\begin{equation}
    U_{i,\tau} \triangleq \{\pbar \in U_\tau : i \in S_{\pbar}\}.
\end{equation}
By \Cref{eq:f-hat}, we know that
\begin{equation}
    \widehat{\func}(x) = \func(x) + \eta(x),
\end{equation}
where $\|\eta(x)\|_\infty \leq \eta$, so
\begin{equation}
    \label{eq:u-i-tau}
    U_{i, \tau} \subseteq \braces[\Big]{\overline{Q} : \left|\func_{\overline{Q}}(x)\right| \geq \tau - \eta \geq \tau / 2},
\end{equation}
where we use that $\eta \leq \tau/2$.
By \Cref{coro:f-bound}, for $c = (23k)! 2g$,
\begin{equation}
    \lonorm{\func(x)} \leq \lonorm{A \Delta} + c t \lonorm{\Delta} \leq (4^k + c t)\eps \leq 2\cdot 4^{k}\epsilon,
\end{equation}
where in the second to last inequality, we use \Cref{lem:A-norm-bounds} and the inductive hypothesis that $\lonorm{\Delta} \leq \epsilon$.
In the last inequality, we use our choice of $t$.
Thus, because the $B_1$-norm of $\mathcal{F}(x)$ is bounded by $2 \cdot 4^k \epsilon$, the number of elements of $\mathcal{F}(x)$ that are larger than $\tau/2$ in magnitude and which contain qubit $i$ in their support must be at most $4^{k+1} (\epsilon/\tau)$.
In particular, the size of $U_{i, \tau}$ is bounded by $4^{k+1}(\eps / \tau)$.
This can also be seen via \Cref{rmk:deg-bound}.

Now, we prove the bound on $\norm{\Pi_V}_{\infty\to B_1}$.
By the inductive hypothesis, $\deg(x) \leq 3d$, and we know $\deg(\lambdabig) \leq d$ by definition.
Then, by \Cref{lem:A-norm-bounds}, $\deg(Ax) \leq 4^k 3d$ and $\deg(A\lambdabig) \leq 4^k d$.
The norm of $\Pi_V$ is bounded by the sum of these two degree bounds.
\end{proof}

We can write $y-\lambda$ in terms of these projectors:
\begin{align}
    y - \lambda &= x - A^{-1} \Pi_{U_\tau} \left(\widehat{\mathcal{F}}(x)\right) - \lambda\\
    &=\Delta - A^{-1} \Pi_{U_\tau} \left(\widehat{\mathcal{F}}(x)\right)\\
    &=\Delta - A^{-1} \Pi_{U_\tau}\left(\mathcal{F}(x)\right) \underbrace{- A^{-1} \Pi_{U_\tau} \left(\eta(x)\right)}_{\triangleq \err_1} \\
    &= \Delta - A^{-1} \Pi_{U_\tau} (A \Delta) + \underbrace{A^{-1} \Pi_{U_\tau}\left(A \Delta - \mathcal{F}(x)\right)}_{\triangleq \err_2} + \err_1 \\
    &= \Delta - A^{-1} A \Delta + \underbrace{A^{-1}(I - \Pi_{U_\tau}) A \Delta}_{\triangleq \err_3} + \err_2 + \err_1 \\
    &= \err_3 + \err_2 + \err_1.\label{eq:errs}
\end{align}
Thus, there are three forms of error we need to bound.
The first error $\err_1$ comes from only having approximate access to $\widehat{\mathcal{F}}$ (as in \Cref{eq:f-hat}).
The second error $\err_2$ comes from the impact of the higher-order terms of $\mathcal{F}(x)$, which we bounded in \Cref{coro:f-bound}.
The third error $\err_3$ comes from the rounding of each $\widehat{\mathcal{F}}(x)$.
In the following, we bound each error separately.

Now, we can bound each of the error terms in \Cref{eq:errs}.
First, consider $\err_1$, the error from the approximation to $\widehat{\mathcal{F}}(x)$, which we can bound as follows.
\begin{equation}
    \lonorm{\err_1} = \left\|A^{-1} \Pi_{U_\tau} \eta(x)\right\|_{B_1} \leq \norm{A^{-1}}_{B_1 \to B_1}\norm{\Pi_{U_\tau}}_{\infty\to B_1} \norm{\eta(x)}_\infty \leq 4^k \frac{\epsilon}{30 \cdot 4^k \eta} \eta = \frac{\epsilon}{30},
\end{equation}
where we use \Cref{lem:A-norm-bounds}, \Cref{claim:round-set}, and \Cref{eq:f-hat}.
In the last inequality, we also use our choice of $\eta$.

Next, consider $\err_2$, the error from the higher-order terms of $\mathcal{F}(x)$.
\begin{align}
    \lonorm{\err_2} &= \lonorm{A^{-1} \Pi_{U_\tau}(A \Delta - \mathcal{F}(x))} \\
    &\leq \norm{A^{-1}}_{B_1 \to B_1}\norm{\Pi_{U_\tau}}_{B_1 \to B_1} \lonorm{(A \Delta - \mathcal{F}(x))} \\
    &\leq 4^{\locality} c t \lonorm{\Delta}\\
    &\leq 4^{\locality} c t \epsilon\\
    &\leq \frac{\epsilon}{30}.
\end{align}
where $c = (23k)! 2g$.
In the third line, we use \Cref{lem:A-norm-bounds}, $\norm{\Pi_{U_\tau}}_{B_1\to B_1} \leq 1$ since $\Pi_{U_\tau}$ is a coordinate projection, and \Cref{coro:f-bound}.
In the second to last line, we also use the inductive hypothesis that $\lonorm{\Delta} \leq \epsilon$.
In the last line, we use our choice of $t$.

Finally, consider $\err_3$, the error from rounding.
We further break this error up into two parts, corresponding to whether the terms are in $V$ or not.
\begin{align}
    \lonorm{\err_3} &= \lonorm{A^{-1}(I - \Pi_{U_\tau})A \Delta}\\
    &\leq \norm{A^{-1}}_{B_1 \to B_1}\lonorm{(I-\Pi_{U_\tau}) A \Delta}\\
    &\leq 4^k\parens[\Big]{\lonorm{(I-\Pi_{U_\tau})(I - \Pi_V) A \Delta} + \lonorm{(I-\Pi_{U_\tau})\Pi_V  A \Delta}}\\
    &= 4^k\parens[\Big]{\lonorm{\underbrace{(I-\Pi_{U_\tau})(I - \Pi_V) A \Delta}_{\triangleq \err_4}} + \lonorm{\underbrace{\Pi_V (I-\Pi_{U_\tau}) A \Delta}_{\triangleq \err_5}}},
\end{align}
where in the third line, we used \Cref{lem:A-norm-bounds},
and in the fourth line, we used the fact that $\Pi_V$ and $(I-\Pi_{U_\tau})$ are coordinate projections and hence commute.

To bound $\err_4$, note that
\begin{equation}
    (I - \Pi_V) A \Delta = (I - \Pi_V) A (x - \lambdabig) - (I - \Pi_V) A \lambdasmall = -(I - \Pi_V) A \lambdasmall
\end{equation}
by definition of $V$.
Then, we can bound
\begin{align}
    \lonorm{\err_4}
    \leq \lonorm{(I - \Pi_V) A \lambdasmall} \leq 4^k \cdot \frac{\epsilon}{100\cdot 16^k} = \frac{\epsilon}{100\cdot 4^k},
\end{align}
where in the first inequality, we use $\norm{I-\Pi_{U_\tau}}_{B_1 \to B_1} \leq 1$.
In the second inequality, we use $\norm{I-\Pi_V}_{B_1\to B_1} \leq 1$, \Cref{lem:A-norm-bounds}, and $\lonorm{\Lsmall_\lambda} \leq \epsilon/(100\cdot 16^k)$.
To bound $\err_5$, recall that for coordinates not in $U_\tau$,
\begin{align}
    \abs{\func_{Q_1, Q_2}(x)}
    \leq \abs{\widehat{\func}_{Q_1, Q_2}(x)} + \eta \leq \tau + \eta \leq 2\tau,
\end{align}
where we use that $\eta \leq \tau$.
Consequently, we can conclude that
\begin{align}
    \lonorm{\err_5}
    &= \lonorm{\Pi_V (I - \Pi_{U_\tau}) A \Delta} \\
    &\leq \lonorm{\Pi_V (I - \Pi_{U_\tau})  \func(x) } + c t \eps \\
    &\leq \norm{\Pi_V}_{\infty \to B_1} \infnorm{(I - \Pi_{U_\tau})  \func(x)} + c t \eps\\
    &\leq 4^k (4d) 2\tau  + c t \eps\\
    &\leq \frac{\epsilon}{50 \cdot 4^k},
\end{align}
where $c = (23k)! 2g$.
In the second line, we use \Cref{coro:f-bound}.
In the fourth line, we use \Cref{claim:round-set} and \Cref{eq:u-i-tau}.
In the last line, we use our choice of $t$ and $\tau$.

Overall, plugging the bounds on the three errors back into \Cref{eq:errs}, we have
\begin{align}
    \lonorm{y - \lambda}
    &\leq \lonorm{\err_1} + \lonorm{\err_2} + \lonorm{\err_3} \\
    &\leq \lonorm{\err_1} + \lonorm{\err_2} + 4^k(\lonorm{\err_4} + \lonorm{\err_5})\\
    &\leq \frac{\epsilon}{30} + \frac{\epsilon}{30} + \frac{\epsilon}{100} + \frac{\epsilon}{50}\\
    &\leq \frac{\epsilon}{10},
\end{align}
as required.
As discussed after \Cref{eq:y-dist}, this completes the proof.
\end{proof}

\subsection{Sample and time complexity analysis}
\label{sec:sample-time}

We analyze the total time evolution, time resolution, and classical runtime of our algorithm from \Cref{alg:manual-nr}.
In the previous section, we proved \Cref{thm:manual-nr}, which states that, as long as we can produce estimates $\widehat{\mathcal{F}}(x)$ of $\mathcal{F}(x)$ to error $\eta$ in $\infty$-norm, then we can learn the Lindbladian parameters well.
Thus, it suffices to analyze the resources required to obtain such an approximation of $\mathcal{F}(x)$.

Recall that, for $\pbar = (P_1,P_2)$, $\mathcal{F}(x)$ is defined as
\begin{equation}
    \mathcal{F}_{\pbar}(x) = \frac{1}{t}E_{\pbar}(x) - \frac{1}{t}E_{\pbar}(\lambda) = \frac{1}{t}\E_{R \sim \pauli{S_{\pbar}}}[\ntr(e^{\mathcal{L}_xt}(R)P_2 R P_1)] - \frac{1}{t}\E_{R \sim \pauli{S_{\pbar}}}[\ntr(e^{\mathcal{L}t}(R)P_2 R P_1)],
\end{equation}
where $\lambda$ are the true parameters of the unknown Lindbladian.
We can estimate $E_{\pbar}(\lambda)$ using our access to $e^{\mathcal{L}t}$ for the unknown Lindbladian $\mathcal{L}$, as in \Cref{sec:estimate-coefs}.
Moreover, we can approximate $E_{\pbar}(x)$ by approximating $e^{\mathcal{L}_xt}$ via a truncated series expansion and computing the terms in this series, similarly to~\cite{haah2024learning}.
The full algorithm is given in \Cref{alg:full}.
Using this approach, we have the following guarantee.

\begin{theorem}
    \label{thm:complexity-general}
    Let $\lambda = (\alpha, D)$, and let $\mathcal{L} = \mathcal{L}_{\lambda}$ be a $\locality$-local Lindbladian with bounded local one-norm $\lonorm{\lind} \leq \energy$.
    Let $d = \deg_{\epsilon/(100\cdot 16^{\locality})}(\lind)$.
    Let $t,\eta > 0$, where $t < 1/(4e\locality\energy)$.
    Then, for iterates $x$ in \Cref{alg:manual-nr}, there exists an algorithm for computing estimates $\widehat{\mathcal{F}}(x)$ of $\mathcal{F}(x)$ such that
    \begin{equation}
        \norm{\mathcal{F}(x) - \widehat{\mathcal{F}}(x)}_\infty \leq \eta
    \end{equation}
    with probability at least $1-\delta$ which uses a total evolution time of $\tet = \mathcal{O}(C^{\locality}\log(n/\delta)/(t\eta^2))$ for some absolute constant $C > 0$ and classical runtime $\mathcal{O}(kn^k d\log d + d^{\Gamma + 2} (4^\Gamma + \locality)\poly(\Gamma) + (Cn)^k \log(n)/(\eta t)^2)$, where $\Gamma = \left\lceil \frac{\log(4/(t\eta))}{\log(1/(2e\locality\energy t))} - 1\right\rceil$.
\end{theorem}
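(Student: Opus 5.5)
We split $\mathcal{F}(x) = \frac{1}{t}E(x) - \frac{1}{t}E(\lambda)$ into two independent approximation problems and give each an error budget of $\eta/2$. For the second term, we run \Cref{alg:local-fourier} on the channel $\Phi = e^{\mathcal{L}t}$ with accuracy parameter $t\eta/2$ and failure probability $\delta$. By \Cref{lem:estimate-coefs}, this gives $|\widehat{E}_{\pbar}(\lambda) - E_{\pbar}(\lambda)| \leq t\eta/2$ simultaneously for all $k$-local $\pbar$. It uses $M = \Theta(C^k\log(n/\delta)/(t\eta)^2)$ applications of $e^{\mathcal{L}t}$, each of duration $t$, so the total evolution time is $\mathcal{O}(C^k\log(n/\delta)/(t\eta^2))$. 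The classical post-processing costs $\mathcal{O}(Mn^k) = \mathcal{O}((Cn)^k\log(n)/(\eta t)^2)$. These estimates are computed once and reused across all iterations, so no further quantum resources are needed.

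For the first term, which is purely classical, we use the Taylor expansion \Cref{eq:first-order-repeat} together with \Cref{eq:term-by-term}. Define the truncation
\[
\widehat{E}_{\pbar}(x) \triangleq \sum_{\ell=1}^{\Gamma} \frac{t^\ell}{\ell!}\E_{R\sim\pauli{S_{\pbar}}}\big[\ntr(\lind_x^{\dagger\ell}(P_2RP_1)\,R)\big].
\]
To bound the tail, we apply \Cref{lem:inductive-bound} to $\lind_x$. This is valid because the iterates satisfy $\lonorm{x}\leq 2g$ by the invariant $\lonorm{x-\lambda}\leq\epsilon_j\leq g$ from the proof of \Cref{thm:manual-nr}, and we absorb the resulting factor of two into the constants. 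The lemma gives $\norm{\lind_x^{\dagger\ell}(P_2RP_1)}_{\mathrm{op}}\leq \ell!(2e k g)^\ell$, so the $\ell$-th term is at most $(2ekgt)^\ell$ in magnitude. Because $t<1/(4ekg)$, the tail beyond $\Gamma$ is a geometric series bounded by $2(2ekgt)^{\Gamma+1}$. Our choice of $\Gamma$ makes this at most $t\eta/2$; that is exactly how $\Gamma$ is defined, up to the constant inside the logarithm. Dividing by $t$, both error budgets combine to give $\norm{\mathcal{F}(x)-\widehat{\mathcal{F}}(x)}_\infty\leq\eta$.

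The main obstacle is computing the truncated series exactly and efficiently for all $\pbar$. We would follow \cite{haah2024learning} and use the cluster expansion of \Cref{lem:cluster-fourier}. Only clusters $\cluster{a}$ of size at most $\Gamma$ that are connected and whose support contains $S_{\pbar}$ contribute. Since $x$ has degree at most $3d$ by the invariant, the number of such clusters rooted at a site is $d^{\mathcal{O}(\Gamma)}$, via the unweighted count in \Cref{lem:tree-count}. For each cluster we expand $\lind_x^{\dagger\ell}$ restricted to its support, which involves at most $\Gamma k$ qubits. Working in the Pauli basis and using the Fourier coefficient bookkeeping of \Cref{lem:cluster} costs $4^{\Gamma}\poly(\Gamma)$ per cluster, which yields the $d^{\Gamma+2}(4^\Gamma+k)\poly(\Gamma)$ term. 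The $kn^kd\log d$ term comes from indexing the nonzero coordinates of $x$: we sort and store the at most $3d$ terms adjacent to each site so that clusters can be enumerated in linear time. This part is the most delicate. We would need to check that enumerating by monomials rather than by ordered words does not reintroduce an $\ell!$ overhead, and that the number of rows $\pbar$ with nonzero $\widehat{E}_{\pbar}(x)$ stays bounded. The second point follows because such rows must be supported inside a cluster of $x$. Summing the quantum and classical costs gives the stated bounds.
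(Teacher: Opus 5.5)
Your proposal matches the paper's proof essentially step for step. You estimate $E(\lambda)$ once with \Cref{alg:local-fourier} to accuracy $t\eta/2$ via \Cref{lem:estimate-coefs}, truncate the Taylor series of $E(x)$ at degree $\Gamma$ and bound the tail with \Cref{lem:inductive-bound} plus a geometric series, and take the classical cost from \cite{haah2024learning} using the invariant $\deg(x)\le 3d$; the paper simply cites that analysis rather than sketching a cluster enumeration as you do.

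Your remark that the iterates only satisfy $\lonorm{x}\le 2g$ is correct, and the paper skips it silently by applying \Cref{lem:inductive-bound} with $g$. However, the factor of two cannot literally be absorbed into the stated constants. The ratio becomes $4ekgt$, so the tail bound $2(4ekgt)^{\Gamma+1}$ needs $t\le 1/(8ekg)$, and $\Gamma$ must use $\log(1/(4ekgt))$ in place of $\log(1/(2ekgt))$. This is harmless in the algorithm's regime $t<t_{\max}$, where it only changes $C_k$ in the final runtime.
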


With our choices of parameters from \Cref{thm:manual-nr}, this gives us \Cref{cor:main-detailed}.

\begin{proof}[Proof of {\Cref{cor:main-detailed}}]
    This follows by instantiating \Cref{thm:complexity-general} with our choice of $t,\eta,\Gamma$.
    We have that
    \begin{equation}
        \Gamma \leq \frac{\log(4/(t\eta))}{\log(1/(2ekgt))} = \mathcal{O}(C_k \log(dg/\epsilon)),
    \end{equation}
    where $C_k$ is some constant that depends only on $k$.
    Then, taking $k = \mathcal{O}(1)$, the time complexity simplifies to the claimed quantity.
\end{proof}

Now, we prove \Cref{thm:complexity-general}.

\begin{proof}[Proof of {\Cref{thm:complexity-general}}]
    First, we can estimate the expectation values $E_{\pbar}(\lambda)$ using our query access to $e^{\mathcal{L}t}$.
    In particular, by \Cref{lem:estimate-coefs}, we can obtain estimates $\widehat{E}_{P_1,P_2}(\lambda)$ such that
    \begin{equation}
        |\widehat{E}_{P_1,P_2}(\lambda) - E_{P_1,P_2}(\lambda)| \leq \frac{t\eta}{2}
    \end{equation}
    with probability at least $1-\delta$ using $\Theta(C^k\log(n/\delta)/(t\eta)^2)$ queries to the evolution operator $e^{\lind t}$, for some absolute constant $C > 0$.
    Because each query evolves for time $t$, this corresponds to the stated total evolution time.
    It remains to show that we can obtain estimates $\widehat{E}_{P_1,P_2}(x)$ such that
    \begin{equation}
        \label{eq:e-hat}
        |\widehat{E}_{P_1,P_2}(x) - E_{P_1,P_2}(x)| \leq \frac{t\eta}{2}.
    \end{equation}
    To do so, consider Taylor expanding out $e^{\mathcal{L}_xt}$ in the expression for $E_{P_1,P_2}(x)$ up to degree
    \begin{equation}
        \Gamma \triangleq \left\lceil \frac{\log(4/(t\eta))}{\log(1/(2e\locality\energy t))} - 1\right\rceil.
    \end{equation}
    In other words, we approximate
    \begin{equation}
        \label{eq:e-hat-series}
        \widehat{E}_{P_1,P_2}(x) \triangleq \sum_{\ell=1}^\Gamma \frac{t^\ell}{\ell!}\E_{R\sim \pauli{S_{\pbar}}}[\ntr(R \mathcal{L}_x^{\dagger \ell}(P_2 R P_1))].
    \end{equation}
    To show that this indeed satisfies \Cref{eq:e-hat}, by triangle inequality, we have
    \begin{equation}
        |\widehat{E}_{P_1,P_2}(x) - E_{P_1,P_2}(x)| \leq \sum_{\ell=\Gamma + 1}^\infty \frac{t^\ell}{\ell!}\E_{R\sim\pauli{S_{\pbar}}}[|\ntr(R \mathcal{L}_x^{\dagger \ell}(P_2 R P_1))|].
    \end{equation}
    We can bound each of these terms using \Cref{lem:inductive-bound}:
    \begin{equation}
        |\ntr(R \mathcal{L}^{\dagger \ell}(P_2 R P_1))| \leq \frac{1}{N}\norm{R}_{\tr} \norm{\mathcal{L}_x^{\dagger \ell}(P_2 R P_1)}_{\mathrm{op}} \leq \ell! (2e\locality\energy)^\ell.
    \end{equation}
    Plugging back into the above, we get
    \begin{equation}
        |\widehat{E}_{P_1,P_2}(x) - E_{P_1,P_2}(x)| \leq \sum_{\ell=\Gamma + 1}^\infty (2e\locality\energy t)^\ell = \frac{(2e\locality \energy t)^{\Gamma +1}}{1-2e\locality\energy t} \leq 2(2e\locality \energy t)^{\Gamma + 1} \leq \frac{t\eta}{2},
    \end{equation}
    where in the equality, we use the sum of a geometric series when $2e\locality \energy t \leq 1$.
    In the second to last inequality, we use $2e \locality \energy t < 1/2$.
    In the last inequality, we use our choice of $\Gamma$.

    We want to bound the time complexity of computing this truncated series expansion.
    The time complexity of such operations is analyzed in detail in~\cite{haah2024learning}.
    The same analysis applies here because the Lindbladian for which we compute the series expansion has degree $3d$.
    This is because, in the proof of \Cref{thm:manual-nr}, we maintain the invariant that $x$ has degree $3d$.
    Thus, the analysis in~\cite{haah2024learning} for computing the series yields a time complexity of
    \begin{equation}
        \mathcal{O}(kmd\log d + ed(1+e(d-1))^\Gamma(4^\Gamma + k)\poly(\Gamma)),
    \end{equation}
    where $m$ is the number of terms in the Lindbladian.
    In addition, we have the time complexity from \Cref{alg:local-fourier}, which by \Cref{lem:estimate-coefs}, costs $\mathcal{O}((Cn)^k \log(n/\delta)/(t\eta)^2)$ for an absolute constant $C$.
\end{proof}

\subsection{Applications to specific settings}
\label{sec:apps}

Our main result in \Cref{cor:main-detailed} is stated in terms of general parameters, namely the approximate degree $d$ and local one-norm bound $g$.
To exemplify the applicability of our result, we instantiate our theorem for various well-studied cases of Lindbladians.
In particular, we consider the cases of geometrically local, general $k$-local, quasi-local, and power-law Lindbladians.
We show explicit choices of $g,d$ for each of these settings, resulting in a performance guarantee for our algorithm.

Moreover, in \Cref{sec:ham}, we show that a simplification of our algorithm can be applied to structure learning Hamiltonians with bounded local one-norm.
Surprisingly, our result demonstrates that dependence on the effective sparsity parameter of~\cite{bakshi2024structure} is not necessary for structure learning Hamiltonians.
The analysis of the algorithm is greatly simplified in this case compared with Lindbladian learning.

First, we consider the setting of geometrically local Lindbladians, which is arguably the most well-studied setting in both Hamiltonian and Lindbladian learning.

\begin{corollary}[Learning of strictly local Lindbladians]
    \label{coro:geo-local}
    Let $\lind_\lambda$ be a Lindbladian whose coefficients are bounded, $\abs{\lambda_{\pbar}} \leq 1$.
    Further suppose that $\lind$ is strictly local with respect to some geometry: every term has support size $k=\mathcal{O}(1)$, and every qubit interacts with at most $d$ nonzero terms.
    Then, \Cref{alg:full} uses a total evolution time of $\tet = \mathcal{O}(d^3 \log(n)/\epsilon^2)$ to learn an estimate $\widehat{\lambda}$ such that $\lonorm{\widehat{\lambda} - \lambda} < \eps$ with probability at least $0.99$.
    Moreover, the time resolution is $\tres = \Theta(1/d)$.
\end{corollary}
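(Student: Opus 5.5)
The plan is to instantiate \Cref{cor:main-detailed} (equivalently \Cref{thm:manual-nr} together with \Cref{thm:complexity-general}) with explicit choices of the local one-norm bound $\energy$ and the approximate degree, and then read off the total evolution time and time resolution.

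\textbf{Step 1: bound the local one-norm.} Since every coefficient satisfies $\abs{\lambda_{\pbar}} \leq 1$ and each qubit $i$ lies in the support of at most $d$ nonzero terms, we get
\begin{equation*}
    \lonorm{\lambda} = \max_{i}\sum_{\pbar : S_{\pbar}\ni i}\abs{\lambda_{\pbar}} \leq d .
\end{equation*}
So we may take $\energy = d$.

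\textbf{Step 2: bound the approximate degree.} Take the trivial splitting $\Lbig = \lind_\lambda$ and $\Lsmall = 0$. Then $\lonorm{\Lsmall} = 0 < \epsilon/(100\cdot 16^k)$, so
\begin{equation*}
    \deg_{\epsilon/(100\cdot 16^k)}(\lind_\lambda) \leq \deg(\lind_\lambda) \leq d .
\end{equation*}
One small point needs checking. \Cref{cor:main-detailed} is stated with $d$ equal to the approximate degree. Its proof (the inductive invariant $\deg(x^{(j)})\leq 3d$, \Cref{claim:round-set}, and the choice of $\eta$) only uses $d$ as an upper bound, so replacing it by any upper bound $d$ is harmless. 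I would note this explicitly.

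\textbf{Step 3: plug in and read off the costs.} Choose $t = \Theta(1/\energy) = \Theta(1/d)$ just below $t_{\max}$; since $k=\mathcal{O}(1)$, the factor $4^k(23k)!$ is a constant. This gives time resolution $\tres = \Theta(1/d)$. Choose $\eta = \Theta(\epsilon/d)$, and run \Cref{alg:local-fourier} with $\delta = 0.01$. Then \Cref{thm:complexity-general} gives
\begin{equation*}
    \tet = \mathcal{O}\!\left(\frac{\log n}{t\eta^2}\right) = \mathcal{O}\!\left(\frac{d\cdot d^2\log n}{\epsilon^2}\right) = \mathcal{O}\!\left(\frac{d^3\log n}{\epsilon^2}\right),
\end{equation*}
which matches $\mathcal{O}(C_k\,\energy d^2\log(n)/\epsilon^2)$ with $\energy = d$. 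Accuracy $\lonorm{\widehat{\lambda}-\lambda}\leq\epsilon$ then follows from \Cref{thm:manual-nr} on the probability-$0.99$ event that the Fourier estimates are accurate.

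\textbf{Edge cases and main obstacle.} The standing assumptions of \Cref{sec:algo} ($k>1$, $\energy>0$, $\epsilon<\energy$) are handled by the remarks there. The one subtle step is the monotonicity in $d$ from Step 2, and even that is routine.
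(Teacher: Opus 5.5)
Your proposal is correct and matches the paper's proof. Bounding $\lonorm{\lambda}\leq d$, taking the trivial splitting to get approximate degree at most $d$, and invoking the main theorem with $g = d$ gives $\tet = \mathcal{O}(d^3\log(n)/\epsilon^2)$ and $\tres = \Theta(1/d)$. Your check that the analysis uses $d$ only as an upper bound is correct and worth stating: \Cref{cor:main-detailed} is phrased with $d$ equal to the approximate degree, whereas \Cref{thm:main}, which the paper cites directly, already allows an upper bound.
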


\begin{proof}
    In this case,
    \begin{equation}
        \lonorm{\lambda} = \max_{i \in [n]} \sum_{\pbar : S_{\pbar} \ni i} |\lambda_{\pbar}| \leq \max_{i \in [n]} |\{\pbar : S_{\pbar} \ni i\}| \leq d, 
    \end{equation}
    where we use that every qubit interacts with at most $d$ nonzero terms.
    Moreover, note that $\deg_\epsilon(\lambda) \leq d$.
    We can decompose $\lind_\lambda = \Lbig_\lambda + \Lsmall_\lambda$, where $\Lbig_\lambda = \lind_\lambda$ and $\Lsmall_\lambda = 0$.
    In this way, $\lonorm{\Lsmall_\lambda} = 0 < \epsilon$, and $\deg_\epsilon(\lind_\lambda) \leq \deg(\lind_\lambda) = d$.
    Taking $g = d$ in \Cref{thm:main} gives the result.
\end{proof}

For the following two applications, we need the following fact.

\begin{fact}[Volume bounds on lattices]
    \label{fact:volume-bounds}
    Consider a $p$-dimensional lattice for $p \geq 2$, and let $\ell \geq 1$.
    The number of $j$ such that $\dist(i, j) = \ell$ is $\leq 2^p \binom{p + \ell - 1}{p - 1} \leq 2^p (e(\ell + 1))^{p-1}$.
    Consequently, there are at most $(2e\ell)^{pk}$ sets $S$ of size $\leq k$ where $i \in S$ and $\dist(i, j) < \ell$ for all $j \in S$.
\end{fact}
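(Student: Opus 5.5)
We read $\dist$ as the $\ell_1$ (graph) distance on the lattice $\mathbb{Z}^p$. The plan is to count the lattice sphere by hand, bound the binomial coefficient, sum over radii to bound the ball, and finally encode each admissible set $S$ as a short tuple of points in the ball.

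\emph{Step 1 (sphere count).} Fix $i$ and write $j - i = (z_1,\dots,z_p)$ with $\sum_r |z_r| = \ell$. Each such $j$ is determined by two pieces of data:
\begin{itemize}
\item a sign pattern in $\{\pm\}^p$, which contributes at most $2^p$ choices;
\item the vector $(|z_1|,\dots,|z_p|)$ of nonnegative integers summing to $\ell$, of which there are $\binom{p+\ell-1}{p-1}$ by stars and bars.
\end{itemize}
This overcounts when some $z_r=0$, which is harmless for an upper bound, and gives the first inequality. For the second, we use $\binom{b}{c}\le (eb/c)^c$ with $c=p-1\ge 1$ (this is where $p\ge 2$ is needed). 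Since
\begin{equation*}
\frac{p+\ell-1}{p-1} = 1 + \frac{\ell}{p-1} \le \ell + 1,
\end{equation*}
we get $\binom{p+\ell-1}{p-1} \le (e(\ell+1))^{p-1}$.

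\emph{Step 2 (ball count).} Let $B$ be the number of sites $j$ with $\dist(i,j)<\ell$. Summing Step 1 over radii $m=1,\dots,\ell-1$ and adding $1$ for $j=i$ gives
\begin{equation*}
B \le 1 + \sum_{m=1}^{\ell-1} 2^p \bigl(e(m+1)\bigr)^{p-1} \le \ell \cdot 2^p (e\ell)^{p-1} \le (2e\ell)^p .
\end{equation*}
This step requires some care at small $\ell$. For $\ell=1$ the ball is just $\{i\}$, so $B=1$. For $\ell\ge 2$, each summand has $m+1\le \ell$, and the single extra $1$ is absorbed because $2^p(e\ell)^{p-1}\ge 1$.

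\emph{Step 3 (sets).} Every admissible $S$ satisfies $i\in S$, $|S|\le k$, and $S$ is contained in the ball of Step 2. We map $S$ to a $(k-1)$-tuple of ball points: list the elements of $S\setminus\{i\}$ in a fixed order, then pad with copies of $i$, which is itself in the ball. This map is injective, so the number of such $S$ is at most
\begin{equation*}
B^{k-1} \le (2e\ell)^{p(k-1)} \le (2e\ell)^{pk}.
\end{equation*}

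The only real obstacle is the constant bookkeeping in Step 2, namely absorbing the lower-order terms and the $+1$ from $j=i$ into $(2e\ell)^p$. The padding trick in Step 3 avoids summing over set sizes, which would otherwise produce an extra factor of $k$.
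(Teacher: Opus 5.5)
Your argument is correct and complete. The paper records this only as a \emph{Fact} and gives no proof, so there is no argument of the paper's to compare against. Your three steps supply the missing justification: a sign pattern times stars and bars for the sphere, a sum over radii for the ball, and the padding injection of $S$ into $(k-1)$-tuples of ball points for the sets.

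Reading $\dist$ as the $\ell_1$ graph distance on $\mathbb{Z}^p$ is the right choice. Under $\ell_\infty$ distance, for example, the first inequality already fails: for $p=2$ the sphere has $8\ell > 4(\ell+1)$ points once $\ell \ge 2$. Finite boxes and tori cause no trouble. In a box the graph distance is the $\ell_1$ distance. On a torus, every site at graph distance $m$ from $i$ has a lift at $\ell_1$-distance $m$ from a fixed lift of $i$. In both cases your counts remain upper bounds.

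One simplification is worth knowing. Step 2 can be replaced by the containment $\{j : \dist(i,j) \le R\} \subseteq i + \{-\lfloor R\rfloor,\dots,\lfloor R\rfloor\}^p$. This gives at most $(2R+1)^p \le (3R)^p \le (2eR)^p$ points for every real $R \ge 1$, and removes the small-$\ell$ bookkeeping. Your radius-by-radius sum uses $m+1 \le \ell$, so it relies on $\ell$ being an integer and on the inequality being strict. The box bound has neither restriction. It therefore directly covers the form in which the paper actually invokes the fact in \Cref{coro:quasi-local} and \Cref{coro:algebraic}: supports with $\diam(S_{\pbar}) \le r$ for a real $r$.
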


Next, we consider learning quasi-local Lindbladians.
Such Lindbladians are especially interesting due to the recent surge of interest in quantum Gibbs samplers, which are constructed using quasi-local Lindbladians~\cite{chen2025efficient,chen2023efficient,ding2024efficient,scandi2026thermalization,rouze2026optimal,bakshi2026dobrushin,bakshi2024high,bergamaschi2024quantum,bergamaschi2026fast}.

\begin{corollary}[Learning of quasi-local Lindbladians]
    \label{coro:quasi-local}
    Let $p \geq 2$.
    Consider a system of $n$ qubits on a $p$-dimensional lattice, and let $\lind_\lambda$ be a $k$-local Lindbladian.
    Suppose $\lind_\lambda$ is also quasi-local with respect to the lattice, i.e., there is a parameter $\gamma > 0$ such that, for all $i,j \in [n]$,
    \begin{align}
        \label{eq:quasi-local}
        \sum_{\substack{\pbar \\ \{i, j\} \subseteq S_{\pbar}} }\abs{\lambda_{\pbar}} \leq e^{-\dist(i, j) / \gamma}.
    \end{align}
    Then, \Cref{alg:full} uses a total evolution time of $\tet = \mathcal{O}(d^2 \log(n)/\epsilon^2)$ to learn an estimate $\widehat{\lambda}$ such that $\lonorm{\widehat{\lambda} - \lambda} < \eps$ with probability at least $0.99$, where
    \begin{equation}
        d = C_k(C \gamma (p \log(1+\gamma) + \log(1/\epsilon)))^{pk},
    \end{equation}
    where $C_k$ is a constant that depends only on $k$, and $C$ is an absolute constant.
    Moreover, the time resolution is $\tres = \Theta(1)$.
\end{corollary}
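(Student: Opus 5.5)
The plan is to instantiate \Cref{cor:main-detailed} (equivalently \Cref{thm:main}) by computing admissible values of $g$ and $d$ from the quasi-locality condition \Cref{eq:quasi-local}. The total evolution time is then $\mathcal{O}(g d^2\log(n)/\epsilon^2)$ and the time resolution is $\Theta(1/g)$. So it suffices to show two things. First, $g = \lonorm{\lambda}$ is bounded by a quantity depending only on $p,k,\gamma$, which is $\mathcal{O}(1)$ for fixed parameters and is absorbed into the constants. Second, $\deg_{\epsilon'}(\lambda) \le d$ with $\epsilon' = \epsilon/(100\cdot 16^k)$ and $d$ as stated.

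For the one-norm, fix a site $i$. Every $\pbar$ with $i\in S_{\pbar}$ has some other site $j\in S_{\pbar}$, or else $S_{\pbar}=\{i\}$; the latter terms contribute at most $e^{0}=1$ by \Cref{eq:quasi-local} with $j=i$. Hence
\[
\sum_{\pbar\ni i}|\lambda_{\pbar}| \le \sum_{j}\sum_{\pbar\supseteq\{i,j\}}|\lambda_{\pbar}| \le \sum_{\ell\ge 0}\#\{j:\dist(i,j)=\ell\}\,e^{-\ell/\gamma}.
\]
By \Cref{fact:volume-bounds} this is at most $1+\sum_{\ell\ge1}2^p(e(\ell+1))^{p-1}e^{-\ell/\gamma} = \mathcal{O}_p(\gamma^{p})$, a constant.

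For the approximate degree, choose a radius $r$ and split $\lambda$. Put into $\lambdabig$ every term whose support has diameter less than $r$ (all sites within distance $r$ of each other), and put the rest into $\lambdasmall$. Each small term containing $i$ contains some $j$ with $\dist(i,j)\ge r/2$. Here I use the triangle inequality: if all points of $S_{\pbar}$ were within $r/2$ of $i$, the diameter would be less than $r$. Then the same shell sum gives
\[
\lonorm{\lambdasmall}\le \sum_{\ell\ge r/2}2^p(e(\ell+1))^{p-1}e^{-\ell/\gamma}.
\]
This tail is $\lesssim (C\gamma r)^{p} e^{-r/(2\gamma)}$. It falls below $\epsilon'$ once $r = C\gamma(p\log(1+\gamma)+\log(1/\epsilon)+k)$, with $k$-dependence absorbed into $C_k$. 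The degree of $\lambdabig$ at site $i$ counts pairs $\pbar$ whose support lies in the ball of radius $r$ around $i$ and has size at most $k$. By \Cref{fact:volume-bounds} there are at most $(2er)^{pk}$ such supports, and at most $16^k$ Pauli pairs per support. This gives $d\le 16^k(2er)^{pk}$, which matches the stated form with $C_k=16^k$ and a suitable absolute $C$.

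The main obstacle is the bookkeeping in the tail estimate. It has to produce exactly the $\gamma(p\log(1+\gamma)+\log(1/\epsilon))$ form: the polynomial prefactor $r^{p-1}$ forces a $\log$ correction in $r$, which one handles by requiring $e^{-r/(4\gamma)}$ to beat $(C\gamma r)^p$. The remaining steps are direct substitutions into \Cref{cor:main-detailed}. Since $g=\mathcal{O}(1)$ for fixed $\gamma,p$, we get $\tres=\Theta(1/g)=\Theta(1)$ and $\tet=\mathcal{O}(d^2\log(n)/\epsilon^2)$.
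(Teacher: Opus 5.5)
Your treatment of the approximate degree is essentially the paper's argument. You cut off terms by the diameter of $S_{\pbar}$, use the triangle inequality to find $j\in S_{\pbar}$ with $\dist(i,j)\ge r/2$, bound the tail by a shell sum via \Cref{fact:volume-bounds}, and count at most $16^k(2er)^{pk}$ surviving terms per site.

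The gap is in the one-norm step. You apply \Cref{eq:quasi-local} with $j=i$ only to the terms with $S_{\pbar}=\{i\}$. But $\{i,i\}=\{i\}\subseteq S_{\pbar}$ exactly when $i\in S_{\pbar}$, so that single instance of the hypothesis already reads $\sum_{\pbar: S_{\pbar}\ni i}|\lambda_{\pbar}|\le e^{0}=1$, i.e.\ $\lonorm{\lambda}\le 1$. This is what the paper uses. Your union bound over $j$ is valid but wasteful, since its $j=i$ summand is already the whole left-hand side. The resulting bound $g=\mathcal{O}_p(\gamma^p)$ is genuinely of order $\gamma^p$ for large $\gamma$.

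This matters because $g$ is the known bound fed to \Cref{alg:full}, whose time step must satisfy $t<1/(200\cdot 4^k(23k)!\,g)$. With your $g$, \Cref{cor:main-detailed} gives only $\tres=\Theta(\gamma^{-p})$ and an extra factor $\gamma^p$ in $\tet$. You recover $\tres=\Theta(1)$ and $\tet=\mathcal{O}(d^2\log(n)/\epsilon^2)$ only by declaring $\gamma$ fixed. That is inconsistent with a statement that tracks $\gamma$ explicitly in $d$ with an absolute constant $C$. With $g\le 1$, both claims follow at once.

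A smaller point concerns the tail estimate you flagged as the main obstacle. The bound $(C\gamma r)^p e^{-r/(2\gamma)}$ does not hold uniformly in $\gamma$. For small $\gamma$, the shell at $\ell\approx r/2$ alone contributes on the order of $r^{p-1}e^{-r/(2\gamma)}$, which $(\gamma r)^p e^{-r/(2\gamma)}$ does not control.

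You also do not need to fight an $r$-dependent prefactor. The paper keeps the binomial shell count $2^p\binom{p+\ell-1}{p-1}$ and, for $\ell\ge r/2$, writes $e^{-\ell/\gamma}\le e^{-r/(4\gamma)}e^{-\ell/(2\gamma)}$. It then extends the sum to all $\ell\ge 0$ and uses $\sum_{\ell\ge0}\binom{p+\ell-1}{p-1}x^\ell=(1-x)^{-p}$ with $x=e^{-1/(2\gamma)}$. This gives $\lonorm{\lambdasmall}\le e^{-r/(4\gamma)}(2(1+2\gamma))^p$. So $r=4\gamma\bigl(p\log(2(1+2\gamma))+\log(100\cdot 16^k/\epsilon)\bigr)$ suffices directly, and $d\le 16^k(2er)^{pk}$ follows.
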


\begin{proof}
    First, we can easily bound the local one-norm.
    Let $i \in [n]$.
    Then,
    \begin{equation}
        \sum_{\pbar : S_{\pbar} \ni i} |\lambda_{\pbar}| \leq 1,
    \end{equation}
    where in the inequality, we use \Cref{eq:quasi-local} for $j = i$.
    This holds for all $i \in [n]$, so we see that $\lonorm{\lambda} \leq 1$.

    To bound the approximate degree, consider decomposing $\lind_\lambda = \Lbig_\lambda + \Lsmall_\lambda$, where $\Lsmall$ is the part of $\lind_\lambda$ with coefficients indexed by $\pbar$ such that $\diam(S_{\pbar}) > r$, where $r = 4\gamma(p\log(2(1+2\gamma)) + \log(100\cdot 16^k/\epsilon))$, i.e.,
    \begin{equation}
        \lambdasmall_{\pbar} \triangleq \lambda_{\pbar}\iver{\diam(S_{\pbar}) > r},\qquad \lambdabig \triangleq \lambda - \lambdasmall.
    \end{equation}
    Then, if $\lonorm{\lambdasmall} \leq \epsilon/(100\cdot 16^k)$, we have $\deg_{\epsilon/(100\cdot 16^k)}(\lind_\lambda) \leq \deg(\Lbig)$.
    First, we show that $\lonorm{\lambdasmall} \leq \epsilon/(100\cdot 16^k)$.
    Let $i \in [n]$. Then,
    \begin{align}
        \sum_{\pbar : S_{\pbar} \ni i} |\lambdasmall_{\pbar}| &= \sum_{\substack{\pbar : S_{\pbar} \ni i\\\diam(S_{\pbar}) > r}} |\lambda_{\pbar}|\\
        &\leq \sum_{\substack{j \in [n]\\\dist(i,j) \geq r/2}} \sum_{\substack{\pbar\\\{i,j\} \subseteq S_{\pbar}}}|\lambda_{\pbar}|\\
        &\leq \sum_{\substack{j\in [n]\\\dist(i,j) \geq r/2}} e^{-\dist(i,j)/\gamma}\\
        &= \sum_{\ell=r/2}^{\infty} \sum_{\substack{j\in[n]\\\dist(i,j)= \ell}} e^{-\ell/\gamma}\\
        &\leq 2^p \sum_{\ell=r/2}^\infty \binom{p+\ell-1}{p-1} e^{-\ell/\gamma}\\
        &\leq 2^p e^{-r/(4\gamma)} \sum_{\ell=0}^\infty \binom{p+\ell-1}{p-1} e^{-\ell/(2\gamma)}\\
        &= 2^p e^{-r/(4\gamma)} (1-e^{-1/(2\gamma)})^{-p}\\
        &\leq e^{-r/(4\gamma)}(2(1 +2\gamma))^p\\
        &\leq \frac{\epsilon}{100\cdot 16^k}.
    \end{align}
    In the first line, we use the definition of $\lambdasmall$.
    In the second line, we use that if $i \in S_{\pbar}$ and $\diam(S_{\pbar}) > r$, there must exist two points $j, a \in S_{\pbar}$ such that $\dist(j,a) > r$, by definition of diameter.
    Then, by triangle inequality, one of the two points $j$ or $a$, say $j$, must satisfy $\dist(i,j) \geq r/2$.
    In the third line, we use \Cref{eq:quasi-local}.
    In the fifth line, we use \Cref{fact:volume-bounds}.
    In the sixth line, we use that $\ell \geq r/2$ and expand the number of terms in the sum.
    In the seventh line, we use the identity
    \begin{equation}
        \sum_{\ell=0}^\infty \binom{p+\ell-1}{p-1}x^\ell = (1-x)^{-p}
    \end{equation}
    for $x = e^{-1/(2\gamma)}$.
    In the eighth line, we use $(1-e^{-1/(2\gamma)})^{-1} \leq 1+2\gamma$ and that $p > 0$.
    In the last line, we use our choice of $r$.
    Since this holds for any $i \in [n]$, this completes the proof that $\lonorm{\lambdasmall} \leq \epsilon/(100\cdot 16^k)$.
    
    Now, we need to bound $\deg_{\epsilon/(100\cdot 16^k)}(\lind_\lambda) \leq \deg(\Lbig)$.
    By definition of $\lambdabig$, for a given site $i \in [n]$, if $i \in S_{\pbar}$ and $\lambdabig_{\pbar} \neq 0$, then $\diam(S_{\pbar}) \leq r$.
    By \Cref{fact:volume-bounds}, there are at most $(2e r)^{pk}$ possible supports that satisfy this.
    This corresponds to at most $16^k (2er)^{pk}$ possible Pauli terms $\pbar$.
    Thus,
    \begin{equation}
        d=\deg_{\epsilon/(100\cdot 16^k)}(\lind_\lambda) \leq 16^k(2er)^{pk} \leq C_k(C \gamma p \log(1+\gamma) + k + \log(1/\epsilon))^{pk},
    \end{equation}
    where $C > 0$ is an absolute constant, and $C_k$ is a constant that depends only on $k$.
    Plugging in these parameters to \Cref{thm:main} gives the result.
\end{proof}

In addition to Lindbladians with exponentially decaying correlations, we can also consider Lindbladians with weaker long-range correlations, where the decay is inverse polynomial rather than inverse exponential.

\begin{corollary}[Learning Lindbladians satisfying power-law decay]
    \label{coro:algebraic}
    Let $p \geq 2$.
    Consider a system of $n$ qubits on a $p$-dimensional lattice, and let $\lind_\lambda$ be a $k$-local Lindbladian.
    Suppose $\lind_\lambda$ obeys power-law decay with respect to the lattice, i.e., there is a parameter $\gamma > 0$ such that, for all $i,j \in [n]$,
    \begin{align}
        \label{eq:power-law}
        \sum_{\substack{\pbar \\ \{i, j\} \subseteq S_{\pbar}} }\abs{\lambda_{\pbar}} \leq \frac{1}{\max(1, \dist(i, j))^\gamma}.
    \end{align}
    Suppose that $p, k = \mathcal{O}(1)$ and $\gamma > p$ with $\gamma - p = \Omega(1)$.
    Let
    \begin{equation}
        \kappa \triangleq \frac{2pk}{\gamma - p}.
    \end{equation}
    Then, \Cref{alg:full} uses a total time evolution of $\tet = \mathcal{O}(2^{\gamma \kappa} \log(n)/\epsilon^{2 + \kappa})$
    to learn a $\widehat{\lambda}$ such that $\lonorm{\widehat{\lambda} - \lambda} < \eps$ with probability at least $0.99$.
    Moreover, the time resolution is $\tres = \Theta(1)$.
\end{corollary}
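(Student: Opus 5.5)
The plan follows the template of \Cref{coro:quasi-local}. We compute $g$ and a bound on $d=\deg_{\epsilon/(100\cdot 16^k)}(\lind_\lambda)$, then plug both into \Cref{thm:main}.

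\emph{Local one-norm.} Taking $j=i$ in \Cref{eq:power-law} gives $\sum_{\pbar: S_{\pbar}\ni i}|\lambda_{\pbar}|\le 1$. Hence $\lonorm{\lambda}\le 1$, so $g=1$, and the time resolution is $\tres=\Theta(1)$.

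\emph{Splitting off the small part.} Fix a radius $r$, to be chosen later, and set $\lambdasmall_{\pbar}=\lambda_{\pbar}\iver{\diam(S_{\pbar})>r}$ and $\lambdabig=\lambda-\lambdasmall$. Fix a site $i$. As in the quasi-local proof, any $\pbar\ni i$ with diameter greater than $r$ contains some $j$ with $\dist(i,j)\ge r/2$. Therefore
\[
\sum_{\pbar\ni i}|\lambdasmall_{\pbar}|\le \sum_{\ell\ge r/2}\#\{j:\dist(i,j)=\ell\}\,\ell^{-\gamma}\le \sum_{\ell\ge r/2}2^p(e(\ell+1))^{p-1}\ell^{-\gamma},
\]
where the last step uses \Cref{fact:volume-bounds}. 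Since $\gamma-p=\Omega(1)$, comparing with an integral bounds this tail by $C_p\,(r/2)^{-(\gamma-p)}$, where $C_p$ depends only on $p$ (for $p=\mathcal{O}(1)$, and the $\ell+1$ versus $\ell$ slack costs only constants). This tail estimate is the main technical point. The exponent must be $\gamma-p$, not worse, because the sharpness of $\kappa$ comes from it.

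\emph{Choosing $r$ and counting the big part.} We take
\[
r=2\bigl(100\cdot16^k C_p/\epsilon\bigr)^{1/(\gamma-p)},
\]
which makes $\lonorm{\lambdasmall}\le\epsilon/(100\cdot16^k)$. Every term of $\lambdabig$ containing $i$ has support of size at most $k$ and diameter at most $r$. By \Cref{fact:volume-bounds} there are at most $(2er)^{pk}$ such supports, and each carries at most $16^k$ Pauli pairs. Hence
\[
d\le 16^k(2er)^{pk}=\mathcal{O}\bigl(2^{pk}\epsilon^{-pk/(\gamma-p)}\bigr),
\]
with $k,p=\mathcal{O}(1)$ absorbed. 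Here the prefactor $2^{pk}$ comes from $r/2$. To match the stated form, write $2^{pk}=2^{\gamma\kappa/2}\cdot 2^{pk-\gamma pk/(\gamma-p)}$. The second factor is at most $1$, so $d\lesssim 2^{\gamma\kappa/2}\epsilon^{-\kappa/2}$, since $\kappa/2=pk/(\gamma-p)$.

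\emph{Conclusion.} \Cref{thm:main} with $g=1$ then gives
\[
\tet=\mathcal{O}\bigl(d^2\log(n)/\epsilon^2\bigr)=\mathcal{O}\bigl(2^{\gamma\kappa}\log(n)/\epsilon^{2+\kappa}\bigr).
\]
The only care needed is bookkeeping: absorb the $k,p$-dependent constants (the $16^k$ factors and $C_p$) into the $\mathcal{O}$, and confirm that the constant factor in $r$ does not spoil the displayed $2^{\gamma\kappa}$ dependence. If it does, the fallback is to enlarge that exponential prefactor slightly, which still keeps the same form.
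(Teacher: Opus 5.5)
Your proposal is correct and follows the paper's proof essentially step for step: $g=1$ from the $j=i$ case of \Cref{eq:power-law}, a split by $\diam(S_{\pbar})>r$ with the tail controlled via \Cref{fact:volume-bounds} and an integral comparison, and then $d\le 16^k(2er)^{pk}$ plugged into \Cref{thm:main}. The only difference is cosmetic: the paper builds the factor $2^\gamma$ and the explicit $(\gamma-p)$ into $r$ to get $d=\mathcal{O}((2^\gamma/\epsilon)^{pk/(\gamma-p)})$ directly, whereas you pull the $2$ outside $r$ and absorb $1/(\gamma-p)$ into your constant using $\gamma-p=\Omega(1)$, which gives a slightly sharper $2^{pk}$ prefactor that you then correctly relax to $2^{\gamma\kappa/2}$.
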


\begin{proof}
    As in the quasi-local case, $\lonorm{\lambda} \leq 1$.
    To bound the approximate degree, consider decomposing $\lind_\lambda = \Lbig_\lambda + \Lsmall_\lambda$, where $\Lsmall$ is the part of $\lind_\lambda$ with coefficients indexed by $\pbar$ such that $\diam(S_{\pbar}) > r$, where $r = \left(\frac{\epsilon(\gamma-p)}{100 \cdot 16^k \cdot 2^\gamma(2e)^p}\right)^{-1/(\gamma-p)}$, i.e.,
    \begin{equation}
        \lambdasmall_{\pbar} \triangleq \lambda_{\pbar}\iver{\diam(S_{\pbar}) > r},\qquad \lambdabig \triangleq \lambda - \lambdasmall.
    \end{equation}
    We first show that $\lonorm{\lambdasmall} < \epsilon/(100 \cdot 16^k)$.
    The calculation proceeds similarly to the quasi-local case, so we combine a few steps.
    Let $i \in [n]$. Then,
    \begin{align}
        \sum_{\pbar : S_{\pbar} \ni i} |\lambdasmall_{\pbar}| &\leq \sum_{\substack{j\in[n]\\\dist(i,j) \geq r/2}} \sum_{\substack{\pbar\\\{i,j\} \subseteq S_{\pbar}}} |\lambda_{\pbar}|\\
        &\leq \sum_{\substack{j\in[n]\\\dist(i,j) \geq r/2}} \frac{1}{\dist(i,j)^\gamma}\\
        &= \sum_{\ell=r/2}^\infty \sum_{\substack{j\in[n]\\\dist(i,j) = \ell}} \frac{1}{\ell^\gamma}\\
        &\leq 2^p e^{p-1}\sum_{\ell=r/2}^\infty \frac{(\ell+1)^{p-1}}{\ell^\gamma}\\
        &\leq (4e)^p \sum_{\ell=r/2}^\infty \ell^{p-1-\gamma}\\
        &\leq (4e)^p \int_{r/2-1}^\infty x^{p-1-\gamma}\,dx\\
        &\leq (4e)^p \frac{r^{p-\gamma}}{2^{p-\gamma}(\gamma - p)}\\
        &\leq \frac{\epsilon}{100\cdot 16^k}.
    \end{align}
    In the second line, we use \Cref{eq:power-law}.
    In the fourth line, we use \Cref{fact:volume-bounds}.
    In the fifth line, we use that $\ell + 1 \leq 2\ell$ for $\ell \geq 1$.
    In the last line, we use our choice of $r$.

    Now, we need to bound $\deg_{\epsilon/(100\cdot 16^k)}(\lind_\lambda) \leq \deg(\Lbig)$.
    By definition of $\lambdabig$, for a given site $i \in [n]$, if $i \in S_{\pbar}$ and $\lambdabig_{\pbar} \neq 0$, then $\diam(S_{\pbar}) \leq r$.
    By \Cref{fact:volume-bounds}, there are at most $(2e r)^{pk}$ possible supports that satisfy this.
    This corresponds to at most $16^k (2er)^{pk}$ possible Pauli terms $\pbar$.
    Thus, for $\gamma - p = \Omega(1)$ and $k,p = \mathcal{O}(1)$, we have
    \begin{equation}
        d=\deg_{\epsilon/(100\cdot 16^k)}(\lind_\lambda) \leq 16^k(2er)^{pk} = \mathcal{O}\left(\frac{2^\gamma}{\epsilon}\right)^{pk/(\gamma-p)}.
    \end{equation}
    Plugging into \Cref{thm:main} gives the result.
\end{proof}

Finally, we can also instantiate our theorem for general $k$-local Lindbladians with no geometric constraints.

\begin{corollary}[Learning local Lindbladians]
    \label{coro:local}
    Let $\lind_\lambda$ be a $k$-local Lindbladian on $n$ qubits such that $\lonorm{\lambda} \leq g$.
    Then, \Cref{alg:full} uses a total time evolution of $\tet = \mathcal{O}(gn^{2k - 2}\log(n)/\eps^2)$ to learn an estimate $\widehat{\lambda}$ such that $\lonorm{\widehat{\lambda} - \lambda} < \epsilon$ with probability at least $0.99$.
    Moreover, the time resolution is $\tres = \Theta(1/g)$.
\end{corollary}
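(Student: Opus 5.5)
We obtain this corollary by instantiating \Cref{thm:main} (equivalently \Cref{cor:main-detailed}) with a crude bound on the approximate degree. The one-norm bound $\lonorm{\lambda} \leq g$ is already given, so we only need to bound $\deg_{\epsilon/(100\cdot 16^k)}(\lind_\lambda)$. We take the trivial splitting $\Lbig_\lambda = \lind_\lambda$ and $\Lsmall_\lambda = 0$. Then $\lonorm{\Lsmall_\lambda} = 0$ is below the threshold, and
\[
\deg_{\epsilon/(100\cdot 16^k)}(\lind_\lambda) \leq \deg(\lind_\lambda) \leq \max_{i\in[n]} \abs{\{\pbar : i \in S_{\pbar},\ s_{\pbar} \leq k\}} .
\]

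Next we count the right-hand side. A $k$-local term containing site $i$ has support $S_{\pbar} \ni i$ with $\abs{S_{\pbar}} \leq k$. The number of such supports is $\sum_{s=1}^{k}\binom{n-1}{s-1} = \mathcal{O}(n^{k-1})$ for $k = \mathcal{O}(1)$. On each support there are at most $16^k$ pairs $(P_1,P_2)$. Hence $d \leq 16^k \cdot \mathcal{O}(n^{k-1}) = \mathcal{O}(n^{k-1})$, treating $k$ as constant.

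Plugging $d = \mathcal{O}(n^{k-1})$ into the total evolution time $\mathcal{O}(g d^2 \log(n)/\epsilon^2)$ of \Cref{thm:main} gives $\tet = \mathcal{O}(g n^{2k-2}\log(n)/\epsilon^2)$. The time resolution $\tres = \Theta(1/g)$ and the success probability $0.99$ carry over unchanged, since \Cref{alg:full} is run with $t = \Theta(1/g)$ independently of $d$.

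This is routine; the only point needing care is that the count of supports through a fixed site is $\mathcal{O}(n^{k-1})$ rather than $\mathcal{O}(n^k)$, because one site of the support is already fixed. That is what yields the exponent $2k-2$ rather than $2k$.
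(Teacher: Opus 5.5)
Your proposal is correct and matches the paper's proof. The paper likewise bounds the approximate degree by the exact degree, counts at most $16^k\binom{n-1}{k-1} \le 16^k n^{k-1}$ $k$-local terms through any site (your sum over support sizes $s \le k$ is a slightly more explicit version of this count), and substitutes this $d$ into the $\mathcal{O}(g d^2 \log(n)/\epsilon^2)$ bound of \Cref{thm:main}, with the time resolution $\Theta(1/g)$ carrying over unchanged.
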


\begin{proof}
    In this case, we can naively bound $d$ as
    \begin{equation}
        d \leq \deg(\lind_\lambda) \leq 16^k\binom{n-1}{k-1} \leq 16^k n^{k-1}.
    \end{equation}
    Plugging this into \Cref{thm:main} gives the result.
\end{proof}

\subsection{Structure learning Hamiltonians}
\label{sec:ham}

In this section, we show that our framework can be applied to the problem of structure learning Hamiltonians from both real-time evolution and high-temperature Gibbs states.
While we cannot simply instantiate our theorem for new choices of $g$ and $d$ in this case, we find that analyzing our algorithm for Hamiltonians is significantly simpler than that for Lindbladians.
The reason for this simplification is the absence of ``confusion'' (in the sense of \Cref{sec:local-fourier}).
Thus, the matrix $A$ defined in \Cref{not:A} is the identity matrix, and we can keep track of the error of our iterates in $\infty\to\infty$ norm instead.

Let $H = \ii\sum_P \lambda_P P$ be a $k$-local Hamiltonian.
The factor of $\ii$ ensures that the coefficients $\lambda_P$ are purely imaginary, which is consistent with our definition of the coherent part of a Lindbladian in \Cref{sec:lindblad}.
We consider two different access models to this Hamiltonian: the ability to evolve under $e^{-\ii Ht}$ for a chosen time $t > 0$ or access to many copies of high-temperature Gibbs states $\rho_\beta = e^{-\beta H}/\tr(e^{-\beta H})$ for a small inverse temperature $\beta > 0$.
Similarly to the Lindbladian case, we define a vector of expectation values with entries
\begin{equation}
    E_P(x) \triangleq \E_{R\sim \pauli{S_P}}[\ntr(e^{-\ii H(x)t}R e^{\ii H(x)t} R P)], \quad \text{or}\quad E_P(x) \triangleq \tr(P\rho_\beta(x)),
\end{equation}
for the real-time evolution and Gibbs state cases, respectively\footnote{For the real-time evolution case, we could also use the canonical choice of observables typically used in the Hamiltonian learning literature. However, we use this choice for a closer analogy with the Lindbladian case.}.
Moreover, as before, we define
\begin{equation}
    \mathcal{F}_P(x) \triangleq \frac{1}{t}E_P(x) - \frac{1}{t}E_P(\lambda),\quad \text{or}\quad \mathcal{F}_P(x) \triangleq -\frac{1}{\beta}E_P(x) + \frac{1}{\beta}E_P(\lambda),
\end{equation}
respectively\footnote{Note that the sign is switched for the Gibbs state version to make the first order terms match. Namely, if the signs are not flipped, the first order term of $\mathcal{F}_P$ for the dynamics version is $x_P$ while for the Gibbs state version it is $-x_P$.}.
Define the Jacobian $J(x)$ of $\mathcal{F}(x)$ entrywise as $J_{P,Q}(x) \triangleq \partial_{x_Q} \mathcal{F}_P(x)$.
As in \Cref{sec:algo}, we consider an approximation $\widehat{\mathcal{F}}(x)$ of $\mathcal{F}(x)$ such that
\begin{equation}
    \label{eq:f-hat-ham}
    \eta(x) = \widehat{\mathcal{F}}(x) - \mathcal{F}(x),\qquad \norm{\eta(x)}_\infty \leq \eta.
\end{equation}
One can obtain such an approximation via the same approach as in \Cref{sec:sample-time}.
Our algorithm is detailed in \Cref{alg:ham}.

\begin{algorithm}
    \caption{Structure learning algorithm for Hamiltonians}
    \label{alg:ham}
    \KwData{Accuracy $\epsilon > 0$; time $t$ satisfying $t < t_{\max}$ or inverse temperature $\beta$ satisfying $\beta < \beta_{\max}$; estimates $\widehat{\mathcal{F}}(x)$ satisfying \Cref{eq:f-hat-ham} for given inputs $x$.}
    \KwResult{Estimates $\widehat{\lambda}$ such that $\norm{\lambda - \widehat{\lambda}}_\infty \leq \epsilon$.}
    Initialize $x^{(0)} = 0 \in \mathbb{R}^M$ and $T = \left\lceil \log_2(1/\epsilon) \right\rceil$.\\
    \For{$j=0,\dots, T - 1$}{
        Set $\epsilon_j\triangleq 2^{-j}$.\\
        Update
        \begin{equation}
            \label{eq:ham-update}
            x^{(j+1)} = \round_{\eps_j /4}\left(x^{(j)} - \widehat{\mathcal{F}}(x^{(j)})\right).
        \end{equation}
    }
    \Return{$\widehat{\lambda} \triangleq x^{(T)}$.}
\end{algorithm}

First, we prove a general theorem stating that, if we have an approximation for $\widehat{\mathcal{F}}(x)$ and a bound on the Jacobian of $\mathcal{F}(x)$, then the algorithm in \Cref{alg:ham} obtains good estimates of the Hamiltonian parameters.
In the following sections, we prove both of these hypotheses.
In particular, we can obtain an approximation for $\widehat{\mathcal{F}}(x)$ similarly to \Cref{sec:sample-time}.

\begin{theorem}[Structure learning of Hamiltonians]
    \label{thm:ham}
    Let $\epsilon > 0$.
    Let $H = \ii\sum_P \lambda_P P$ be a $k$-local Hamiltonian with $\lonorm{\lambda} \leq g$ and $\norm{\lambda}_\infty \leq 1$.
    Let $S_H$ be the support of the Hamiltonian, which is unknown to the algorithm.
    Let $c_g > 0$ be a constant depending on $g$.
    Let $0 < t,\beta < 1/(20c_{2g})$.
    Suppose we can compute estimates $\widehat{\mathcal{F}}(x)$ for given inputs $x$ such that
    \begin{equation}
        \norm{\mathcal{F}(x) - \widehat{\mathcal{F}}(x)}_\infty \leq \frac{\epsilon}{20}.
    \end{equation}
    Also suppose that, for any $x$ such that $\lonorm{x} \leq g$ and for any $P \in \pauli{k}$,
    \begin{equation}
        \sum_{Q \in S_H}|J_{P,Q}(x) - \delta_{P,Q}| \leq c_g t,\quad \text{or}\quad \sum_{Q \in S_H}|J_{P,Q}(x) - \delta_{P,Q}| \leq c_g \beta,
    \end{equation}
    for the real-time and Gibbs state settings, respectively.
    Then, \Cref{alg:ham} finds estimates $\widehat{\lambda}$ such that $\norm{\widehat{\lambda} - \lambda}_\infty \leq \epsilon$.
\end{theorem}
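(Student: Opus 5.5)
We argue by induction on the iteration $j$, mirroring the proof of \Cref{thm:manual-nr} but with $A = I$ and the $\infty$-norm in place of the $B_1$-norm. The invariants we maintain are
\begin{equation*}
    \norm{x^{(j)} - \lambda}_\infty \leq \epsilon_j, \qquad \supp(x^{(j)}) \subseteq S_H .
\end{equation*}
Support containment plays the role that the degree bound played before. Together with $\norm{\lambda}_\infty \le 1$, it also implies $\lonorm{x^{(j)}} \le \lonorm{\lambda} + \lonorm{x^{(j)}-\lambda}$. We only need the Jacobian hypothesis at points on the segment between $\lambda$ and $x^{(j)}$, and every such point is supported on $S_H$. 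We also need these points to have $B_1$-norm at most $2g$; this is why the hypothesis is stated with $c_{2g}$. This is the step I would check most carefully. Entries of $x^{(j)}$ on $S_H$ are within $\epsilon_j$ of those of $\lambda$, but a single qubit may lie in many terms, so the bound $\lonorm{x^{(j)}}\le 2g$ is not automatic from the $\infty$-norm invariant. I would handle it by strengthening the invariant to $|x^{(j)}_P - \lambda_P| \le \min(\epsilon_j, 2|\lambda_P|)$ on $S_H$. Rounding at threshold $\epsilon_j/4$ zeroes any entry with $|\lambda_P|$ small compared to the error, which supports this strengthening. Alternatively, I would interpret the hypothesis as applying along the segment with constant $c_{2g}$.

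The base case holds because $x^{(0)}=0$, $\epsilon_0=1$, and $\norm{\lambda}_\infty\le1$. For the inductive step, write $\Delta = x-\lambda$ and use the fundamental theorem of calculus as in \Cref{coro:f-bound}:
\begin{equation*}
    \mathcal{F}(x) = \int_0^1 J(\lambda+s\Delta)\,\Delta\,ds .
\end{equation*}
Since $\Delta$ is supported on $S_H$, the row hypothesis gives $\norm{\mathcal{F}(x)-\Delta}_\infty \le c_{2g} t\,\norm{\Delta}_\infty \le \epsilon/20$, using $t<1/(20c_{2g})$. The same bound holds with $\beta$ in the Gibbs case. Let $y = x - \widehat{\mathcal{F}}(x)$. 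Then
\begin{equation*}
    y-\lambda = \Delta - \mathcal{F}(x) - \eta(x),
\end{equation*}
so $\norm{y-\lambda}_\infty \le \epsilon_j/20 + \epsilon/20 \le \epsilon_j/10$, since the target accuracy satisfies $\epsilon \le \epsilon_j$ for $j<T$. No rounding of $\widehat{\mathcal F}$ is needed here, and this is exactly why the analysis is simpler than in the Lindbladian case.

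Finally, set $x^+ = \round_{\epsilon_j/4}(y)$ and consider two cases. For $P\notin S_H$ we have $|y_P|\le \epsilon_j/10 < \epsilon_j/4$, so $x^+_P=0$, which preserves $\supp(x^+)\subseteq S_H$. For $P\in S_H$, either $x^+_P=y_P$, giving error at most $\epsilon_j/10$, or $x^+_P=0$ with $|y_P|\le\epsilon_j/4$. In the latter case $|\lambda_P|\le \epsilon_j/4+\epsilon_j/10 < \epsilon_j/2$. In both cases $\norm{x^+-\lambda}_\infty\le \epsilon_{j+1}$, and the strengthened bound involving $|\lambda_P|$ also follows. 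After $T=\lceil\log_2(1/\epsilon)\rceil$ rounds we obtain $\norm{\widehat\lambda-\lambda}_\infty\le 2^{-T}\le\epsilon$.
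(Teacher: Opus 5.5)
Your proposal is correct and follows the paper's proof essentially step for step. It uses the same induction with an $\infty$-norm invariant plus a coordinatewise invariant tied to $|\lambda_P|$; the paper's version is $|x^{(j)}_P|\le 2|\lambda_P|$, which gives both $\supp(x^{(j)})\subseteq S_H$ and $\|\lambda+s\Delta\|_{B_1}\le 2g$ along the segment. It also uses the same split $y-\lambda=(\Delta-\mathcal{F}(x))-\eta(x)$, bounded via the fundamental theorem of calculus with Jacobian row sums restricted to $S_H$, and the same two-case rounding analysis.

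One small correction: as written, $|x_P-\lambda_P|\le 2|\lambda_P|$ only gives $\|\lambda+s\Delta\|_{B_1}\le 3g$, so you would need $c_{3g}$, which $t<1/(20c_{2g})$ does not control. Your own case analysis in fact yields $|x^+_P-\lambda_P|\le|\lambda_P|$: when $x^+_P=y_P$ one has $|\lambda_P|>3\epsilon_j/20>\epsilon_j/10$, and when $x^+_P=0$ the bound is an equality. This restores the required bound $2g$.
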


First, under the hypothesis of \Cref{thm:ham}, we can also bound the higher order terms of $\mathcal{F}$ via the Fundamental Theorem of Calculus.
This is the analogue of \Cref{coro:f-bound}.

\begin{lemma}
    \label{lem:f-bound-ham}
    Let $\lonorm{x} \leq g$, and let $\Delta = x - \lambda$.
    Suppose that $\Delta_P = 0$ unless $P \in S_H$, and suppose for any $P \in \pauli{k}$ that
    \begin{equation}
        \sum_{Q \in S_H}|J_{P,Q}(x) - \delta_{P,Q}| \leq c_g t.
    \end{equation}
    Then,
    \begin{equation}
        \norm{\mathcal{F}(x) - \Delta}_\infty \leq c_{2g}t \norm{\Delta}_\infty,\quad \text{or}\quad  \norm{\mathcal{F}(x) - \Delta}_\infty \leq c_{2g}\beta \norm{\Delta}_\infty,
    \end{equation}
    for the real-time and Gibbs state cases, respectively.
\end{lemma}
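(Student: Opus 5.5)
We will mirror the proof of \Cref{coro:f-bound}, replacing the $B_1\to B_1$ operator norm by the row-sum bound stated in the hypothesis. The idea is to interpolate between $\lambda$ and $x$ and apply the Fundamental Theorem of Calculus entrywise. We treat the real-time case; the Gibbs case is identical with $t$ replaced by $\beta$.

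\textbf{Step 1: the FTC identity.} Fix $P\in\pauli{k}$ and define $f_P(s)\triangleq \mathcal{F}_P(\lambda+s\Delta)$ for $s\in[0,1]$. Since $\mathcal{F}(\lambda)=0$, the Fundamental Theorem of Calculus gives
\begin{equation}
    \mathcal{F}_P(x) = \int_0^1 \sum_{Q} \Delta_Q\, J_{P,Q}(\lambda+s\Delta)\,ds .
\end{equation}
By assumption $\Delta_Q=0$ for $Q\notin S_H$, so the sum runs only over $Q\in S_H$. Subtracting $\Delta_P = \int_0^1\sum_{Q\in S_H}\delta_{P,Q}\Delta_Q\,ds$ yields
\begin{equation}
    \mathcal{F}_P(x)-\Delta_P = \int_0^1\sum_{Q\in S_H}\bigl(J_{P,Q}(\lambda+s\Delta)-\delta_{P,Q}\bigr)\Delta_Q\,ds .
\end{equation}
This subtraction needs care when $P\notin S_H$. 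In that case $\Delta_P=0$ and $\delta_{P,Q}=0$ for every $Q\in S_H$, so the identity still holds as written.

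\textbf{Step 2: bound each entry.} Take absolute values and pull out $\norm{\Delta}_\infty$:
\begin{equation}
    \abs{\mathcal{F}_P(x)-\Delta_P} \le \norm{\Delta}_\infty \sup_{s\in[0,1]}\sum_{Q\in S_H}\abs{J_{P,Q}(\lambda+s\Delta)-\delta_{P,Q}} .
\end{equation}
It remains to apply the Jacobian hypothesis at the interpolating point $z_s\triangleq\lambda+s\Delta=(1-s)\lambda+sx$. By the triangle inequality, $\lonorm{z_s}\le (1-s)\lonorm{\lambda}+s\lonorm{x}\le 2g$ (in fact $\le g$, using convexity of the norm together with $\lonorm{\lambda},\lonorm{x}\le g$).

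\textbf{Step 3: which constant applies.} The hypothesis as stated in the lemma is phrased at the point $x$ with constant $c_g$. We read it, as in \Cref{thm:ham}, as holding for every point whose $B_1$-norm is at most the relevant level, with $c_g$ nondecreasing in $g$. Applying it at $z_s$ at level $2g$ gives the constant $c_{2g}$, which is the constant in the conclusion. Taking the supremum over $s$ and the maximum over $P$ then gives $\norm{\mathcal{F}(x)-\Delta}_\infty\le c_{2g}t\norm{\Delta}_\infty$.

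The only real obstacle is this bookkeeping: the Jacobian bound must hold uniformly along the segment from $\lambda$ to $x$, not only at $x$. We handle it by the convexity estimate in Step 2 and the monotonicity convention $c_g\le c_{2g}$. We also rely on the restriction $\Delta_Q=0$ off $S_H$ so that the row-sum hypothesis over $Q\in S_H$ suffices.
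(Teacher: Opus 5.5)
Your proof is correct and follows the paper's argument essentially verbatim: you apply the Fundamental Theorem of Calculus along the segment from $\lambda$ to $x$, restrict the sum to $Q\in S_H$ using the support hypothesis on $\Delta$, and apply the row-sum Jacobian hypothesis at the interpolating points $\lambda+s\Delta$ to obtain the constant $c_{2g}$. Your refinements are accurate and only tighten the bookkeeping: you define $f_P$ via $\mathcal{F}_P$ so that the $1/t$ normalization and $\mathcal{F}(\lambda)=0$ are handled cleanly, you note that convexity gives $\lonorm{\lambda+s\Delta}\le g$ rather than the paper's $2g$, and you make explicit that the Jacobian hypothesis must hold along the whole segment rather than only at $x$.
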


\begin{proof}
    As in the proof of \Cref{coro:f-bound}, consider $f_P: [0,1] \to \C$ defined by $f_P(s) \triangleq E_P(\lambda + s\Delta)$.
    Then, by the Fundamental Theorem of Calculus,
    \begin{equation}
        f_P(1) - f_P(0) = \int_0^1 \partial_s f_P(s)\,ds.
    \end{equation}
    Expanding both sides and using $\partial_s = \sum_Q \Delta_Q \partial_Q$, we see that
    \begin{equation}
        \mathcal{F}_P(x) = \int_0^1 \sum_Q \Delta_Q J_{P,Q}(\lambda + s\Delta)\,ds.
    \end{equation}
    Subtracting $\Delta$ from both sides, we have
    \begin{equation}
        \mathcal{F}_P(x) - \Delta_P = \int_0^1 \sum_Q \Delta_Q (J_{P,Q}(\lambda + s\Delta) - \delta_{P,Q})\,ds = \int_0^1 \sum_{Q \in S_H} \Delta_Q (J_{P,Q}(\lambda + s\Delta) -\delta_{P,Q})\,ds,
    \end{equation}
    where in the last equality, we use that $\Delta_Q = 0$ unless $Q \in S_H$.
    Taking the absolute value of both sides, we have
    \begin{align}
        |\mathcal{F}_P(x) - \Delta_P| &\leq \int_0^1 \sum_{Q \in S_H}|J_{P,Q}(\lambda + s\Delta) - \delta_{P,Q}| |\Delta_Q| \,ds\\
        &\leq \int_0^1 \sum_{Q \in S_H}|J_{P,Q}(\lambda + s\Delta) - \delta_{P,Q}|\,ds \cdot \norm{\Delta}_\infty\\
        &\leq c_{2g}t\norm{\Delta}_\infty.
    \end{align}
    In the last line, we use the bound on the Jacobian for $\lambda + s\Delta$, where $\lonorm{\lambda + s\Delta} \leq 2g$.
    This holds for all $P \in \pauli{k}$, so this gives the claim.
    The proof is the same for the Gibbs state case.
\end{proof}

With this, we can prove \Cref{thm:ham}.

\begin{proof}[Proof of~{\Cref{thm:ham}}]
    In this case, the proof is short.
    Let $j \in \{0,\dots, T-1\}$.
    We prove this via induction on $j$, where at each iteration, we maintain the invariants
    \begin{equation}
        \label{eq:inductive-hypo-ham}
        \begin{gathered}
            \norm{x^{(j)} - \lambda}_\infty \leq \epsilon_j,\\
            |x^{(j)}_P| \leq 2|\lambda_P|
        \end{gathered}
    \end{equation}
    For the base case of $j = 0$, recall that $x^{(0)} = 0$ and $\epsilon_0 = 1$.
    Thus, we have $\norm{x^{(0)} - \lambda}_\infty = \norm{\lambda}_\infty \leq 1$, as required.
    Moreover, $|x_P^{(0)}| = 0 \leq 2|\lambda_P|$ is trivially satisfied.

    For the inductive step, suppose that $\norm{x^{(j)} - \lambda}_\infty \leq \epsilon_j$ and $|x_P^{(j)}| \leq 2|\lambda_P|$.
    We prove that this holds for iteration $j + 1$.
    For brevity, we drop the iteration index.
    Let $x\triangleq x^{(j)}$ denote the current iterate, $x^+ \triangleq x^{(j+1)}$ the next iterate, $\Delta \triangleq x - \lambda$ the error vector of the current iterate, $\Delta^+ \triangleq x^+ -\lambda$ the error vector of the next iterate, $\eps\triangleq \epsilon_j$ the current error, and $\epsilon^+ \triangleq \epsilon_{j+1} = \epsilon/2$ the desired error of the next iterate.
    We use $y$ to denote the next iterate before rounding:
    \begin{equation}
        x^+ \triangleq \round_{\epsilon/4}(y),\qquad y \triangleq x - \widehat{\mathcal{F}}(x).
    \end{equation}
    Note that it suffices to show that
    \begin{equation}
        \norm{y - \lambda}_\infty \leq \frac{\epsilon}{10}.
    \end{equation}
    This implies \Cref{eq:inductive-hypo-ham} with error parameter $\epsilon^+$:
    \begin{equation}
        \norm{x^+ - \lambda}_\infty \leq \norm{x^+ - y}_\infty + \norm{y - \lambda}_\infty \leq \frac{\epsilon}{4} + \frac{\epsilon}{10} \leq \frac{\epsilon}{2} = \epsilon^+.
    \end{equation}
    For the second hypothesis in \Cref{eq:inductive-hypo-ham}, consider a parameter indexed by a Pauli $P$.
    Then, either the rounding kicks in so that $x_P^+ = 0$, in which case the bound is immediate, or $|y_P| > \epsilon/4$.
    In the latter case, by the reverse triangle inequality, then
    \begin{equation}
        |\lambda_P| \geq |y_P| - \frac{\epsilon}{10} > \frac{3\epsilon}{20},
    \end{equation}
    so
    \begin{equation}
        |x_P^+| = |y_P| \leq |\lambda_P| + \frac{\epsilon}{10} \leq 2|\lambda_P|,
    \end{equation}
    so that \Cref{eq:inductive-hypo-ham} holds.
    Thus, it suffices to show that $\norm{y - \lambda}_\infty \leq \epsilon/10$.
    We show this as follows:
    \begin{equation}
        y - \lambda = x - \widehat{\mathcal{F}}(x) - \lambda = \Delta - \mathcal{F}(x) \underbrace{-\eta(x)}_{\triangleq\mathrm{err}_1} = \Delta - \Delta + \underbrace{\left(\Delta - \mathcal{F}(x)\right)}_{\triangleq \mathrm{err}_2} + \mathrm{err}_1 = \mathrm{err}_2 + \mathrm{err}_1.
    \end{equation}
    Then, we can bound each of the errors as follows:
    \begin{equation}
        \norm{\err_1}_\infty \leq \frac{\epsilon}{20}
    \end{equation}
    by the hypothesis of the theorem.
    Also,
    \begin{equation}
        \norm{\err_2}_\infty = \norm{\Delta - \mathcal{F}(x)}_\infty \leq c_{2g}t\norm{\Delta}_\infty \leq \frac{\epsilon}{20},
    \end{equation}
    where we use the inductive hypothesis and \Cref{lem:f-bound-ham}.
    Note that the hypothesis of \Cref{lem:f-bound-ham}, that $\Delta_P = 0$ unless $P \in S_H$, holds because we maintain the invariant $|x_P| \leq 2|\lambda_P|$.
    Putting everything together, $\norm{y-\lambda}_\infty \leq \epsilon/10$, as required.
    The proof is the same in the Gibbs state case via \Cref{lem:f-bound-ham}.
\end{proof}

\subsubsection{Real-time evolution}

We can instantiate \Cref{thm:ham} when we are given access to real-time evolution under the unknown Hamiltonian $H$.
We do so by proving the hypotheses of \Cref{thm:ham} hold in this setting.

\begin{theorem}[Structure learning of Hamiltonians from real-time evolution]
    \label{thm:ham-time}
    Let $\epsilon,\delta > 0$, and let $0 < t < 1/(162kg)$.
    Let $H = \ii\sum_P \lambda_P P$ be a $k$-local Hamiltonian with $\lonorm{\lambda} \leq g$ and $\norm{\lambda}_\infty \leq 1$.
    Then, there exists an algorithm that finds estimates $\widehat{\lambda}$ such that $\norm{\widehat{\lambda} - \lambda}_\infty \leq \epsilon$ with probability at least $1-\delta$ using $\tet = \mathcal{O}(g\log(n/\delta)/\epsilon^2)$.
\end{theorem}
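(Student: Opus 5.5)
The plan is to instantiate \Cref{thm:ham}, so we must verify its two hypotheses in the real-time setting. The first is an $\infty$-norm approximation $\widehat{\mathcal{F}}$ with error $\epsilon/20$. The second is the Jacobian row-sum bound $\sum_{Q\in S_H}|J_{P,Q}(x)-\delta_{P,Q}|\le c_g t$ for some $c_g = O(kg)$, which must be small enough that $t<1/(162kg)$ implies $t<1/(20c_{2g})$. We then count resources.

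\emph{Reduction to the Lindbladian machinery.} Conjugation by $e^{-\ii H(x)t}$ is $e^{\mathcal{L}_x t}$ for the purely coherent Lindbladian with coefficients $x_{P,I}\propto x_P$, up to the normalization fixed by the $\ii$ in $H=\ii\sum_P x_P P$. Moreover, $E_P(x)=\E_{R}\ntr(e^{\mathcal{L}_xt}(R)RP)$ is exactly the local Fourier coefficient at $\pbar=(P,I)$ (possibly after relabeling $(P,I)$ versus $(I,P)$). Hence \Cref{eq:first-order-repeat} gives $E_P(x)=t(A_kx)_{P,I}+\sum_{\ell\ge2}\frac{t^\ell}{\ell!}E^{(\ell)}_{P,I}(x)$.

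\emph{Linear term.} When only coherent coefficients are nonzero, \Cref{eq:a-def} collapses. The sum over $\qbar\succeq(P,I)$ only hits $(P,I)$ itself, because any other confused pair has $Q_2\ne I$ and carries a dissipative coefficient. The correction term $-\tfrac12\sum_Q\overline{c(PQ)}x_{Q,\mathcal{P}(PQ)}$ only involves dissipative coordinates, except for $Q=P$, which gives $x_{P,I}$ with coefficient $1$. I will check that the sign conventions give $(A_kx)_{P,I}$ proportional to $x_P$, and that the normalization of $\mathcal{F}$ makes the proportionality constant $1$. This yields $J=\mathrm{Id}+O(t)$, with no confusion.

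\emph{Higher-order terms.} For the Jacobian I reuse \Cref{lem:cluster-fourier} and \Cref{lem:derivative-bound}, but now bound a row sum, i.e.\ an $\infty\to\infty$ bound restricted to $Q\in S_H$. By \Cref{lem:cluster-fourier}, $\partial_Q E^{(\ell)}_P$ is $2^\ell\ell!$ times a sum over clusters of size $\ell$ containing $Q$ whose support contains $S_P$. Summing $|\partial_Q|$ over $Q$ amounts to applying $\sum_Q v_Q\partial_Q$ with $v_Q=1$ on $S_H$. This is the main obstacle, since $v$ then need not have $\lonorm{v}\le1$. Instead I fix a site $i\in S_P$ and bound $\sum_Q|\partial_Q Z_i(|x|)|$ directly. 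Each cluster of size $\ell$ from $i$ contributes at most $\ell$ derivatives, each a monomial of degree $\ell-1$. Since $\norm{\lambda}_\infty\le1$ and the monomial is evaluated at $|x|$ with $\lonorm{x}\le 2g$, a counting argument like \Cref{lem:tree-count} bounds the total by $\ell\,(e\cdot 2g k)^{\ell-1}\cdot O(k)$. The point is that the removed edge is charged a weight of $1$ rather than $g$, which is why the bound depends on $k$ but not on $\deg$. Summing $\frac{t^{\ell-1}}{\ell!}2^\ell\ell!\,\ell(2egk)^{\ell-1}$ over $\ell\ge2$ converges for $t<1/(162kg)$ and gives $c_g t$ with $c_g=O(kg)$; I will tune constants to match $162$.

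\emph{Estimation and resources.} $E_P(\lambda)$ is estimated via \Cref{alg:local-fourier} to accuracy $t\epsilon/40$. By \Cref{lem:estimate-coefs}, this requires $O(C^k\log(n/\delta)/(t\epsilon)^2)$ runs, i.e.\ total time $O(\log(n/\delta)/(t\epsilon^2))$, which equals $O(g\log(n/\delta)/\epsilon^2)$ for $t=\Theta(1/(kg))$. $E_P(x)$ is computed by truncating the Taylor series as in \Cref{thm:complexity-general}, with the tail controlled by \Cref{lem:inductive-bound}. The invariant $|x_P|\le2|\lambda_P|$ from \Cref{thm:ham} keeps $\lonorm{x}\le 2g$ and keeps the support of $x$ inside $S_H$. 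Applying \Cref{thm:ham} then concludes.
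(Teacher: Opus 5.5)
\textbf{Gap: the Jacobian row-sum bound.} Your outer structure matches the paper: instantiate \Cref{thm:ham}, estimate $E_P(\lambda)$ to accuracy $t\epsilon/40$, truncate the Taylor series for $E_P(x)$, and maintain $|x_P|\le 2|\lambda_P|$. You are also right that the coherent block of $A_k$ has no confusion. The problem is the Jacobian row-sum bound, which is the real content of the statement. You bound $\sum_{Q\in S_H}|\partial_Q E^{(\ell)}_P|$ by taking absolute values inside the cluster expansion of \Cref{lem:cluster-fourier}, where you only know $|\gamma_{\cluster{a}}|\le 1$. That bound necessarily picks up the degree of $S_H$.

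The reason is that the marked term $Q$ carries no factor $|x_Q|$. Summing over the admissible $Q$ adjacent to a cluster therefore counts the \emph{number} of terms of $S_H$ near those sites, not their total weight. \Cref{lem:tree-count} works only because the weights around each vertex sum to at most $g$. Weight-$1$ edges instead sum to $\deg(S_H)$, and the hypotheses $\lonorm{\lambda}\le g$, $\norm{\lambda}_\infty\le 1$ allow $\deg(S_H)$ of order $n^{k-1}$ (many tiny terms).

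A concrete counterexample: take $k=2$ and $m$ terms $Z_iZ_j$, $j=1,\dots,m$, each with coefficient $g/m$. At $\ell=2$ one gets $\sum_Q \partial_Q Z_i(|x|)=(m+1)g$, not $O(kg)$. So ``charging the removed edge weight $1$'' is exactly where degree dependence enters. The resulting constant would force $t\lesssim 1/(kg\deg(S_H))$, which is not the statement. The hypothesis $\norm{\lambda}_\infty\le1$ does not help here; in the paper it only serves the base case $\epsilon_0=1$ of \Cref{thm:ham}.

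\textbf{Missing idea: a Pauli cancellation.} The bound $|\gamma_{\cluster{a}}|\le 1$ discards a cancellation that the paper exploits. The paper works directly with nested commutators:
\begin{itemize}
\item It differentiates by the Leibniz rule (\Cref{claim:deriv-commutator}).
\item It moves commutators across the trace to get terms $\ntr([R,H(x)]_j\,[Q,[H(x),RP]_{\ell-1-j}])$.
\item It applies \Cref{claim:exp-sum}: for fixed Pauli strings $R',S'$, at most one Pauli $Q$ has $\ntr(R'[Q,S'])\neq 0$. Hence $\sum_Q|\ntr(A[Q,B])|\le 2\norm{A}_{P,1}\norm{B}_{P,1}$, no matter how many $Q$ there are.
\end{itemize}
Combined with $\norm{[H,Q]_\ell}_{P,1}\le \ell!(2kg)^\ell$ from \Cref{lem:inductive-local-norm}, this gives $\norm{J(x)-I}_{\infty\to\infty}\le 2\sum_{\ell\ge1}(2kgt)^\ell\le \tfrac{81kg}{20}\,t$. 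The sum runs over all $Q$ and the bound is independent of the degree.

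This is consistent with the star example: the terms commute, so the nested commutators vanish, i.e.\ the true $\gamma_{\cluster{a}}$ are highly sparse. To rescue the cluster route you would have to prove such a sparsity statement for the $\gamma_{\cluster{a}}$, which amounts to the same uniqueness argument.
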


In order to prove this theorem, we require the series expansion properties we proved in \Cref{sec:series}.
Define the Jacobian of $\mathcal{F}$ as $J_{P,Q}(x) \triangleq \partial_{x_Q} \mathcal{F}_P(x)$.
We want to prove a bound on the higher order terms of this Jacobian.
The reason the bounds in \Cref{sec:series} do not apply is because we will want to keep track of the error of our estimates in \Cref{alg:ham} in $\infty$-norm rather than $B_1$-norm.
Thus, the bounds in \Cref{sec:series} do not apply, as there we only bound the higher order terms of the Jacobian in $B_1\to B_1$ norm (\Cref{lem:convexity-lindblad}), not $\infty\to\infty$ norm.
In fact, in the Lindbladian case, the $\infty\to\infty$ norm of the Jacobian can scale with the system size.
Luckily, in the Hamiltonian case, the $\infty\to\infty$ norm of the Jacobian is bounded.

\begin{lemma}
    \label{lem:convexity}
    Let $H(\lambda) = \ii\sum_P \lambda_P P$ be a $k$-local Hamiltonian with bounded local one-norm $\lonorm{\lambda} \leq g$.
    Suppose $t > 0$ satisfies $t < 1/(162kg)$.
    Then, for any $x$ such that $\lonorm{x} \leq g$, $\norm{J(x) - I}_{\infty\to\infty} \leq c_gt$, where $c_g = 81kg/20$.
\end{lemma}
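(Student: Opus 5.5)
We plan to mirror the structure of \Cref{sec:series}, with the single change that we bound the Jacobian in $\infty\to\infty$ norm, i.e., the maximum over rows $P$ of $\sum_Q |J_{P,Q}(x) - \delta_{P,Q}|$. The first step is to write the Hamiltonian evolution as the Lindbladian $\lind_x(\rho) = -\ii[H(x),\rho]$, which is a purely coherent parameterized Lindbladian. Taylor expanding gives
\begin{equation*}
E_P(x) = \sum_{\ell \geq 1} \frac{t^\ell}{\ell!} \E_{R\sim\pauli{S_P}}\big[\ntr(\lind_x^{\dagger\ell}(RP)\, R)\big].
\end{equation*}
For $\ell=1$ there is no confusion: the commutator $[Q,R]$ only produces $QR$ and $RQ$, and $\ntr(QRRP)$ picks out $Q=P$. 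Hence the linear term of $E_P(x)/t$ is exactly $x_P$, with the normalization of $\lambda$ chosen so that this coefficient is $1$. It follows that $J(x) - I = \sum_{\ell\geq 2} \frac{t^{\ell-1}}{\ell!} J^{(\ell)}(x)$, where $J^{(\ell)}_{P,Q} = \partial_{x_Q} E^{(\ell)}_P(x)$.

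The second step bounds a single row. Fix $P$ and apply \Cref{lem:cluster-fourier}, which writes $E^{(\ell)}_P(x)$ as $2^\ell\ell!\sum_{|\cluster{a}|=\ell}\gamma_{\cluster{a}}x^{\cluster{a}}\iver{\cluster{a}\text{ is a cluster}}\iver{S_{\cluster{a}}\supseteq S_P}$. Since $S_P\neq\emptyset$, each such cluster is a cluster from some site $i\in S_P$. There are at most $k$ choices of $i$, and for each one we take the union bound over clusters from $i$. Summing absolute derivatives over $Q$ gives
\begin{equation*}
\sum_Q |\partial_{x_Q} x^{\cluster{a}}| \leq \ell \prod_{b\in\cluster{a}\setminus\{\cdot\}} |x_b|.
\end{equation*}
We therefore apply the argument of \Cref{lem:derivative-bound} with $v$ the all-ones vector restricted appropriately, or more simply count directly. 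Removing one element from a cluster of size $\ell$ and summing over the removed element $Q$ amounts to counting clusters of size $\ell$ with $\ell-1$ weighted elements and one unweighted element. Because $\|\lambda\|_\infty\leq 1$ is not available for $x$, we instead use the bound $\sum_Q |\partial_Q x^{\cluster{a}}|\leq \ell\, x^{\cluster{a}}/\min$, which is not usable. The cleaner route is to observe that in a cluster from $i$, the removed element $Q$ must intersect the support of the remaining ones. We would therefore bound the sum over $Q$ by the number of $k$-local Paulis intersecting a given set of size at most $\ell k$, which gives $\poly(\ell)$ times $16^k$. That approach would introduce a $16^k$ factor, which is inconsistent with the stated constant $c_g = 81kg/20$.

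This mismatch signals that the intended argument is the more direct one: a nested-commutator bound rather than the cluster/Fourier counting. Concretely, $\partial_{x_Q}$ of $\ntr(R\,\mathrm{ad}_{H}^{\ell}(RP))$ equals a sum over the $\ell$ positions at which $\mathrm{ad}_Q$ is inserted. Summing $|\cdot|$ over $Q$, with $\ell-1$ of the commutators taken with $H$ and controlled by $\lonorm{x}\leq g$ via the standard local commutator bound $\|\mathrm{ad}_{H}^{m}(O)\|\leq m!(2ekg)^m$-type estimates (\Cref{lem:inductive-bound}), and the inserted $\mathrm{ad}_Q$ ranging over $Q$ that touch the current support, contributes at most $\lonorm{\mathbb{1}}$-type sums bounded by a $k$-dependent count. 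The factorial $\ell!$ cancels against $1/\ell!$, so the tail becomes a geometric series in $O(kg t)$.

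The final step sums this series. With $t<1/(162kg)$, the sum $\sum_{\ell\geq2}\ell\,(ckgt)^{\ell-1}$ is at most $O(kg t)$, and the constant should come out to $81kg/20$. The main obstacle is making the $\infty\to\infty$ row sum $\sum_Q$ free of both a $16^k$ factor and the support-size growth. My first attempt would be to handle the inserted $\mathrm{ad}_Q$ by grouping $Q$ by the site $j$ at which it meets the current support, and bounding $\sum_{Q\ni j}\|\mathrm{ad}_Q\|$ by $2$ per unit $\lonorm{\cdot}$. This does not hold, because $\delta$-direction derivatives carry no $x$ weight. I would therefore redirect this to a rescaling trick as in \Cref{lem:derivative-bound}, interpolating $u=\frac{\ell-1}{\ell}\frac{x}{g}+\frac1\ell \mathbb{1}_{\text{touching}}$ and accepting whatever $k$-dependence that yields.
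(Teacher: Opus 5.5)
There is a genuine gap, and it sits exactly at what you call the main obstacle. You never obtain a bound on the row sum $\sum_Q |J_{P,Q}(x)-\delta_{P,Q}|$ over \emph{all} $k$-local $Q$, and none of the routes you sketch can give one.

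First, counting $Q$ that \emph{intersect} a set of size $\ell k$ does not give $\poly(\ell)\,16^k$. For $k\ge 2$ there are $\Theta(n^{k-1})$ $k$-local Paulis through each site; your count is valid only for $Q$ \emph{contained} in the set. So the real defect of the cluster route is $n$-dependence, not the constant.

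Second, as you observe, $\partial_{x_Q}$ in a direction with $x_Q=0$ carries no weight. Hence operator-norm bounds such as \Cref{lem:inductive-bound}, followed by counting the $Q$ that touch the current support, are $n$-dependent. Worse, as an operator $[R,H(x)]_j$ can be supported on $\Theta(n)$ qubits when $H$ has many tiny terms, so you cannot even confine $Q$ to a small region.

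Third, the interpolation trick of \Cref{lem:derivative-bound} fails for the same reason. It requires $\lonorm{v}\le 1$, while the indicator of $k$-local Paulis touching a site has $B_1$-norm of order $n^{k-1}$.

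The missing idea is \Cref{claim:exp-sum}: for pairwise distinct Paulis $Q$ and any operators $A,B$, $\sum_Q|\ntr(A[Q,B])|\le 2\norm{A}_{P,1}\norm{B}_{P,1}$. The reason is that for fixed Pauli components $T$ of $A$ and $S$ of $B$, $\ntr(T[Q,S])\neq 0$ forces $Q\propto TS$, so at most one $Q$ contributes. Weighting by Pauli coefficients absorbs the sum over $Q$ with no counting at all.

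To apply it, proceed as follows. After inserting $\mathrm{ad}_Q$ at position $j$ via \Cref{claim:deriv-commutator}, move the outer $j$ commutators onto $R$ using $\tr(A[H,B])=\tr([A,H]B)$. Then $Q$ is sandwiched between $[R,H(x)]_j$ and $[H(x),RP]_{\ell-1-j}$. Bound their Pauli $1$-norms (not operator norms) by \Cref{lem:inductive-local-norm}. The estimate $\sum_{j=0}^{\ell-1}j!\,(\ell-1-j)!\le \ell!$ cancels the $1/\ell!$ exactly, with no stray factor of $\ell$ or $e$. This leaves $2\sum_{\ell\ge 1}(2kgt)^\ell\le \frac{81}{20}kgt$ for $t<1/(162kg)$.

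Your first step does match the paper: the Taylor expansion, the linear term $x_P$ with no confusion, and $J-I$ as the $\ell\ge 2$ tail. So does the insertion-of-$\mathrm{ad}_Q$ idea. What is missing is the pairing of $Q$ with Pauli components.
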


First, we show that, with this lemma, we can obtain \Cref{thm:ham-time}.

\begin{proof}[Proof of {\Cref{thm:ham-time}}]
    We proceed as in the proof of \Cref{thm:complexity-general}.
    First, by \Cref{lem:estimate-coefs}, we can obtain estimates $\widehat{E}_P(\lambda)$ such that
    \begin{equation}
        |\widehat{E}_P(\lambda) - E_P(\lambda)| \leq \frac{t\epsilon}{40}
    \end{equation}
    for all $k$-local Paulis $P$ with probability at least $1-\delta$ using $\Theta(C^k \log(n/\delta)/(t\epsilon)^2)$ queries to the time evolution operator $e^{-\ii Ht}$, for some absolute constant $C > 0$.
    This corresponds to a total time evolution of $\Theta(C^k \log(n/\delta)/(t\epsilon^2)) = \Theta(C_k g\log(n/\delta)/\epsilon^2)$ for some constant $C_k > 0$ that depends only on the locality $k$.
    As in \Cref{thm:complexity-general}, we can obtain estimates $\widehat{E}_P(x)$ such that
    \begin{equation}
        |\widehat{E}_P(x) - E_P(x)| \leq \frac{t\epsilon}{40}
    \end{equation}
    also by Taylor expanding $e^{-\ii Ht}$ up to degree
    \begin{equation}
        \Gamma \triangleq \left\lceil \frac{\log(160/(t\epsilon))}{\log(1/(2ekgt))} - 1 \right\rceil,
    \end{equation}
    by the same analysis as before.
    This gives us an approximation of $\mathcal{F}(x)$ up to $\epsilon/20$ error.
    Moreover, the Jacobian bound needed in \Cref{thm:ham} is clearly implied by \Cref{lem:convexity}:
    \begin{equation}
        \sum_{Q \in S_H}|J_{P,Q}(x) -\delta_{P,Q}| \leq \sum_Q |J_{P,Q}(x) - \delta_{P,Q}| = \norm{J(x) - I}_{\infty\to\infty} \leq c_g t.
    \end{equation}
\end{proof}

It remains to prove \Cref{lem:convexity}, and we spend the rest of this section proving it.
First, we require an inductive bound on the operator norm of a nested commutator.
This is an analogue of \Cref{lem:inductive-bound}.

\begin{lemma}
    \label{lem:inductive-local-norm}
    Let $Q \in \pauli{k}$. Suppose $H = \ii\sum_P \lambda_P P$ is a $k$-local Hamiltonian with $|\lambda_P| \leq 1$ and $\lonorm{\lambda} \leq g$. Then, $[H,Q]_\ell = \sum_T d_{T,\ell} T$, where $\mathrm{supp}(T) \leq k(\ell+1)$ and
    \begin{equation}
        \norm{[H,Q]_\ell}_{P,1} \triangleq \sum_T |d_{T,\ell}| \leq \ell! (2kg)^\ell.
    \end{equation}
\end{lemma}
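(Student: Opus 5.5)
Throughout, $[H,Q]_\ell$ is the $\ell$-fold nested commutator $[H,[H,\dots,[H,Q]]]$, with $[H,Q]_0 = Q$. The plan is induction on $\ell$. We track the Pauli expansion $[H,Q]_\ell = \sum_T d_{T,\ell} T$, and for each $T$ also a ``support history'': an ordered list of the Hamiltonian terms $P^{(1)},\dots,P^{(\ell)}$ that produced it. The base case $\ell=0$ is immediate: $[H,Q]_0 = Q$ has support at most $k$ and Pauli one-norm $1 = 0!(2kg)^0$.

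For the inductive step, write
\begin{equation*}
[H,Q]_{\ell+1} = \ii\sum_P \lambda_P \sum_T d_{T,\ell}[P,T].
\end{equation*}
Two observations drive the step. First, $[P,T]=0$ unless $P$ anticommutes with $T$, and in that case $[P,T] = 2PT$. So only $P$ with $S_P\cap \supp(T)\neq\emptyset$ contribute, and each contributes a single Pauli $PT$ (up to phase) with coefficient of magnitude at most $2|\lambda_P||d_{T,\ell}|$. Second, $\supp(PT)\subseteq \supp(T)\cup S_P$, so $|\supp(PT)| \le k(\ell+1) + k = k(\ell+2)$, which gives the support claim.

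For the norm, the triangle inequality gives
\begin{equation*}
\sum_{T'}|d_{T',\ell+1}| \le \sum_T |d_{T,\ell}| \cdot 2\sum_{P:\,S_P\cap\supp(T)\neq\emptyset}|\lambda_P|.
\end{equation*}
The inner sum is bounded by summing the local one-norm over the sites of $\supp(T)$, which gives at most $|\supp(T)|\,g$. A crude bound $|\supp(T)|\le k(\ell+1)$ then yields the factor $2k(\ell+1)g$. Hence
\begin{equation*}
\norm{[H,Q]_{\ell+1}}_{P,1} \le 2kg(\ell+1)\cdot \ell!(2kg)^\ell = (\ell+1)!(2kg)^{\ell+1},
\end{equation*}
which closes the induction.

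The only point needing care is that the support of $T$ may grow by up to $k$ per step. The support bound $k(\ell+1)$ must therefore be carried along with the norm bound in the same induction, since it is exactly what produces the $(\ell+1)$ factor that builds the factorial. I do not expect a real obstacle beyond this. In particular, the hypothesis $|\lambda_P|\le 1$ is not needed for this estimate; only $\lonorm{\lambda}\le g$ enters.
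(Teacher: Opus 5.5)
Your proof is correct and is essentially the paper's argument: induction on $\ell$ with the support bound carried alongside the norm bound, keeping only terms $P$ that overlap $\mathrm{supp}(T)$, and bounding their total weight by $|\mathrm{supp}(T)|\,g \le k(\ell+1)g$ to get the factor $2kg(\ell+1)$. The ``support history'' bookkeeping you introduce is never used and can be dropped, and you are right that the hypothesis $|\lambda_P|\le 1$ plays no role.
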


\begin{proof}
We prove this by induction on $\ell$.
For the base case of $\ell=0$, the claim is clear because $[H,Q]_0 = Q$ so that $\sum_T |d_T| = 1$ and $\supp(Q) \leq k$.
For the inductive step, suppose the result holds for $\ell$.
Then,
\begin{equation}
    [H,Q]_{\ell+1} = [H,[H,Q]_\ell] = \ii\sum_P \lambda_P [P, [H, Q]_\ell] = \ii\sum_P \lambda_P \sum_T d_{T,\ell}[P, T],
\end{equation}
where in the last equality we use the inductive hypothesis.
By the inductive hypothesis, $\supp(T) \leq k(\ell+1)$.
Since $|S_P| \leq k$, $|\supp([P, T])| \leq k(\ell+2)$.
We can bound the 1-norm of the Pauli coefficients.
Note that $\lambda_P d_{T,\ell}$ is only included when $[P, T] \neq 0$.
\begin{equation}
    \sum_T |d_{T,\ell+1}| \leq 2 \sum_P \sum_T |\lambda_P||d_{T,\ell}| \iver{T \text{ overlaps with } P}
\end{equation}
Let $\supp(T) = \{i_1,\dots, i_{k(\ell+1)}\}$ (the argument also works for fewer qubits).
Then, we can break up the sum over $P$ according to whether $P$ is supported on some qubit $i_j$.
\begin{align}
    \sum_T |d_{T,\ell+1}| &\leq 2 \sum_T |d_{T,\ell}| \left(\sum_{P : S_P \ni i_1} |\lambda_P| + \cdots + \sum_{P : S_P \ni i_{k(\ell+1)}}|\lambda_P|\right)\\
    &\leq 2g k(\ell+1) \sum_T |d_{T,\ell}|\\
    &\leq 2g k(\ell+1) \ell! (2kg)^\ell\\
    &= (\ell+1)!(2kg)^{\ell+1}.
\end{align}
In the second line, we use that $\lonorm{\lambda} \leq g$.
In the third line, we use the inductive hypothesis.
\end{proof}

We also find the following two claims useful.
The first claim provides an explicit expression for the derivative of a nested commutator.

\begin{claim}
    \label{claim:deriv-commutator}
    Let $Q, T \in \pauli{n}$.
    For all $\ell \geq 1$ and $x \in [-1,1]^m$,
    \begin{equation}
        \partial_{x_Q} [H(x), T]_\ell = \ii \sum_{j=0}^{\ell-1} [H(x), [Q, [H(x), T]_{\ell-1-j}]]_j.
    \end{equation}
\end{claim}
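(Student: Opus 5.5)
We prove the formula by induction on $\ell$, using only that $H(x) = \ii\sum_P x_P P$ is linear in $x$ with $\partial_{x_Q} H(x) = \ii Q$, and that the nested commutator is defined recursively by $[H,T]_0 = T$ and $[H,T]_{\ell} = [H,[H,T]_{\ell-1}]$. Since $T$ does not depend on $x$, we have $\partial_{x_Q}[H(x),T]_0 = 0$. We also use that the commutator is bilinear, so it satisfies the product rule
\begin{equation}
\partial_{x_Q}[A(x),B(x)] = [\partial_{x_Q}A(x),B(x)] + [A(x),\partial_{x_Q}B(x)].
\end{equation}

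For the base case $\ell = 1$, we compute $\partial_{x_Q}[H(x),T] = [\ii Q, T] = \ii [Q,T]$. This matches the right-hand side of the claim: with $\ell = 1$ the sum has only the term $j = 0$, which is $\ii[H,[Q,[H,T]_0]]_0 = \ii [Q,T]$.

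For the inductive step, assume the formula holds for $\ell$. Applying the product rule to $[H(x),T]_{\ell+1} = [H(x),[H(x),T]_\ell]$ gives two pieces. The first is $[\ii Q,[H,T]_\ell] = \ii [H,[Q,[H,T]_{\ell}]]_0$, which is exactly the $j=0$ term of the formula at level $\ell+1$. The second is $[H, \partial_{x_Q}[H,T]_\ell]$. By the inductive hypothesis, this equals
\begin{equation}
\ii\sum_{j=0}^{\ell-1} [H,[H,[Q,[H,T]_{\ell-1-j}]]_j] = \ii\sum_{j=0}^{\ell-1}[H,[Q,[H,T]_{\ell-1-j}]]_{j+1}.
\end{equation}
Reindexing with $j' = j+1$, this becomes $\ii\sum_{j'=1}^{\ell}[H,[Q,[H,T]_{\ell-j'}]]_{j'}$. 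Adding the $j'=0$ term from the first piece yields
\begin{equation}
\ii\sum_{j'=0}^{\ell}[H,[Q,[H,T]_{\ell-j'}]]_{j'},
\end{equation}
which is the claimed formula at level $\ell+1$.

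No step here is a real obstacle. The only care needed is in the index bookkeeping of the reindexing, namely checking that the exponent $\ell-1-j$ at level $\ell$ becomes $(\ell+1)-1-j'$ at level $\ell+1$. The restriction $x\in[-1,1]^m$ plays no role, because everything is polynomial in $x$. If a cleaner presentation were preferred, one could instead expand $\partial_{x_Q}\mathrm{ad}_H^{\ell}$ as a sum of words with $\mathrm{ad}_{\ii Q}$ in one position, but the induction above is the most direct route.
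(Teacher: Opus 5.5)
Your proof is correct and follows the same route as the paper: induction on $\ell$, the Leibniz rule for the commutator, the inductive hypothesis, and the index shift $j \to j+1$. Your base case is also slightly more careful than the paper's, which writes $\partial_{x_Q}[H(x),T] = [Q,T]$ and drops the factor $\ii$ coming from $\partial_{x_Q}H(x) = \ii Q$.
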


\begin{proof}
This is essentially the Leibniz rule, but we prove it for completeness.
We proceed via induction.
For the base case of $\ell = 1$, $\partial_{x_Q} [H(x), T] = [\partial_{x_Q} H(x), T] = [Q, T]$, which is clearly the same as the right-hand side (only the $j = 0$ term survives and $\ell - 1- j = 0$).
For the inductive case, suppose the claim holds for $\ell$.
\begin{align}
    \partial_{x_Q} [H(x), T]_{\ell+1} &= \partial_{x_Q} [H(x), [H(x), T]_\ell]\\
    &= [\partial_{x_Q} H(x), [H(x), T]_\ell] + [H(x), \partial_{x_Q}[H(x), T]_\ell]\\
    &= \ii [Q, [H(x), T]_\ell] + \ii \sum_{j=0}^{\ell-1} [H(x), [H(x), [Q, [H(x), T]_{\ell-1-j}]]_j]\\
    &= \ii [Q, [H(x), T]_\ell] + \ii \sum_{j=1}^{\ell} [H(x), [Q, [H(x), T]_{\ell-j}]]_j\\
    &= \ii \sum_{j=0}^\ell [H(x), [Q, [H(x), T]_{\ell-j}]]_j.
\end{align}
Here, the second line follows from the Leibniz rule.
The third line follows by the inductive hypothesis.
The fourth line follows by shifting the index $j \to j+1$.
\end{proof}

\begin{claim}
    \label{claim:exp-sum}
    Let $E_1,\dots, E_M \in \mathcal{P}_n$ be pairwise distinct Paulis.
    Then, for any observables $A,B$,
    \begin{equation}
        \sum_{b=1}^M \left|\ntr(A[E_b, B])\right| \leq 2\norm{A}_{P,1} \norm{B}_{P,1},
    \end{equation}
    where, for $A = \sum_{R\in \mathcal{P}_n} \alpha_R R$, $\norm{A}_{P,1} = \sum_R |\alpha_R|$.
\end{claim}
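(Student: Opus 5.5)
We expand both $A$ and $B$ in the Pauli basis and reduce to a statement about a single pair of Paulis, then use the triangle inequality. Write $A = \sum_{R} \alpha_R R$ and $B = \sum_{S} \beta_S S$. By bilinearity of the commutator and linearity of the trace,
\begin{equation*}
    \sum_{b=1}^M \left|\ntr(A[E_b,B])\right| \leq \sum_{R,S} |\alpha_R||\beta_S| \sum_{b=1}^M \left|\ntr(R[E_b,S])\right|.
\end{equation*}
Once we show that the inner sum over $b$ is at most $2$ for every fixed pair $(R,S)$, the claimed bound $2\norm{A}_{P,1}\norm{B}_{P,1}$ follows immediately.

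For fixed Paulis $R,S$, we have $[E_b,S] = E_bS - SE_b$. Here $E_bS$ and $SE_b$ are each a phase times the Pauli $\pauli{}(E_bS)$. They either commute, in which case the commutator vanishes, or anticommute, in which case the commutator is $2E_bS$. So $|\ntr(R[E_b,S])| \leq 2\,|\ntr(RE_bS)|$. By orthonormality of Paulis under $\ntr$, $|\ntr(RE_bS)|$ equals $1$ if $\pauli{}(E_bS) = R$ and $0$ otherwise. The condition $\pauli{}(E_bS) = R$ is equivalent to $E_b = \pauli{}(RS)$, up to phase. This pins down $E_b$ uniquely as a Pauli.

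The main point is that the $E_b$ are pairwise distinct, so at most one index $b$ can satisfy $E_b = \pauli{}(RS)$. The inner sum is therefore at most $2$, which completes the argument. No real obstacle is expected beyond this phase bookkeeping: $|c(\cdot)| = 1$ makes the phases harmless, and the distinctness hypothesis is exactly what prevents the factor $M$ from appearing.
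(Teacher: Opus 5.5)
Your proposal is correct and follows the paper's proof essentially step for step. Both arguments expand $A$ and $B$ in the Pauli basis and use the triangle inequality to reduce to showing $\sum_b |\ntr(R[E_b,S])| \le 2$ for fixed Paulis $R,S$. Both then note that the commutator is either $0$ or $2E_bS$, so only the unique $b$ with $E_b$ equal (up to phase) to $RS$ can contribute, and pairwise distinctness of the $E_b$ enters exactly as in the paper.
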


\begin{proof}
Write $A = \sum_{R \in \mathcal{P}_n} \alpha_R R$ and $B = \sum_{S \in \mathcal{P}_n} \beta_S S$.
Then, we have
\begin{equation}
    \sum_{b=1}^M \left|\ntr(A[E_b, B])\right| \leq \sum_{R,S} |\alpha_R| |\beta_S| \sum_{b=1}^M \left|\ntr(R[E_b,S])\right|.
\end{equation}
    Thus, it suffices to show that, for any Paulis $R,S$,
\begin{equation}
    \sum_{b=1}^M \left|\ntr(R[E_b,S])\right| \leq 2.
\end{equation}
Notice that
\begin{equation}
    [E_b, S] = \begin{cases}
        0 & \text{if } [E_b, S] = 0\\
        2 \omega T & \text{if } [E_b, S] \neq 0
    \end{cases},
\end{equation}
where $T \in \mathcal{P}_n$ such that $\omega T = E_bS$, where $\omega \in \{\pm 1, \pm \ii\}$.
Then,
\begin{equation}
    \ntr(R[E_b, S]) = \begin{cases}
        0 & \text{if } [E_b, S] = 0\\
        2\omega \indicator\{T = R\} & \text{if } [E_b, S] \neq 0
    \end{cases}.
\end{equation}
We claim that there can be only one $b \in [M]$ such that $[E_b, S] \neq 0$ and $T = R$.
This means that only one $b$ contributes to the sum over $b$, so the result follows.
This is true because for $[E_b, S] \neq 0$ and $T = R$ to hold, $E_b = \omega R S$ (since $\omega T = E_b S$ when $[E_b,S] \neq 0$).
Thus, given $R$ and $S$, $E_b$ is uniquely determined amongst the set of pairwise distinct (phaseless) Paulis.
\end{proof}

With all of these lemmas, we can now prove \Cref{lem:convexity}.

\begin{proof}[Proof of~{\Cref{lem:convexity}}]
Let $x$ be such that $\lonorm{x} \leq g$.
Recall by the definition of $\mathcal{F}_P$,
\begin{align}
    \mathcal{F}_P(x) &= \frac{1}{t}E_P(x) - \frac{1}{t}E_P(\lambda)\\
    &= \frac{1}{t}\E_{R\sim\pauli{S_P}} [\ntr(e^{-\ii Ht} R e^{\ii Ht} RP)] - \frac{1}{t}E_P(\lambda)\\
    &= \frac{1}{t}\E_{R\sim\pauli{S_P}} [\ntr(R e^{\ii Ht} RP e^{-\ii Ht})] - \frac{1}{t}E_P(\lambda)\\
    &= -\E_{R\sim \pauli{S_P}}[\ntr(R[H(x),RP])] + \frac{1}{t}\sum_{\ell=2}^\infty \frac{(\ii t)^\ell}{\ell!} \E_{R\sim \pauli{S_P}}[\ntr(R[H(x), RP]_\ell)] - \frac{1}{t}E_P(\lambda)\\
    &= -\sum_Q x_Q \E_{R\sim\pauli{S_P}}[\ntr(R[Q, RP])] + \frac{1}{t}\sum_{\ell=2}^\infty \frac{(\ii t)^\ell}{\ell!} \E_{R\sim \pauli{S_P}}[\ntr(R[H(x), RP]_\ell)] - \frac{1}{t}E_P(\lambda)\\
    &= -\sum_Q x_Q \left(\E_{R\sim\pauli{S_P}}[\ntr(RQRP)] - \E_{R\sim\pauli{S_P}}[\ntr(PQ)]\right) + \frac{1}{t}\sum_{\ell=2}^\infty \frac{(\ii t)^\ell}{\ell!} \E_{R\sim \pauli{S_P}}[\ntr(R[H(x), RP]_\ell)] - \frac{1}{t}E_P(\lambda) \\
    &= x_P + \frac{1}{t}\sum_{\ell=2}^\infty \frac{(\ii t)^\ell}{\ell!} \E_{R\sim \pauli{S_P}}[\ntr(R[H(x), RP]_\ell)] - \frac{1}{t}E_P(\lambda),
\end{align}
where in the last line, we use \Cref{lem:restricted-orthogonality}.
We highlight that, unlike in the Lindbladian setting, there is no confusion (in the sense of \Cref{sec:local-fourier,sec:local-lindblad-fourier}).
This is what makes the analysis of the algorithm much simpler because the $A$ matrix from \Cref{not:A} is now just the identity matrix.
Then, the Jacobian is given by
\begin{equation}
    \label{eq:jacobian}
    J_{P,Q}(x) = \partial_{x_Q} \mathcal{F}_P(x) = \delta_{P,Q} + \frac{1}{t}\sum_{\ell=2}^{\infty} \frac{(\ii t)^\ell}{\ell!} \E_{R\sim\pauli{S_P}} [\partial_{x_Q} \ntr(R[H(x),RP]_\ell)].
\end{equation}
Thus, the leading order term of the Jacobian is $I$.
It suffices to show that the higher order terms are small, i.e., $\norm{J(x) - I}_{\infty \to \infty} \leq c_g t$.
Recall that $\norm{M}_{\infty\to\infty} = \sup_{\norm{x}_\infty \leq 1} \norm{Mx}_\infty$.
Consider $u = (u_1,\dots, u_M)$ such that $\norm{u}_\infty \leq 1$.
We will bound $\norm{(J(x) - I)u}_\infty$.
\begin{align}
    |((J(x) - I)u)_P| &= \left|\sum_Q (J(x) - I)_{P,Q} u_Q\right|\\
    &= \frac{1}{t}\left|\sum_Q u_Q\left(\sum_{\ell=2}^\infty \frac{(\ii t)^\ell}{\ell!} \E_{R\sim\pauli{S_P}}[\partial_{x_Q} \ntr(R[H(x),RP]_\ell)]\right)\right|\\
    &\leq \frac{1}{t}\sum_{\ell=2}^\infty \frac{t^\ell}{\ell!} \left(\sum_Q \E_{R\sim \pauli{S_P}}[\left|\partial_{x_Q} \ntr(R[H(x), RP]_\ell)\right|]\right),
\end{align}
where in the second line we use \Cref{eq:jacobian}, and in the third line, we use $|u_Q| \leq 1$ and the triangle inequality.
Now, using \Cref{claim:deriv-commutator}, we can expand the derivative of the nested commutator:
\begin{equation}
    |((J(x) - I)u)_P| \leq \frac{1}{t}\sum_{\ell=2}^\infty \frac{t^\ell}{\ell!} \sum_{j=0}^{\ell-1} \sum_Q \E_{R\sim \pauli{S_P}}\left[\left|\ntr(R[H(x), [Q, [H(x), RP]_{\ell-1-j}]]_j)\right|\right]
\end{equation}
To get this into the correct form to apply \Cref{claim:exp-sum}, we can repeatedly use $\tr(A[H,B]) = \tr([A,H]B)$.
Thus, we have
\begin{align}
    |((J(x) - I)u)_P| &\leq \frac{1}{t}\sum_{\ell=2}^\infty \frac{t^\ell}{\ell!} \sum_{j=0}^{\ell-1}\E_{R\sim\pauli{S_P}}\left[\sum_Q \left|\ntr([R,H(x)]_j [Q, [H(x), RP]_{\ell-1-j}])\right|\right]\\
    &\leq \frac{1}{t}\sum_{\ell=2}^\infty \frac{t^\ell}{\ell!} \sum_{j=0}^{\ell-1} \E_{R\sim \pauli{S_P}}[2\norm{[R, H(x)]_j}_{P,1} \norm{[H(x), RP]_{\ell-j-1}}_{P,1}]\\
    &\leq \frac{2}{t}\sum_{\ell=2}^\infty \frac{t^\ell}{\ell!} \sum_{j=0}^{\ell-1} j! (2kg)^j (\ell-1-j)! (2kg)^{\ell-1-j}\\
    &\leq \frac{2}{t} \sum_{\ell=2}^\infty t^\ell (2kg)^{\ell-1}\\
    &= 2\sum_{\ell=1}^\infty (2kgt)^\ell\\
    &= 2 \cdot \frac{2kgt}{1-2kgt}\\
    &\leq t\cdot \frac{81kg}{20}.
\end{align}
In the second line, we use Claim~\ref{claim:exp-sum}.
In the third line, we use \Cref{lem:inductive-local-norm}, which can be applied because $\lonorm{x} \leq g$.
In the fourth line, we use that
\begin{equation}
    \sum_{j=0}^{\ell-1} j! (\ell-1-j)! = \sum_{j=0}^{\ell-1}(\ell-1)! \frac{1}{\binom{\ell-1}{j}} \leq (\ell-1)! \sum_{j=0}^{\ell-1} 1 = \ell!.
\end{equation}
In the fifth line, we shift the index $\ell \to \ell-1$.
In the sixth line, we use the sum of a geometric series when $2kg t < 1$.
Finally, in the last line, we use $t < 1/(162kg)$.
Thus, this implies that $\norm{J(x) - I}_{\infty\to\infty} \leq c_gt$.
\end{proof}

\subsubsection{High-temperature Gibbs states}

We can also apply \Cref{thm:ham} to the problem of structure learning a bounded degree Hamiltonian from copies of its Gibbs state.
This is the first algorithm for structure learning Hamiltonians from any Gibbs state access model.
Again, we prove this by showing that the hypotheses of \Cref{thm:ham} hold in this setting.

\begin{theorem}[Structure learning of Hamiltonians from high-temperature Gibbs states]
    \label{thm:ham-gibbs}
    Let $\epsilon,\delta > 0$, and let $k = \mathcal{O}(1)$.
    Let $H = \ii\sum_P \lambda_P P$ be a $k$-local Hamiltonian with $\norm{\lambda}_\infty \leq 1$ and bounded-degree interactions, i.e., every qubit interacts with at most $d$ nonzero terms.
    Let $\beta > 0$ satisfy
    \begin{equation}
        \beta \leq \frac{1}{1000e^6(2kd+1)^8}.
    \end{equation}
    Then, there exists an algorithm that finds estimates $\widehat{\lambda}$ such that $\norm{\widehat{\lambda} - \lambda}_\infty \leq \epsilon$ with probability at least $1-\delta$ using $\mathcal{O}(\log(n/\delta)/(\beta\epsilon)^2)$ copies of the Gibbs state $\rho_\beta = e^{-\beta H}/\tr(e^{-\beta H})$.
    The classical runtime of this algorithm is $\mathcal{O}(n^k\poly(d)\log(n/\delta)/(\beta\epsilon)^2)$.
\end{theorem}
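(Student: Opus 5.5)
We will instantiate \Cref{thm:ham} in the Gibbs state setting. This requires two ingredients. The first is an approximation $\widehat{\mathcal{F}}(x)$ with $\norm{\mathcal{F}(x) - \widehat{\mathcal{F}}(x)}_\infty \leq \epsilon/20$. The second is the Jacobian bound $\sum_{Q \in S_H}|J_{P,Q}(x) - \delta_{P,Q}| \leq c_g\beta$, valid for all $x$ with $\lonorm{x} \leq g$, where $g \leq d$ since $\norm{\lambda}_\infty \leq 1$ and the degree is at most $d$.

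For the sample side, we estimate $E_P(\lambda) = \tr(P\rho_\beta)$ for all $k$-local $P$ simultaneously by classical shadows (random single-qubit Pauli measurements), as in \Cref{alg:local-fourier}. We need accuracy $\beta\epsilon/40$, which costs $\mathcal{O}(C^k\log(n/\delta)/(\beta\epsilon)^2)$ copies by Hoeffding and a union bound over the $(16n)^k$ observables.

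For the classical side, $E_P(x) = \tr(P e^{-\beta H(x)})/\tr(e^{-\beta H(x)})$ must be computed to error $\beta\epsilon/40$. We use the high-temperature cluster expansion of $\log \tr(e^{-\beta H(x) + \mu P})$, differentiated at $\mu=0$, truncated at cluster size $\Gamma = \mathcal{O}(\log(1/(\beta\epsilon)))$. Convergence for $\beta \lesssim 1/(e(2kd+1))^{\mathcal{O}(1)}$ is standard (Koteck\'y--Preiss / \cite{haah2024learning}). Because the iterates satisfy $|x_P| \leq 2|\lambda_P|$, the invariant of \Cref{thm:ham} guarantees that $H(x)$ has support inside $S_H$, and hence degree at most $d$. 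The enumeration therefore costs $\poly(d)^{\Gamma}$ per term, which is polynomial in $1/(\beta\epsilon)$ for $d = \mathcal{O}(1)$. It remains $\mathcal{O}(n^k\poly(d)\cdots)$ once we note that $\beta\epsilon$-accuracy only needs $\Gamma$ logarithmic in $1/\epsilon$ with base $1/\beta$ large.

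The main work, and the expected obstacle, is the Jacobian bound. The first-order term is $-\tfrac{1}{\beta}\tr(P e^{-\beta H(x)})/\tr(e^{-\beta H(x)}) = \ii x_P\cdot(\text{normalization}) + \mathcal{O}(\beta)$. With the factor $\ii$ and the sign convention in $\mathcal{F}_P$, this gives $\delta_{P,Q}$ at leading order, since $\ntr(PQ) = \delta_{P,Q}$ and there is no confusion.

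For the higher-order terms, we express $\partial_{x_Q}\mathcal{F}_P(x)$ as $\beta$ times a quantum covariance between $P$ and $Q$ under $\rho_\beta(x)$, via Duhamel's formula. We then expand this covariance as a cluster expansion in $\beta$, obtaining $\sum_{\ell\geq 1}\beta^\ell\sum_{\cluster{a}}\gamma_{\cluster{a}} x^{\cluster{a}\setminus\cdot}$ over connected clusters linking $P$ and $Q$. Here we use that connected cumulants vanish unless the cluster is connected, which is exactly what gives the truncated-correlation structure.

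Summing over $Q \in S_H$ with $\lonorm{x} \leq 2g$, we count clusters rooted at $S_P$ via \Cref{lem:tree-count} (the analogue of \Cref{lem:derivative-bound}). Each order $\ell$ picks up a factor $(e\cdot 2g k)^{\ell}$ times combinatorial factors from the cumulant expansion, of order $(\ell!)$-free bounds coming from the Koteck\'y--Preiss criterion. This yields $\sum_Q|J_{P,Q}-\delta_{P,Q}| \leq \sum_{\ell\geq1}\beta^{\ell}(C(2kd+1))^{\mathcal{O}(\ell)} \leq c_g\beta$ for the stated $\beta$.

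Controlling the cumulant coefficients uniformly is the delicate part. Our first attempt will be to reuse the convergent cluster-expansion bounds of \cite{haah2024learning}: they bound $\partial_{x}$ of $\log Z$ at high temperature, and we only need one further derivative. The polynomial $(2kd+1)^8$ in the $\beta$ threshold reflects these lost factors.

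Finally, we check $\beta < 1/(20c_{2g})$ and apply \Cref{thm:ham}.
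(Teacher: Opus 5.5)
Your plan matches the paper's proof. Both instantiate \Cref{thm:ham}, estimate $\tr(P\rho_\beta)$ for all $k$-local $P$ by classical shadows to accuracy $\beta\epsilon/40$, and compute $E_P(x)$ by a truncated cluster expansion justified by the invariant $|x_P|\le 2|\lambda_P|$ (so $H(x)$ is supported on $S_H$ with degree at most $d$). The only difference is that the paper does not redo the cumulant analysis you flag as delicate: Lemma~4.3 of \cite{haah2024learning} already gives $\norm{J(x)-I}_{\infty\to\infty}\le 50e^6(D+1)^8\beta$, which is a bound on the second derivatives of $\log Z$, not just the first. The paper applies it as a black box with $D=kd$, after observing that adjoining the possibly non-Hamiltonian Pauli $P$ to the interaction graph on $S_H$ keeps the degree at most $kd$; this plays the same role as your rooting of clusters at $S_P$.
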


Before proving this theorem, we recall a result from~\cite{haah2024learning}, which bounds the $\infty\to\infty$ norm of the Jacobian for high-temperature Gibbs states.

\begin{lemma}[Lemma 4.3 in~\cite{haah2024learning}]
    \label{lem:hkt-jacobian-bound}
    Let $J_{P,Q}(x) = \partial_{x_Q}\mathcal{F}_P(x)$ be the Jacobian, where $P,Q$ range over $k$-local Paulis such that the graph with these Paulis as vertices, and edges between $P,Q$ when $S_P \cap S_Q \neq \emptyset$, has degree at most $D$.
    Then,
    \begin{equation}
        \norm{J(x) - I}_{\infty\to\infty} \leq c_D \beta,
    \end{equation}
    where $c_D = 50e^6(D+1)^8$.
\end{lemma}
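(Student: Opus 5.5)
The plan is to treat the Gibbs-state expectations as derivatives of the log-partition function and to control the Jacobian with a cluster expansion of $\log Z$, mirroring the role that \Cref{lem:cluster} and \Cref{lem:tree-count} played for the dynamical case. This is the content of Lemma 4.3 of~\cite{haah2024learning}, and I would follow their route.

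\textbf{Step 1: reduce to second derivatives of $\log Z$.} Write $Z(x) = \tr(e^{-\beta H(x)})$. Since $H(x)$ is linear in $x$ with coefficient operator $\ii P$, Duhamel's formula together with cyclicity of the trace gives $\partial_{x_P}\log Z(x) = -\beta\,\ii\,\tr(P\rho_\beta(x))$. Hence $E_P(x)$ is, up to the constant $-1/(\ii\beta)$, equal to $\partial_{x_P}\log Z$. Consequently
\[
J_{P,Q}(x) = -\tfrac1\beta\,\partial_{x_Q}E_P(x) = c\,\beta^{-2}\,\partial_{x_P}\partial_{x_Q}\log Z(x)
\]
for a fixed unimodular constant $c$. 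So the claim $\norm{J(x)-I}_{\infty\to\infty}\le c_D\beta$ becomes a bound on the row sums $\sum_Q|\beta^{-2}\partial_P\partial_Q\log Z-\delta_{P,Q}|$ for each fixed $P$.

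\textbf{Step 2: cluster expansion.} Taylor expand $\log Z$ in $\beta$ and group monomials $x^{\cluster{a}}$ by multisets $\cluster{a}$ of terms. The standard moment--cumulant (connected-graph) argument shows that only connected clusters in the dual interaction graph survive:
\[
\log Z = n\log 2 + \sum_{m\ge1}\beta^m\sum_{\abs{\cluster{a}}=m,\ \cluster{a}\text{ connected}} c_{\cluster{a}}\,x^{\cluster{a}} .
\]
Next I would read off the low orders:
\begin{enumerate}
\item The $m=1$ term is proportional to $\ntr(P)$, which vanishes for every nonidentity $P$.
\item The $m=2$ term is $\tfrac{\beta^2}{2}\ntr(H^2)$, proportional to $\beta^2\sum_P x_P^2$. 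Applying $\beta^{-2}\partial_P\partial_Q$ produces exactly $\delta_{P,Q}$ once the normalization constant $c$ is fixed; this is the identity part of the Jacobian.
\end{enumerate}
Everything that remains comes from clusters with $m\ge3$ that contain both $P$ and $Q$. These contribute to $J-I$ with an overall prefactor $\beta^{m-2}$.

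\textbf{Step 3: bound the tail, the main obstacle.} I need coefficient bounds of the form $|c_{\cluster{a}}|\le C^m$, and not worse by factorials. I would use the multivariate cluster expansion with a Kotecký--Preiss type criterion, or the Mann--Helmuth-style counting already used in \Cref{lem:tree-count}. Together these give: the weighted number of connected clusters of size $m$ containing a fixed term $P$ is at most $(e(D+1))^m$, and the cumulant coefficients obey $|c_{\cluster{a}}|\le (e(D+1))^{O(m)}$ after accounting for multiplicities.

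Taking $\partial_P\partial_Q$ removes one copy each of $P$ and $Q$ from the monomial, and this multiplies each term by at most $m^2$. Summing over $Q$ is then absorbed into summing over clusters containing $P$. This yields
\[
\sum_Q\bigl|J_{P,Q}-\delta_{P,Q}\bigr| \le \sum_{m\ge3} m^2\,\beta^{m-2}\,\bigl(C(D+1)\bigr)^{O(m)} .
\]
For $\beta\le 1/(\mathrm{poly}(D+1))$ this is a geometric series dominated by its $m=3$ term, which gives $O(\beta\,\mathrm{poly}(D+1))$ and, with the constants tracked, $c_D=50e^6(D+1)^8$.

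The delicate point is the factorial-free coefficient bound. Naive Taylor coefficients of $\log\tr e^{-\beta H}$ carry $m!$ growth from non-commutativity, which would destroy convergence. I would first try to import the convergence result for the quantum cluster expansion at high temperature from~\cite{haah2024learning} directly (it is exactly what gives the $(D+1)^8$ exponent), rather than re-derive it. The rest is routine bookkeeping, identical in spirit to \Cref{lem:derivative-bound}.
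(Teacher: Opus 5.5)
Your outline is correct and is essentially what the paper does: the paper gives no proof of its own but imports the statement as Lemma~4.3 of \cite{haah2024learning}. The one nontrivial ingredient you also defer to that work is exactly the content being imported, namely the factorial-free bounds on the connected-cluster coefficients of $\log Z$, together with the specific constant $50e^6(D+1)^8$, which your $(C(D+1))^{O(m)}$ bookkeeping does not by itself determine.

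One small correction to Step~1. The constant $c$ is already fixed by the definitions, not a free normalization. With this paper's convention $H=\ii\sum_P x_P P$ it equals $-\ii$, so the $\beta^2$ term gives $\ii\,\delta_{P,Q}$ rather than $\delta_{P,Q}$. The claim $J\approx I$ therefore holds literally only in the intended real-coefficient convention $H=\sum_P x_P P$, where $c=1$.
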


We use this to instantiate the hypothesis on the Jacobian bound in \Cref{thm:ham}.
Thus, with this, we can prove \Cref{thm:ham-gibbs}.

\begin{proof}[Proof of {\Cref{thm:ham-gibbs}}]
    As with \Cref{thm:ham-time}, we first need to obtain estimates $\widehat{E}_P(\lambda)$ such that
    \begin{equation}
        |\widehat{E}_P(\lambda) - E_P(\lambda)| \leq \frac{\beta \epsilon}{40}
    \end{equation}
    for all $k$-local Paulis $P$.
    We can do so with probability at least $1-\delta$ using $\mathcal{O}(4^k\log(n/\delta)/(\beta\epsilon)^2)$ copies of the Gibbs state $\rho_\beta$ via classical shadows~\cite{HKP20}.
    Moreover, for $k=\mathcal{O}(1)$, this has a classical runtime of $\mathcal{O}(n^k\log(n/\delta)/(\beta\epsilon)^2)$.
    For our choice of $\beta$, this corresponds to $\mathcal{O}(d^{16}\log(n/\delta)/\epsilon^2)$ copies.

    Also, we can obtain estimates $\widehat{E}_P(x)$ such that
    \begin{equation}
        |\widehat{E}_P(x) - E_P(x)| \leq \frac{\beta \epsilon}{40}
    \end{equation}
    by computing a truncated cluster expansion.
    The proof of Theorem 4.6 in~\cite{haah2024learning} shows that this takes classical runtime $\mathcal{O}(n^k\poly(d, \log(1/(\beta\epsilon))/\epsilon)$.
    This applies to our setting because we maintain the invariant in our algorithm that $|x_P^{(j)}| \leq 2|\lambda_P|$.
    This means that $\deg(x) \leq \deg(\lambda) \leq d$, which fits in the setting of~\cite{haah2024learning}.
    This gives us an approximation of $\mathcal{F}(x)$ up to $\epsilon/20$ error.
    It remains to argue that the Jacobian bound holds.
    Consider the statement we want to prove:
    \begin{equation}
        \sum_{Q \in S_H} |J_{P,Q}(x) - \delta_{P,Q}| \leq c_d \beta,
    \end{equation}
    for $\deg(x) \leq d$.
    This closely resembles \Cref{lem:hkt-jacobian-bound}.
    Notice that $Q \in S_H$, so since $H$ has bounded degree, the graph with $Q \in S_H$ as its vertices has degree at most $d$.
    However, the key difference is that $P$ can be any $k$-local Pauli and does not necessarily lie in this graph.
    Nevertheless, the graph with vertices consisting of $S_H \cup \{P\}$ still has degree at most $D = kd$.
    Thus, we can apply \Cref{lem:hkt-jacobian-bound} with $D = kd$ to obtain our claim with $c_d = 50e^6(kd+1)^8$.
    This completes the proof.
\end{proof}

\section*{Acknowledgments}
L.L. thanks Aram Harrow, Jordi Montana-Lopez, Quynh Nguyen, Umesh Vazirani, and Thomas Vidick for helpful discussions.
L.L.\ is supported by the U.S. Department of
Energy, Office of Science, Office of Advanced Scientific Computing Research, Department of Energy Computational Science Graduate Fellowship under Award Number DE-SC0026073.
E.T.\ is supported by the Miller Institute for Basic Research in Science, University of California Berkeley.
J.W.\ is supported by the NSF CAREER award CCF-233971 and a Sloan Fellowship.

This report was prepared as an account of work sponsored by an agency of the United States Government. Neither the United States Government nor any agency thereof, nor any of their employees, makes any warranty, express or implied, or assumes any legal liability or responsibility for the accuracy, completeness, or usefulness of any information, apparatus, product, or process disclosed, or represents that its use would not infringe privately owned rights. Reference herein to any specific commercial product, process, or service by trade name, trademark, manufacturer, or otherwise does not necessarily constitute or imply its endorsement, recommendation, or favoring by the United States Government or any agency thereof. The views and opinions of authors expressed herein do not necessarily state or reflect those of the United States Government or any agency thereof.

%\AtNextBibliography{\small}
\printbibliography

\end{document}